\documentclass[reqno,a4paper,oneside,11pt]{amsart}

\usepackage[dvipsnames]{xcolor}
\RequirePackage{doi}
\usepackage{hyperref}
\usepackage{amsmath,amssymb,epsf}
\usepackage{amsfonts,amsbsy,amscd,stmaryrd, wasysym, bbm}
  \SetSymbolFont{stmry}{bold}{U}{stmry}{m}{n}
  \usepackage{mathtools}
  \usepackage{subcaption}
\usepackage{latexsym,amsopn}
\usepackage{amsthm}
\usepackage{esint}
\usepackage{mathrsfs}
\usepackage{slashed} 
\usepackage{enumitem}

\usepackage[margin=1.3in,footskip=0.25in]{geometry}
\allowdisplaybreaks

\usepackage[utf8]{inputenc}
\usepackage[T1]{fontenc}   

\usepackage{graphicx}
\usepackage{multirow}
\usepackage{float}

\usepackage{soul,todonotes}

\newcounter{smallarabics}
\newenvironment{arabicenumerate}
{\begin{list}{{\normalfont\textrm{(\arabic{smallarabics})}}}
  {\usecounter{smallarabics}\setlength{\itemindent}{0cm}
   \setlength{\leftmargin}{5ex}\setlength{\labelwidth}{4ex}
   \setlength{\topsep}{0.75\parsep}\setlength{\partopsep}{0ex}
   \setlength{\itemsep}{0ex}}}
{\end{list}}

\newcounter{smallroman}

\newcommand{\ben}{\begin{arabicenumerate}}  
\newcommand{\een}{\end{arabicenumerate}}

\newtheorem{theorem}{Theorem}[section]

\newtheorem{proposition}[theorem]{Proposition}
\newtheorem{lemma}[theorem]{Lemma}
\newtheorem{corollary}[theorem]{Corollary}
\theoremstyle{definition}
\newtheorem{definition}[theorem]{Definition}
\newtheorem{remark}[theorem]{Remark}
\newtheorem{example}[theorem]{Example}

\newcommand{\beq}{\begin{equation}}
\newcommand{\eeq}{\end{equation}}
\newcommand{\bea}{\begin{aligned}}
\newcommand{\eea}{\end{aligned}}
\newcommand{\bear}{\begin{array}{rl}}
\newcommand{\eear}{\end{array}}
\newcommand{\bex}{\begin{example}}
\newcommand{\eex}{\end{example}}
\def\bel{\begin{lemma}}
\def\eel{\end{lemma}}
\def\bet{\begin{theoreme}}
\def\eet{\end{theoreme}}
\def\bed{\begin{definition}}
\def\eed{\end{definition}}
\def\ber{\begin{remark}}
\def\eer{\end{remark}}
\def\bep{\begin{proposition}}
\def\eep{\end{proposition}}
\newcommand{\qeds}{\qed\medskip}

\let\origmaketitle\maketitle
\def\maketitle{
  \begingroup
  \def\uppercasenonmath##1{} 
  \let\MakeUppercase\relax 
	\origmaketitle
  \endgroup
}

\def\bar{\overline}

\def\cinf{C^\infty}
\def\proof{
\noindent{\bf Proof.}\ \ }

\usepackage{calrsfs}
\DeclareMathAlphabet{\pazocal}{OMS}{zplm}{m}{n}
\def\cA{{\pazocal A}}
\def\cB{{\pazocal B}}

\def\cD{{\pazocal D}}

\def\cF{{\pazocal F}}

\def\cL{{\pazocal L}}
\def\cM{{\pazocal M}}
\def\cN{{\pazocal N}}
\def\cO{{\pazocal O}}
\def\cP{{\pazocal P}}

\def\cR{{\pazocal R}}
\def\cS{{\pazocal S}}
\def\cT{{\pazocal T}}
\def\cU{{\pazocal U}}
\def\cV{{\pazocal V}}
\def\cW{{\pazocal W}}

\def\cZ{{\pazocal Z}}

\def\sH{\mathcal{H}}
\def\sI{\mathcal{I}}
\def\sL{\mathcal{L}}
\def\sR{\mathcal{R}}

\def\fM{{\mathfrak M}}
\def\fN{{\mathfrak N}}
\def\fO{{\mathfrak O}}
\def\fT{{\mathfrak T}}
\def\fV{{\mathfrak V}}
\def\fY{{\mathfrak Y}}
\def\ft{{\mathfrak t}}
\def\fk{{\mathfrak k}}
\def\fh{{\mathfrak h}}
\def\mfi{{\mathfrak i}}

\def\rr{{\mathbb R}}
\def\zz{{\mathbb Z}}
\def\cc{{\mathbb C}}
\def\nn{{\mathbb N}}

\def\ss{{\mathbb S}}

\def\bm{{\mathbbm{m}}}

\def\IRG{{\rm IRG}}
\def\ORG{{\rm ORG}}

\def\i{{\rm i}}

\def\id{{\rm id}}

\let\Im\relax
\let\Re\relax
\let\Ker\relax
\DeclareMathOperator{\Ker}{Ker}
\DeclareMathOperator{\Ran}{Ran}
\DeclareMathOperator{\Im}{Im}
\DeclareMathOperator{\Re}{Re}
\DeclareMathOperator{\Dom}{Dom}

\DeclareMathOperator{\supp}{supp}

\DeclareMathOperator*{\tho}{\text{\th}}
\DeclareMathOperator*{\edt}{\text{\dh}}
\DeclareMathOperator{\Sol}{Sol}

\def\14{\frac{1}{4}}

\def\12{\frac{1}{2}}

\def\e{{\rm e}}
\def\d{{\rm d}}
\newcommand{\bop}{{\mathrm{b}}}

\def\bbbone{{\mathchoice {\rm 1\mskip-4mu l} {\rm 1\mskip-4mu l}
{\rm 1\mskip-4.5mu l} {\rm 1\mskip-5mu l}}}

\def\c{{\rm c}}

\def\id{{\rm id}}

\def\coinf{C_{0}^\infty}

\DeclareSymbolFont{boldoperators}{OT1}{cmr}{bx}{n}
\SetSymbolFont{boldoperators}{bold}{OT1}{cmr}{bx}{n}

\makeatletter
\newcommand*{\defeq}{\mathrel{\rlap{%
                     \raisebox{0.34ex}{$\m@th\cdot$}}%
                     \raisebox{-0.4ex}{$\m@th\cdot$}}%
                     =}
                     
\makeatother
\makeatletter
\newcommand*{\eqdef}{=\mathrel{\rlap{%
                     \raisebox{0.34ex}{$\m@th\cdot$}}%
                     \raisebox{-0.4ex}{$\m@th\cdot$}}%
                     }
\makeatother

\def\Sol{{\rm Sol}}

\def\WF{{\rm WF}}

\def\mo{\mathit{o}}
\def\mi{\imath}

\def\dVol{\mathop{}\!{\rm d}{ vol}}
\def\diff{\mathop{}\!d}

\def\MI{{\rm M}_{\rm I}}
\def\MII{{\rm M}_{\rm II}}

\def\MK{{\rm M}}
\def\MIUII{{\rm M}_{{\rm I}\cup {\rm II}}}
\def\MIII{M_{\rm III}}
\def\2Sol{{\rm Sol}_{{\rm L}^{2}}}

\newcommand{\abs}[1]{{\left\vert #1 \right\vert}}
\newcommand{\norm}[1]{{\left\Vert #1 \right\Vert}}
\newcommand{\IP}[1]{{\left\langle #1 \right\rangle}}
\newcommand{\cv}[1]{{\underline{ #1 }}}

\numberwithin{equation}{section}

\usepackage{tikz}
\usetikzlibrary{cd, patterns,arrows,decorations.pathreplacing, decorations.markings,arrows.meta, shapes.multipart}
\usetikzlibrary{shapes.misc}

\begin{document}
\title[Unruh state for Teukolsky on Kerr]{\Large The Unruh state for bosonic Teukolsky fields on subextreme Kerr spacetimes}
\author{Dietrich \textsc{H\"afner}}
\address{Universit\'e Grenoble Alpes, Institut Fourier, 100 rue des Maths, 38610 Gi\`eres, France}
\email{dietrich.hafner@univ-grenoble-alpes.fr}

\author{Christiane \textsc{Klein}}
\address{University of York, Department of Mathematics, Heslington, York YO10 5DD, United Kingdom\\ Universit\'e Grenoble Alpes, Institut Fourier, 100 rue des Maths, 38610 Gi\`eres, France\\ CY Cergy Paris Universit\'e, 2 avenue Adolphe Chauvin, 95302 Cergy-Pontoise, France}
\email{christiane.klein@york.ac.uk}
\keywords{Quantum Field Theory on curved spacetimes, Hadamard states, Teukolsky equation, Kerr spacetime, Hawking temperature}
\subjclass[2010]{81T13, 81T20, 35S05, 35S35}
\thanks{\emph{Acknowledgments.} We would like to thank Pascal Millet for many helpful discussions at various stages of this work and Peter Hintz for useful comments on a preliminary version of this paper. We acknowledge support from the ANR funding  ANR-20-CE40-0018-01. CK was funded by the Deutsche Forschungsgemeinschaft (DFG, German
Research Foundation) – Projektnummer 531357976.}
\maketitle
\begin{abstract}
We perform the quantization of Teukolsky scalars of spin $0$, $\pm 1$, and $\pm 2$ within the algebraic approach to quantum field theory. We first discuss the classical phase space, from which we subsequently construct the algebra. This sheds light on which fields are conjugates of each other. Further, we construct the Unruh state for this theory on Kerr and show that it is Hadamard on the black hole exterior and the interior up to the inner horizon. This shows not only that Hadamard states exist for this theory, but also extends the existence and Hadamard property of the Unruh state to (bosonic) Teukolsky fields on Kerr, where such a result was previously missing. 
\end{abstract}

\tableofcontents

\section{Introduction}
The quantization of gravity remains, to this day, one of the large open questions of theoretical physics. There are many ideas on what a complete theory of quantized gravity might look like, see e.g. the reviews \cite{Kiefer, deBoer}, but no complete and generally accepted theory. In absence of that, one comparably conservative approach to exploring the phenomenology and physical consequences of the quantization of gravity is to study the quantization of linearized gravity in a non-trivial background spacetime. This can, for example, be done in the framework of algebraic quantum field theory. 

The study of linearized gravity is particularly important on backgrounds describing observable gravitational phenomena, since a good theoretical understanding of these spacetimes is fundamental for the planning and evaluation of observations. An example are rotating black holes, described by Kerr spacetimes. Today, an increasing number of mergers of rotating black holes is observed via gravitational waves \cite{LIGO}, and the shadows of supermassive black holes are made visible by large international collaborations \cite{EHT}. Studying linearized quantum gravity on Kerr spacetimes could help to identify visible effects of quantum gravity in the gravitational wave signature of black hole mergers or the shadow of black holes. It may also give valuable insight into questions such as black hole evaporation. However, rotating black hole spacetimes are mathematically challenging to deal with.

The effects of this can, for example, be seen from the existing work on linearized quantum gravity. The construction of the algebra of observables for free theories such as linearized gravity is quite well understood, see e.g. \cite{FH} and \cite{HS} for a general framework for gauge theories. However, the construction, or even the proof of existence, of physical states for linearized gauge theories is not as straightforward and must be done case by case. In the case of linearized gravity, it is known that the Minkowski vacuum state is a physical state in the sense that it is a Hadamard state \cite{H:PhD}. Moreover, Hadamard states have been constructed based on Wick rotation \cite{GMW, GWgrav}, see also \cite{CMS}, by full gauge fixing on spacetimes with compact Cauchy surfaces \cite{Ggrav}, and for radiative degrees of freedom on asymptotically flat spacetimes without trapping using a bulk-to-boundary construction \cite{BDM}. Noticeably, none of these approaches covers the case of Kerr black hole  spacetimes.  

One of the big difficulties in finding appropriate states for linearized gravity is to reconcile gauge invariance with the requirement of positivity of the state. To circumvent the issue of gauge freedom, one possibility is to quantize a slightly different theory, which does not use the metric perturbation but rather the Teukolsky scalars as its fundamental degrees of freedom. The Teukolsky scalars are derived from the perturbations of the metric, or, for spin one, from the perturbations of the electromagnetic potential, by contracting the Weyl- or field-strength tensor with the elements of a geometrically preferred frame and considering the linearized Bianchi-identities or vacuum Maxwell equations. In the class of algebraically special spacetimes of Petrov type D, the equations for certain contractions of the Weyl- and field-strength tensor perturbations decouple from the rest and take the form of linear, second-order, normally hyperbolic PDEs. 
While this approach is not feasible in generic background spacetimes, in spacetimes of Petrov type D,  which crucially includes Kerr black hole  spacetimes, it allows the degrees of freedom of linearized gravity or electromagnetism to be expressed in a gauge-invariant way \cite{Teukolsky1, Teukolsky2, SteWa}. 

Quantizing the Teukolsky scalars was first proposed in \cite{CCH} as a method of quantizing linearized gravity in Kerr spacetimes. A modification of this theory, introducing the gravitational (or electromagnetic) Hertz potential as a "dual" field, and giving an expression for the Hadamard parametrix, was presented in \cite{Iuliano:2023}. Both works utilize quantized Teukolsky fields to explore the phenomenology of linearized quantum gravity on Kerr spacetimes. However, these works considered neither the structure of the algebra of observables of the Teukolsky scalars, nor the rigorous construction and Hadamard property of the states they employed, in much, if any, detail. In particular, the question of the positivity of the states constructed in \cite{CCH, Iuliano:2023} remained unresolved. Hence, the rigorous construction of the theory and its states is still an open problem. 

 In this work, we therefore consider the quantization of the theory of Teukolsky scalars of spin $0$, $\pm 1$, and $\pm 2$ in the algebraic approach in a rigorous fashion, beginning with the construction of the algebra of observables on subextreme Kerr spacetimes. We consider the Kerr spacetime as consisting of the black hole  exterior and its interior up to, but not including, the inner horizon of the black hole.
 
 Even at this stage, one finds oneself in an unusual situation not commonly considered in the literature: The Teukolsky scalar fields are sections of a non-trivial bundle over the Kerr spacetime. This bundle, the bundle of spin-weighted scalars of spin weight $s$ and boost weight $s$ (or spin $s$ for short if they agree), is an associate bundle to a bundle of geometrically adapted null tetrads, see e.g. \cite{Millet1} for a detailed account. The difficulty in working on this bundle is that the bundle of spin-weighted scalars does not possess a natural or otherwise preferred hermitian fibre metric, which is a key ingredient in the quantization of the theory, unless one globally fixes a scaling for the null vectors in the tetrad to reduce the bundle. Since it will be crucial later on that the scaling of the null vectors can be adapted freely, we need to work on the full bundle.  To provide the missing structure of the spin-weighted scalar bundle, we introduce an extended theory combining Teukolsky scalars of spin $+s$ and $-s$ similar to the ideas in \cite{Iuliano:2023}. The corresponding extended Teukolsky operator can then be shown to be a formally hermitian Green hyperbolic operator, allowing the algebra to be constructed following an established scheme. In some sense, the quantization of the Teukolsky scalar faces issues akin to the quantization of a Green hyperbolic operator with constraints. In this regard, the quantization of Teukolsky is of interest in its own right, independent of its relation to linearized gravity, since it is an example of a theory described by a normally hyperbolic differential operator on a bundle with very little natural structure.

After the construction of the algebra of observables, we construct the Unruh state for this theory on subextreme Kerr spacetimes. This state, which was first considered for scalar fields on Schwarzschild spacetimes in the context of black hole evaporation \cite{Unruh}, displays the behaviour that one would expect at late times in gravitational collapse to a black hole, see e.g. \cite{Hafner} for an argument in the fermionic case. Intuitively, the reason for this is that the Unruh state is constructed such that it resembles the Minkowski vacuum at past null infinity, while containing the expected Hawking radiation at future null infinity. This behaviour indicates that the Unruh state is not an equilibrium state, but rather a non-equilibrium steady state. In particular, in the absence of a global timelike Killing field in the exterior of a Kerr black hole, the Unruh state will rely on two different Killing fields for its definition. In this way, the Uhruh state circumvents the No-Go theorem
by Pinamonti-Sanders-Verch \cite{PSV}.\footnote{The No-Go theorem by Kay-Wald \cite{KW} is circumvented by the fact that we do not consider the state on the whole Kruskal extension of the Kerr spacetime.} This is a large advantage of the Unruh state over the Hartle-Hawking state \cite{HartleHawking, Israel}, which cannot be extended to Kerr according to these No-Go results. Another boon of the Unruh state is that it remains well-defined across the black hole  event horizon, in contrast to the Boulware state \cite{Boulware}, which is only well-defined up to the black hole  horizon. In fact, the extendibility across the black hole  horizon, together with the invariance of the state under the Killing flow generating the horizon, fixes the scaling limit of the two-point function thereon and, by time-translation invariance, on the past horizon \cite{KW}, see also \cite{MoPi}.
Finally, the Unruh state satisfies the Hadamard property, which distinguishes the class of physically reasonable states. This was first proven for the scalar-field Unruh state on Schwarzschild spacetime in \cite{DMP}, and was later extended to a variety of asymptotically de-Sitter black holes \cite{BrJo, HWZ, Klein} and to massless free fermions on Kerr spacetimes \cite{GHW, HaK}. 

As in previous approaches, our construction of the Unruh state is based on a bulk-to-boundary construction which was first introduced independently by Hollands \cite{Hollands:2000} and Moretti \cite{Moretti1}. The crucial step in this construction is the embedding of the algebra of observables assigned to the spacetime into an algebra on its past (conformal) boundary, consisting of the ingoing black hole  horizon and past null infinity. This requires a similar embedding of the (charged) symplectic phase space of the theory on the spacetime into a corresponding (charged) symplectic function space on the aforementioned boundary. In particular, it requires showing the conservation of the (charged) symplectic form. The key ingredients to showing this conservation are sufficiently strong decay results for solutions of the Teukolsky equations. These are provided by the detailed analysis in \cite{Millet2}, see also \cite{Millet:thesis}, based on the Fredholm theory developed in \cite{Vasy2013}. To apply these results, we require a precise understanding of the effect that the discrete spacetime symmetry of simultaneously reversing the time and azimuthal direction has on the non-trivial bundle of spin-weighted scalars and the Teukolsky differential operator.

The biggest challenge in constructing the state and making sure that it is well-defined is to show its positivity. This stems from the fact that the fibre metric of the enlarged vector bundle on which our extended theory lives is not positive. The reason is that, just like the (charged) symplectic form of the Teukolsky phase space, it mixes spin-weighted fields with spin $+s$ and spin $-s$. To solve this problem, we identify a physical subspace of the extended phase space and define the Unruh state on the corresponding physical subalgebra. To identify the physical subspace, we use that the Teukolsky-Starobinsky identities \cite{SC, TP} relate solutions of the Teukolsky equation for spin $+s$ with those for spin $-s$. In addition, the Hertz potential appearing in most of the reconstruction schemes \cite{Chrzanowski, Ori, Pound:2021, TZSHPG, BGL} provides another relation between different vacuum solutions to the Teukolsky equation which was also used in \cite{Iuliano:2023}.
The issue of positivity of the state then depends delicately on the properties, predominantly the positivity and invertibility, of the angular differential operator appearing in the Teukolsky-Starobinsky identities \cite{SC, TP, CTdC}, as well as its commutator with the Teukolsky operator, which we work out. 

Another difficulty faced in the construction of the state that was already encountered in the work \cite{DMP} is that on the ingoing black hole  horizon, on which one part of the state is defined, one would like to work in the Kruskal coordinates which cover the bifurcation sphere. However, in terms of these coordinates, the decay of functions in the required boundary function space is too weak for them to be square integrable. This issue is further amplified in the present case by the need to work with a tetrad which smoothly extends to the bifurcation sphere. The reason is the scaling of this tetrad towards past timelike infinity compared to more commonly used ones. Beyond the methods employed in \cite{DMP}, it requires us to make subtle changes to the definition of the state in order to re-distribute decay and ultimately show the positivity of the state. More concretely, we utilize the connection of the spin $+s$- and spin $-s$-fields in the physical subspaces to identify differential operators mapping them to spin-weighted functions with vanishing boost weight. In addition to alleviating the problem of the tetrad scaling, this bundle has a hermitian fibre metric allowing for the definition of natural function spaces on the boundaries.

Finally, we wish to show that the state is Hadamard. As alluded to earlier, the Hadamard property is required for a state to be considered as physically reasonable. In its original formulation (given for scalar fields), the Hadamard property demands that the two-point function of the state has the singularity structure of a Hadamard parametrix, see e.g. \cite{Wald1977_SET, FSW, KW} for early works. This was motivated on the one hand by the fact that this singularity structure is a natural extension of that of the Minkowski vacuum to curved spacetimes. On the other hand, the Hadamard parametrix is local and covariant, providing the basis for the local and covariant point-split renormalization scheme. A major advancement in the theory was the work of Radzikowski \cite{Radzikowski} which showed that the Hadamard property of states for the free scalar field can be expressed in terms of the {\it microlocal spectrum condition}, a condition on the form of the wavefront set of the state's two-point function. A generalization of this notion to a wide class of Green-hyperbolic operators, which we make use of in this work, has been given recently in \cite{F}. 
The proof of the Hadamard property of the Unruh state for the Teukolsky scalars on Kerr combines ideas from different previous works on the Unruh state. However, to complete the proof, these ideas are not sufficient. We combine them with microlocal estimates, which again depend on the estimates shown in \cite{Millet2} in a crucial way. 

Overall, we show
\begin{theorem}
    Let $s\in \{0,1,2\}$. The theory of a spin-$\pm s$ Teukolsky scalar on any subextreme Kerr spacetime, consisting of the black hole  exterior and its interior up to the inner horizon (i.e. the spacetime $\MI\cup \sH_+\cup\MII$, see Section~\ref{subsec:Kstar and starK}), can be quantized by using an enlarged classical phase space encompassing Teukolsky fields of spin $+s$ and spin $-s$. A physical subalgebra corresponding to the algebra of the Hertz potential can be identified.

    On the physical subalgebra, one can define an analogue of the Unruh state. The state is well-defined and Hadamard on any subextreme Kerr spacetime.
\end{theorem}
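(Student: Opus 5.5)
The plan follows the structure outlined in the introduction and proceeds in three stages: algebra, state, Hadamard property.

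\textbf{Algebra.} First I will set up the enlarged bundle $E_s\oplus E_{-s}$ of spin-weighted scalars of spin $\pm s$. On it I define the non-degenerate (but indefinite) hermitian fibre pairing that contracts a spin $+s$ component with a spin $-s$ component. This pairing is invariant under rescaling of the null tetrad, so no global choice of scaling is needed. With respect to this pairing, the extended operator $\mathbf{T}=\mathrm{diag}(T_s,T_{-s})$, suitably intertwined (the adjoint of $T_s$ being essentially $T_{-s}$), is formally hermitian. It is also normally hyperbolic, so it has retarded and advanced propagators $G_\pm$ on the globally hyperbolic region $\MI\cup\sH_+\cup\MII$. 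The causal propagator $G=G_+-G_-$ then yields a charged symplectic space $(C_0^\infty/\mathbf{T}C_0^\infty,\ \langle\cdot,\i G\cdot\rangle)$, and the CCR algebra is built in the standard way. The physical subalgebra will be the subspace on which the spin $-s$ component is determined by the spin $+s$ component through the Hertz-potential map, compatibly with the Teukolsky–Starobinsky identities. Concretely, I will take test sections of the form $(f,\ \mathcal{D} f)$, where $\mathcal{D}$ is the relevant angular/radial Teukolsky–Starobinsky operator. I must check that this subspace is preserved by $G$, which requires the commutator identities between the Teukolsky–Starobinsky operators and $T_{\pm s}$.

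\textbf{Unruh state.} Next I construct the bulk-to-boundary map. A solution $\phi=Gf$ is sent to its traces (suitably weighted in $r$ and rescaled by the tetrad) on the past horizon $\sH_-$ and on past null infinity $\mathscr{I}^-$. The key step is to show that the charged symplectic form equals a boundary flux integral over $\sH_-\cup\mathscr{I}^-$. This follows by Stokes' theorem on truncated regions, provided the contributions near $i^-$ and the bifurcation sphere vanish. For this I invoke the decay estimates of Millet for the Teukolsky equation, transported to the past via the symmetry $(t,\varphi)\mapsto(-t,-\varphi)$; this symmetry maps spin $s$ to spin $-s$ and must be lifted carefully to the bundle. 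On the boundary I use the physical relation to map both components to boost-weight-zero spin-weighted functions, on which a natural positive $L^2$ structure exists. I then define the two-point function as the sum of a Kay–Wald/Kruskal-type thermal form on $\sH_-$, with respect to the Kruskal coordinate $U$ and the horizon Killing field $\partial_t+\Omega_H\partial_\varphi$, and a vacuum (positive-frequency with respect to $\partial_t$) form on $\mathscr{I}^-$, in the manner of Dappiaggi–Moretti–Pinamonti.

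\textbf{Positivity and the Hadamard property.} The main obstacle is positivity. The boundary symplectic pairing mixes $+s$ and $-s$. On the physical subspace it becomes $\langle u,\mathcal{A}u\rangle$ for the angular Teukolsky–Starobinsky operator $\mathcal{A}$, so I need $\mathcal{A}$ to be positive and invertible on the relevant spin-weighted harmonics. I will try to prove this through its explicit spectral form (the Teukolsky–Starobinsky constant, which is a product of positive factors for subextreme $a$). The poor decay in $U$ near the bifurcation sphere I will handle by redistributing weights using this same factorisation, in the style of the $H^{1/2}$-type estimates of Dappiaggi–Moretti–Pinamonti. The Hadamard property will follow the strategy of Gérard–Häfner–Wrochna. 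Near $\sH_-$ and in the interior, I propagate the wavefront set using the positive-frequency character in $U$ together with propagation of singularities for the normally hyperbolic $\mathbf{T}$. Near $\mathscr{I}^-$ and the trapped set I rely on microlocal radial-point and semiclassical estimates drawn from Millet's Fredholm framework to exclude the wrong-sign covectors. The result is then verified against Fewster's generalized microlocal spectrum condition.
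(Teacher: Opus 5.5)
Your construction of the algebra, the boundary map, the symplectic-form conservation and the state follows the paper closely. The Hadamard part, however, has a genuine gap: the sign information at the trapped set (and for the $\sI_-$ contribution) cannot come from Millet's estimates. Propagation, radial-point and normally hyperbolic trapping estimates control the size of the boundary traces, not the sign of their frequencies. Moreover, for a covector on a past-trapped bicharacteristic they cannot even give $O(\lambda^{-\infty})$ bounds on the traces of $\phi_{f^\lambda}$. The backward flowout accumulates on the trapped set $K$, and the solution leaks from $K$ to $\sH$ and $\sI_-$; the trapping estimates only bound this polynomially.

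This cannot be fixed by sharper estimates. For a Hadamard state the future-directed trapped covectors genuinely lie in $\WF'(W^+)$, so what must be shown is that only one sign occurs. That requires information about the boundary state, not about propagation. In the paper it comes from the following chain:
\begin{itemize}
\item Restricted to $\MI$, $w^\pm_\sH$ is KMS at inverse temperature $2\pi/\kappa_+$ with respect to $v_\sH$, and $w^\pm_\sI$ is a ground state with respect to $\partial_t$ (Proposition~\ref{prop:KMS and ground}).
\item A Sahlmann--Verch passivity argument then makes $(x,\pm\xi;x,\mp\xi)$ a direction of rapid decrease for $W^\pm_\sH$ when $\xi(v_\sH)<0$, and for $W^\pm_\sI$ when $\xi(\partial_t)<0$ (Lemma~\ref{lemma:passivity Had proof}).
\item A geometric fact about Kerr closes the trapped case: past-trapped null geodesics enter $\MI$, where $\xi(v_\sH)\ge 0$ or $\xi(\partial_t)\ge 0$ forces $\xi$ to be future-directed (Corollary~\ref{cor:Had bw trapped}).
\end{itemize}
The same ground-state property fixes the sign of $W_\sI$ along geodesics coming from $\sI_-$. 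Millet's Fredholm machinery is used only for the complementary task: showing that $W^\pm_\sH$ is rapidly decreasing, for both signs, along bicharacteristics running from $\sI_-$ to $\sI_+$. These avoid $K$ and $\sH$, and along them $v_\sH$ may stay spacelike, so KMS gives no sign there (Propositions~\ref{RegHor} and~\ref{propRegHor}). All of this is applied on the diagonal, so you also need the reduction of Corollary~\ref{cor:Had reduction} first.

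Two smaller corrections to the state construction. First, the spacetime is $\MI\cup\sH_+\cup\MII$, so its past boundary is the full long horizon $\sH=\{V=0\}$, running through the bifurcation sphere, together with $\sI_-$ — not $\sH_-\cup\sI_-$. There is therefore no bifurcation-sphere contribution to kill. Instead you must use a tetrad regular across $U=0$, namely the Kruskal tetrad. It is this choice that produces the bad behaviour $\phi^U_s=|U|^s\phi^K_s$ as $U\to-\infty$.

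Second, the boundary pairing on $\sH$ does not reduce to the angular Teukolsky--Starobinsky operator. In the Kruskal scaling ${\tho}'$ is tangential to $\sH$, and the physical relation becomes $\overline{\phi^U_{-s}}=(r_+-M)^{2s}\partial_U^{2s}\phi^U_s$. Positivity and the redistribution of decay then come from splitting these $2s$ derivatives evenly via $\cU_s\propto{\tho}'^s$ and integrating by parts $s$ times, using the logarithmic decay as $U\to-\infty$. Positivity and invertibility of $A_s$, from the Teukolsky--Starobinsky constants, are needed only on $\sI_-$, where the tangential relation is $\breve\phi_s={\tho}^{2s}A_s^{-1}\overline{\breve\phi_{-s}}$.
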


\subsection{Outlook}
Our rigorous approach to the quantization of Teukolsky scalars shows that there are some subtleties in the construction of the state, which should be taken into account if this theory is used in numerical exploration of physical effects as was done in \cite{CCH, Iuliano:2023}.
Since we quantize source-free solutions to the Teukolsky equation, the field-strength tensor of electromagnetism or the metric perturbation can be reconstructed from the Teukolsky scalars \cite{WK,  Wald, Chrzanowski, PSW, Ori, Pound:2021, TZSHPG, BGL}. For this reason, we are optimistic that the state constructed here can serve as a foundation for the construction of states for linearized gravity and electromagnetism. This could not only help to show the existence of Hadamard states for these theories on rotating black hole spacetimes, but also give access to a physically motivated, explicitly formulated example of a state on these spacetimes. The latter is an important ingredient for the numerical study of quantum effects on black hole spacetimes, which is an active field of research, see for example \cite{OZ, ZCOO, ABOT, PFDW}.

\subsection{Structure of the paper} 
The paper is structured as follows. In Section~\ref{sec:bundle}, we introduce in some detail the bundles of spin-weighted scalars and discuss some of their properties. Additionally, we introduce the extended bundle used in the quantization of the Teukolsky scalars. Section~\ref{sec:Kerr} introduces the Kerr spacetime including various extensions thereof. This section also contains a discussion on the different tetrads which are used in this work to locally trivialize the bundle of spin-weighted scalars. The Teukolsky operator is introduced in Section~\ref{sec:Teukolsky op}. After discussing its properties pertinent to the present work, we introduce its extension that will be used for the quantization. We use it to construct a charged symplectic phase space of the classical theory. We then introduce the Teukolsky-Starobinsky identities and use them to identify a physical subspace of the classical phase space. Subsequently, the analytic framework for the decay estimates used in this work is set up in Section~\ref{sec:analytic}. The results of \cite{Millet2} required in this work are recalled in Section~\ref{sec:millet}. They are employed in Section~\ref{sec:sympl form} to show that the charged symplectic form of the enlarged Teukolsky phase space, which is determined by the integral of a conserved current over a Cauchy surface of the Kerr spacetime, is conserved even when this surface is taken to be the past boundary of the spacetime, or, more specifically, the union of the ingoing black hole  horizon and past null infinity. This allows us to embed the spacetime phase space into a charged symplectic function space on this boundary. Section~\ref{sec:algebra} recalls the quantization of a (charged) symplectic space in the form of a CCR-algebra, as well as the definition of Hadamard states thereon. Following this definition, the algebra of Teukolsky scalars as well as its physical subalgebra are defined based on the spaces constructed earlier. Subsequently, we define the Unruh state in Section~\ref{sec:Unruh} using a bulk-to-boundary construction, and show that it is indeed a well-defined state, and in particular positive, on the physical subalgebra. Finally, we show that the Unruh state thus defined is a Hadamard state in Section~\ref{sec:Hadamard}.

\subsection{Preliminaries, notations, and conventions}

Our convention of signs and prefactors for the Fourier transform on $\rr^n$ is
\begin{align}
   \hat f (k)&=(2\pi)^{-n/2}\int\limits_{\rr^n} e^{\i k\cdot x} f(x) \d^n x\, , 
\end{align}
where $k\cdot x$ is the usual inner product on $\rr^n$.

We use the physics convention that a sesquilinear form is anti-linear in its first and linear in its second argument.

We use the notation $\cc^*=\cc\setminus\{0\}$, $\rr_+=[0,\infty)$, $\rr_+^*=(0,\infty)$, $\cc_+^{(*)}=\rr+\i\rr_+^{(*)}$, and analogously for $\rr_-^{(*)}$ and $\cc_-^{(*)}$.
For $x\in \cc^n$, $\IP{x}=(1+\abs{x}^2)^{1/2}$ will denote its Japanese bracket.

In this work, a spacetime is considered to be a tuple $(M,g,\fO, \fT)$, where $M$ is a smooth, 4-dimensional, connected, Hausdorff, paracompact manifold equipped with a smooth Lorentzian metric $g$ of signature $(+---)$. We assume that $(M,g)$ is globally hyperbolic, orientable and time-orientable. $\fO$ is an equivalence class of nowhere vanishing 4-forms which expresses the orientation of the spacetime, and $\fT$ is an equivalence class of nowhere vanishing $g$-timelike 1-forms (i.e. $g^{-1}(\fT, \fT)>0$) expressing the time orientation. In both cases, $\fO$ and $\fO'$ or $\fT$ and $\fT'$ are equivalent if there is a strictly positive function $f\in C^\infty(M;\rr)$ so that $\fO'=f\fO$ or $\fT'=f\fT$. The volume form induced by the metric $g$ will be denoted by $\dVol_g$.

If $x\in M$, then the causal future of $x$, $J^+(x)$, denotes the set of all points $y\in M$ that can be reached from $x$ by a smooth, future-directed causal curve (i.e. a curve $\gamma:[a,b]\to M$ with $\gamma(a)=x$, $\gamma(b)=y$, and tangent vector $\gamma'$ satisfying $\fT(\gamma')> 0$ and $g(\gamma', \gamma')\geq 0$). Similarly, the causal past of $x$, $J^-(x)$, denotes the set of points that can be reached by a smooth past-directed causal curve. We denote $J(x)=J^+(x)\cup J^-(x)$, and for $K\subset M$, $J^\pm(K)=\bigcup_{x\in K} J^\pm(x)$ and $J(K)=J^+(K)\cup J^-(K)$. In a similar way, we define the timelike future of $x\in M$, $I^+(x)$ as the set of points $y\in M$ that can be reached from $x$ by a smooth, non-trivial future-directed timelike curve ($\gamma:[a,b]\to M$ with $a\neq b$, $\fT(\gamma')>0$, and $g(\gamma',\gamma')>0$). In particular, note that $x\notin I^+(x)$, but $x\in J^+(x)$. $I^-(x)$ is defined analogously. Again, we set $I^\pm(K)=\bigcup_{x\in K} I^\pm(x)$ and $I(K)=I^+(K)\cup I^-(K)$.

If $E\xrightarrow{\pi}{M}$ is a vector bundle over $M$, we denote its dual bundle by $E^\#$, its complex conjugate bundle by $\bar{E}$, and its anti-dual bundle by $E^*$. The fibre $\pi^{-1}(x)$ for some $x\in M$ will frequently be denoted as $E_x$. We denote a vector bundle $E$ with its zero-section removed as $\dot E$. 

 If $F\xrightarrow{\tilde\pi}M$ is a vector bundle over the same manifold, we denote by ${\rm Hom}(E;F)$ the vector bundle of homomorphisms from $E$ to $F$, i.e., the elements of the fibres ${\rm Hom}(E;F)_x$ are linear maps from $E_x$ to $F_x$. We write ${\rm Hom}(E)$ for ${\rm Hom}(E;E)$.

The space of smooth sections of $E$ is denoted by $\Gamma(E)$. $\Gamma_{c/sc/pc/fc}(E)$ denote the space of sections with compact/space compact/past compact/future compact support. Here, a section $f\in \Gamma(E)$ is space-compact if there exists a compact set  $K\subset M$ so that $\supp(f)\subset J(K)$. $f\in \Gamma_{pc/fc}(E)$ means that $\supp f\cap J^\mp(x)$ is compact for any $x\in M$. 

Similarly, if $S\subset M$, $\Gamma_S(E)$ denotes the space of sections with compact support contained in $S$. For sections defined on $S\subset M$, we write $\Gamma(E\vert_S)$.

\paragraph{\bf Stokes formula}
Suppose that the vector field $V^a$ is divergence free, $\nabla_a V^a=0$. Then the 3-form $*V_a$, where $*$ is the Hodge dual, is closed, i.e., $
\d(*V_a)
=0.$
By the Stokes theorem we have for a bounded domain $U$ with smooth boundary $\partial U$
\begin{equation*}
\int\limits_U d(*V_a)=\int\limits_{\partial U} *V_a. 
\end{equation*}
Let $\Sigma$ be a spacelike or null hypersurface. Let $n^a$ be a future normal to $\Sigma$ ($n_a\vert_{T\Sigma}=0$) and $l^a$ transverse to $\Sigma$ such that $l_ an^a=1.$ Then we have 
\begin{align*}
\int\limits_{\Sigma} *V_c=
\int\limits_{\Sigma} V_an^a l^b\lrcorner \dVol_g.
\end{align*} 
If $\Sigma$ is spacelike we can take $n^a=l^a$.

\section{The space of spin weighted functions}
\label{sec:bundle}

\subsection{The bundle of oriented, normalized principal null frames}

In this work, we will assume that the spacetime is a solution to the vacuum Einstein equations $Ric(g)=0$ of Petrov type D. In other words, at every point $p\in M$, there will be two linearly independent principle null directions of multiplicity at least 2, i.e. two linearly independent null vector fields $V^\pm$ defined on a neighbourhood $\cU$ of $p$ so that for all $x\in\cU$ and $ v\in T_x M $ satisfying $g_x(v,V^\pm(x))=0$, one has $C(V^\pm(x) , v, V^\pm(x) ,w)=0$  for all $w\in T_xM$. Here, $C_{abcd}$ is the Weyl tensor constructed from $g$. See \cite[Chapter 5]{ON} for more detail.

On $(M,g,\fO,\fT)$, let $(l,n,m)\in T_{\cc}M^3$ be chosen such that $l$ and $n$ are real, align with the principal null directions, and are future directed. Moreover, we demand that $g(l,n)=-g(m,\bar{m})=1$, and all other combinations vanish. Finally, we demand that the null frame $(l,n,m,\bar{m})$ is oriented in the sense that 
\begin{align*}
\left(2^{-1/2}(n+l),\Re(m),-\Im(m),2^{-1/2}(l-n)\right)
\end{align*}
is oriented, i.e. $\fO(2^{-1/2}(l+n), \Re(m), -\Im(m), 2^{-1/2}(l-n))>0$.

\begin{proposition}{\cite{Millet1}}
\label{prop:N}
 The set $\fN$ of all $(l,n,m)\in T_\cc M^3$ satisfying the above conditions is a smooth submanifold of $T_\cc M^3$ and a $\cc^*\rtimes_h\zz/2\zz$\footnote{$\cc^*\rtimes_h \zz/2\zz$ is the outer semi-direct product of the (multiplication) group $\cc^*$ and the (addition) group $\zz/2\zz$. Here, $h:\zz/2\zz\to {\rm Aut}(\cc^*)$ is a group homomorphism. The group $\cc^*\rtimes_h\zz/2\zz$ consists of the set $\cc^*\times \zz/2\zz$ endowed with the group operation
 \begin{equation}
     (z_1,[a]_2)\bullet (z_2, [b]_2)=(z_1 h([a]_2)(z_2), [a+b]_2)\, .
 \end{equation}}
 principal bundle for the right actions of $\cc^*$ (as a group under multiplication) given by $(l,n,m)\cdot z=(\abs{z}l,\abs{z}^{-1}n,z\abs{z}^{-1}m)$ and of $\zz/2\zz$ given by $(l,n,m)\cdot[1]_2=(n,l,\bar{m})$, with $h([1]_2)(z)=z^{-1}$ (performing the action of $\cc^*$ first). The projection map is $\pi_\fN=(\pi_{T_\cc M^3})\vert_\fN$.

$\fN$ has exactly two connected components if either $M$ is simply connected, or $M$ is connected and there exist two smooth, null, future oriented vector fields $V^\pm$ that are independent and principal at any $x\in M$. In this case, the connected component 
\begin{align*}
\fN_0=\{(l,n,m)\in \fN:\, l\text{ resp. }n\text{ collinear to } V^+\text{ resp. }V^-\}
\end{align*}
is given the structure of a $\cc^*$ principal bundle by the subgroup $\{(z,[0]_2):\, z\in \cc^*\}$, while the action of $(1,[1]_2)$ induces a diffeomorphism between $\fN_0$ and the other connected component.
\end{proposition}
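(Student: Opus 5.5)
The plan is to treat $\fN$ as a fibre bundle via local trivializations built from local principal null fields, and then to read off the connected components from the global fields $V^\pm$.

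\textbf{Submanifold structure.} Near $p\in M$, Petrov type D gives smooth real null principal fields $V^\pm$ on a neighbourhood $\cU$, which we may take future directed. By Gram--Schmidt we rescale them so that $g(V^+,V^-)=1$. We then complete them to a local oriented normalized null frame $(l_0,n_0,m_0)$ by choosing a complex null $m_0$ orthogonal to $V^\pm$ with $g(m_0,\bar m_0)=-1$. This is possible because the orthogonal complement of $\mathrm{span}(V^+,V^-)$ is a spacelike 2-plane, and we fix $m_0$ up to conjugation by the orientation condition.

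Any other admissible frame at $x\in\cU$ must have $l,n$ collinear to $\{V^+,V^-\}$ in some order, since the principal directions are exactly two. Future directedness and $g(l,n)=1$ then force either $(l,n)=(\lambda V^+,\lambda^{-1}V^-)$ or $(l,n)=(\lambda V^-,\lambda^{-1}V^+)$ with $\lambda>0$. The vector $m$ is a unit null vector in the fixed complex 2-plane, with phase $e^{\i\theta}$; the orientation condition rules out the conjugate choice in each of the two cases, which is why the swap acts by $m\mapsto\bar m$.

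This yields a bijection $\cU\times(\cc^*\rtimes_h\zz/2\zz)\to\pi^{-1}(\cU)$, given by $(x,g)\mapsto(l_0,n_0,m_0)(x)\cdot g$. It is smooth and an immersion into $T_\cc M^3$ with closed image in $\pi^{-1}(\cU)$. Hence $\fN$ is an embedded submanifold, and these maps are local trivializations.

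\textbf{Principal bundle axioms.} We check directly that the stated formulas define a right action. The key relation is $[1]_2\cdot z\cdot[1]_2 = z^{-1}$ on frames: swapping $l,n$ and conjugating $m$ turns the scaling $(\abs z,\abs z^{-1},z/\abs z)$ into the scaling by $z^{-1}$. This is exactly the semidirect product law with $h([1]_2)(z)=z^{-1}$. The action is free and transitive on fibres by the classification above, and the trivializations are equivariant. So $\fN$ is a principal bundle.

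\textbf{Components.} If global $V^\pm$ exist, $\fN_0$ is the subset where $l\parallel V^+$, and its complement is $\fN_0\cdot[1]_2$. Both sets are open and closed, since locally they are $\cU\times\cc^*\times\{[a]_2\}$. Each is connected because $\cc^*$ is connected and $M$ is connected, so there are exactly two components.

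In the simply connected case, we argue that the double cover of $M$ given by the unordered pair of principal null directions is trivial. A choice of "first" direction then gives global $V^\pm$, after a partition-of-unity normalization preserving future directedness, and this reduces to the previous case. The restricted action of $\cc^*\times\{[0]_2\}$ preserves $\fN_0$ and makes it a principal $\cc^*$-bundle.

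\textbf{Main obstacle.} The step I expect to be hardest is the classification of the fibre, specifically how the orientation condition rules out $m\mapsto\bar m$ without the swap. I would verify this by computing the determinant of the real frame under the swap $l\leftrightarrow n$, which reverses the sign of $2^{-1/2}(l-n)$, and checking that conjugating $m$ flips $-\Im(m)$. Together these two sign changes compensate, so the swap combined with conjugation preserves orientation while conjugation alone does not.
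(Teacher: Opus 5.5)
The paper gives no proof of this proposition; it is quoted from \cite{Millet1}, so your argument has to stand on its own, and in substance it does. The following steps are all correct:
\begin{itemize}
\item your classification of the fibre;
\item the identity $[1]_2\cdot z\cdot[1]_2=z^{-1}$, which gives the semidirect product law with the $\cc^*$-action performed first;
\item the orientation count: boosts and phase rotations act with positive determinant, the swap flips $2^{-1/2}(l-n)$, and conjugation flips $-\Im(m)$;
\item the decomposition $\fN=\fN_0\sqcup\fN_0\cdot[1]_2$ into open, closed, connected pieces;
\item the covering-space reduction in the simply connected case.
\end{itemize}

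One inference needs tightening. An injective immersion with closed image need not be an embedding; a figure-eight parametrised by $\rr$ is a counterexample. So ``closed image'' alone does not make $\fN$ an embedded submanifold.

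The repair is immediate, because your chart $\Phi(x,z,[a]_2)=(l_0,n_0,m_0)(x)\cdot(z,[a]_2)$ has an explicit smooth left inverse on an open neighbourhood of its image in $T_\cc M^3$:
\begin{itemize}
\item on the sheet $[a]_2=[0]_2$ you recover $x=\pi(l,n,m)$ and $z=-g(l,n_0(x))\,g(m,\bar m_0(x))$;
\item on the sheet $[a]_2=[1]_2$ you recover $z=-g(\bar m,\bar m_0(x))/g(l,l_0(x))$;
\item the two sheets lie in the disjoint open sets $\{\abs{g(l,n_0)}>\abs{g(l,l_0)}\}$ and $\{\abs{g(l,n_0)}<\abs{g(l,l_0)}\}$.
\end{itemize}
A map with a continuous left inverse on a neighbourhood of its image is a homeomorphism onto that image. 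Together with the immersion property, $\Phi$ is therefore an embedding. This gives the submanifold structure and shows that your charts are genuine local trivialisations of the principal bundle.
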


From now on, we assume that there is a global, smooth choice of the principal null directions $V^\pm$, so that $\fN$ has two connected components. In this case, one may consider the background to consist of the tuple $\fM=(M,g,\fO, \fT, \fV)$, where $(M,g,\fO,\fT)$ is the spacetime, and $\fV$ is an equivalence class of nowhere vanishing 1-forms so that $g^{-1}(\fV)$ is a principal null direction at any $x\in M$. Equivalence of $\fV$ and $\fV'$ holds if there is a non-vanishing function $f\in C^\infty(M;\rr)$ so that $\fV'=f\fV$. The equivalence class $\fV$ expresses the choice of "outgoing" principle null direction. One can then define $\fN_0$ on this background by setting $V^+=g^{-1}(\fV)$ and defining $\fN_0$ as above, or by expressing the condition that $l$ be collinear to $V^+$ as the condition that $\fV(l)=0$.

In agreement with the literature, we will refer to the diffeomorphism induced by the action of $(1,[1]_2)$ as  the "prime" operation, denoted by a prime.

\begin{definition}
\label{def:bck map and bck pred map}
     If $\psi:M\to \widetilde{M}$ is an embedding, and if $\widetilde{M}$ can be equipped with $(\widetilde{g},\widetilde{\fO},\widetilde{ \fT},\widetilde{ \fV})$ so  that $\widetilde{M}$ is globally hyperbolic and of Petrov type D, $(\psi(M)\subset \widetilde{M}, \widetilde{g})$ is causally convex, and $\psi^*(\widetilde{g},\widetilde{\fO}, \widetilde{\fT}, \widetilde{\fV})=(g, \fO,\fT, \fV)$, then $\psi$ induces a {\it background map} $\fM\to \widetilde{\fM}=(\widetilde{M},\widetilde{g},\widetilde{\fO},\widetilde{\fT},\widetilde{\fV})$ which we will also denote by $\psi$ in an abuse of notation. 
   If $\cU_{1/2}$ are open subsets of $M$ and $\psi:\cU_1\to\cU_2$ is a diffeomorphism satisfying $\psi^*(g,\fO,\fT,\fV)\vert_{\cU_2}=(g,\fO,\fT,\fV)\vert_{\cU_1}$, then we call (the background map induced by) $\psi$ a \it{local background preserving diffeomorphism}.
\end{definition}

\begin{definition}
\label{def:N0 cover map}
    Let $\upsilon: \fM\to  \widetilde{\fM}$ be a background map. 
    Then we define the cover map $\Upsilon:\fN_0\to \widetilde{\fN}_0$ by
    \begin{align}
        \fN_{0,p}\ni (l,n,m) \mapsto \upsilon_*(l,n,m) \in  \widetilde{\fN}_{0,\upsilon p}
    \end{align}
    where $p\in M$, and $\upsilon_*$ is the push-forward on $T_\cc M^3$ restricted to $\fN_0$.\footnote{ Here and in the following, the bundle $\widetilde{\fN}_0$ denotes the bundle $\fN_0$ constructed over the background $\widetilde{\fM}$, and similar for other bundles constructed from the background.} It is a linear bijection from $\pi_{\fN_0}^{-1}(p)$ to $\pi_{ \widetilde{\fN}_0}^{-1}(\upsilon p)$ for all $p\in M$. \\
    The push-forward $\Upsilon^*: \Gamma(\fN_0)\to\Gamma(\widetilde{\fN}_0)$ is given by $\Upsilon^*(f)=\Upsilon\circ f \circ \upsilon^{-1}$ for all $f\in \Gamma(\fN_0)$ \cite{SV}. 
\end{definition}

\begin{remark}
\label{rem:bp diffeos}
Let $\cU_1$, $\cU_2\subset M$ and $\upsilon:\cU_1\to \cU_2$ be a background preserving diffeomorphism. In this case, the cover map $\Upsilon$ maps $\fN_0\vert_{\cU_1}$ to $\fN_0\vert_{\cU_2}$.  If there is an  $\upsilon$-invariant section $(l,n,m)\in \Gamma( \fN_{0}\vert_\cU)$ defined on some region $\cU\subset M$ containing $\cU_{1/2}$, i.e., a section which satisfies
\begin{align}
    \upsilon_*((l,n,m)(\upsilon^{-1}q))=(l,n,m)(q) \quad \forall q\in \cU_2\, ,
\end{align}
then for any $p\in \cU_1$ and any $a=(l,n,m)(p)\cdot z\in \fN_{0,p}$, one can write
\begin{align}
    \Upsilon(a)= (l,n,m)(\upsilon p)\cdot z\, .
\end{align}
 As a consequence, if $(\tilde{l},\tilde{n},\tilde{m})\in \Gamma(\fN_{0}\vert_{\cU})$, so that  for every $x\in\cU$
    \begin{align*}
        (\tilde{l},\tilde{n},\tilde{m})(x)=(l,n,m)(x)\cdot z(x)\, ,\quad z\in\cinf(\cU;\cc^*)\, ,
    \end{align*}
 then 
\begin{align*}
    (\Upsilon^*(\tilde{l},\tilde{n},\tilde{m}))( q)=\Upsilon((\tilde{l},\tilde{n},\tilde{m})(\upsilon^{-1} q))=(\tilde{l},\tilde{n},\tilde{m})( q)\cdot z^{-1}(q)\cdot z(\upsilon^{-1}q) \quad \forall q\in \cU_2\, .
\end{align*}   
\end{remark}

\subsection{The bundle of spin-weighted scalars}

\begin{definition} 
Let $s,w\in \tfrac{1}{2}\zz$, and let the representation $\rho_{(s,w)}:\cc^*\to \text{GL}(\cc)$ be given by
\begin{align}
 z\mapsto (a\mapsto \abs{z}^{-(w-s)}z^{-s} a)\, .
\end{align}
$s$ is called the spin weight and $w$ the boost weight of the representation.
We define $\cB(s,w)\xrightarrow{\pi_{(s,w)}}M$ as the complex line bundle associated to $\fN_0$ (as a $\cc^*$ principal bundle with the right action described above) and the representation $\rho_{(s,w)} $.\footnote{Alternatively, one can consider $\cB(s,w)$ as an associated bundle to the $\cc^*$-bundle $\cA_0$ of orthonormal spin frames $(o, \iota)$, such that $l=o\otimes \bar{o}$ and $n=\iota\otimes \bar{\iota}$ are independent and future directed along the principal null directions $V^+$ and $V^-$ (see Proposition \ref{prop:N}), with right action $(o, \iota)\cdot \lambda=(\lambda o, \lambda^{-1}\iota)$. In this picture, the representation is
\begin{align*}
\widetilde\rho_{(s,w)}:\cc^*\to \text{GL}(\cc)\, \quad \lambda\mapsto (a\mapsto \lambda^{-w-s}\overline{\lambda}^{-w+s} a)
\end{align*}
This can be identified with the above by using the double cover map $d$ from $\cA_0$ to $\fN_0$, which for any $a\in \cA_0$ satisfies $d(a\cdot \lambda)=d(a)\cdot \lambda^ 2$, see e.g. \cite{Millet1}.} 
In other words, $\cB(s,w)=\fN_0\times \cc /\sim$, where for $a, a' \in \fN_{0,x}$, $c, c' \in \cc$, $(a, c)\sim(a',c')$ if there is a $z\in \cc^*$ so that $(a', c')=(a\cdot z,\rho_{(s,w)}(z^{-1})c)$. 

\end{definition}

The sections of this bundle will be called the spin weighted functions of spin $s$ and boost weight $w$. For $(s,w)\in \zz\times\zz$, they arise as the components of tensor densities in a null frame determined by $(l,n,m)\in  \Gamma(\fN_0)$. They can be identified with equivariant functions on $\fN_0$, i.e. functions $f:\fN_0\to\cc$ so that for any $z\in \cc^*$ and $a\in \fN_0$, $f(a\cdot z)=\rho_{(s,w)}\left(z^{-1}\right)f(a)$.

\begin{remark}
\label{rem:prime cc}
As remarked in \cite{Millet1}, one has the identification $\cB(s+s', w+w')=\cB(s,w)\otimes\cB(s',w')$. Moreover, for the components of conformal tensor densities, the prime relation on $\fN$ induces a relation $':\cB(s,w)\to \cB(-s,-w)$. In addition, complex conjugation maps $\bar{\cdot}:\cB(s,w)\to \cB(-s,w)$.
\end{remark}

\begin{remark}
Let $(l,n,m)$ be a smooth section of $\fN_0$ defined on some open set $\cU\subset M$. Then this induces a local trivialization of $\fN_0$, $\psi:\cU\times \cc^*\to \pi_{\fN_0}^{-1}(\cU)$, $(x,z)\mapsto (l,n, m)(x)\cdot z$. It also induces a local trivialization of $\cB(s,w)$ over $\cU$ for any $s,w\in\rr$, namely the map $\psi_{(s,w)}:\cU\times \cc\ni (x,c)\mapsto [((l,n,m)(x),c)]$. 

Moreover, given another smooth section $(\tilde{l},\tilde{n},\tilde{m})$ of $\fN_0$ defined over $\tilde\cU\subset M$ so that for $x\in \cU\cap \tilde\cU$, one has $(\tilde{l},\tilde{n}, \tilde{m})(x)=(l,n,m)(x)\cdot z(x)$, let  $\tilde{\psi}_{(s,w)}$ be the local trivializations of $\cB(s,w)$ over $\tilde\cU$ induced by this section. Then the transition function $\tau:=\tilde\psi_{(s,w)}^{-1}\circ \psi_{(s,w)}: ((\cU\cap\tilde\cU)\times \cc)\to ((\cU\cap\tilde\cU)\times \cc)$ 
is given by 
\begin{align*}
\tau (x,c)=\left(x, \rho_{(s,w)}(z^{-1}(x))c\right)=\left(x,\abs{z(x)}^{w-s}  z(x)^s c\right)\, .
\end{align*}
\end{remark}

\begin{definition}
\label{def:cover of diffeos}
    Let $\upsilon:\fM\to \widetilde{\fM}$ be a background map. 
     Then we define the cover map $\Upsilon_{(s,w)}:\cB(s,w)\to \widetilde{\cB}(s,w)$ of $\upsilon$ by
    \begin{align}
       \cB(s,w)_p\ni [(a,c)] \mapsto [(\Upsilon a,c)]=[(\upsilon_*a,c)]\in \widetilde{\cB}(s,w)_{\upsilon p}\, ,
    \end{align}
     for any $p\in M$, $a\in \fN_{0,p}$, and $c\in \cc$.
     $\Upsilon$ is the cover map of $\upsilon$ on $\fN_0$ defined in Definition~\ref{def:N0 cover map}.

    The map $\Upsilon_{(s,w)}$ induces a push-forward map
    \begin{align*}
       \Gamma(\cB(s,w))\ni f\mapsto\Upsilon_{(s,w)}^*f= \Upsilon_{(s,w)}\circ f \circ \upsilon^{-1}\in \Gamma(\widetilde{\cB}(s,w))\, .
    \end{align*}
\end{definition}

\begin{lemma}
\label{lemma:trafo of triv. flow}
    Let $\cU$, $\cU_1$, $\cU_2\subset M$, so that $\cU_1\cup\,\cU_2\subset \cU$. Let $\upsilon:\cU_1\to\cU_2$ be a background preserving diffeomorphism. Let $(l,n,m)(p)\in  \Gamma(\fN_{0}\vert_{\cU})$ be an $\upsilon$-invariant section.
    Let $C:x\mapsto [a(x), c(x)]\in \Gamma(\cB(s,w)\vert_{\cU_1})$ with $a(x)=(l,n,m)(x)\cdot z(x)$ be a section of $\fN_0$ defined on $\cU_1\cup\cU_2$. Then for any $p\in \cU_2$
    \begin{align}
    \label{eq:trafo w inv tetrad}
        (\Upsilon_{(s,w)}^*C)(p)=[a(p), \rho_{(s,w)}(z^{-1}(p))\rho_{(s,w)}(z(\upsilon^{-1}p))c(\upsilon^{-1}p)]\, .
    \end{align}
\end{lemma}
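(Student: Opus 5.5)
The argument is a direct unwinding of the definitions, combining Definition~\ref{def:cover of diffeos} with Remark~\ref{rem:bp diffeos}. Fix $p\in\cU_2$ and put $q=\upsilon^{-1}p\in\cU_1$. By the definition of the push-forward,
\begin{align*}
(\Upsilon_{(s,w)}^*C)(p)=\Upsilon_{(s,w)}\big(C(q)\big)=[(\Upsilon a(q),c(q))]\, .
\end{align*}
So the only task is to express $\Upsilon a(q)$ in terms of the frame $a(p)$ at $p$, and then use the equivalence relation defining $\cB(s,w)$ to move the resulting group element onto the fibre coordinate.

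\textbf{Step 1: the frame.} Since $a(q)=(l,n,m)(q)\cdot z(q)$ and $(l,n,m)$ is $\upsilon$-invariant, Remark~\ref{rem:bp diffeos} gives
\begin{align*}
\Upsilon a(q)=(l,n,m)(p)\cdot z(q)=a(p)\cdot z(p)^{-1}z(q)\, .
\end{align*}
Here we used that the $\cc^*$-action on $\fN_0$ is a right action of an abelian group, so $(l,n,m)(p)=a(p)\cdot z(p)^{-1}$. The only point needing care is that $\Upsilon$ commutes with the $\cc^*$-action. This holds because $\upsilon_*$ is linear on $T_\cc M^3$ and the action $(l,n,m)\mapsto(\abs{z}l,\abs{z}^{-1}n,z\abs{z}^{-1}m)$ is fibrewise linear scaling.

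\textbf{Step 2: the fibre coordinate.} Write $w=z(p)^{-1}z(q)$. The relation $(a,c)\sim(a\cdot w,\rho_{(s,w)}(w^{-1})c)$, read in reverse, gives
\begin{align*}
[(a(p)\cdot w,\,c)]=[(a(p),\,\rho_{(s,w)}(w)c)]\, .
\end{align*}
Since $\rho_{(s,w)}$ is a homomorphism of the abelian group $\cc^*$,
\begin{align*}
\rho_{(s,w)}(z(p)^{-1}z(q))=\rho_{(s,w)}(z^{-1}(p))\,\rho_{(s,w)}(z(\upsilon^{-1}p))\, .
\end{align*}
Substituting this with $c=c(\upsilon^{-1}p)$ yields \eqref{eq:trafo w inv tetrad}.

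The main obstacle is only bookkeeping. One has to get the direction of the equivalence relation right ($\rho(z^{-1})$ versus $\rho(z)$) and confirm that $\Upsilon$ is $\cc^*$-equivariant. Both follow immediately from the definitions, so no real difficulty is expected.
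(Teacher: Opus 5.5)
Your proposal is correct and matches the paper's proof step for step. You evaluate $\Upsilon_{(s,w)}$ on $C(\upsilon^{-1}p)$, use the $\upsilon$-invariance of $(l,n,m)$ via Remark~\ref{rem:bp diffeos} to get $\Upsilon a(\upsilon^{-1}p)=a(p)\cdot z^{-1}(p)\cdot z(\upsilon^{-1}p)$, and absorb this group element into the fibre coordinate using the defining equivalence relation of $\cB(s,w)$. The only blemish is notational: calling the group element $w$ clashes with the boost weight in $\rho_{(s,w)}$, so use a different letter.
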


\begin{proof}
    Using the results of Remark~\ref{rem:bp diffeos}, it is straightforward to compute
    \begin{align*}
        (\Upsilon_{(s,w)}^*C)(p)&=\Upsilon_{(s,w)}(C(\upsilon^{-1}p))=[(\Upsilon a(\upsilon^{-1}p), c(\upsilon^{-1}p))]\\
        &=[(a(p)\cdot z^{-1}(p)\cdot z(\upsilon^{-1}p), c(\upsilon^{-1}p))]\\
        &=[(a(p), \rho_{(s,w)}(z^{-1}(p))\rho_{(s,w)}(z(\upsilon^{-1}p))c(\upsilon^{-1}p))]\, .
    \end{align*}
\qeds
\end{proof}

\begin{lemma}
\label{lemma:dual}
The dual bundle of $\cB(s,w)$ is $\cB(s,w)^\#=\cB(-s,-w)$, the anti-dual bundle is $\cB(s,w)^*=\cB(s,-w)$.
\end{lemma}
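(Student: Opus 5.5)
The plan is to use the general fact that for a bundle $E=P\times_\rho V$ associated to a principal $G$-bundle $P$, the dual bundle is $P\times_{\rho^\#}V^\#$ with $\rho^\#(g)=(\rho(g^{-1}))^T$. The anti-dual bundle is $P\times_{\bar\rho^\#}\bar V^\#$, whose representation is the complex conjugate of the dual one. Since $\cB(s,w)$ is a line bundle, transposes are trivial. Everything therefore reduces to identifying these representations of $\cc^*$ with some $\rho_{(s',w')}$, and then reading off $(s',w')$.

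First, the dual. For $\rho_{(s,w)}(z)=\abs{z}^{-(w-s)}z^{-s}$ we get
\begin{align*}
\rho^\#(z)=\rho_{(s,w)}(z^{-1})=\abs{z}^{(w-s)}z^{s}=\abs{z}^{-((-w)-(-s))}z^{-(-s)}=\rho_{(-s,-w)}(z).
\end{align*}
Hence $\cB(s,w)^\#\cong\cB(-s,-w)$. I will make the pairing explicit: define $\langle[(a,c)],[(a,d)]\rangle=cd$ for $[(a,c)]\in\cB(s,w)_x$ and $[(a,d)]\in\cB(-s,-w)_x$. Independence of the representative $a$ follows because replacing $a$ by $a\cdot z$ multiplies $c$ by $\rho_{(s,w)}(z^{-1})$ and $d$ by $\rho_{(-s,-w)}(z^{-1})=\rho_{(s,w)}(z)$, so the product $cd$ is unchanged. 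The pairing is nondegenerate fibrewise and smooth in local trivializations, so it gives a bundle isomorphism $\cB(-s,-w)\to\cB(s,w)^\#$.

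Second, the anti-dual. It is given by the pairing $([(a,c)],[(a,d)])\mapsto\bar c d$, and the representation on the second slot must be $\overline{\rho_{(s,w)}(z)}$. We compute
\begin{align*}
\overline{\rho_{(s,w)}(z)}=\abs{z}^{-(w-s)}\bar z^{-s}=\abs{z}^{-(w-s)}\abs{z}^{-2s}z^{s}=\abs{z}^{-(w+s)}z^{s}.
\end{align*}
To absorb the conjugation we need $\rho_{(s',w')}(z)\,\overline{\rho_{(s,w)}(z)}=1$. Here $\rho_{(s',w')}(z)=\abs{z}^{-(w'-s')}z^{-s'}$, so $s'=s$, and then $-(w'-s)-(w+s)=0$ forces $w'=-w$. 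Thus $\cB(s,w)^*\cong\cB(s,-w)$. This is consistent with Remark~\ref{rem:prime cc}: the anti-dual is the complex conjugate of the dual, and $\overline{\cB(-s,-w)}=\cB(s,-w)$.

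The only real obstacle is bookkeeping. One must use the right convention for how the fibre coordinate transforms under $a\mapsto a\cdot z$, namely $c\mapsto\rho(z^{-1})c$. One must also handle $\bar z^{-s}$ correctly when $s$ is half-integer, where $z^{-s}$ is understood via the spin-frame double cover of the footnote. In that case I would redo the same computation with $\widetilde\rho_{(s,w)}(\lambda)=\lambda^{-w-s}\bar\lambda^{-w+s}$. There conjugation visibly swaps the exponents of $\lambda$ and $\bar\lambda$, giving $\bar\lambda^{-w-s}\lambda^{-w+s}$. Inverting this yields the exponents $(w+s,\,w-s)$, which match the form $(-w'-s',\,-w'+s')$ exactly when $s'=s$ and $w'=-w$, confirming the result.
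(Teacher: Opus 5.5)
Your proof is correct and follows the paper's route. You identify the dual via $\rho_{(s,w)}(z^{-1})=\rho_{(-s,-w)}(z)$ and an explicit invariant pairing, getting surjectivity from one-dimensionality of the fibres where the paper checks it by hand; your direct computation of the conjugate-dual representation for the anti-dual is the paper's reduction $\cB(s,w)^*=\overline{\cB(s,w)}^{\#}=\cB(-s,w)^{\#}=\cB(s,-w)$ via Remark~\ref{rem:prime cc} written out.

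One small slip: in the half-integer check, the inverted exponents should be listed as $(w-s,\,w+s)$ in the $(\lambda,\bar\lambda)$ order used by $(-w'-s',\,-w'+s')$. As you wrote it, $(w+s,\,w-s)$ would literally force $s'=-s$, although your conclusion $s'=s$, $w'=-w$ is right.
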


\begin{proof} The dual representation to $\rho_{(s,w)}(z)$ is $\rho_{(s,w)}(z^{-1})=\rho_{(-s,-w)}(z)$. Let $x\in M$, $[a, c]\in \cB(s,w)_x$ and $[b, d]\in \cB(-s,-w)_x$. There is a $z\in \cc^*$ so that $b=a\cdot z$. Then define 
\begin{align}
\label{eq: dual eval}
[b,d]\left([a,c]\right):=  \rho_{(s,w)}(z^{-1})c d= \rho_{(-s,-w)}(z)c d\, .
\end{align}
A straightforward calculation then shows that this is indeed well-defined, linear and therefore maps elements of $\cB(-s,-w)_x$ to elements of $\cB(s,w)^\#_x=\text{Hom}(\cB(s,w)_x,  \cc)$. Similarly, let $\phi\in \cB(s,w)^\#_x=\text{Hom}(\cB(s,w)_x, \cc)$. 
Then for any $[a,b]$ and $[c,d]$ in $\cB(s,w)_x$, with $c=a\cdot z_c\in \fN_{0,x}$, and any $z\in \cc^*$, one must have
\begin{align*}
    \phi([a,b])&=\phi([a\cdot z, \rho_{(s,w)}(z^{-1})b])\\
    \phi([a,b+\rho_{(s,w)}(z_c)d])&=\phi([a,b])+\phi([c,d])\, .
\end{align*}
 For $a\in \fN_{0,x}$, define $\phi_a:\cc\to\cc$ as the map $\phi_a(b)=\phi([a,b])$. It follows from the above that $\phi_a\in \text{Hom}(\cc,\cc)$, which is isomorphic to $\cc$, so we may write $\phi_a(b)=\phi_a b$ by suppressing this isomorphism. It then follows that $\phi_{a\cdot z}=\rho_{(s,w)}(z)\phi_a=\rho_{(-s,-w)}(z^{-1})\phi_a$. Hence, $\phi$ can be identified with $[a, \phi_a]\in \cB(-s,-w)_x$, and consequently $\cB(s,w)^\#=\cB(-s,-w)$.

Since $s$ and $w$ were arbitrary in this discussion, it follows immediately that in the same way $\cB(s,-w)$ can be identified with the dual of $\cB(-s,w)$, the complex conjugate of $\cB(s,w)$ by Remark \ref{rem:prime cc} 
\qeds
\end{proof}
 As a result of this, it follows immediately
\begin{corollary}
\label{cor:B(s,0) IP}
    There is a hermitian fibre metric on $\cB(s,0)$ given by 
    \begin{equation}
        \IP{\cdot,\cdot}_{0,x}:(f,g)\in \cB(s,0)_x\times\cB(s,0)_x\mapsto \IP{f,g}_{0,x}:=g(\bar{f})\in \cc\, ,
    \end{equation}
    using the notation from \eqref{eq: dual eval}.
\end{corollary}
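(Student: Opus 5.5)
The plan is to check directly that the map $(f,g)\mapsto g(\bar f)$ is well defined, sesquilinear, hermitian, smooth and positive definite. First, well-definedness of the pairing. By Remark~\ref{rem:prime cc}, complex conjugation maps $\cB(s,0)\to\cB(-s,0)$, and by Lemma~\ref{lemma:dual} we have $\cB(-s,0)^\#=\cB(s,0)$. Hence an element $g\in\cB(s,0)_x$ acts as a linear functional on $\bar f\in\cB(-s,0)_x$ through \eqref{eq: dual eval}, and $g(\bar f)\in\cc$ makes sense. Sesquilinearity follows because $f\mapsto\bar f$ is anti-linear, while the evaluation is linear in both $g$ and in its argument. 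This matches the paper's convention (anti-linear in the first slot).

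Next I compute in representatives. Write $f=[a,c]$ and $g=[a,d]$ with the same frame $a\in\fN_{0,x}$; this is always possible after applying the equivalence relation. Complex conjugation on the associated bundle should be $\bar f=[a,\bar c]$, viewed in $\cB(-s,0)$. I will check consistency: under $a\mapsto a\cdot z$ we have $c\mapsto\rho_{(s,0)}(z^{-1})c=\abs{z}^{s}z^{s}c$. Since $\abs{z}^sz^s$ conjugates to $\abs{z}^s\bar z^{s}$, and $\bar z^s\abs{z}^{s}=\abs{z}^{2s}z^{-s}$, this equals $\rho_{(-s,0)}(z^{-1})$. So $\bar c$ transforms correctly.

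Then \eqref{eq: dual eval}, with $b=a$ and $z=1$, gives $g(\bar f)=\bar c\,d$. From this, hermiticity $\IP{g,f}_0=\overline{\IP{f,g}_0}$ is immediate, and $\IP{f,f}_0=\abs{c}^2>0$ for $f\neq0$. Independence of the chosen frame also follows. Replacing $a$ by $a\cdot z$ multiplies both $c$ and $d$ by $\rho_{(s,0)}(z^{-1})=\abs{z}^{s}z^{s}$, which has modulus one because $\abs{z}^s z^s=\abs{z}^{2s}e^{\i s\arg z}\cdot$ — more precisely, $\rho_{(s,0)}(z^{-1})=\abs{z}^{s}z^{s}$ times the normalization in the $\cc^*$-action. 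This is exactly the point that needs care.

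The main obstacle is precisely this modulus computation. It is where boost weight $w=0$ enters: for $w\neq0$ the factor $\abs{z}^{-w}$ survives in $\abs{\rho}^2$, and no invariant positive form exists. I would verify it using the transition function $\tau(x,c)=(x,\abs{z}^{w-s}z^sc)$. At $w=0$ this is $\abs{z}^{-s}z^{s}c$, which has modulus $\abs{c}$, so $\bar c d$ is invariant. Finally, smoothness follows because, in a local trivialization induced by a smooth section of $\fN_0$, the metric is simply $\bar c\,d$.
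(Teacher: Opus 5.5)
Your argument is correct and is essentially the paper's, which obtains the corollary immediately from Lemma~\ref{lemma:dual} and conjugation $\cB(s,0)\to\cB(-s,0)$: $g\in\cB(s,0)=\cB(-s,0)^\#$ can be evaluated on $\bar f\in\cB(-s,0)$. The paper leaves implicit the common-frame computation $\IP{f,g}_{0,x}=\bar c\,d$ that you use to show hermiticity and positivity. Do fix the slip in your middle paragraphs: $\rho_{(s,0)}(z^{-1})=\abs{z}^{-s}z^{s}$, not $\abs{z}^{s}z^{s}$, since the latter has neither modulus one nor conjugate equal to $\rho_{(-s,0)}(z^{-1})=\abs{z}^{s}z^{-s}$. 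With the correct factor, both $\overline{\rho_{(s,0)}(z^{-1})}=\rho_{(-s,0)}(z^{-1})$ and $\abs{\rho_{(s,0)}(z^{-1})}=1$ hold, exactly as your final transition-function paragraph states.
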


In the following, the connection on the bundle of spin-weighted scalars plays an important role. We use the common definition, see \cite{GHP}, and \cite{Millet1} for a detailed account.
\begin{definition}
\label{def:connection}
 Let $\fM=(M,g, \fO, \fT, \fV)$ be a background, and let $(l,n,m)\in \Gamma(\fN_0\vert_\cU)$ for some set $\cU\subset M$. Then we define the connection $\Theta_a: \Gamma(\cB(s,w))\to \Gamma(\cB(s,w))\otimes T^*\cM$ to be given by 
\begin{align}
\Theta_a&=\nabla_a-(w+s)w_a-(w-s)\bar{w}_a\, , &  w_a&=\frac{1}{2}\left(n^b\nabla_al_b+m^b\nabla_a\bar{m}_b\right)\, 
\end{align}
in the trivialization induced by $(l,n,m)$ on $\cU$.
\end{definition}

Contracting with the corresponding null frame, one obtains the GHP operators \cite{GHP}
\begin{subequations}
\begin{align}
\label{eq:tho}
\tho&=l^{a}\Theta_a :\,\Gamma\cB(s,w))\to \Gamma(\cB(s,w+1)) \\
\label{eq:thop}
 {\tho}'&=n^{a}\Theta_a\, :\,\Gamma(\cB(s,w))\to \Gamma(\cB(s,w-1))\\
\edt&=m^{a}\Theta_a:\,\Gamma(\cB(s,w))\to \Gamma(\cB(s+1,w)) \\
 {\edt}'&=\bar{m}^{a}\Theta_a:\,\Gamma(\cB(s,w))\to \Gamma(\cB(s-1,w))\,.
\end{align}
\end{subequations}
When referring to a GHP-operator acting on spin-weighted functions of a specific spin $s_0$ and boost weight $w_0$, we will indicate the weights in form of a subscript, for example
${\edt}_{(s_0,w_0)}:\Gamma(\cB(s_0,w_0))\to \Gamma(\cB(s_0+1,w_0))$\, .

The contracted derivatives of the null frame produce the spin coefficients, of which \cite{GHP, Aksteiner}
\begin{subequations}
\begin{align}
\rho&=m^{a}\bar{m}^b\nabla_bl_a\in \Gamma(\cB(0,1)) & \rho'&=\bar{m}^{a}m^b\nabla_bn_a \in \Gamma(\cB(0,-1))\\
\tau&=m^{a}n^b\nabla_bl_a \in \Gamma(\cB(1,0))& \tau'&=\bar{m}^{a}l^b\nabla_bn_a\in \Gamma(\cB(-1,0))
\end{align}
\end{subequations}
and their complex conjugates are the only ones which are spin-weighted functions and do not vanish in a Petrov type D spacetime.

\subsection{The reduced bundle}
\label{sec:red bundle}
Let us note that if the spacetime is of Petrov type D, one can make a global choice of the normalisation of the two null vector fields $l$ and $n$. Doing so allows one to identify \cite{Millet1}
\begin{align*}
\fN_{0,r}:=\fN_0/\rr_+^*\simeq \left\{m\in T^*_\cc M: (l,n,m)\in \fN_0\right\}\, ,
\end{align*}
where the quotient is with respect to the action of $\rr_+^*\subset \cc^*$. With this identification, one can associate a local smooth section of $\fN_0$ to any local smooth section of $\fN_{0,r}$.
The reduced bundle $\fN_{0,r}$ has the structure of a $U(1)$-bundle.

\begin{definition}
 For $s\in \tfrac{1}{2}\zz$, take $\nu_s:\phi\mapsto e^{-i s\phi}$ as a representation of $U(1)$ on $\cc$. Then we define $\cB(s)$ as the associated complex line bundle to $\fN_{0,r}$ and the representation $\nu_s$.
\end{definition}
Then one has \cite{Millet1}
\begin{lemma}
\label{isombundles}
$\cB(s)$ is isomorphic to $\cB(s,w)$ for any $w\in \tfrac{1}{2}\zz$.
\end{lemma}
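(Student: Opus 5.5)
The plan is to produce the isomorphism from the global choice of normalisation of $l$ and $n$. That choice gives a smooth section of the quotient map $\fN_0\to\fN_{0,r}=\fN_0/\rr_+^*$, and we use it to reduce $\fN_0$ to the $U(1)$-subbundle $\fN_{0,r}$. Concretely, the fixed normalisation determines the subset
\begin{align*}
\iota:\fN_{0,r}\hookrightarrow \fN_0\, ,\qquad m\mapsto (l,n,m)\, ,
\end{align*}
where $(l,n)$ are the globally normalised null fields. This set is stable under the subgroup $U(1)=\{z\in\cc^*:\abs{z}=1\}$, which acts by $(l,n,m)\cdot e^{\i\phi}=(l,n,e^{\i\phi}m)$. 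It is also a reduction of structure group: every element of $\fN_{0,x}$ can be written uniquely as $\iota(m)\cdot r$ with $r\in\rr_+^*$. The reason is that the $\rr_+^*$-part of the action rescales $l\mapsto rl$, $n\mapsto r^{-1}n$ and leaves $m$ unchanged, so the scale $r$ is read off by comparing $l$ with the normalised one.

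Next, observe that $\rho_{(s,w)}$ restricted to $U(1)$ is $e^{\i\phi}\mapsto e^{-\i s\phi}$, which is exactly $\nu_s$; the boost weight drops out because $\abs{z}=1$. Define
\begin{align*}
\Phi:\cB(s)\to\cB(s,w)\, ,\qquad [m,c]\mapsto[\iota(m),c]\, .
\end{align*}
To check that $\Phi$ is well defined, note that $[m\cdot e^{\i\phi},\nu_s(e^{-\i\phi})c]$ is mapped to $[\iota(m)\cdot e^{\i\phi},\rho_{(s,w)}(e^{-\i\phi})c]$, which equals $[\iota(m),c]$ in $\cB(s,w)$. The map is fibrewise linear and smooth, since in local trivialisations induced by a local section $m$ of $\fN_{0,r}$ (equivalently, the section $\iota(m)$ of $\fN_0$) it is the identity on $\cc$.

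It remains to show that $\Phi$ is a fibrewise bijection. For surjectivity, take any $[a,c]\in\cB(s,w)_x$ and write $a=\iota(m)\cdot r$. Then $[a,c]=[\iota(m),\rho_{(s,w)}(r)c]=[\iota(m),r^{-w}c]$, which is the image of $[m,r^{-w}c]$. For injectivity, use that both bundles are line bundles and that $\Phi$ is nonzero on each fibre. A smooth fibrewise-linear bijection of line bundles is a bundle isomorphism, which proves the lemma.

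The only delicate point is the first step: making sure that the global normalisation really yields a smooth global $U(1)$-reduction compatible with the $\cc^*$-action on $\fN_0$. In particular, the fixed $l$ and $n$ must lie in the connected component $\fN_0$ and satisfy $g(l,n)=1$. Once that is set up, the rest is formal. Note that the isomorphism depends on the chosen normalisation through the factor $r^{-w}$, so it is not canonical unless $w=0$.
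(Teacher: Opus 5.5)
Your proof is correct and follows the paper's argument. You fix the normalisation of $l$ and $n$, define $[m,c]\mapsto[(l,n,m),c]$, check well-definedness using that $\rho_{(s,w)}$ restricted to $U(1)$ is $\nu_s$, and verify bijectivity on each fibre; the only cosmetic difference is that you write an arbitrary frame as $(l,n,m)\cdot r$ with $r\in\rr_+^*$, which gives the explicit factor $r^{-w}$, whereas the paper uses a general $z\in\cc^*$.
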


\begin{proof}
First, let $[m,a]\in \cB(s)$, with $m\in \fN_{0,r}$ and $a\in \cc$.  Fix a pair $(l,n)$ of real null vectors so that $(l,n,m)\in \fN_0$. Then the map $\mu_{(w,l,n)}:[m,a]\mapsto [(l,n,m),a]\in \cB(s,w)$ is well defined:
\begin{align*}
&\mu_{(w,l,n)}([m\cdot\phi, \nu_s(\phi^{-1})a])=[(l,n,e^{i\phi} m),e^{is\phi}a ]\\
&=[(l,n,m)\cdot e^{i\phi}, \rho_{(s,w)}(e^{-i\phi})a]=[(l,n,m),a]=\mu_{(w,l,n)}([m,a])\, .
\end{align*}
From the definition, we see that the kernel of $\mu_{(w,l,n)}$ is the zero section. Moreover, let $[(\widetilde{l},\widetilde{n},\widetilde{m}), c]$ be any element of $\cB(s,w)$. Then there is a $z\in \cc$ such that $(\widetilde{l},\widetilde{n},\widetilde{m})=(l,n,m)\cdot z$, so $[(\widetilde{l},\widetilde{n},\widetilde{m}), c]=[(l,n,m),\rho_{(s,w)}(z)c]$, and this can be identified with $\mu_{(w,l,n)}([m,\rho_{(s,w)}(z)c]$. Hence $\mu_{(w,l,m)}$ is an isomorphism.
\qeds
\end{proof}

Note that the above identification depends on the choice of $l$ and $n$, and that different choices lead to different identifications.
 \begin{remark}
\label{remconnection1}
Let $\mu_{(w,l,n)}:\cB(s)\rightarrow \cB(s,w)$ the ismorphism of Lemma \ref{isombundles}. The connection on $\cB(s,w)$ induces via $\mu_{(w,l,n)}$ a connection on $\cB(s)$ that we denote again by $\Theta$. More specifically, we put for $\phi\in \cB(s)$ and $\nu^a\in T\cM$ 
\begin{align}
\nu^a\Theta_a\phi=\mu_{(w,l,n)}^{-1}(\nu^a\Theta_a\mu_{(w,l,n)}(\phi)). 
\end{align}   
Note that this construction depends on $w$, $l$, and $n$, so it defines a family of connections on  $\cB(s)$ to be more accurate.   
\end{remark}

\subsection{The double bundle}
\label{subsec:double bundle}
\begin{definition}
Let $\cV_s$ be the Whitney sum $\cV_s:=\cB(s,s)\oplus \cB(s,-s)=\cB(s,s)\oplus\cB(s,s)^*$. It is a $\cc^2$-bundle over $M$ with the $\cc^*$-representation $\rho_{s,s}\oplus\rho_{s,-s}$. The sections of this bundle are pairs of spin-weighted functions. 
\end{definition}
\begin{remark}
Let $a$ be a smooth section of $\fN_0$, and let $f\in \Gamma(\cV_s)$. Then we will typically denote $f(x)=[a(x),(f_s(x),\overline{f_{-s}}(x))]$.
\end{remark}

\begin{lemma}
\label{lemma:dual 2}
There is a linear bundle isomorphism $j:\cV_s\to\cV_s^*$, and an antilinear bundle isomorphisms $c:\cV_s\to\overline{\cV_s}$. They induce a sesquilinear form $\IP{\cdot,\cdot}:\cV_{s,x}\times \cV_{s,x}\to \cc$. The sesquilinear form is hermitian and non-degenerate, defining a hermitian structure on the bundle $\cV_s$ in the notation of \cite{GW}  (or a hermitian fibre metric in the notation of \cite{F}). 
\end{lemma}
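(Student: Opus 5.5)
The plan is to build both maps fibrewise out of Lemma~\ref{lemma:dual} and Remark~\ref{rem:prime cc}, and then check well-definedness, hermiticity and non-degeneracy directly in one fibre. By Lemma~\ref{lemma:dual}, $\cB(s,s)^*=\cB(s,-s)$ and $\cB(s,-s)^*=\cB(s,s)$. The anti-dual of a Whitney sum is the sum of the anti-duals, so
\begin{align*}
\cV_s^*=\cB(s,s)^*\oplus\cB(s,-s)^*=\cB(s,-s)\oplus\cB(s,s).
\end{align*}
I therefore define $j:\cV_s\to\cV_s^*$ as the \emph{swap} of the two summands. In a frame $a\in\fN_{0,x}$ it reads
\begin{align*}
j[a,(f_s,\overline{f_{-s}})]=[a,(\overline{f_{-s}},f_s)].
\end{align*}
For well-definedness, note that replacing $a$ by $a\cdot z$ multiplies the first entry by $\rho_{(s,s)}(z^{-1})$ and the second by $\rho_{(s,-s)}(z^{-1})$. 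After the swap, this is exactly the $\rho_{(s,-s)}\oplus\rho_{(s,s)}$ transformation on $\cV_s^*$, so $j$ is linear, fibrewise bijective and smooth in local trivializations.

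For $c$, I use that complex conjugation maps $\cB(s,w)\to\cB(-s,w)$ (Remark~\ref{rem:prime cc}). Together with $\overline{\cB(s,w)}\cong\cB(s,w)^{*\#}$ and Lemma~\ref{lemma:dual}, this gives $\overline{\cB(s,w)}=\cB(-s,w)$. I then take $c$ to be conjugation in a frame, $[a,(f_1,f_2)]\mapsto[a,(\overline{f_1},\overline{f_2})]$, viewed in $\overline{\cV_s}$. This map is antilinear and well defined, because $\overline{\rho_{(s,w)}(z)}=\rho_{(-s,w)}(z)$.

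Combining the two maps with the dual pairing \eqref{eq: dual eval}, I define
\begin{align*}
\IP{f,g}_x:=(j g)(f),
\end{align*}
where $jg\in\cV_{s,x}^*$ is read as an antilinear functional, equivalently as $(jg)(cf)$ under the duality $\cV_s^\#$ paired with $c\cV_s$. Explicitly, with $f=[a,(f_s,\overline{f_{-s}})]$ and $g=[a,(g_s,\overline{g_{-s}})]$, this should give
\begin{align*}
\IP{f,g}=\overline{f_s}\,\overline{g_{-s}}+f_{-s}\,g_s.
\end{align*}
Each of the two products is a conjugated $\cB(s,s)$ component times a $\cB(s,-s)$ component, or the reverse. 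The factors $\abs{z}^{\dots}z^{\dots}$ coming from a change of frame then cancel, so the expression is independent of $a$ and the form is well defined.

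The remaining properties can be read off this formula. It is antilinear in $f$, since $f_{-s}$ enters through $\overline{f_{-s}}$, and linear in $g$. Swapping $f$ and $g$ and conjugating gives $\overline{\IP{g,f}}=\IP{f,g}$, so it is hermitian. Non-degeneracy follows because the matrix pairing the summands is off-diagonal and invertible: if $\IP{f,g}=0$ for all $g$, choosing $g$ supported in either summand forces $f_s=0$ and $f_{-s}=0$. Smoothness in $x$ follows from the smoothness of local frames.

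The main obstacle is bookkeeping rather than conceptual. One must fix conventions consistently: $\overline{f_{-s}}$ as the second component, the identification $\overline{\cB(s,w)}=\cB(-s,w)$ versus $\cB(s,w)^*=\cB(s,-w)$, and the convention that the form is antilinear in the first slot. Only then do the $z$-dependences cancel for the cross pairing. It is worth noting, since the paper stresses this later, that the resulting form has signature $(1,1)$ and is not positive definite. This is precisely why it is a hermitian structure in the sense of \cite{GW}, \cite{F}, and not an inner product.
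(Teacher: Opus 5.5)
Your proposal is correct and follows the paper's proof essentially verbatim. You use the same swap map $j$, the same frame-wise conjugation $c$ (in components $[a,(f_s,\overline{f_{-s}})]\mapsto[a,(\overline{f_s},f_{-s})]$), and the same pairing $\IP{f,g}=j(g)(cf)=\overline{f_s}\,\overline{g_{-s}}+f_{-s}g_s$, with non-degeneracy checked by testing against elements of each summand; the only cosmetic difference is that you verify frame-independence of this formula directly, whereas the paper evaluates with $g$ in a frame $b=a\cdot z$ and invokes $\rho_{(s,w)}(z)=\overline{\rho_{(-s,w)}(z)}$.
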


\begin{proof} By Lemma \ref{lemma:dual}, the dual bundle to $\cV_s$ is $\cV_s^\#=\cB(-s,-s)\oplus \cB(-s,s)$, the anti-dual is $\cV_s^*=\cB(s,-s)\oplus\cB(s,s)$, and the complex conjugate bundle is $\overline{\cV_s}=\cB(-s,s)\oplus\cB(-s,-s)$.

Let $[a, (f_s,\overline{f_{-s}})]\in \cV_{s,x}$ with $f_{\pm s} \in \cc$ and $a\in \fN_{0,x}$. We define the maps 
\begin{align}
j&: \, \cV_s\to \cV_s^*\, , & \left[a,(f_s,\overline{f_{-s}})\right]&\mapsto [a,(\overline{f_{-s}},f_s)]\, ,\\
c&: \, \cV_s\to \overline{\cV_s}\,, &  \left[a,(f_s,\overline{f_{-s}})\right]&\mapsto [a,(\overline{f_{s}},f_{-s})]\, .
\end{align}
One can see immediately that these maps are well-defined and bijective. Moreover, if by an abuse of notation we also  denote 
\begin{align*}
    c:\cV_s^*\ni[a,(\overline{f_{-s}},f_s)]\mapsto [a,(f_{-s},\overline{f_s})]\in \cV_s^\#\, ,\\
    j:\overline{\cV_s}\ni [a,(\overline{f_{s}},f_{-s})]\mapsto [a,(f_{-s},\overline{f_s})]\in \cV_s^\#\, ,
\end{align*}
 then it is straightforward to see that $j\circ c=c\circ j$.

Now, let $x\in M$, $a$, $b\in \fN_{0,x}$, $f=[a, (f_s,\overline{f_{-s}})]$ and $g=[b,(g_s,\overline{g_{-s}})]\in \cV_{s,x}$. We define 
\begin{align}
\label{eq:cV fibre metric}
\IP{g,f}_{ x}:=j(f)(cg)\, ,
\end{align} 
where, by the direct sum structure, and using the definition in \eqref{eq: dual eval},
\begin{align}
j(f)(cg):&=[a,\overline{f_{-s}}]\left([b,\overline{g_s}]\right)+[a,f_s]\left([b,g_{-s}]\right)\\ \nonumber
&=\rho_{(-s,s)}(z)\overline{g_s}\overline{f_{-s}}+\rho_{(-s,-s)}(z)g_{-s}f_s\, ,
\end{align}
with $z\in \cc^*$ such that $b=a\cdot z$. Moreover, the form is hermitian, since $\rho_{(s,w)}(z)=\overline{\rho_{(-s,w)}(z)}=\overline{\rho_{(s,-w)}(z^{-1})}$.
Finally, assume that there is an $f\in \cV_{s,x}$ so that $\IP{f,g}_{ x}=0$ for all $g\in \cV_{s,x}$. By choosing $g$ of the form $[a,(0,1)]$ or $[a,(1,0)]$ one can immediately observe that this requires $f=[b,(0,0)]$, showing that the form is non-degenerate. 
\qeds
\end{proof}

\begin{definition}
\label{def: hermit form}
Let $f,h\in \Gamma_c(\cV_s)$. Then we will denote
\begin{align}
\label{eq:herm form bundle section}
 \IP{f,h}=\int\limits_M \IP{ f(x),h(x)}_x \dVol_g\, .
\end{align}
This  pairing is hermitian, sesquilinear, and non-degenerate.
It can be extended to any pair of sections such that the integral remains finite.
\end{definition}

\begin{remark}
\label{rem:uniqueness fibre metric}
    It was shown in \cite{FewsterKlein} that the hermitian fibre metric constructed in the proof of Lemma~\ref{lemma:dual 2} is uniquely specified, up to a constant $\alpha\in \cc^*$, by demanding that for any linear differential operator $P:\Gamma(\cB(s,s))\to\Gamma(\cB(s,s))$, the extended operator $P\oplus {}^\star P:\Gamma(\cV_s)\to\Gamma(\cV_s)$ is {\it formally hermitian} with respect to this fibre metric, i.e.
    \begin{align}
        \IP{F, (P\oplus {}^\star P)H}=\IP{(P\oplus {}^\star P)F,H}
    \end{align}
    for any $F, H\in \Gamma(\cV_s)$ so that $\supp F\cap\supp H$ is compact. 
     Here, the operator ${}^\star P:\Gamma(\cB(s,-s)\to \Gamma(\cB(s,-s))$ is the {\it formal anti-dual} of $P$ defined by 
    \begin{align}
        \int\limits_M f(\overline{Ph})\dVol_g=\int\limits({}^\star P f)(\overline{h})\dVol_g
    \end{align}
    for all $f\in \Gamma(\cB(s,-s))$ and $h\in \Gamma(\cB(s,s))$ so that $\supp f\cap\supp h$ is compact, using the notation of Lemma~\ref{lemma:dual}. 
\end{remark}

Note that the  hermitian fibre metric on $\cV_s$ constructed in Lemma~\ref{lemma:dual 2} is not positive. This will be a major obstacle in the construction of a positive state later on.

\section{The Kerr spacetime}
\label{sec:Kerr}
In this section, we introduce the Kerr spacetime. Let $M>0$ be the black hole mass and $a$ the black hole angular momentum per unit mass. We also require $\abs{a}<M$. For such a choice of constants 
\begin{align}
\Delta=r^2-2Mr+a^2
\end{align}
has two real, positive roots, $r_{-}=M- \sqrt{M^2-a^2}$ and $r_{+}=M+\sqrt{M^2-a^2}$. We can then define the exterior of a Kerr black hole as the manifold $\MI=\rr_t\times(r_+,\infty)_r\times\ss_{\theta,\varphi}^2$, endowed with the metric
\begin{align}
\label{eq:Kerr metric BL}
g=\frac{\Delta-a^2\sin^2\theta}{\varrho^2}\d t^ 2+\frac{4Mar\sin^2\theta}{\varrho^2}\d t\d \varphi-\frac{\varrho^2}{\Delta}\d r^2 -\varrho^2\d \theta^2-\frac{\sigma^2\sin^2\theta}{\varrho^2}\d \varphi^2\, ,
\end{align}
where $\theta\in [0,\pi]$ and $\varphi\in \rr/2\pi\zz$ are spherical coordinates on $\ss^2$, and
\begin{subequations}
\begin{align}
\varrho^2&=r^2+a^2\cos^2\theta\, ,\\
\sigma^2&=(r^2+a^2)^2-a^2\sin^2\theta\Delta=\varrho^2(r^2+a^2)+2Mra^2\sin^2\theta\, .
\end{align}
\end{subequations}
The global coordinates $(t,r,\theta,\varphi)$ are called the {\it Boyer-Lindquist coordinates}.
This manifold is axisymmetric, manifestly time invariant and globally hyperbolic \cite{GHW}, and we define an orientation and time orientation on $\MI$ by setting $\fO$ to be the volume form $ \dVol_g=\varrho^2 \sin\theta \d t\wedge \d r\wedge \d\theta\wedge \d\varphi$ induced by the metric and by choosing $\d t$ to be future directed.

Similarly, the interior of the same Kerr black hole up to the inner horizon is defined as the manifold $\MII=\rr_t\times(r_-,r_+)_r\times\ss_{\theta,\varphi}$ endowed with the same metric and the time orientation such that $-\d r$ is future directed. One can also define the region beyond the inner horizon as $\MIII=\rr_t\times(-\infty, r_-)_r\times\ss^2_{\theta,\varphi}\backslash \{r=\cos\theta=0\}$ with the same metric, but we will not consider this region here.

\subsection{The Kerr-star and star-Kerr spactimes}
\label{subsec:Kstar and starK}
Let $(t^*, r,\theta, \varphi^*)$ denote the {\it Kerr-star} ($K^*$) and $({}^*t,r,\theta,{}^*\varphi)$ the {\it star-Kerr} (${}^*K$) coordinates, and set $\MIUII^{in}=\rr_{t^*}\times(r_-,\infty)_{r}\times\ss^2_{\theta,\varphi^*}$ endowed with the metric
\begin{align}
\label{eq:g Kstar}
g=g_{tt}\d t^{*2}+2g_{t\varphi}\d t^*\d \varphi^*+g_{\varphi\varphi}\d \varphi^{*2}-\varrho^2\d\theta^2- 2\d t^* \d r+ 2a\sin^2\theta\d\varphi^* \d r\, .
\end{align}
Here, $g_{\mu\nu}$ refers to the corresponding metric components of \eqref{eq:Kerr metric BL}.
Similarly, we define $\MIUII^{out}=\rr_{{}^*t}\times(r_-,\infty)_{r}\times\ss^2_{\theta,{}^*\varphi}$, endowed with the metric \eqref{eq:g Kstar}, with $\d t^*\leftrightarrow -\d {}^*t$ and $\d \varphi^*\leftrightarrow -\d {}^*\varphi$.

To avoid ambiguity, we denote the vector fields $\partial_r$, $\partial_\theta$ of the $K^*$ or ${}^*K$ coordinates as $\partial_{r^*}$ and $\partial_{\theta^*}$, or $\partial_{{}^*r}$ and $\partial_{{}^*\theta}$, respectively. Then the time orientation of $\MIUII^{in}$ or $\MIUII^{out}$ is chosen by demanding $- \partial_{r^*}$ or $\partial_{{}^*r}$ to be future directed, respectively. 

For later use, we note that the inverse  metric in the $K^*$-coordinates is given by
\begin{align*}
\varrho^2g^{-1}=&-a^2\sin^2\theta \partial_{t^*}^2-2(r^2+a^2)\partial_{t^*}\partial_{r^*}-\Delta\partial_{r^*}^2-2a\partial_{t^*}\partial_{\varphi^*}\\
&-2a\partial_{r^*}\partial_{\varphi^*}-\frac{1}{\sin^2\theta}\partial_{\varphi^*}^2-\partial_{\theta^*}^2\, ,
\end{align*}
and in ${}^*K$-coordinates, one obtains the same upon replacing $\partial_{t^*}\to -\partial_{{}^*t}$ and $\partial_{\varphi^*}\to -\partial_{{}^*\varphi}$.

Moreover, the Kerr exterior $\MI$ and interior  $\MII$ can be identified with the submanifolds $\MI^{in/out}=\MIUII^{in/out}\cap\{r>r_+\}$ and $\MII^{in/out}=\MIUII^{in/out}\cap\{r_-<r<r_+\}$ respectively by the maps defined by
\begin{subequations}
\begin{align}
\label{eq:Kstar coord trafo}
 (t^*, r,\theta,\varphi^*)&=(t+ r_*(r),r,\theta, \varphi+ A(r))\text{ and}\\
 ({}^*t, r,\theta,{}^*\varphi)&=(t- r_*(r),r,\theta, \varphi- A(r))\,.
\end{align} 
\end{subequations}
  Here,
\begin{align}
r_*(r)=\int \limits_{r_0}^{r}\frac{r'^{2}+a^2}{\Delta(r')}\diff r'\, ,\quad \quad A(r)=\int\limits_{r_0}^r\frac{a}{\Delta(r')}\diff r' \, ,
\end{align}
for some arbitrary but fixed $r_0\in (r_-,r_+)$ or $r_0\in(r_+,\infty)$ respectively.

This embedding preserves the time orientation for $\MI$ in both cases, but for $\MII$, the embedding into $\MII^{in}$ preserves the time orientation, while the embedding into $\MII^{out}$ reverses it.
We will denote by $\sH_\pm:=\MIUII^{in/out} \cap \{r=r_+\}$ the future/ past event horizon of the black hole. 
We then define $\cM:=\MI\cup\MII\cup\sH_+\sim \MIUII^{in}$ as the physical part of the Kerr spacetime. By \cite{GHW}, it is globally hyperbolic.

We will often want to work with a combination of the two coordinate systems $K^*$ and ${}^*K$. To this end, for $a$ small we introduce new coordinates 
\begin{align*}
\ft:=&\left\{\begin{array}{c}t-r_*,\, r\le 3M,\\
t+r_*,\, r\ge 4M,\end{array}\right.\\
\varphi_*:=&\left\{\begin{array}{c}\varphi-A(r),\, r\le 3M,\\
\varphi+A(r),\, r\ge 4M.\end{array}\right.  
\end{align*}
We interpolate smoothly for $3M \le r\le 4M$. 
Then for $r\leq 3M$, the metric agrees with the one in star-Kerr coordinates, while for $r>4M$, it agrees with the metric in Kerr-star coordinates in \eqref{eq:g Kstar}. 


\subsection{The Kruskal extension}
\label{subsec:Kruskal ext}
Next, let ${\rm M}_{K}=\rr_{U}\times \rr_V\times\ss^2_{\theta,\varphi_+}$, where the global coordinates $(U,V,\theta,\varphi_+)$ are called the {\it KBL coordinates}.
For $i\in \{-,+\}$, let
\begin{align}
\kappa_{i}=\frac{\abs{\partial_r\Delta(r_i)}}{2(r_i^2+a^2)}=\frac{r_+-r_-}{2(r_i^2+a^2)}
\end{align} 
denote the surface gravity of the horizon located at $r=r_i$, and define $r\in (r_-,\infty)$ as a smooth function on $\MK_{K}$ implicitly by 
\begin{align}
\label{eq:G(r)}
-e^{-2\kappa_+ r}(r-r_-)^{\frac{r_-}{r_+}}=:G(r)=\frac{r-r_+}{UV}\, .
\end{align}
We endow ${\rm M}_{K}$ with the metric
\begin{align}
\label{eq:g Kruskal}
g=&-\widetilde{g}_{1}(U^{2}dV^{2}+ V^{2}dU^{2})- \widetilde{g}_2 dUdV-\widetilde{g}_3\left(UdV-VdU\right)^2\\\nonumber
&-\widetilde{g}_4(UdV- VdU)d\varphi_+  - \varrho^{2}d\theta^{2}+ g_{\varphi\varphi} d\varphi^{2}_+.
\end{align}
Here, we have defined
\begin{subequations}
\begin{align}
\widetilde{g}_{1}&=\frac{G^{2}(r)a^{2}\sin^{2}\theta}{4\kappa_{+}^{2}\varrho^{2}}\frac{(r-r_{-})(r+r_{+})}{(r^{2}+ a^{2})(r_{+}^{2}+a^{2})}\Big(\frac{\varrho^{2}}{r^{2}+a^{2}}+\frac{\varrho_{+}^{2}}{r_{+}^{2}+a^{2}}\Big),\\
\widetilde{g}_{2}&=\frac{G(r)(r-r_{-})}{2\kappa_{+}^{2}\varrho^{2}}\Big(\frac{\varrho^{4}}{(r^{2}+a^{2})^{2}}+\frac{\varrho_{+}^{4}}{(r_{+}^{2}+a^{2})^{2}}\Big),\\
\widetilde{g}_3&=\frac{G^2(r)a^2\sin^2\theta}{4\kappa_+^2\varrho^2}\frac{(r+r_+)^2}{(r_+^2+a^2)^2},\\
\widetilde{g}_4&= \frac{G(r)a\sin^{2}\theta}{\kappa_{+}^{2}\varrho^{2}(r_{+}^{2}+a^{2})}\big(\varrho_{+}^{2}(r-r_{-})+ (r^{2}+a^{2})(r+r_{+})\big)\,,
\end{align}
\end{subequations}
and $\varrho_+^2=\varrho^2(r_+, \theta)$. The time orientation on this spacetime is chosen by selecting $\d(U+V)$ to be future directed.

 As for the Kerr-star and star-Kerr spacetimes, one can isometrically embed $\MI$ and $\MII$ into ${\rm M}_K$ by identifying
\begin{align*}
\varphi_+=\varphi-\frac{a}{r_+^2+a^2}t
\end{align*}
and
\begin{align*}
U&=-e^{-\kappa_+{}^*t}\, ,  & V&=e^{\kappa_+ t^*} & &\text{on }\MI\, ,\\
U&=e^{-\kappa_+{}^*t}\, , & V&=e^{\kappa_+ t^*} & &\text{on }\MII\, ,
\end{align*}
thus mapping $\MI$ to $\MK_K\cap\{U<0,\, V>0\}$ and $\MII$ to $\MK_K\cap\{U>0,\, V>0\}$ or, alternatively
\begin{align*}
U&=e^{-\kappa_+{}^*t}\, , & V&=-e^{\kappa_+ t^*} & &\text{on }\MI\, ,\\
U&=-e^{-\kappa_+{}^*t}\, , & V&=-e^{\kappa_+ t^*} & &\text{on }\MII\, ,
\end{align*}
 mapping $\MI$ to ${\rm M}_K\cap\{U>0,\, V<0\}$ and $\MII$ to ${\rm M}_K\cap\{U<0,\, V<0\}$.
The first embedding conserves the time orientation, while the second reverses it.

The complement of these four regions are the two {\it long horizons} $\sH:=\{V=0\}$ and $\sH_R:=\{U=0\}$, which intersect at the {\it bifurcation sphere} $S(r_+):=\{U=V=0\}$.
$\sH\cup\sH_R$ form the bifurcate Killing horizon of the Killing vector field
\begin{align}
    v_\sH=\kappa_+(V\partial_V-U\partial_U)=\partial_t+\tfrac{a}{r_+^2+a^2}\partial_\varphi\, .
\end{align}
The last expression is valid outside of $\sH\cup\sH_R$.

The Kruskal extension and the different embedded submanifolds are illustrated in Figure~\ref{fig:Kerr_fig}.

\begin{figure}
   \includegraphics[width=0.55\textwidth]{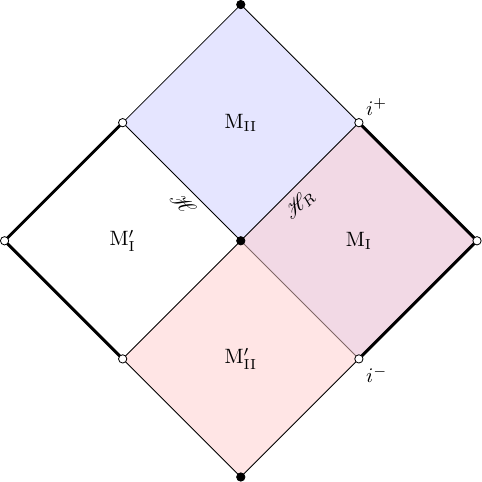}
    \caption{Penrose diagram of the Kruskal extension $\MK_K$ of the Kerr spacetime. The blue and red regions represent the spacetimes $\cM\sim \MIUII^{in}$ and $\MIUII^{out}$ embedded into $\MK_K$, respectively. The prime on $\MI'$ and $\MII'$ denotes that they are equipped with the opposite time orientation.}
    \label{fig:Kerr_fig}
\end{figure}

\subsection{The conformal extension}
\label{subsec:conf ext}
To be able to work at conformal infinity, we require a conformal extension of $\MI$. Let $\breve{g}=x^2g$ be the conformally rescaled metric, where $x=r^{-1}\in[0, \frac{1}{r_+-\epsilon})$  with $0<\epsilon<(r_+-r_-)$.  Consider the manifold $\cM_\epsilon:=\rr_{ \ft}\times (0,\frac{1}{r_+-\epsilon})_x\times\ss^2_{\theta,\varphi_*}\subset \MIUII^{out}$. Then we define its conformal extension $\overline{\cM_\epsilon}$ as the manifold $\rr_{\ft}\times[0, \frac{1}{r_+-\epsilon})_x\times\ss^2_{\theta,\varphi_*}$. We will also use the notation $\breve{\rm M}_{\rm I}:=\overline{\cM_0}$ for the conformal extension of the black hole  exterior. The rescaled metric $\breve g$ in these coordinates for $x<\frac{1}{4M}$, where $(\ft,x,\theta,\varphi_*)=(t^*,x,\theta,\varphi^*)$, reads
\begin{align}
\label{eq:g conf}
\breve g&= \left(x^2-\frac{2Mx^3}{\varrho_x^2}\right) \d t^{*2}+\frac{4Max^3\sin^2\theta}{\varrho_x^2}\d t^* \d \varphi^* + 2\d t^* \d x - 2a\sin^2\theta\d \varphi^* \d x\\\nonumber
&-\sin^2\theta\left(1+a^2x^2+\frac{2Ma^2x^3\sin^2\theta}{\varrho_x^2}\right)\d \varphi^{*2}-\varrho_x^2\d \theta^2  \, .
\end{align}
Here, we have used the notation $\varrho_x^2=1+a^2x^2\cos^2\theta$.  
We observe that, in this regime, $\dVol_{\breve{g}}=\varrho_x^2 \d t_*\wedge \d x\wedge \d^2\omega^{*}$, where $\d^2\omega^*$ is the standard volume element on the unit two-sphere $\ss^2_{\theta,\varphi^*}$.
The inverse of the metric $\breve g$ for $x<\frac{1}{4M}$ is
\begin{align}
\label{eq:inv conf metric}
\breve{g}^{-1}=&-\frac{a^2\sin^2\theta}{\varrho^2_x}\partial_{t^*}^2+2\frac{1+a^2x^2}{\varrho_x^2}\partial_x\partial_{t^*}-2\frac{a}{\varrho_x^2}\partial_{\varphi^*}\partial_{t^*}-\frac{x^2\Delta_x}{\varrho_x^2}\partial_x^2\\\nonumber
&+2\frac{x^2a}{\varrho_x^2}\partial_x\partial_{\varphi^*}-\frac{1}{\varrho_x^2}\partial_\theta^2-\frac{1}{\varrho_x^2\sin^2\theta}\partial_{\varphi^*}^2\, ,
\end{align}
where we have also introduced the notation $\Delta_x=x^2\Delta=1-2Mx+x^2a^2$.

 Similarly, we could have used the manifold $ \MIUII^{out}$ to define the conformal extension in ${}^*K$-coordinates as the manifold $\widetilde{\rm M}_{\rm I}=\rr_{{}^*t}\times[0, r_+^{-1})_x\times\ss^2_{\theta,{}^*\varphi}$ with metric $\breve g$. In these coordinates, one obtains the metric $\breve g$  
by replacing $\d t^*$ and $\d \varphi^*$ by $-\d {}^*t$ and $-\d {}^*\varphi$ in \eqref{eq:g conf}.

Past and future null infinity are then given by the hypersurfaces 
\begin{subequations}
\begin{align}
\sI_-&=\rr_{t^*}\times \{x=0\}\times\ss^2_{\theta,\varphi^*}\subset \overline{\cM_{\epsilon}}\\
\sI_+&=\rr_{{}^*t}\times \{x=0\}\times\ss^2_{\theta,{}^*\varphi}\subset \widetilde{\rm M}_{\rm I}
\end{align}
\end{subequations}
respectively, see Figure~\ref{fig:conf_ext_fig} for an illustration.
The Ricci scalar for $\breve g$ is given by
\begin{align*}
\breve{R}=-\frac{6}{\varrho^2}\left(2\Delta_r-r\partial_r\Delta_r\right)\, .
\end{align*}

\begin{figure}
    \includegraphics[width=0.25\textwidth]{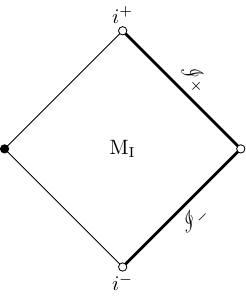}
    \caption{The black hole exterior $\MI$ and its conformal extension. The extension $\breve{\rm M}_{\rm I}$ based on the $K^*$-coordinates contains $\sI_-$, while the extension $\widetilde{\rm M}_{\rm I}$ based on the ${}^*K$-coordiates contains $\sI_+$.} 
    \label{fig:conf_ext_fig}
\end{figure}


\subsection{The Kinnersley tetrad and its renormalizations}
\label{subsec:tetrad}

The Kerr spacetime is of Petrov type D, so the discussion from Section \ref{sec:bundle} applies here.

In particular, one can locally construct a complex null frame whose null vectors are future directed and align with the principal null directions of the spacetime, 
\begin{align}
V^{\pm}=\frac{r^2+a^2}{\Delta}\partial_t\pm\partial_r+\frac{a}{\Delta}\partial_{\varphi}\, .
\end{align}
One choice for $\MI$ that is often employed is the Kinnersley tetrad.
The two real null vectors of the Kinnersely tetrad are then $
l=V^+,\, n=\frac{\Delta}{2\rho^2}V^-$.

In Boyer-Lindquist coordinates, the tetrad reads 
\begin{subequations}
\label{eq:Kinnersley}
\begin{align}
l&=\frac{r^2+a^2}{\Delta}\partial_t+\partial_r+\frac{a}{\Delta}\partial_\varphi\\
n&=\frac{r^2+a^2}{2\varrho^2}\partial_t-\frac{\Delta}{2\varrho^2}\partial_r+\frac{a}{2\varrho^2}\partial_\varphi\\
m&=\frac{ia\sin\theta}{\sqrt{2}p}\partial_t+\frac{1}{\sqrt{2}p}\partial_\theta+\frac{i}{\sqrt{2}p\sin\theta}\partial_\varphi\, ,
\end{align}
\end{subequations}
where we have defined $p=r+ia\cos\theta$.
The main advantage of the Kinnersley tetrad is that in this tetrad the operator $\tho$ (see \eqref{eq:tho}) has the simple form $\tho=l^a\nabla_a$.
However, as noted in \cite{Millet1}, this is not a globally defined smooth section of $\fN_0$, since it cannot be extended to the axis of rotation, $\MI\cap\{\sin\theta=0\}$. 
Moreover, the Kinnersley tetrad does not smoothly extend to $\sH_+$ in $\MIUII^{in}$, but it extends smoothly to $\sH_-$ in $\MIUII^{out}$, and in fact to all of $\MIUII^{out}$. This can be seen by computing the Kinnersley tetrad in $K^*$ and ${}^*K$ coordinates:
\begin{subequations}
\begin{align}
\label{KinnersleystarKstarl}
l&=\frac{2(r^2+a^2)}{\Delta}\partial_{t^*}+\partial_{r^*}+\frac{2a}{\Delta}\partial_{\varphi^*} & &\hspace{-1.2cm}=\partial_{{}^*r}\, ,\\
\label{KinnersleystarKstarn}
n&=-\frac{\Delta}{2\varrho^2}\partial_{r^*} & &\hspace{-1.2cm}=\frac{r^2+a^2}{\varrho^2}\partial_{{}^*t}-\frac{\Delta}{2\varrho^2}\partial_{{}^*r}+\frac{a}{\varrho^2}\partial_{{}^*\varphi}\, .
\end{align} 
\end{subequations}
Note that we formally find at the horizon $\sH_-$
\begin{align}
l=\partial_{{}^*r},\quad n=\frac{r_+^2+a^2}{\varrho_+^2}\left(\partial_{{}^*t}+\frac{a}{r_+^2+a^2}\partial_{{}^*\varphi}\right)\, .
\end{align}

At past null infinity, we will need to work with the conformally rescaled metric $\breve g$, and therefore also with a conformally rescaled tetrad. In order to be able to apply the results by \cite{Araneda}, we set 
\begin{align}
\label{eq:conf Kinnersley}
\breve{l}^a=l^a, \quad \breve{n}^a=x^{-2}n^a. 
\end{align}
Using the $K^*$-coordinates, one thus finds
\begin{align}
\breve{l}^a=-x^{2}\partial_x+\frac{2(1+a^2x^2)}{\Delta_x}\partial_{t^*}+\frac{2ax^2}{\Delta_x}\partial_{\varphi^*},\, \quad \breve{n}^a=\frac{\Delta_x}{2\varrho_x^2}\partial_x\, , 
\end{align} 
giving the formal results $\breve{l}^a=2\partial_{t^*}$ and $\breve{n}^a=\frac{1}{2}\partial_x$ at $\sI_-$.

\begin{remark}
    The flows induced by the Killing fields $\partial_t+c\partial_\varphi$ for any $c\in\rr$ are background preserving diffeomorphisms, see Definition~\ref{def:bck map and bck pred map}. 
    Moreover,  the (conformally rescaled) Kinnersley tetrad is {\it stationary and axisymmetric},  i.e., it is  a $\psi_b^c$-invariant section for any $b$, $c\in \rr$, where $(\psi_b^c)_{b\in\rr}:\MI\to\MI$ is the  flow induced by the Killing vector field $\partial_t+c \partial_\varphi$.
\end{remark}

\begin{remark}
\label{rem:compl triv}
    As discussed in \cite{Millet1}, there exists no global, smooth section of $\fN_0$ on Kerr, or in other words the bundle $\fN_0$ is non-trivial. In particular, the vector field $m$ of the Kinnersley tetrad does not extend to the axis of rotation where $\sin\theta=0$. However, since the axis of rotation has vanishing volume form in all cases considered in this work, it will in general be sufficient to work with one tetrad containing $m$ as its complex vector.

    Nonetheless, for some constructions a complete system of trivializations for $\fN_0$ or $\cB(s,w)$ is desirable.
    Generalizing the result presented in \cite{Millet1}, let $\varphi_c=\varphi-ct$, where $c\in \rr$ is some constant. Then a complete system of trivializations for $\fN_0$ is given by the pair of sections
    \begin{subequations}
    \label{eq:compl triv syst for killing}
    \begin{align}
        x&\mapsto (l,n,m)(x)\cdot e^{i\varphi_c(x)} \,\, \ : \quad  x\in \cM\setminus \{\theta=0\}\\
        x&\mapsto (l,n,m)(x)\cdot e^{-i\varphi_c(x)} \, : \quad  x\in \cM\setminus \{\theta=\pi\}\, ,
    \end{align}
    \end{subequations}
    where $(l,n,m)$ is the Kinnersley tetrad or a (stationary and axisymmetric) rescaling thereof.
    The reason for the generalisation  compared to \cite{Millet1} is that this pair of sections  provides a complete system of trivializations of $\cV_s$ by $\psi_b^c$-invariant sections. In particular, setting $c=0$ (which corresponds to the system of trivializations discussed in \cite{Millet1})  the tetrads in \eqref{eq:compl triv syst for killing} are invariant under the flow of $\partial_t$, while  for $c=a(r_+^2+a^2)^{-1}$  they are invariant under the flow induced by the Killing field $v_\sH$. This allows us to use the form \eqref{eq:trafo w inv tetrad} from Lemma~\ref{lemma:trafo of triv. flow} to express the cover map of this flow on $\cB(s,w)$.
    \end{remark}

None of the above tetrads cover the bifurcation sphere of the Kruskal extension.
To find a tetrad covering the bifurcation sphere, we write the Kinnersley tetrad in KBL coordinates as
\begin{subequations}
\begin{align}
l^a&=2\kappa_+\frac{r^2+a^2}{\Delta}V\partial_V+\frac{a}{\Delta}\left(1-\frac{r^2+a^2}{r_+^2+a^2}\right)\partial_{\varphi_+}\, ,\\
n^a&=-\kappa_+\frac{r^2+a^2}{\varrho^2}U\partial_U+\frac{a}{2\varrho^2}\left(1-\frac{r^2+a^2}{r_+^2+a^2}\right)\partial_{\varphi_+}\, .
\end{align}
\end{subequations}
One can then see that by rescaling as
\begin{align}
\label{eq:Kruskal tetrad}
l^a\to \mathfrak{l}^a:= -U l^a\, ,\quad n^a\to \mathfrak{n}^a:=(-U)^{-1}n^a\, , \quad m^a\to \mathfrak{m}^a:=m^a\, ,
\end{align}
one finds $(\mathfrak{l}, \mathfrak{n}, \mathfrak{m})\in \Gamma(\fN_0)$ with
\begin{subequations}
\begin{align}
\mathfrak{l}^a&=-2\kappa_+\frac{r^2+a^2}{(r-r_-)G(r)}\partial_V+aU\frac{r+r_+}{(r-r_-)(r_+^2+a^2)}\partial_{\varphi_+}\, ,\\
\mathfrak{n}^a&=\kappa_+\frac{r^2+a^2}{\varrho^2}\partial_U+\frac{aV(r+r_+)G(r)}{2\varrho^2(r_+^2+a^2)}\partial_{\varphi_+}\, . 
\end{align}
\end{subequations}
Since $G(r)$ is smooth and non-vanishing on the whole Kruskal domain, this tetrad is smooth on the whole Kruskal extension ${\rm M}_{K}$ as well. In particular, it is smooth at the bifurcation sphere $S(r_+)=\{U=V=0\}$.

\subsection{Identification of spin-weighted functions}
\label{sec:ID to fixed scaling}
In the following, let us work on $\MI$. Assume we have made a global smooth choice of the normalisation of the null vectors $l$ and $n$, so that we can identify $\cB(s,w)$ with $\cB(s)$ as described in Section \ref{sec:red bundle}. Let $\mathfrak{V}_{\ss^2}$ denote the bundle of oriented orthonormal frames on $\ss^2$ with respect to the standard metric on the unit sphere. One has
\begin{lemma}{\cite[Proposition 5.2.1]{Millet1}}
\label{lem:Millet5.2.1}
Let 
\begin{align*}
f:\begin{cases}
\rr_{t}\times (r_+,\infty)\times \mathfrak{V}_{\ss^2}&\to T_\cc \MI\\
(t,r,(\omega,X,Y))&\mapsto \frac{1}{\sqrt{2}p}\left(-ia\IP{X+iY,e_3}_{\rr^3}\partial_t+X+iY\right)\, ,
\end{cases}
\end{align*}
where $\IP{\cdot}_{\rr^3}$ is the canonical scalar product of $\rr^3$, $\cc$-bilinearly extended to $\cc^3$, $(e_1,e_2,e_3)$ is the canonical basis of $\rr^3$, and $p=r+ia\cos\theta$. Then $f$ is an isomorphism of principal bundles between $\rr_{t}\times (r_+,\infty)\times \mathfrak{V}_{\ss^2}$ and $\fN_{0,r}$.
\end{lemma}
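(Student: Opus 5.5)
We have to show that $f$ takes values in $\fN_{0,r}$, that it is a smooth bundle map over a diffeomorphism of the bases intertwining the $U(1)$-actions, and that it is bijective on fibres. Once these hold, the standard fact that an equivariant map of principal bundles covering a diffeomorphism is an isomorphism finishes the proof.

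\textbf{Base map.} The base of $\rr_t\times(r_+,\infty)\times\mathfrak{V}_{\ss^2}$ is $\rr_t\times(r_+,\infty)\times\ss^2=\MI$. We identify $\omega\in\ss^2$ with the point $(\theta,\varphi)$, and tangent vectors $X,Y\in T_\omega\ss^2\subset\rr^3$ with vector fields on $\MI$ tangent to the spheres $\{t,r\}\times\ss^2$. The projection $(t,r,(\omega,X,Y))\mapsto(t,r,\omega)$ is then the identity on $\MI$.

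\textbf{Image lies in $\fN_{0,r}$.} Recall that $\fN_{0,r}$ is the set of $m$ with $(l,n,m)\in\fN_0$ for the fixed normalisation of $l,n$. So we must check that $m:=f(t,r,(\omega,X,Y))$ is null, orthogonal to $l$ and $n$, satisfies $g(m,\bar m)=-1$, and gives the correct orientation.

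It is easiest to work away from the axis and write $X+iY=e^{i\alpha}(\partial_\theta+\tfrac{i}{\sin\theta}\partial_\varphi)$ for some $\alpha$. The frame is orthonormal and oriented, and the unit sphere metric gives $|\partial_\theta|=1$ and $|\sin^{-1}\theta\,\partial_\varphi|=1$. In $\rr^3$ we have $\partial_\theta\cdot e_3=-\sin\theta$ and $\partial_\varphi\cdot e_3=0$, hence
\[
\IP{X+iY,e_3}_{\rr^3}=-e^{i\alpha}\sin\theta .
\]
It follows that $f(\cdot)=e^{i\alpha}m_{\rm Kin}$, where $m_{\rm Kin}$ is the Kinnersley vector \eqref{eq:Kinnersley}. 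Therefore $f$ satisfies all the required algebraic relations and the orientation condition, because the Kinnersley tetrad does, and these properties are invariant under the phase rotation $m\mapsto e^{i\alpha}m$ (with $l,n$ fixed).

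On the axis we cannot use this formula. There we check the relations directly from the algebraic identities; alternatively, they follow by continuity, since the conditions are closed and $f$ is manifestly smooth, $X,Y$ entering polynomially. In particular the expression $-ia\IP{X+iY,e_3}\partial_t$ is a smooth replacement for $ia\sin\theta\,\partial_t$. This is where the formula cures the non-extendibility of $m_{\rm Kin}$.

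\textbf{Equivariance.} The $U(1)$ action on frames rotates $(X,Y)$ by the angle $\phi$, which sends $X+iY\mapsto e^{-i\phi}(X+iY)$ or $e^{i\phi}$, depending on the convention. Since $f$ is $\cc$-linear in $X+iY$, $f$ intertwines this action with $m\mapsto e^{\pm i\phi}m$ on $\fN_{0,r}$. If the signs come out opposite, we fix the convention for the $\mathfrak{V}_{\ss^2}$ action accordingly.

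\textbf{Fibre bijectivity.} The map is injective on fibres because the phase $\alpha$ is recovered from $m$. It is surjective on fibres because any $m'$ in the fibre of $\fN_{0,r}$ is a unit phase times $f(\cdot)$: the fibre is a $U(1)$-torsor, as $l,n$ are fixed and $m$ is determined up to phase by the algebraic conditions. Smoothness of the inverse then follows from the local-triviality argument.

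\textbf{Main obstacle.} The step requiring the most care is on the axis $\sin\theta=0$. There we must check smoothness of $f$ and that the null and orthogonality conditions persist, since $m_{\rm Kin}$ itself is singular at the axis. We would handle this by computing $g(f,f)$ and $g(f,\bar f)$ directly in Cartesian-like coordinates on $\ss^2$. Using $\IP{X,X}=\IP{Y,Y}=1$ and $\IP{X,Y}=0$, these reduce to identities involving $|\IP{X+iY,e_3}|^2=1-\omega_3^2=\sin^2\theta$ that hold everywhere.
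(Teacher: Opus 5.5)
The paper gives no proof of this lemma; it simply imports it from \cite[Proposition 5.2.1]{Millet1}. Your argument is a correct self-contained proof. Off the axis, every oriented orthonormal frame satisfies $X+iY=e^{i\alpha}(\partial_\theta+\tfrac{i}{\sin\theta}\partial_\varphi)$, so $f=e^{i\alpha}m$ with $m$ the Kinnersley vector. The closed algebraic conditions then pass to the axis by density of the off-axis part of the total space. Equivariance follows from $\cc$-linearity of $f$ in $X+iY$, with no continuity needed. Finally, a smooth equivariant map of principal $U(1)$-bundles covering the identity is automatically an isomorphism, so your separate fibre-bijectivity paragraph is redundant. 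Compared with the bare citation, your route makes transparent that $-ia\IP{X+iY,e_3}_{\rr^3}\partial_t$ is exactly the smooth continuation of $ia\sin\theta\,\partial_t$ across the axis.

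A few points need tightening.

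\textbf{Strict orientation on the axis.} Continuity gives the equalities $g(f,f)=0$, $g(f,\bar f)=-1$ and $g(l,f)=g(n,f)=0$ there. The orientation condition, however, is strict, so continuity only yields $\geq 0$. You need to add that a normalized null tetrad is a basis, so the $4$-form cannot vanish.

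\textbf{Orientation of the Kinnersley tetrad.} This is the most important point. You assert that the Kinnersley tetrad is oriented rather than checking it, and this is convention-dependent. Take $\fO=\varrho^2\sin\theta\,\d t\wedge\d r\wedge\d\theta\wedge\d\varphi$ and the tuple $(2^{-1/2}(l+n),\Re m,-\Im m,2^{-1/2}(l-n))$ exactly as written in Section~\ref{sec:bundle}. A direct determinant gives $\fO(n,\Re m,\Im m,l)=\tfrac12$, so the printed expression equals $-\tfrac12$ for Kinnersley, and phase rotations do not change this sign. Under the literal convention, the image of $f$ would therefore not lie in $\fN_{0,r}$ at all. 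You should state that you use the convention under which the Kinnersley tetrad is oriented; this is the one the paper tacitly relies on whenever it treats Kinnersley as a local section of $\fN_0$.

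\textbf{The action on $\mathfrak{V}_{\ss^2}$ is not a free choice.} It must be $X+iY\mapsto e^{i\phi}(X+iY)$ to match $m\mapsto e^{i\phi}m$ from Proposition~\ref{prop:N}. The reason is that the paper later identifies $\cB(s)$ with $\rr_t\times(r_+,\infty)\times\cB_s^{\ss^2}$ using the same representation $e^{i\phi}\mapsto e^{-is\phi}$ on both sides.

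\textbf{Direct computation on the axis.} If you do this computation directly, the identity that handles the $\d t\,\d\varphi$ and $\d\varphi^2$ terms of the metric is $\sin^2\theta\,\d\varphi(X+iY)=-i\IP{X+iY,e_3}_{\rr^3}$. It follows from $\partial_\varphi=e_3\times\omega$ and $\omega\times(X+iY)=-i(X+iY)$. The modulus identity $\abs{\IP{X+iY,e_3}_{\rr^3}}^2=\sin^2\theta$ alone does not suffice.
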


\begin{remark}
The same holds for $\rr_{t^*}\times (r_+,\infty)\times \mathfrak{V}_{\ss^2}$, where now $\MI$ is identified with $\rr_{t^*}\times(r_+,\infty)\times \ss^2_{\theta,\varphi^*}$.
\end{remark}

We define (see \cite{Millet:thesis, Beyer:2014})
\begin{definition}
    The bundle $\cB_s^{\ss^2}$ over the sphere $\ss^2$ is defined as the complex line bundle associate to $\mathfrak{V}_{\ss^2}$ and the representation
    \begin{equation}
        U(1) \ni e^{i\phi}\mapsto (c\mapsto e^{-is\phi}c)\, .
    \end{equation}
\end{definition}

\begin{remark}
\label{rem:Triv B_s^S}
    As for the bundle $\cB(s,s)$, a local trivialization of $\cB_s^{\ss^2}$ is induced by choosing a section of $\mathfrak{V}_{\ss^2}$.  The trivialization $\mathcal{T}_m$ on $\ss^2_{\theta,\varphi}\setminus\{\sin\theta=0\}$ is given by 
        \begin{align}
            \pi_3 f^{-1}(m)&=( \partial_\theta,  \csc\theta \partial_\varphi)\, ,
        \end{align}
    where $m$ is as in the Kinnersley tetrad, $\pi_3:\rr_{t}\times (r_+,\infty)\times \mathfrak{V}_{\ss^2}\to \mathfrak{V}_\ss^2$ is the natural projection, and 
    \begin{align}
        f^{-1}: &\fN_{0,r}\to \rr_{t}\times (r_+,\infty)\times \mathfrak{V}_{\ss^2}\, ,\\
        &m \mapsto \left(t,r,\left(\Re\left(\sqrt{2}p {\rm pr}_{T_{(\theta,\varphi)}\ss^2, \partial_t}m\right),\Im\left( \sqrt{2}p {\rm pr}_{T_{(\theta,\varphi)}\ss^2, \partial_t}m\right)\right)\right)
    \end{align}
    is the inverse of the function in Lemma~\ref{lem:Millet5.2.1},\footnote{Note that in contrast to \cite{Millet:thesis}, the multiplication by $\sqrt{2}p$ is completed before taking real and imaginary parts. One can check that this is the correct order.} with ${\rm pr}_{T_{(\theta,\varphi)}\ss^2, \partial_t}$ the linear projection to $T_{(\theta,\varphi)}\ss^2$ parallel to $\partial_t$.
\end{remark}

\begin{remark}
\label{rem3.7}
By the above discussion we obtain an isomorphism from $\cB(s)$ to $\rr_{t}\times(r_+,\infty)\times \cB_s^{\ss^2}$. This then induces a connection, again denoted by $\Theta$, on $\rr_{t}\times(r_+,\infty)\times \cB_s^{\ss^2}$, see Remark \ref{remconnection1}. In turn, the connection on $\rr_{t}\times(r_+,\infty)\times \cB_s^{\ss^2}$ induces a connection on $\cB_s^{\ss^2}$ in the following way. For $\phi\in \Gamma(\cB_s^{\ss^2})$, we define $\tilde{\phi}\in \Gamma(\rr_t\times (r_+,\infty)\times \cB_s^{\ss^2})$ by $\tilde{\phi}(p,\omega)=(p,\phi(\omega))$. 
For $p_0\in \rr_t\times(r_+,\infty)$, we consider the associated projection $\pi_{p_0}:\rr_t\times(r_+,\infty)\times \cB_s^{\ss^2}\rightarrow \cB_s^{\ss^2}$ and define a connection $\Theta_{p_0}^{\ss^2}$ on $\cB_s^{\ss^2}$ by 
\begin{align*}
\nu^a\Theta^{\ss^2}_{p_0,a}\phi=\pi_{p_0}(\nu^a\Theta_a\tilde{\phi}), 
\end{align*}
where $\nu^a\in T\ss^2$ and $T\ss^2$ is understood as a subspace to $T\cM$. 
\end{remark}
Note that the bundle $\cB_s^{\ss^2}$ carries a natural hermitian metric $\bm$. If $[m,c],\, [m,d]\in \cB_s^{\ss^2}$ we define $\bm([m,c],[m,d])=\bar{c}d$. It is easy to see that this is well defined. This and the above connection can now be used to define natural $L^2$ and Sobolev spaces.   
\begin{definition}
We define the space $L^2(\cB_s^{\ss^2})$ as the completion of $\Gamma(\cB_s^{\ss^2})$ for the norm 
\begin{align}
\Vert u\Vert^2_{L^2(\cB_s^{\ss^2})}=\int_{\ss^2}\bm_{\omega}(u,u)d^2\omega. 
\end{align}
We then fix a finite family of smooth vector fields $(\cZ_i)_i$ generating $\Gamma(T\ss^2)$ as a $C^{\infty}(\ss^2)$ module. The Sobolev norms $\Vert.\Vert_{H^m_{[s]}(\ss^2)}$ are defined inductively by
\begin{subequations}
\begin{align}
\Vert u\Vert_{H^0_{[s]}(\cB_s^{\ss^2})}=\Vert u\Vert_{L^2(\cB_s^{\ss^2})},\\
\Vert u\Vert^2_{H^{m+1}_{[s]}}:=\Vert u\Vert^2_{H^{m}_{[s]}}+\sum_{i}\Vert \Theta^{\ss^2}_{p_0,\cZ_i}u\Vert^2_{H^{m}_{[s]}}.
\end{align}
\end{subequations}
By having a closer look at the potentials $\cZ^a_iw_a$, it is easy to see that the so-defined Sobolev norms are equivalent for different choices of $p_0$. We will fix $p_0\in \rr_t\times (r_+,\infty)$ and drop the index $p_0$ in the following. The Sobolev spaces $H^{m}_{[s]}(\cB_s^{\ss^2})$ are defined as completions of $\Gamma(\cB_s^{\ss^2})$ for these norms. 
\end{definition}
Note that the above discussion of course depends on the choice of the tetrad vectors $l,n$, but is not restricted to block $I$. The construction works out in every region of the spacetime on which we can define a tetrad $(l,n,m,\bar{m})$ with $l,n$ globally smooth. 

\section{The Teukolsky operator and phase space}
\label{sec:Teukolsky op}
 In this section, we will introduce the Teukolsky operator of spin $s$. For $s=0$, the Teukolsky operator will simply be the  d'Alembert operator on $\cM$. For spin $\pm 1$, it arises upon contracting the vacuum Maxwell equations on a Petrov type-D spacetime in terms of the electromagnetic field-strength tensor $F_{ab}$ with a tetrad in $\fN$ and decoupling the components. Similarly, for spin $\pm 2$, the Teukolsky operator arises when considering linearized gravity on a Petrov type-D background. In particular, they appear when writing the Bianchi identities for the perturbation of the Weyl tensor, $\delta C_{abcd}$, in terms of its contraction with a tetrad in $\fN$. In this section, we give an abstract introduction to the Teukolsky operator before making the link to these origins.

\subsection{The Teukolsky operator}
\label{subsec:Teukolsky op}
Following \cite{Millet1} and \cite{Aksteiner}, the Teukolsky operator acting on sections of $\cB(s,s)$ is given by
\begin{align}
T_s=2\left[(\tho-2s\rho-\bar{\rho})({\tho}' -\rho')-(\edt-2s\tau-\bar{\tau}')({\edt}'-\tau')\right]-(4s^2-6s+2)\Psi_2\, ,
\end{align}
where
\begin{align}
\Psi_2&=C_{abcd}m^{a}l^b\bar{m}^cn^d
\end{align}
is the only non-vanishing contraction of the Weyl tensor with the null tetrad $(l,n,m,\bar{m})$.
In the following, we will discuss a number of properties of the Teukolsky operator.

 For the first one, we note that in any spacetime of Petrov type D, there is a Killing spinor $\kappa_{AB}$ of valence two, which can be written as $\kappa_{AB}=-2\zeta\mo_{(A}\mi_{B)}$ \cite{Aksteiner:PhD} in a principal dyad. The Killing spinor equation is solved by $\zeta\propto \Psi_2^{-1/3}$. Associated to the Killing spinor is the complex Killing vector
\begin{align}
    \xi=-\zeta(-\rho' l+\rho n+\tau'm-\tau \bar{m})\, .
\end{align}
The Kerr-NUT class, of which the Kerr spacetime is an example, can be characterized by the condition $\Im(\xi)=0$, leading to the relations
\begin{align}
    \frac{\rho}{\bar\rho}=\frac{\rho'}{\bar{\rho}'}=-\frac{\tau'}{\bar\tau}=-\frac{\tau}{\bar{\tau}'}=\frac{\overline{\zeta}}{\zeta}\, .
\end{align}
On Kerr spacetime, for example, one can choose $\kappa_{AB}$ so that $\xi^a=(\partial_t)^a$, resulting in 
   $\zeta=r-ia\cos(\theta)$, 
see \cite{AAB}.

In this class of spacetimes, one can construct a second real Killing vector
\begin{align}
\eta^a=\zeta\left[(\Im\zeta)^2(\rho' l^a-\rho n^a)-(\Re\zeta)^2(\tau \bar{m}^a-\tau' m^a)\right]\, 
\end{align}
from the Killing spinor.
In the case of Kerr, if $\xi=\partial_t$, then $\eta=a\partial_\varphi+a^2\partial_t$.

The Lie derivatives with respect to $\xi$ and $\eta$ acting on a section of $\cB(s,w)$ can be written as \cite[Section 2.4.2]{Aksteiner:PhD}
\begin{subequations}
\label{eq:Lie ders}
\begin{align}
\cL_{\xi,(s,w)}=&\xi^a\Theta_a-\frac{s+w}{2}\zeta\Psi_2-\frac{w-s}{2}\overline{\zeta}\overline{\Psi_2}\, , \\
\cL_{\eta,(s,w)}=&\eta^a\Theta_a+(s+w)h+(w-s)\overline{h}\, ,\\
h=&\frac{1}{16}\zeta(\zeta^2+\overline{\zeta}^2)\Psi_2-\frac{1}{8}\zeta\overline{\zeta}^2\overline{\Psi_2}+\frac{1}{4}\rho\rho'\zeta^2(\overline{\zeta}-\zeta)+\frac{1}{4}\tau\tau'\zeta^2(\overline{\zeta}+\zeta)\, .
\end{align}
\end{subequations}

\begin{remark}
In the Kerr-NUT class, in any tetrad invariant under $\cL_{\xi}$ and $\cL_{\eta}$, such as the Kinnersley tetrad on Kerr, the Lie derivatives simplify to \cite{Aksteiner:PhD}
\begin{align}
\cL_{\xi,(s,w)}=\xi^a\nabla_a\,, \quad \cL_{\eta,(s,w)}=\eta^a\nabla_a\, .
\end{align}
Restricting to a Kerr spacetime, one can use their representation in the Kinnersley tetrad to show that they annihilate all spin coefficients as well as $\Psi_2$, commute with all four GHP differential operators \cite{Pound:2021}, and satisfy \cite[Eq.(2.65)]{Aksteiner:PhD} 
\begin{align}
    [\cL_\xi,\zeta]=[\cL_\xi,\overline{\zeta}]=0\, .
\end{align} 
Under complex conjugation, $\cL_{\xi,(s,w)}$ and $\cL_{\eta,(s,w)}$ are mapped to $\cL_{\xi,(-s,w)}$ and $\cL_{\eta,(-s,w)}$, respectively.
\end{remark}

\begin{proposition}{\cite[Corollary 5.4.2]{Aksteiner:PhD}}
On Kerr spacetimes, $\zeta\overline{\zeta} T_s$ can be written in the form 
 \begin{subequations}   
 \label{eq:T_s symm decomp}
\begin{align}
\zeta\overline{\zeta} T_s &=2(\cR_s-\cS_s)\, ,\\
\cR_s&=\varrho^2 \left(\tho-\rho-\bar{\rho}\right)\left({\rho}'-2s\rho'\right)+\frac{2s-1}{2}(\zeta+\overline{\zeta})\cL_\xi\, \\
\cS_s&=\varrho^2\left(\edt-\tau-\overline{\tau}'\right)\left({\edt}'-2s\tau'\right)+\frac{2s-1}{2}(\zeta-\overline{\zeta})\cL_\xi\, .
\end{align}
\end{subequations}
Moreover, $\left[\cR_s,\cS_s\right]=0$, and $\cR_s$ and $\cS_s$ commute with $\zeta \overline{\zeta}T_s$.
\end{proposition}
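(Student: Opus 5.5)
Since the statement is a cited result, I will verify it directly in the Kinnersley tetrad, which suffices because every expression is a GHP-covariant operator between sections of $\cB(s,s)$. The Kinnersley tetrad is $\cL_\xi$- and $\cL_\eta$-invariant, so the earlier remark applies: $\cL_\xi=\xi^a\nabla_a=\partial_t$ and $\cL_\eta=a\partial_\varphi+a^2\partial_t$, both commuting with all GHP operators and annihilating the spin coefficients and $\Psi_2$. In this tetrad the nonvanishing spin coefficients are explicit:
\begin{align*}
\rho&=-1/\bar p,\quad \rho'=\Delta\rho/(2\varrho^2),\quad \tau=-\i a\sin\theta/(\sqrt2\varrho^2),\\
\tau'&=\i a\sin\theta/(\sqrt2 \bar p^2),\quad \Psi_2=-M/\bar p^3.
\end{align*}
Here $\zeta=\bar p=r-\i a\cos\theta$, so that $\zeta\bar\zeta=\varrho^2$.

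For the decomposition $\zeta\bar\zeta T_s=2(\cR_s-\cS_s)$, I would write $\varrho^2 T_s$ and commute the factor $\varrho^2$ into the radial and angular blocks. The key identities are $\tho\zeta=\rho\zeta$-type relations, namely $\tho\bar\zeta=-1$ up to normalisation, together with their primed and $\edt$-versions. They give the commutators $[\tho,\varrho^2]$ and $[\edt,\varrho^2]$ as first-order expressions in the spin coefficients. The term $(4s^2-6s+2)\Psi_2\varrho^2$ and the leftover first-order terms should then recombine, via the relations $\rho/\bar\rho=\bar\zeta/\zeta$ etc.\ and $\zeta\Psi_2\propto\rho\rho'$-type identities, into $\tfrac{2s-1}{2}\big[(\zeta+\bar\zeta)-(\zeta-\bar\zeta)\big]\cL_\xi$. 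This uses the expression $\xi=-\zeta(-\rho'l+\rho n+\tau'm-\tau\bar m)$ to convert $\zeta\rho'\tho$-type combinations into $\cL_\xi$.

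A more transparent check is to pass to Boyer–Lindquist coordinates. There $\cR_s$ becomes the radial Teukolsky operator
\[
\Delta^{-s}\partial_r\Delta^{s+1}\partial_r+\frac{((r^2+a^2)\partial_t+a\partial_\varphi)^2}{\Delta}-2s\Big(\frac{(r-M)(a\partial_\varphi+(r^2+a^2)\partial_t)}{\Delta}-2r\partial_t\Big),
\]
up to normalisation, and $\cS_s$ the spin-weighted spheroidal operator in $(\theta,\varphi,\partial_t)$. The first identity is then the classical separation of the Teukolsky master equation. Commutation follows immediately: $\cR_s$ involves only $r,\partial_r,\partial_t,\partial_\varphi$, and $\cS_s$ only $\theta,\partial_\theta,\partial_t,\partial_\varphi$, with $\partial_t,\partial_\varphi$ commuting with all coefficients. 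Hence $[\cR_s,\cS_s]=0$, and therefore both commute with $\cR_s-\cS_s=\tfrac12\zeta\bar\zeta T_s$.

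The main obstacle is the bookkeeping in the first step. One must match the zeroth- and first-order terms in the GHP computation exactly, in particular the $-(4s^2-6s+2)\Psi_2$ term and the $s$-dependent $\cL_\xi$ terms, while keeping track of the boost-weight-dependent parts of $\Theta$. Doing this in explicit coordinates is the safer route.
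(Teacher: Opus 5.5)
Your route is valid but differs from the paper's. The paper gives no argument and simply imports the identity, in covariant GHP form, from Aksteiner's thesis. You instead use GHP covariance to reduce to a single trivialization (Kinnersley tetrad, Boyer--Lindquist coordinates) and verify there. You also correctly read the second factor of $\cR_s$ as ${\tho}'-2s\rho'$; the $\rho'$ in the statement is a typo.

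The covariant formulation buys tetrad independence and is what gives meaning to the conjectured Kerr--NUT extension. Your check is confined to Kerr but is elementary. Its payoff is that once $\cR_s$ is seen to involve only $(r,\partial_r,\partial_t,\partial_\varphi)$ and $\cS_s$ only $(\theta,\partial_\theta,\partial_t,\partial_\varphi)$, both commutation claims are immediate, with no GHP commutator or Killing-spinor algebra.

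Two smaller points. The Kinnersley tetrad misses the axis and $\sH_+$, so you should say the operator identities are first obtained on a dense open set and then extended by continuity of the coefficients. Your first, GHP-bookkeeping sketch is not an argument as written and can be dropped.

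The separation, however, is the whole content of the proposition, and you assert it rather than verify it. With the data you list it would in fact fail. With the paper's definitions $\rho'=\bar m^a m^b\nabla_b n_a$ and $\tau'=\bar m^a l^b\nabla_b n_a$, one has $\rho'=-\mu$ and $\tau'=-\pi$ in Newman--Penrose terms, i.e.
\[
\rho'=-\frac{\Delta\rho}{2\varrho^2}=\frac{\Delta}{2\varrho^2\zeta},\qquad \tau'=-\frac{\i a\sin\theta}{\sqrt2\,\zeta^2}.
\]
Only these signs give $-\tau'/\bar\tau=\bar\zeta/\zeta$ and $\xi=\partial_t$. With your sign of $\rho'$, the operator $\varrho^2({\tho}'-2s\rho')$ keeps a term $2s\Delta/\zeta$, so $\cR_s$ is not radial.

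With the correct signs, the separation goes through as follows.
\begin{itemize}
\item On $\cB(s,s)$ one gets ${\tho}'-2s\rho'=n^a\partial_a-s(r-M)/\varrho^2$.
\item Since $\tho\varrho^2=2r=-\varrho^2(\rho+\bar\rho)$ and $\tho=l^a\partial_a$, one has $\varrho^2(\tho-\rho-\bar\rho)\varrho^{-2}=l^a\partial_a$.
\item One has ${\edt}'-2s\tau'=\frac{1}{\sqrt2\,\zeta}\sL^-_s$ and $\varrho^2(\edt-\tau-\bar\tau')\frac{1}{\sqrt2\,\zeta}=\frac12\sL^+_{1-s}$. Here the $\partial_\theta\zeta^{-1}$ and $\partial_\theta\bar\zeta^{-1}$ terms cancel exactly against $\tau$ and $\bar\tau'$.
\end{itemize}
With $K=(r^2+a^2)\partial_t+a\partial_\varphi$ this gives
\begin{align*}
2\cR_s&=\frac{K^2}{\Delta}-\Delta^{-s}\partial_r\Delta^{s+1}\partial_r-2s\Big(\frac{(r-M)K}{\Delta}-2r\partial_t\Big)-2s\, ,\\
2\cS_s&=\sL^+_{1-s}\sL^-_s-2(2s-1)\i a\cos\theta\,\partial_t\, .
\end{align*}

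Your displayed radial operator has two errors. First, it has the wrong relative sign between $\Delta^{-s}\partial_r\Delta^{s+1}\partial_r$ and $K^2/\Delta$; this sign is fixed by the Lorentzian principal symbol, so it is not a normalisation issue. Second, it lacks the constant $-2s$. Both are needed for $2(\cR_s-\cS_s)$ to reproduce Teukolsky's master operator, i.e.\ $\varrho^2T_s$. You should either cite that identification or compute it, e.g.\ from $T_s=g^{ab}(\nabla_a+2s\Gamma_a)(\nabla_b+2s\Gamma_b)-4s^2\Psi_2$ with the explicit Kinnersley $\Gamma$.
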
 

\begin{remark}
  It is conjectured in \cite{Aksteiner:PhD} that this is valid on the whole class of Kerr-NUT spacetimes.  
\end{remark}

Another property of the Teukolsky operator  on Kerr that will be important later is the following
\begin{lemma}
\label{lemma:invar Teukolsky Killing flow}
    Let $X=\partial_t+c\partial_\varphi$, with $\c\in\rr$, so that $X$ is a Killing field of the Kerr spacetime. Let $(\upsilon_b)_{b\in\rr}$ be the flow induced by $X$. Take $(\Upsilon_{(s,s),b}^*)_{b\in \rr}$ to be the group of push-forwards on $\Gamma(\cB(s,s))$  corresponding to the cover maps of $\upsilon_b$ on $\cB(s,s)$ as discussed in Lemma~\ref{lemma:trafo of triv. flow}. Then one has
    \begin{align}
        T_s\circ \Upsilon_{(s,s),b}^*=\Upsilon_{(s,s),b}^*T_s \quad \forall b\in \rr\, .
    \end{align}
\end{lemma}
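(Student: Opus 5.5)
The plan is to reduce the statement to a computation in a trivialization by an $\upsilon_b$-invariant section. First, I fix the section. By Remark~\ref{rem:compl triv} with $c$ as in $X$, the pair of sections $(l,n,m)\cdot e^{\pm i\varphi_c}$, built from the Kinnersley tetrad, is a complete system of trivializations of $\fN_0$. Each section is invariant under the flow $\upsilon_b$ of $X=\partial_t+c\partial_\varphi$. Since $T_s$ is a differential operator, hence local, it suffices to check the identity on each chart domain $\cU_\pm$ and on sections supported there. (One can also simply work on the complement of the axis, where the Kinnersley tetrad alone is invariant, and conclude by continuity.) By Lemma~\ref{lemma:trafo of triv. flow} with $z\equiv 1$, in the trivialization induced by an invariant section the push-forward acts as $(\Upsilon^*_{(s,s),b}C)(p)=[a(p),c(\upsilon_b^{-1}p)]$. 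That is, it acts on the coefficient function simply as pullback by $\upsilon_{-b}$.

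Second, I write $T_s$ in this trivialization as a scalar differential operator $\tilde T_s$ acting on coefficient functions. Its coefficients are built from the tetrad vectors, the connection one-form $w_a$ of Definition~\ref{def:connection}, the spin coefficients $\rho,\rho',\tau,\tau'$, and $\Psi_2$. Because $\upsilon_b$ is an isometry preserving the background and the tetrad is invariant, all of these quantities are invariant: $(\upsilon_b)_*l=l$ at the image point, and similarly for $n$ and $m$, and $\nabla$ commutes with isometries. Hence every coefficient of $\tilde T_s$ is $\upsilon_b$-invariant, and the vector fields appearing in $\tilde T_s$ are $\upsilon_b$-invariant. An operator with invariant coefficients commutes with pullback by the flow, which gives $\tilde T_s(c\circ\upsilon_{-b})=(\tilde T_s c)\circ\upsilon_{-b}$. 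This is exactly the claim.

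Alternatively, one can argue infinitesimally. By the Remark following \eqref{eq:Lie ders}, the generator of $b\mapsto \Upsilon^*_{(s,s),b}$ is (minus) $\cL_{X,(s,s)}=\cL_\xi+\tfrac{c}{a}(\cL_\eta-a^2\cL_\xi)$, using $\xi=\partial_t$ and $\eta=a\partial_\varphi+a^2\partial_t$; for $a=0$ one argues directly with $\partial_\varphi$. These Lie derivatives commute with all GHP operators and annihilate the spin coefficients and $\Psi_2$, so they commute with $T_s$. One then integrates: for $F=T_s\Upsilon^*_bf-\Upsilon^*_bT_sf$ one gets $\partial_bF=-\cL_XF$ with $F|_{b=0}=0$, so $F=0$ by uniqueness of solutions of this transport equation along the complete flow.

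The main obstacle I expect is bookkeeping near the axis $\sin\theta=0$, where the Kinnersley $m$ is singular. This is handled by the modified sections $e^{\pm i\varphi_c}$, whose invariance under $\upsilon_b$ (because $\varphi_c=\varphi-ct$ is constant along $X$) is precisely why $c$ was generalized in Remark~\ref{rem:compl triv}. On the overlap the transition factor $e^{2i\varphi_c}$ is itself invariant, so the local statements are consistent and glue.
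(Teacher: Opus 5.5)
Your proposal is correct and is essentially the paper's proof. You trivialize by $\upsilon_b$-invariant Kinnersley-based sections (with the $e^{\pm i\varphi_c}$ modification near the axis), so that $\Upsilon^*_{(s,s),b}$ acts on coefficients by pullback along $\upsilon_{-b}$ and commutes with $T_s$ because the coefficients are invariant. The only difference is that you derive this invariance abstractly (isometry plus invariant section), whereas the paper points to the explicit $t,\varphi$-independence of $T_s$ in the Kinnersley tetrad; your infinitesimal variant is an optional extra.

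One small point: the Kinnersley tetrad does not extend across $\sH_+\subset\cM$. There you should either use a stationary rescaling such as $(\Delta l,\Delta^{-1}n,m)$, as the paper's parenthetical and Remark~\ref{rem:compl triv} allow, or use your density/continuity argument, which covers the horizon as well.
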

\begin{proof}
    Let $\phi\in \Gamma(\cB(s,s))$, and let $\phi(x)$ denote $\phi$ in the Kinnersley tetrad (or a rescaling thereof so that the rescaling factor depends only on $r$ and $\theta$).  Since the flows $(\upsilon_b)_{b\in \rr}$ are background preserving and the Kinnersley tetrad is $\upsilon_b$-invariant, one has by Lemma~\ref{lemma:trafo of triv. flow}
    \begin{align*}
        \Upsilon_{(s,s),b}^*\phi(t,r,\theta, \varphi)=\phi(t-b, r,\theta, \varphi- cb)\, .
    \end{align*}
    Using the expression for $T_s$ in the Kinnersley tetrad, it is then straightforward to show that the required result holds on $\cM\setminus\{\sin\theta=0\}$. Similarly, computing $T_s$ in the tetrads in \eqref{eq:compl triv syst for killing}, one finds that the additional terms compared to the expression in the Kinnersley tetrad are independent of $t$ and $\varphi$, and hence the same result holds in these tetrads, showing that it holds on all of $\cM$.
    \qeds
\end{proof}

 For the next property of the Teukolsky operator, we make use of the commutation relations of the GHP differential operators \cite{Aksteiner} and identities for the spin coefficients to recast the Teukolsky operator into the form \cite{Bini}
\begin{align}
T_s=g^{ab}\left(\Theta_a+2sB_a\right)\left(\Theta_b+2sB_b\right)-4s^2\Psi_2\, ,
\end{align} 
where $B_a=\tau\bar{m}_a-\rho n_a$ is a section of $\cB(0,0)\otimes T^*\cM$ \cite{Aksteiner}. 
Combining $w_a$ and $B_a$ to $\Gamma_a=B_a-w_a$, this can be written in the useful form \cite{Bini}
\begin{align}
\label{eq: Teuk Bini form}
T_s=g^{ab}(\nabla_a+2s\Gamma_a)(\nabla_b+2s\Gamma_b)-4s^2\Psi_2\, .
\end{align}
This form makes it apparent that the Teukolsky operator is a normally hyperbolic operator  i.e. has principal symbol $\sigma_{T_s}(x, \xi)=g_x^{-1}(\xi,\xi)\id_{\cB(s,s)}$.
The following properties of $\Gamma$ will be crucial in the subsequent discussions.
\begin{remark}
Specializing to the  exterior $\MI$, working in Boyer-Lindquist coordinates, and selecting the Kinnersley tetrad \eqref{eq:Kinnersley}, $\Gamma$ is given by \cite{Bini} \footnote{Note that there is a relative factor of two between our definition of $\Gamma$ and that in \cite{Bini}.}(recall $p=r+ia\cos\theta$)
\begin{align}
\label{eq:Gamma}
\Gamma=-\frac{1}{2\varrho^2}\left(\left[\frac{M(r^2-a^2)}{\Delta}-p\right]\partial_t+(r-M)\partial_r+\left[\frac{a(r-M)}{\Delta}+i\frac{\cos\theta}{\sin^{2}\theta}\right]\partial_\varphi\right) 
\end{align}
and we find the identities
\begin{subequations}
\begin{align}
l^a\Gamma_a&=\frac{p}{\varrho^2}\, , \quad
n^a\Gamma_a=\frac{1}{2\varrho^4}\left[p\Delta-\varrho^2(r-M)\right]\, ,\\
\nabla_a \Gamma^a&=-\frac{1}{2\varrho^2} \, , \quad \Gamma_a \Gamma^a=\frac{1}{4\varrho^2}\cot^2\theta+\Psi_2\, .
\end{align}
\end{subequations}
\end{remark}

\begin{remark}
Under a change of tetrad, i.e trivialization of $\cB(s,s)$, such that $l\to \lambda\bar\lambda l$, $n\to (\lambda\bar\lambda)^{-1}n$, $\Gamma_a$ changes as
\begin{equation}
\Gamma_a\to \Gamma_a-\frac{\nabla_a \lambda}{\lambda}\, .
\end{equation}

For example, consider the tetrad $(\mathfrak{l},\mathfrak{n},m,\overline{m})$ defined in \eqref{eq:Kruskal tetrad}. Then for this tetrad, we find
\begin{subequations}
\begin{align}
\mathfrak{l}^a\Gamma_a&=(-U)l^a\Gamma_a=-U\frac{p}{\varrho^2}\, ,\\
 \mathfrak{n}^a\Gamma_a&=-\frac{G(r) V}{2\varrho^2}\left[\frac{(r-r_-)p}{\varrho^2}+\frac{rr_+-M(r+r_+)-a^2}{r_+^2+a^2}\right]\, ,
 \label{eq:nGamma Kruskal}
\end{align}
\end{subequations}
Thus,  $\mathfrak{n}^a\Gamma_a$ vanishes on $\sH=\{V=0\}$ while $\mathfrak{l}^a\Gamma_a$ vanishes on $\{U=0\}$.
\end{remark}

\begin{remark}
    If we consider the conformally rescaled tetrad $(\breve l, \breve n, \breve m)$ on $\overline{\cM_\epsilon}$, see Section~\ref{subsec:conf ext},  then we find
\begin{align}
\breve{\Gamma}_a= \breve{\bar{m}}_a \breve{m}^b\breve{n}^c \breve{\nabla}_c \breve{l}_b-\breve{n}_a\breve{m}^b\breve{\bar{m}}^c\breve\nabla_c\breve l_b +\frac{1}{2}\breve l^b \breve\nabla_a\breve n_b-\frac{1}{2}\breve{m}^b\breve\nabla_a \breve{\bar{m}}_b=\Gamma_a\, ,
\end{align}
 where $\breve\nabla_a$ is the Levi-Civita connection of $\breve g$, and $\breve{g}$ was used for lowering indices.
In particular, $\breve{l}^a\breve{\Gamma}_a=l^a\Gamma_a$ and $\breve{n}^a\breve{\Gamma}_a=x^{-2}n^a\Gamma_a$, so
 $\breve{l}^a\breve{\Gamma}_a$ vanishes like $x$, while $\breve{n}^a\breve{\Gamma}_a$ approaches $-M/2+\cO(x)$ as $x\to 0$.
\end{remark}

 Following up on this last result, we have by \cite[Lemma 3.2]{Araneda}
\begin{lemma}
\label{lem:rescaling}
Let $\phi\in \Gamma(\cB(s,s))$ transform like $\phi\to \breve \phi=x^{-1}\phi$ under the conformal transformation. Let us assume that under the conformal transformation, the section $(l,n,m)$ of $\fN_0$ is mapped to $(l,x^{-2}n, x^{-1}m)$. Denote by $\breve{T}_s$ the Teukolsky operator of the conformally transformed spacetime,
\begin{align}
\label{eq:T_s conf rescaled}
\breve{T}_s=\breve{g}^{ab}(\breve{\nabla}_a+2s\breve{\Gamma}_a)(\breve{\nabla}_b+2s\breve{\Gamma}_b)-4s^2\breve{\Psi}_2\, ,
\end{align}
where, as discussed above, $\breve{\Gamma}_a=\Gamma_a$, and $\breve{\nabla}_a$ is the Levi-Civita-connection for $\breve{g}$. Then 
\begin{align}
\breve{T}_s\breve{\phi}=x^{-3}\left[T_s+\frac{1}{6}(R-x^2\breve R)\right]\phi\, ,
\end{align}
where $R$ is the Ricci scalar.
\end{lemma}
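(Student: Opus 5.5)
This is a direct computation from the form \eqref{eq: Teuk Bini form}. I would expand both operators as
\[
T_s=\Box_g+4s\,\Gamma^a\nabla_a+2s(\nabla_a\Gamma^a)+4s^2\Gamma_a\Gamma^a-4s^2\Psi_2,
\]
and similarly for $\breve T_s$ with $\breve\nabla$, $\breve g^{ab}=x^{-2}g^{ab}$ and $\breve\Gamma_a=\Gamma_a$. The proof then reduces to comparing four groups of terms under the conformal rescaling $\breve g=x^2g$, $\breve\phi=x^{-1}\phi$.

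\textbf{Second-order part.} The classical conformal covariance of the Yamabe operator in dimension four gives
\[
\Big(\breve\Box+\tfrac16\breve R\Big)(x^{-1}\phi)=x^{-3}\Big(\Box+\tfrac16R\Big)\phi .
\]
Rearranged, this is exactly the source of the term $\tfrac16(R-x^2\breve R)$. Here $\Box$ acts componentwise in the trivialization, since $\Gamma$ already absorbs the connection coefficients $w_a$ via $\Gamma_a=B_a-w_a$.

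\textbf{First-order part.} We have
\[
\breve g^{ab}\Gamma_a\partial_b(x^{-1}\phi)=x^{-3}\Gamma^a\partial_a\phi-x^{-4}(\Gamma^a\partial_a x)\phi .
\]

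\textbf{Divergence term.} Using $\breve\nabla_a V^a=\nabla_aV^a+4V^a\partial_a\log x$ for a vector field $V$, and applying it to $\breve V^a=x^{-2}\Gamma^a$, gives
\[
\breve\nabla_a\breve\Gamma^a=x^{-2}\nabla_a\Gamma^a+2x^{-3}\Gamma^a\partial_ax .
\]

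\textbf{Zeroth-order terms.} These give $\breve\Gamma_a\breve\Gamma^a=x^{-2}\Gamma_a\Gamma^a$. For the Weyl scalar, conformal invariance of $C^a{}_{bcd}$ together with the rescaled tetrad $(l,x^{-2}n,x^{-1}m)$ yields
\[
\breve\Psi_2=\breve g_{ae}\breve C^e{}_{bcd}\breve m^a\breve l^b\bar{\breve m}^c\breve n^d=x^{2}x^{-4}\Psi_2=x^{-2}\Psi_2 .
\]

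Collecting terms, every contribution carries the factor $x^{-3}$ except the cross terms proportional to $s\,\Gamma^a\partial_a x$. These arise with coefficient $4s(-x^{-4})$ from the first-order part and $2s\cdot 2x^{-3}\cdot x^{-1}$ from the divergence part, so they cancel.

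\textbf{Main obstacle.} The delicate point is bookkeeping for the frame and weight conventions. One must check that, under the stated rescaling of the tetrad, the section $\phi$ indeed picks up only the factor $x^{-1}$; a change of frame scaling would otherwise add a gradient term $-\nabla_a\lambda/\lambda$ to $\Gamma_a$. One must also confirm that $\breve\Gamma_a=\Gamma_a$, which is asserted in the preceding remark and which I take as given, and that the expression $\Box$ in the chosen trivialization really is the scalar wave operator.
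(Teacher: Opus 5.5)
Your computation is correct, but it takes a different route from the paper.

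\textbf{The paper's route.} The paper does no expansion. It imports Araneda's conformal-covariance result for the Teukolsky operator (his Lemma 3.2 together with his eqs.~(2.45), (3.4)). This gives the intermediate identity $\breve T_s\breve\phi=x^{-3}[T_s+(4s^2+2)\Psi_2+\tfrac16R]\phi-[(4s^2+2)\breve\Psi_2+\tfrac16\breve R]x^{-1}\phi$. The paper then removes the $\Psi_2$ terms using $\breve\Psi_2=x^{-2}\Psi_2$, which is the same Weyl-tensor argument you use.

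\textbf{Your route.} You work directly from the Bini form of $T_s$ and need only three inputs:
\begin{itemize}
\item the scalar Yamabe identity;
\item the divergence formula $\breve\nabla_aV^a=\nabla_aV^a+4V^a\partial_a\log x$;
\item the identity $\breve\Gamma_a=\Gamma_a$.
\end{itemize}
Apart from the curvature term produced by the Yamabe identity, the only contributions not of the form $x^{-3}(\cdots)\phi$ are the terms proportional to $s\,\Gamma^a\partial_a x$. These cancel between the first-order part and the divergence part, exactly as you show. All zeroth-order terms scale homogeneously.

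\textbf{Comparison.} Your version is self-contained and makes transparent why the lemma holds. The paper's version is shorter and rests on a more general external result, at the price of inheriting that reference's conventions.

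\textbf{Closing the bookkeeping you flagged.} Two small checks settle the points you left open.
\begin{itemize}
\item The sign $+\tfrac16R$ in the Yamabe identity depends on conventions in signature $(+,-,-,-)$. It is the right one for the curvature conventions used here. With $R=0$ it gives $\breve R=6x^{-3}\Box_g x=12(Mr-a^2)/\varrho^2$ on Kerr, which agrees with the paper's expression for $\breve R$.
\item The identity $\breve\Gamma_a=\Gamma_a$, which you take as given, is a two-line check. Take $\breve g=x^2g$ with tetrad $(l,x^{-2}n,x^{-1}m)$. Then $\breve w_a$ and $\breve B_a$ (via $\breve\rho=\rho-l^b\partial_b\log x$ and $\breve\tau=x^{-1}(\tau-m^b\partial_b\log x)$) both shift by the same covector $n_a\,l^b\partial_b\log x-\bar m_a\,m^b\partial_b\log x$, with indices lowered by $g$. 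Hence their difference $\Gamma_a=B_a-w_a$ is unchanged, and no gradient term enters your first-order and divergence computations.
\end{itemize}
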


\begin{proof} By \cite[Lemma 3.2]{Araneda}, combined with eq. (2.45) and (3.4) of the same paper, we obtain 
\begin{align*}
\breve{T}_s\breve{\phi}=x^{-3}\left[T_s+(4s^2+2)\Psi_2+\frac{R}{6}\right]\phi-\left[(4s^2+2)\breve{\Psi}_2+\frac{\breve{R}}{6}\right]x^{-1}\phi 
\end{align*}
From the conformal invariance of the Weyl tensor, $\breve{C}^{a}_{\hphantom{a}bcd}=C^{a}_{\hphantom{a}bcd}$ and the transformation properties of the tetrad, one obtains that $\breve{\Psi}_2=x^{-2}\Psi_2$. Therefore, the terms containing $\Psi_2$ and $\breve{\Psi}_2$ cancel exactly, and we obtain the desired result.
\qeds
\end{proof}

\subsection{The Teukolsky operator under time reversal}
\label{sec:flip}
The Kerr exterior $\MI$ has a discrete symmetry under the simultaneous flip of $t$ to $-t$ and $\varphi$ to $-\varphi$. In this section, we will discuss the effect of this flip on the Teukolsky operator, starting from the effect on the background and the tetrad. 
Let  $\psi:\MI\to \MI$  be the diffeomorphism that corresponds to the flip $(t,r,\theta,\varphi)\mapsto (-t,r,\theta,-\varphi)$. We note that the background as described in the previous section is given by $\fM=(\MI, g, \dVol_g, {\rm d} t, g(V^+))$. 
It is then straightforward to see that the background map corresponding to the diffeomorphism $\psi$ maps $\fM$ to the modified background $\widetilde{\fM}=(\MI, g, \dVol_g, -{\rm d}t, g(V^-))$. By Definition~\ref{def:N0 cover map}, this implies that the lift of $\psi$ will map (sections of) $\fN_0$ to (those of) $\widetilde{\fN}_0$,  the corresponding bundle over $\widetilde{\fM}$. 

To compare the results with the quantities in the original background, we consider the Carter tetrad
\begin{equation}
(l_C,n_C,m_C)=(l,n,m)\cdot (\Delta/2)^{1/2}\zeta^{-1}\, ,
\end{equation}
where $(l,n,m)$ is the Kinnersley tetrad.
The Carter tetrad is defined on $\MI$ (with the exception of the axis of rotation). 
By a straightforward computation in Boyer-Lindquist coordinates, one can  check that, under  the push-forward induced by the lift of $\psi$ to $\fN_0$ introduced in Definition~\ref{def:N0 cover map}, this tetrad behaves as
\begin{align}
    (\Psi^*(l_C,n_C,m_C))(x)=(-n_C,-l_C,\bar{m}_C)(x)
\end{align}
for any $x\in \MI$. 

Using this result, one can see that a general section $(l,n,m)\in \Gamma(\fN_{0}\vert_{\MI})$ so that $(l,n,m)(x)=(l_C,n_C,m_C)(x)\cdot z(x)$ with $z\in \cinf(\MI;\cc^*)$ satisfies
\begin{align}
    (\Psi^*(l,n,m))(x)=(-n, -l, \bar{m})(x)\cdot z(\psi^{-1}x)\cdot z(x)\, .  
\end{align}
We can use this to derive the behaviour of a section $\Phi(x)=[(l,n,m)(x), \phi(x)]\in \Gamma(\cB(s,w)\vert_{\MI})$. Together with Definition~\ref{def:cover of diffeos}, one obtains
\begin{align}
    (\Psi^*_{(s,w)}\Phi)(x)=[(-n,-l,\bar{m})(x), \rho_{(s,w)}(z(\psi^{-1}x))\rho_{(s,w)}(z(x))\phi(x)]
\end{align}

We also note the following
\begin{lemma}
    There is an isomorphism $\iota$ between $\widetilde{\cB}(s,w)$ and $\cB(-s,-w)$ given by
    \begin{align}
       \widetilde{\cB}(s,w)_x\ni[(-n,-l,\bar{m}), c]\mapsto [(l,n,m),c]\in \cB(-s,-w)_x
    \end{align}
    for any $x\in \MI$, $(l,n,m)\in \fN_{0,x}$ (on background $\fM$), and $c\in\cc$.
\end{lemma}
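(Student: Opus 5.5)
The plan is to check directly that the assignment $\iota$ is well defined on equivalence classes, fibrewise linear and bijective, and smooth. All of these reduce to comparing the two principal bundle structures $\widetilde{\fN}_0$ and $\fN_0$ through the map $\theta:(l,n,m)\mapsto(-n,-l,\bar m)$.

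\textbf{Step 1: $\theta$ is a fibrewise bijection.} Recall that $\widetilde{\fN}_0$ is built over the background $\widetilde{\fM}=(\MI,g,\dVol_g,-\d t, g(V^-))$. I first check that $\theta$ maps $\fN_{0,x}$ bijectively onto $\widetilde{\fN}_{0,x}$.
\begin{itemize}
\item Since $l,n$ are future directed for $\d t$, the vectors $-n,-l$ are future directed for $-\d t$.
\item The first vector $-n$ is collinear with $V^-$, which is the outgoing direction $g^{-1}(g(V^-))$ of $\widetilde{\fM}$.
\item The normalizations are preserved: $g(-n,-l)=1$ and $g(\bar m,m)=-1$.
\item For orientation, the frame $(2^{-1/2}(-n-l),\Re\bar m,-\Im\bar m,2^{-1/2}(-n+l))$ equals $(-e_0,e_1,-e_2,e_3)$, where $(e_0,e_1,e_2,e_3)$ is the original oriented frame. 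Two sign flips leave $\dVol_g$ positive.
\end{itemize}
The map $\theta$ is an involution up to signs, so it is bijective.

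\textbf{Step 2: $\theta$ intertwines the right actions.} For $z\in\cc^*$ I compute
\[
\theta((l,n,m)\cdot z)=(-\abs{z}^{-1}n,\,-\abs{z}l,\,\bar z\abs{z}^{-1}\bar m).
\]
In $\widetilde{\fN}_0$ the first slot $\tilde l=-n$ scales by $\abs{w}$, so this equals $\theta(l,n,m)\cdot w$ with $\abs{w}=\abs{z}^{-1}$ and $w/\abs{w}=\bar z/\abs{z}$. Hence $w=\bar z\abs{z}^{-2}=z^{-1}$, and $\theta$ intertwines the actions via the group automorphism $z\mapsto z^{-1}$.

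\textbf{Step 3: well-definedness on classes.} Take $[(l,n,m)\cdot z,\rho_{(-s,-w)}(z^{-1})c]\in\cB(-s,-w)_x$. Its preimage under $\iota$ should be
\[
[\theta(l,n,m)\cdot z^{-1},\,\rho_{(-s,-w)}(z^{-1})c]\in\widetilde{\cB}(s,w)_x .
\]
This equals $[\theta(l,n,m),c]$ exactly when $\rho_{(s,w)}(z)=\rho_{(-s,-w)}(z^{-1})$. That identity holds because $\rho_{(s,w)}(z)=\abs{z}^{-(w-s)}z^{-s}$, as already used in the proof of Lemma~\ref{lemma:dual}. So the assignment is independent of the representative.

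\textbf{Step 4: linearity, inverse and smoothness.} Linearity in $c$ is immediate, and injectivity follows from the zero section mapping only to the zero section. The inverse is given by the same formula read backwards. Smoothness follows because, in local trivializations induced by a section $(l,n,m)$ of $\fN_0$ and the section $\theta(l,n,m)$ of $\widetilde{\fN}_0$, the map $\iota$ is the identity on $\cc$.

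The only step requiring care is Step 2, namely getting the exponent of $\abs{z}$ right, since the roles of $l$ and $n$ swap. All other steps are routine.
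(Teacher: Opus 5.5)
Your proof is correct and follows the paper's argument. The paper's proof is your Steps 2--3: it rewrites $[(-n,-l,\bar m),c]$ using the representative $(-n,-l,\bar m)\cdot z^{-1}$, recognizes this as the image of $(l,n,m)\cdot z$, and concludes from $\rho_{(s,w)}(z)=\rho_{(-s,-w)}(z^{-1})$, with bijectivity then declared immediate. Your Step 1 (that $(l,n,m)\mapsto(-n,-l,\bar m)$ maps $\fN_{0,x}$ onto $\widetilde{\fN}_{0,x}$, including the orientation check) and the smoothness remark in Step 4 are details the paper leaves implicit. Your explicit Step 2 computation also avoids a small typo in the paper's display, where $\abs{z}n$ should read $\abs{z}^{-1}n$.
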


\begin{proof}
    If the map is well-defined, it is immediately apparent that it is a bijection. We therefore show well-definedness. For any $z\in \cc^*$, and $x$, $(l,n,m)$, and $c$ as in the statement of the lemma, we have
    \begin{align*}
        \iota_x[(-n,-l,\bar{m}), c]&=\iota_x[(-n,-l,\bar{m})\cdot z^{-1}, \rho_{(s,w)}(z)c]\\
        &=\iota_x[(-\abs{z}^{-1}n, -\abs{z}l, \bar{z}\abs{z}^{-1}\bar{m}), \rho_{(-s,-w)}(z^{-1})c]\\
        &=[(\abs{z}l,\abs{z}n,z\abs{z}^{-1} m), \rho_{(-s,-w)}(z^{-1})c]\\
        &=[(l,n,m)\cdot z, \rho_{(-s,-w)}(z^{-1})c]\\
        &=[(l,n,m),c]\, .
    \end{align*}
    \qeds
\end{proof}
From here on, we will assume that the tetrad $(l,n,m)$ is chosen such that $z(\psi^{-1}x)=z(x)$ for simplicity and because this is the case for the Kinnersley tetrad, which we are ultimately interested in. Under this assumption, we have 
\begin{equation}
\label{4.23}
    \iota_x\circ \Psi^*_{(s,w)}([(l,n,m), \phi])(x)=[(l,n,m)\cdot z^{-2}, \phi(\psi^{-1}x)]\, ,
\end{equation}
    which is now a section of $\cB(-s,-w)$ in the original background $\fM$.

In a next step, we note that since $\psi$ is an isometry of $(\MI,g)$, one has the identity
$\psi_{*}(\nabla_X Y)=\nabla_{\psi_{*}X}(\psi_{ *}Y)$ for any pair of vector fields $X$ and $Y$, where $\psi_{ *}$ is the usual push-forward of $\psi$ acting on tangent vector fields.
With the help of this identity, we can compute
    \begin{align}
  \Psi^*({\tho}_{(s,w)}^{(l,n,m)})&=\Psi^*(\nabla_{l}-w g(n, \nabla_{l} l)-s g(m, \nabla_{l}\bar{m}))\\\nonumber
  &=-\left[\nabla_{\abs{z}^{2}n}+wg(\abs{z}^2n, \nabla_{\abs{z}^2n}(\abs{z}^{-2}l))+sg\left(\frac{\bar{z}^2}{\abs{z}^{2}}m, \nabla_{\abs{z}^2n}\left(\frac{z^2}{\abs{z}^{2}}\bar{m}\right)\right)\right]\\\nonumber
  &=-{{\tho}'}_{(-s,-w)}^{(l,n,m)\cdot z^{-2}}   
    \end{align}
    which we can now interpret, again, as the corresponding quantity on $\fM$ in the light of the isomorphism $\iota$.
   In a similar fashion, one computes
   \begin{subequations}
\begin{align}
    &\Psi^*({{\tho}'}_{(s,w)}^{(l,n,m)})=-{\tho}_{(-s,-w)}^{(l,n,m)\cdot z^{-2}}\, , \\
    &\Psi^*({\edt}_{(s,w)}^{(l,n,m)})={{\edt}'}_{(-s,-w)}^{(l,n,m)\cdot z^{-2}}\, ,\\
   &\Psi^*({{\edt}'}_{(s,w)}^{(l,n,m)})={\edt}_{(-s,-w)}^{(l,n,m)\cdot z^{-2}}\, ,\\
    &\Psi^*(\rho_{(l,n,m)})=\Psi^*(g(m_C, \nabla_{\bar{m}_C}l_C))=-g(\bar{m}_C, \nabla_{m_C}n_C)=-\rho'_{(l,n,m)\cdot z^{-2}}\, , \\
    &\Psi^*(\tau_{(l,n,m)})=\tau'_{(l,n,m)\cdot z^{-2}}\, .
\end{align}
\end{subequations}

Combining all these results, we can finally compute 
\begin{align}
    &\Psi^*(T_s^{(l,n,m)})\\\nonumber
    &=\Psi^*\left[2[({\tho}-2s\rho-\bar\rho)({\tho}'-\rho')-(\edt-2s\tau-\bar{\tau}')({\edt}'-\tau')]-4(s-1)(s-\frac{1}{2})\Psi_2\right]\\\nonumber
    &={T_s'}^{(l,n,m)\cdot z^{-2}}\, ,
\end{align}
where $T_s'$ denotes the image of $T_s$ under the prime operation introduced in Proposition \ref{prop:N}, see also Remark \ref{rem:prime cc}. It satisfies the identity \cite{Millet1}
\begin{align}
\label{eq:Ts prime rel}
T_s^\prime=\Psi_2^{\tfrac{2s}{3}}T_{-s}\Psi_2^{-\tfrac{2s}{3}}\, .
\end{align}
Together with the relation $\zeta\propto \Psi_2^{-1/3}$ \cite{AAB}, we finally obtain
\begin{align}
    \label{eq:Ts flip relation}
       \iota\Psi^*(T_s^{(l,n,m)}\phi^{(l,n,m)})(x)=\zeta^{-2s}T_{-s}^{(l,n,m)\cdot z^{-2}}{\zeta}^{2s}\phi^{(l,n,m)\cdot z^{-2}}(\psi^{-1}x)\, .
    \end{align}

In \cite{Millet:thesis}, the Teukolsky operator $T_s$ is considered in the tetrad $(\Delta l, \Delta^{-1}n,m)$,  where $(l,n,m)$ is the Kinnersley tetrad \eqref{eq:Kinnersley}.
If we apply the above to the Teukolsky operator in this tetrad, we obtain $z=\sqrt{2\Delta} \zeta$, and thus  $\iota\cdot \Psi^*$ maps $T_s\phi_s$ in the tetrad of \cite{Millet:thesis} to $T_s'\phi_{-s}$ in the tetrad $(l,n,m)\cdot (2\zeta^2)^{-1}$ or to $T_{-s}\zeta^{2s}\phi_{-s}$ in the tetrad $(l,n,m)\cdot \frac{1}{2}$.

\subsection{The Teukolsky operator on the bundle $\cV_s$}
We define the Teukolsky operator on the $\cc^2$-bundle $\cV_s$ defined in Section~\ref{subsec:double bundle} as $\cP_s=T_s\oplus \overline{T_{-s}}$.

\begin{lemma}
\label{lemma: self dual}
The operator $\cP_s$ is formally hermitian with respect to the hermitian form defined in Definition~\ref{def: hermit form}.  
\end{lemma}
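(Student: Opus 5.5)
The plan is to reduce formal hermiticity of $\cP_s$ to a transposition identity between $T_s$ and $T_{-s}$, and to prove that identity by integrating by parts in the form \eqref{eq: Teuk Bini form}.

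\emph{Reduction.} Write $F=[a,(F_s,\overline{F_{-s}})]$ and $H=[a,(H_s,\overline{H_{-s}})]$, with $F_{\pm s},H_{\pm s}$ sections of $\cB(\pm s,\pm s)$. Then $\cP_sH=[a,(T_sH_s,\overline{T_{-s}H_{-s}})]$. Unwinding \eqref{eq:cV fibre metric} via \eqref{eq: dual eval}, the pairing $\IP{F,H}$ in Definition~\ref{def: hermit form} is a sum of two terms. The first is the integral of the dual pairing $F_{-s}\cdot H_s$ between $\cB(-s,-s)$ and $\cB(s,s)=\cB(-s,-s)^\#$ (Lemma~\ref{lemma:dual}). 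The second is the complex conjugate of the analogous pairing $F_s\cdot H_{-s}$. Hence formal hermiticity of $\cP_s$ follows once one shows
\begin{align*}
\int\limits_M u\cdot(T_s v)\,\dVol_g=\int\limits_M (T_{-s}u)\cdot v\,\dVol_g
\end{align*}
for all $u\in\Gamma(\cB(-s,-s))$ and $v\in\Gamma(\cB(s,s))$ with $\supp u\cap\supp v$ compact. The same identity with $s\to -s$, after complex conjugation, handles the second term. In other words, the formal transpose of $T_s$ is $T_{-s}$.

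\emph{Transposition identity.} Use a partition of unity subordinate to the complete system of trivializations \eqref{eq:compl triv syst for killing}, so that everything is computed in a local smooth section $(l,n,m)$ of $\fN_0$. The product $u\cdot v$ is a genuine function, independent of the trivialization. The connections $\Theta$ on $\cB(s,s)$ and $\cB(-s,-s)$ are $\nabla\mp 2s\,w_a$, so they are dual connections and satisfy the Leibniz rule $\nabla_a(u\cdot v)=(\Theta_a u)\cdot v+u\cdot(\Theta_a v)$. In the form \eqref{eq: Teuk Bini form} this is encoded as follows: on $\cB(s,s)$ the operator is $\nabla_a+2s\Gamma_a$ with $\Gamma_a=B_a-w_a$, and on $\cB(-s,-s)$ the corresponding operator is $\nabla_a-2s\Gamma_a$. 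The $-w_a$ part of $\Gamma_a$ is exactly the connection term, and the $B_a$ part is the extra first-order term of weight $(0,0)$.

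Apply the divergence theorem twice, with no boundary terms because of the compact support. The transpose of $\nabla_a+2s\Gamma_a$ is $-(\nabla_a-2s\Gamma_a)$, so
\begin{align*}
\int\limits_M u\cdot g^{ab}(\nabla_a+2s\Gamma_a)(\nabla_b+2s\Gamma_b)v\,\dVol_g=\int\limits_M g^{ab}\big[(\nabla_b-2s\Gamma_b)(\nabla_a-2s\Gamma_a)u\big]\cdot v\,\dVol_g .
\end{align*}
The zeroth-order term $-4s^2\Psi_2$ is even in $s$ and therefore passes through unchanged. The right-hand side is precisely $T_{-s}u$ written in the form \eqref{eq: Teuk Bini form}, since that form has $s$ entering only through $2s\Gamma_a$ and $4s^2\Psi_2$. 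By symmetry of $g^{ab}$ the order of $a$ and $b$ does not matter.

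\emph{Main obstacle.} The delicate step is justifying that \eqref{eq: Teuk Bini form} holds with the same $\Gamma_a$ for both $T_s$ and $T_{-s}$ in an arbitrary tetrad, including the local trivializations near the axis. Under a rescaling $l\to\lambda\bar\lambda l$, $\Gamma_a$ shifts by $-\nabla_a\lambda/\lambda$, and I would check that this shift is compensated consistently on $\cB(s,s)$ and on $\cB(-s,-s)$. This amounts to checking that the weights $\pm s$ enter $w_a$ with opposite signs, which is exactly the duality of the connections. If this becomes messy, I would instead argue directly in the GHP form of $T_s$, using that the formal transposes of $\tho,\tho',\edt,\edt'$ on dual weights are $-(\tho-\rho-\bar\rho)$, $-(\tho'-\rho'-\bar\rho')$, $-(\edt-\tau-\bar\tau')$ and $-(\edt'-\tau'-\bar\tau)$ respectively, together with the GHP commutators.
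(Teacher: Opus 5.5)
Your proposal is correct and follows the paper's proof. The paper also reduces formal hermiticity to the transposition identity $\int u\cdot T_s v\,\dVol_g=\int (T_{-s}u)\cdot v\,\dVol_g$, phrased there as $\overline{T_{-s}}={}^\star T_s$ and with the reduction quoted from \cite{FewsterKlein} rather than unwound from \eqref{eq:cV fibre metric} as you do, and it proves that identity by integrating by parts twice in the form \eqref{eq: Teuk Bini form}; your anticipated obstacle is harmless, since in any single tetrad $\Theta_a=\nabla_a\mp 2s w_a$ on $\cB(\pm s,\pm s)$, together with the $s$-independence of $B_a$, immediately gives $\nabla_a\pm 2s\Gamma_a$ with the same $\Gamma_a=B_a-w_a$, so your partition-of-unity argument goes through as stated.
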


\begin{proof}
 By the results of \cite{FewsterKlein}, see also Remark~\ref{rem:uniqueness fibre metric}, this follows immediately if one has $\overline{T_{-s}}={}^\star T_s$. To show this, let $f\in \Gamma(\cB(s,-s))$, $h\in \Gamma(\cB(s,s))$ so that $\supp f\cap\supp h$ is compact.
We may thus partially integrate without producing boundary terms. The proof proceeds by direct computation in some local trivialization. By construction, the result will be independent of this choice. In such a trivialization we have
\begin{align}
\int\limits_\cM (T_sh(x))(\bar{f}(x))\dVol_g =\int \limits_\cM\bar{f}(x) (T_s h)(x)\dVol_g\, ,
\end{align}
where we use the notation of Lemma~\ref{lemma:dual} on the left-hand side, while the right-hand side indicates multiplication.

 Let us choose the formulation \eqref{eq: Teuk Bini form} for the Teukolsky operator. Since the Levi-Civita connection is metric compatible, using partial integration in the form of Green's second formula leads to
\begin{align*}
\int \limits_\cM \bar{f}(x)(T_{s}h_{s})(x)\dVol_g=&\int\limits_\cM h(x)g^{ab}\left(\nabla_a\nabla_b+4s^2\Gamma_a\Gamma_b-4s^2\Psi_2\right) \bar{f}(x)\dVol_g \\
& -2s\int\limits_\cM g^{ab}h(x)(\nabla_b\Gamma_a+\Gamma_b\nabla_a)\bar{f}(x)\dVol_g\, .
\end{align*}
 Collecting the different terms, one obtains
\begin{align}
\label{eq:PI identitiy}
\int\limits_\cM \bar{f}(x) (T_s h)(x)\dVol_g=\int\limits_\cM (T_{-s}\bar{f})(x) h(x) \dVol_g=\int\limits_{\cM}h(x)(\overline{\overline{T_{-s}}f}(x)) \dVol_g\, ,
\end{align}
 concluding the proof.
\qeds
\end{proof}
As a consequence of this result, both $\cP_s$ and its formal adjoint are normally hyperbolic. 
We note that the action for the Teukolsky scalars $f\in \Gamma(\cV_s)$ can be written in the form 
\begin{align}
    S[f]=-\IP{f, \cP_sf}\, 
\end{align}
using the notation of Definition~\ref{def: hermit form}. This action is symmetric under the global phase shift $f\to e^{i\epsilon} f$, which has the corresponding conserved Noether current
\begin{align*}
j^a[f](x)&=-i(\IP{f(x), \cD_{s,a}f(x)}_x-\IP{\cD_{s,a}f(x),f(x)}_x)\, ,\\
\cD_{s,a}&=(\Theta_a+2sB_a)\oplus \overline{(\Theta_a-2sB_a)}\, .
\end{align*}

Consequently, if we define for $f,h\in \Gamma(\cV_s)$
\begin{align}
J_a[f,h](x)=\IP{f(x), \cD_{s,a}h(x)}_x-\IP{\cD_{s,a}f(x),h(x)}_x\, ,
\end{align}
then we have
\begin{lemma}
\label{lemma:cons J}
Let $f,h\in C^\infty(\cM;\cV_s)$ satisfy $\cP_s f=\cP_s h=0$. Then $J_a[f,h]$ is conserved, i.e. $\nabla_aJ^a[f,h](x)=0$.
\end{lemma}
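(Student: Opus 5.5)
The plan is a direct computation in a local trivialization: I show that the divergence of $J^a[f,h]$ equals $\IP{f,\cP_s h}_x-\IP{\cP_s f,h}_x$. Both terms vanish when $\cP_s f=\cP_s h=0$. Since $J_a$ and its divergence are invariantly defined sections (scalars of weight $(0,0)$), it suffices to work on the complement of the axis using the Kinnersley tetrad. The result then extends by continuity, or by repeating the argument in the trivializations \eqref{eq:compl triv syst for killing}.

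In such a trivialization, write $f=(f_s,\overline{f_{-s}})$ and $h=(h_s,\overline{h_{-s}})$. From the proof of Lemma~\ref{lemma:dual 2}, with $z=1$, the fibre metric is
\begin{align*}
\IP{f,h}_x=\overline{f_{-s}}h_s+f_s\overline{h_{-s}}.
\end{align*}
The operator $\cD_{s,a}$ acts as $\nabla_a+2s\Gamma_a$ on the first component and as $\overline{\nabla_a-2s\Gamma_a}$ on the second. Here I use $\Theta_a+2sB_a=\nabla_a+2s\Gamma_a$ on $\cB(s,s)$, as in \eqref{eq: Teuk Bini form}. The weight-$(-s,-s)$ connection term $-\Theta_a$ contributes $+2s w_a$, consistent with $T_{-s}$ having $\nabla-2s\Gamma$.

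Then $J_a$ splits into two pieces. The first is
\begin{align*}
\overline{f_{-s}}(\nabla_a+2s\Gamma_a)h_s-\big((\nabla_a-2s\Gamma_a)\overline{f_{-s}}\big)h_s,
\end{align*}
up to complex conjugation conventions. The second is the analogous expression with the roles of $f$ and $h$ interchanged and conjugated.

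The core step is the pointwise identity, for scalars $u,v$,
\begin{align*}
\nabla^a\big(u(\nabla_a+2s\Gamma_a)v-v(\nabla_a-2s\Gamma_a)u\big)=u\,T_s v-v\,T_{-s}u.
\end{align*}
Expanding, the left side equals $u\Box v-v\Box u+2s\nabla^a(\Gamma_a uv)+2s\nabla^a(\Gamma_a uv)$. The right side, from \eqref{eq: Teuk Bini form}, equals
\begin{align*}
u\Box v+2s\,u\nabla^a(\Gamma_a v)+2s\,u\Gamma^a\nabla_a v+4s^2\Gamma^2uv-4s^2\Psi_2uv
\end{align*}
minus the same expression with $s\to-s$ and $u\leftrightarrow v$. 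The $\Gamma^2$ and $\Psi_2$ terms cancel. The first-order terms recombine into $4s\nabla^a(\Gamma_a uv)$, which matches the left side.

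This is the same Green-formula computation as in \eqref{eq:PI identitiy} of Lemma~\ref{lemma: self dual}, now done pointwise rather than integrated. Applying it to $(u,v)=(\overline{f_{-s}},h_s)$, and to the conjugate pair $(f_s,\overline{h_{-s}})$ for the second piece, gives
\begin{align*}
\nabla_aJ^a=\IP{f,\cP_sh}_x-\IP{\cP_sf,h}_x,
\end{align*}
which vanishes for solutions.

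The main obstacle I expect is bookkeeping: getting the signs of $\Gamma_a$ right in $\cD_{s,a}$ on the conjugated component, and handling the interplay between complex conjugation and the weights, since $\overline{T_{-s}}$ involves $\overline{\Gamma_a}$. I would handle this by writing everything in terms of $\overline{f}$ and using that conjugation maps $\nabla_a-2s\Gamma_a$ to $\nabla_a-2s\overline{\Gamma_a}$, checking against the already verified integrated identity of Lemma~\ref{lemma: self dual}.
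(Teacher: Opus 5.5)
This is the paper's argument: a direct pointwise computation giving $\nabla_aJ^a[f,h]=\IP{f,\cP_sh}_x-\IP{\cP_sf,h}_x$. Your explicit check that the $\Gamma^a\Gamma_a$ and $\Psi_2$ terms cancel, while the first-order terms recombine into $4s\nabla^a(\Gamma_a uv)$, is actually more detailed than the paper's two-line version.

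One bookkeeping slip to fix concerns the fibre metric. It is sesquilinear: $\IP{f,h}_x=f_{-s}h_s+\overline{f_s}\,\overline{h_{-s}}$. Your formula $\overline{f_{-s}}h_s+f_s\overline{h_{-s}}$ is bilinear, and its terms do not even have weight $(0,0)$. With the correct metric, apply your identity to $(u,v)=(f_{-s},h_s)$, and to $(u,v)=(h_{-s},f_s)$ followed by complex conjugation. This uses exactly $T_{\pm s}f_{\pm s}=T_{\pm s}h_{\pm s}=0$.
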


\begin{proof}
By direct computation  we have (suppressing the dependence on $x$ for brevity)
\begin{align*}
\nabla_aJ^a[f,h]=&g^{ab}\IP{\nabla_a f, \cD_{s,b}h}+\IP{f,\nabla_a\cD_{s,b}h}-\IP{\nabla_a\cD_{s,b}f,h}-\IP{\cD_{s,b}f,\nabla_ah}\\
=&  g^{ab}\IP{f,\cD_{s,a}\cD_{s,b}h}-\IP{\cD_{s,a}\cD_{s,b}f,h}=\IP{f, \cP_s h}-\IP{\cP_s f,h}=0\, .
\end{align*}
\qeds
\end{proof}


\subsection{Construction of the classical phase space}
\label{sec:phase space}
 The construction of the classical phase space will be based on the Green hyperbolicity of the Teukolsky operator. We briefly recall some properties of semi-Green hyperbolic operators and their retarded/advanced Green operators. The definitions and results can be found in \cite{Baer, BaerGinoux}.
 \begin{definition}
    Let $B\xrightarrow{\pi}M$ be a finite-rank vector bundle over a globally hyperbolic spacetime $(M,g,\fO,\fT)$, and let $P:\Gamma(B)\to \Gamma(B)$ be a linear differential operator. Then $P$ is called {\it semi-Green hyperbolic} if there exist {\it retarded and advanced Green operators} $E^\pm_P:\Gamma_c(B)\to \Gamma(B)$ satisfying the conditions
    \begin{subequations}
    \label{eq:Green}
\begin{align}
\label{eq:Green 1}
P\circ E_P^\pm f&=f\, ,\\
E_P^\pm\circ P f &=f\, ,\label{eq:Green 2}\\
\supp (E_P^\pm f)&=J^\pm(\supp f)\,  \label{eq:Green 3}
\end{align} 
\end{subequations}
for all $f\in \Gamma_c(B)$. If both $P$ and ${}^\star P$ are semi-Green hyperbolic, then $P$ is {\it Green hyperbolic}.
\end{definition}

\begin{proposition}
\label{prop:Green hyp props}
 Let $P:\Gamma(B)\to \Gamma(B)$ be a semi-Green hyperbolic operator. Then
 \begin{enumerate}
     \item The retarded/advanced Green operators $E^\pm_P: \Gamma_c(B)\to \Gamma_{pc/fc}(B)$ and the {\it commutator function} $E_P:=E_P^--E_P^+: \Gamma_c(B)\to \Gamma_{sc}(B)$ are unique and continuous.
     \item There are unique, continuous extensions $\widetilde{E}^\pm_P:\Gamma_{pc/fc}(B)\to\Gamma_{pc/fc}(B)$ satisfying \eqref{eq:Green} for all $f\in \Gamma_{pc/fc}(B)$.
     \item The commutator function satisfies 
\begin{align}
   P (E_P f)=0\, , \quad \Ker E_P=P\Gamma_c(B)\, ,\quad  \supp (E_P f)=J(\supp f)\, 
\end{align}
for all $f\in \Gamma_c(B)$. 
\item If $f\in \Gamma_{sc}(B)$ solves $Pf=0$, then there is a $f'\in \Gamma_c(B)$ so that $f=E_Pf'$.
 \end{enumerate}
\end{proposition}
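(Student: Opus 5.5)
The four items are standard consequences of the Green operator axioms \eqref{eq:Green}; I would prove them in order, relying on \cite{Baer, BaerGinoux} for the functional-analytic details.

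\textbf{(1) Uniqueness and continuity.} Suppose $E^+$ and $\tilde E^+$ are two retarded Green operators and $f\in\Gamma_c(B)$. Then $u:=E^+f-\tilde E^+f$ satisfies $Pu=0$ and has past compact support. The key lemma is that a past compact solution of $Pu=0$ vanishes. To see this, pair $u$ with test sections $h\in\Gamma_c(B)$ and use Green hyperbolicity of ${}^\star P$. Writing $h={}^\star P\,E^-_{{}^\star P}h$, we get $\int u\,\overline{h}=\int u\,\overline{{}^\star P E^-_{{}^\star P}h}$. The supports of $u$ (past compact) and $E^-_{{}^\star P}h$ (contained in $J^-(\supp h)$) intersect compactly, so we may integrate by parts, and the expression equals $\int (Pu)\overline{E^-_{{}^\star P}h}=0$.

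For the semi-Green case alone, I would instead use \eqref{eq:Green 2} applied to $\chi u$ with cutoffs $\chi$, following the argument in \cite{Baer}. Once uniqueness holds, $E_P=E^-_P-E^+_P$ is unique, and continuity $\Gamma_c\to\Gamma$ follows from the closed graph theorem on each $\Gamma_K(B)$. The sequential closedness needed there comes from the support condition together with uniqueness.

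\textbf{(2) Extension to $\Gamma_{pc/fc}$.} Take $f\in\Gamma_{pc}(B)$ and a locally finite partition of unity $\chi_j$ subordinate to a sequence of Cauchy time slabs. Define $\widetilde E^+f=\sum_j E^+(\chi_j f)$. The sum is locally finite because $J^+(\supp \chi_j f)$ eventually avoids any compact set, by past compactness. This construction gives $P\widetilde E^+=\id$, and its support is past compact. The identity $\widetilde E^+Pf=f$ follows from uniqueness applied to $\widetilde E^+Pf-f$, which is a past compact solution. Continuity follows from the inductive-limit topology.

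\textbf{(3) Properties of the commutator function.} The equation $PE_Pf=0$ is immediate from \eqref{eq:Green 1}. The support claim $\supp E_Pf\subset J(\supp f)$ follows from \eqref{eq:Green 3}. For the kernel, $P\Gamma_c\subset\Ker E_P$ by \eqref{eq:Green 2}. Conversely, if $E_Pf=0$, then $u:=E^+f=E^-f$ is both past and future compact, hence compact, and $Pu=f$.

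\textbf{(4) Space compact solutions.} Let $f\in\Gamma_{sc}(B)$ with $Pf=0$, and pick $\chi$ equal to $1$ in the future of one Cauchy surface and $0$ in the past of another. Set $f'=P(\chi f)$. This section is supported in the slab intersected with $\supp f$, hence compact. Then $\widetilde E^+f'=\chi f$ and $\widetilde E^-f'=(\chi-1)f$, by uniqueness in (2) applied to the past and future compact pieces. Hence $E_Pf'=-f$, so $f=E_P(-f')$.

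The main obstacle is the vanishing of past compact solutions when only semi-Green hyperbolicity is assumed. I would handle it via the $\widetilde E^\pm$ extension argument of \cite{BaerGinoux}. In our application $P=\cP_s$ is normally hyperbolic and formally hermitian by Lemma~\ref{lemma: self dual}, so it is Green hyperbolic and the duality argument in (1) suffices directly.
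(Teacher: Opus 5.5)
Your item (4) is the paper's own argument. The paper takes $\chi=1$ in $J^-(\Sigma_-)$ and $\chi=0$ in $J^+(\Sigma_+)$, and gets $E_PP(\chi f)=f$ directly from $\widetilde{E}^-_P P(\chi f)=\chi f$ and $-E^+_PP(\chi f)=\widetilde{E}^+_PP((1-\chi)f)=(1-\chi)f$. Your opposite orientation of the cutoff only produces the sign in $f=E_P(-f')$. For (1)--(3) the paper gives no argument and cites \cite[Theorem 3.8, Corollaries 3.11, 3.12]{Baer} and \cite[Theorem 3.5]{BaerGinoux}. Your sketches therefore reconstruct the cited results rather than take a different route. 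They are right in outline, and your direct proof of $\Ker E_P=P\Gamma_c(B)$ via $E^+_Pf=E^-_Pf\in\Gamma_c(B)$ is clean, but three points need repair.

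\textbf{The slab construction in (2).} It is not defined for the paper's notion of past compactness, which only asks that $\supp f\cap J^-(x)$ be compact for each $x$. A section supported in $\{t\geq 0\}$ is past compact, but then $\chi_jf$, with $\chi_j$ a function of time, is not compactly supported, so $E^+_P(\chi_jf)$ is meaningless. Use a locally finite partition of unity by compactly supported functions instead. The sum $\sum_jE^+_P(\chi_jf)$ is then locally finite because $\supp f\cap J^-(L)$ is compact for every compact $L$.

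\textbf{Circularity in your closing remark.} You propose to obtain the vanishing of past compact solutions in the semi-Green case from the extension of \cite{BaerGinoux}. As you have set things up this is circular, since your proof of $\widetilde{E}^+_PPf=f$ in (2) uses exactly that vanishing. The cutoff argument you allude to in (1) is the correct one, and it needs neither ${}^\star P$ nor the extension. Let $u$ be past compact with $Pu=0$ and $p\in M$. Take $q\in I^+(p)$ and $\psi\in C^\infty_c(M)$ equal to $1$ near the compact set $\supp u\cap J^-(q)$. Then $\psi u=E^+_PP(\psi u)=E^+_P([P,\psi]u)$. Since $\supp[P,\psi]u$ does not meet $J^-(q)$, its causal future does not meet $I^-(q)\ni p$, so $u(p)=0$. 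The same cutoff also gives a direct local definition $\widetilde{E}^+_Pf:=E^+_P(\psi f)$ on $I^-(q)$, independent of $\psi$, as an alternative to the partition of unity.

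\textbf{The support claim in (3).} You prove only $\supp(E_Pf)\subset J(\supp f)$. That is the right statement: the equality written in the proposition fails for any nonzero $f\in P\Gamma_c(B)$, where $E_Pf=0$.
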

\begin{proof}
    The first two points are \cite[Theorem 3.8]{Baer}, \cite[Corollary 3.11]{Baer} and \cite[Corollary 3.12]{Baer}. The last two are immediate consequences of \cite[Theorem 3.5]{BaerGinoux}.
    Since we will need the construction later on, let us sketch a proof of the last point for the convenience of the reader.
    
    Let $f\in \Gamma_{sc}(B)$ so that $Pf=0$. Let $\Sigma_\pm$ be two Cauchy surfaces of $M$, with $\Sigma_+\subset I^+(\Sigma_-)$, and let $\chi\in C^\infty(M;\rr)$ be a positive function satisfying $\chi=1$ in $J^-(\Sigma_-)$ and $\chi=0$ in $J^+(\Sigma_+)$.
    Then $\chi f\in \Gamma_{fc}(B)$ and $P(\chi f)=[P,\chi]f\in \Gamma_c(B)$. Moreover, one clearly has 
    $(1-\chi)f\in \Gamma_{pc}(B)$ and $P(1-\chi)f=-P\chi f$, since $[P,1]=0$ and $Pf=0$.
Finally, using the extended advanced and retarded Greens operators $\widetilde{E}_P^\pm$ from part 2 of the proposition, we have
\begin{align*}
E_P P(\chi f)&=E_P^-(P(\chi f))-E_P^+(P(\chi f))=\widetilde{E}_P^- P(\chi f)+\widetilde{E}_P^+P((1-\chi)f)\\
&=\chi f+(1-\chi)f=f\,.
\end{align*}
Therefore $f'=P\chi f\in \Gamma_c(B)$ satisfies the requirement, completing the proof.
    \qeds
\end{proof}

After this little digression, let us return to the Teukolsky operator.
 As seen by the formulation \eqref{eq: Teuk Bini form}, the Teukolsky operators $T_s$ and $\overline{T_{-s}}$ are normally hyperbolic. Since moreover $\cM$ is globally hyperbolic, it follows from \cite[Cor. 3.4.3]{BGP}
\begin{proposition}
\label{prop: E_s}
 $T_s$ and $\overline{T_{-s}}$ are semi-Green hyperbolic operators. The retarded and advanced Green operators of $T_s$ will be denoted by $E^\pm_s$, those of $\overline{T_{-s}}$ by ${\overline{E_{-s}}}^\pm$.
\end{proposition}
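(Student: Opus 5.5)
The plan is to reduce the statement to the general existence theorem for retarded and advanced Green operators of normally hyperbolic operators on globally hyperbolic manifolds, \cite[Cor. 3.4.3]{BGP}. That theorem applies to any linear differential operator on a vector bundle over a globally hyperbolic spacetime whose principal symbol is $g^{-1}(\xi,\xi)\,\id$. It yields unique $E^\pm$ satisfying \eqref{eq:Green 1}--\eqref{eq:Green 3}, and these are exactly the defining properties of semi-Green hyperbolicity. So two things must be checked: that $\cM$ with the data of Section~\ref{subsec:Kstar and starK} is globally hyperbolic, and that $T_s$ and $\overline{T_{-s}}$ are normally hyperbolic on the relevant bundles.

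The first point is the cited fact from \cite{GHW} that $\cM=\MI\cup\sH_+\cup\MII\simeq\MIUII^{in}$ is globally hyperbolic, with the time orientation fixed by $-\partial_{r^*}$.

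For $T_s$ on $\cB(s,s)$, I will use the Bini form \eqref{eq: Teuk Bini form}. In any local trivialization induced by a section of $\fN_0$, $T_s=g^{ab}\nabla_a\nabla_b+(\text{first order})+(\text{zeroth order})$, with smooth coefficients wherever the tetrad is smooth. The principal symbol is therefore $g^{-1}(\xi,\xi)$. Since the principal symbol is invariant under the change of trivialization (the transition functions only produce lower-order terms), this holds globally. To cover all of $\cM$, including $\sH_+$ and the axis, I will use the complete system of trivializations from Remark~\ref{rem:compl triv}, built from a tetrad that is smooth across $\sH_+$, for instance a rescaling $(\Delta^{-1}l,\Delta n,m)$ of the Kinnersley tetrad, which in $K^*$ coordinates is regular at $r=r_+$ by \eqref{KinnersleystarKstarl}--\eqref{KinnersleystarKstarn}. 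The subtle point is that $T_s$ is defined invariantly through the GHP operators, so its smoothness as an operator on $\Gamma(\cB(s,s))$ over $\cM$ does not depend on the choice of trivialization. Only the symbol computation needs a local frame.

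For $\overline{T_{-s}}$, which acts on $\cB(s,-s)$ (the complex conjugate of $\cB(-s,-s)$, by Remark~\ref{rem:prime cc}), I will argue the same way. Complex conjugation of the Bini form gives $g^{ab}(\nabla_a-2s\bar\Gamma_a)(\nabla_b-2s\bar\Gamma_b)-4s^2\overline{\Psi_2}$, whose principal symbol is again $g^{-1}(\xi,\xi)\id$ because $g$ is real. Alternatively, the formal anti-dual of a normally hyperbolic operator is normally hyperbolic, and Lemma~\ref{lemma: self dual} identifies $\overline{T_{-s}}={}^\star T_s$.

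Applying \cite[Cor. 3.4.3]{BGP} to both operators gives the Green operators $E_s^\pm$ and $\overline{E_{-s}}^\pm$. Combined with the remark above that ${}^\star T_s=\overline{T_{-s}}$ is also normally hyperbolic, this even yields Green hyperbolicity of $\cP_s$. The only genuine obstacle I foresee is verifying that the lower-order coefficients remain smooth across $\sH_+$ and on the axis. This amounts to the invariance argument above together with the existence of smooth local sections of $\fN_0$ near every point of $\cM$.
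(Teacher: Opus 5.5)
This is the paper's argument: normal hyperbolicity of $T_s$ and $\overline{T_{-s}}$ is read off from the Bini form \eqref{eq: Teuk Bini form}, $\cM$ is globally hyperbolic, and \cite[Cor.~3.4.3]{BGP} then gives the Green operators. There is one slip in your extra detail, which the paper does not need because the principal symbol does not depend on the trivialization: the rescaling of the Kinnersley tetrad that is regular across $\sH_+$ in $K^*$ coordinates is $(\Delta l,\Delta^{-1}n,m)$ (Millet's tetrad), not $(\Delta^{-1}l,\Delta n,m)$, and the phases in Remark~\ref{rem:compl triv} must then use an angle that is regular at the horizon, such as $\varphi^*$ or $\varphi_+$ (i.e.\ $c=a(r_+^2+a^2)^{-1}$).
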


Since  $\cP_s=T_s\oplus \overline{T_{-s}}$ is a direct sum of Green-hyperbolic operators and formally hermitian, it follows from \cite[Lemma 3.17]{Baer}
\begin{proposition}
\label{prop:Delta_s}
    $\cP_s$ is a Green hyperbolic operator. Its retarded and advanced Green operators are given by $\Delta_s^\pm=E_s^\pm \oplus \overline{E_{-s}}^\pm$.
\end{proposition}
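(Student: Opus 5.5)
The claim to prove is Proposition~\ref{prop:Delta_s}: $\cP_s=T_s\oplus\overline{T_{-s}}$ is Green hyperbolic, with Green operators $\Delta_s^\pm=E_s^\pm\oplus\overline{E_{-s}}^\pm$. I will split it into two parts: first semi-Green hyperbolicity of $\cP_s$ with the stated Green operators, then semi-Green hyperbolicity of its formal adjoint.

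\emph{Step 1: semi-Green hyperbolicity of $\cP_s$.} By Proposition~\ref{prop: E_s}, $T_s$ and $\overline{T_{-s}}$ have retarded and advanced Green operators $E_s^\pm$ and $\overline{E_{-s}}^\pm$. The bundle $\cV_s=\cB(s,s)\oplus\cB(s,-s)$ is a Whitney sum, and $\cP_s$ acts diagonally. So I define $\Delta_s^\pm f=(E_s^\pm f_1,\overline{E_{-s}}^\pm f_2)$ componentwise for $f=(f_1,f_2)\in\Gamma_c(\cV_s)$.
\begin{itemize}
\item The identities \eqref{eq:Green 1} and \eqref{eq:Green 2} hold because they hold in each component.
\item For the support property \eqref{eq:Green 3}, note that $\supp \Delta_s^\pm f=\supp E_s^\pm f_1\cup\supp \overline{E_{-s}}^\pm f_2$. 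This is contained in $J^\pm(\supp f_1)\cup J^\pm(\supp f_2)=J^\pm(\supp f)$, which is the inclusion that \eqref{eq:Green 3} really requires.
\end{itemize}
Uniqueness of the Green operators then follows from Proposition~\ref{prop:Green hyp props}.

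\emph{Step 2: semi-Green hyperbolicity of the formal adjoint.} Here I would use Lemma~\ref{lemma: self dual}: $\cP_s$ is formally hermitian with respect to the non-degenerate fibre metric of Lemma~\ref{lemma:dual 2}. The formal adjoint of $\cP_s$ taken with respect to $\IP{\cdot,\cdot}$ is therefore $\cP_s$ itself. In the anti-dual formulation of \cite{Baer}, ${}^\star\cP_s$ is conjugate to $\cP_s$ via the bundle isomorphism $j$. Explicitly, $j$ swaps the two components, and by the identity $\overline{T_{-s}}={}^\star T_s$ shown in the proof of Lemma~\ref{lemma: self dual}, we get ${}^\star\cP_s=j\,\cP_s\,j^{-1}$.

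Consequently $j\,\Delta_s^\pm\, j^{-1}$ are retarded and advanced Green operators for ${}^\star\cP_s$. Since $j$ is a fibrewise isomorphism covering the identity on $M$, it preserves supports, so the support condition carries over unchanged. Together with Step 1, this shows that $\cP_s$ is Green hyperbolic, which is the content of \cite[Lemma 3.17]{Baer}.

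\emph{Main obstacle.} The only delicate point is keeping the bookkeeping between the dual, the anti-dual and the conjugate bundle consistent, so that ${}^\star\cP_s$ is correctly identified with $\cP_s$ through $j$. Once that identification is in place, everything else is componentwise.
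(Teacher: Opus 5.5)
Your proof is correct and is essentially the paper's argument. The paper simply cites \cite[Lemma 3.17]{Baer}, using that $\cP_s$ is a direct sum of Green-hyperbolic operators and is formally hermitian; your componentwise construction and your conjugation by the swap $j$ just unpack that citation (and Step 2 could even be shortened, since ${}^\star\cP_s=\overline{T_{-s}}\oplus T_s$ is itself a direct sum to which Step 1 applies directly).
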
 
 Further, since $\cP_s$ is formally hermitian, one has the standard result (see e.g. the discussion in \cite{F})
\begin{lemma}
\label{lem:Green FH}
Let  $f,h\in \Gamma_c(\cV_s)$. Then $(f,\Delta_s^\pm h)=(\Delta_s^\mp f,h)$.
\end{lemma}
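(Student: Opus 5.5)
The idea is to use the formal hermiticity of $\cP_s$ (Lemma~\ref{lemma: self dual}) together with the defining properties of the Green operators from Proposition~\ref{prop:Delta_s}. Here $(\cdot,\cdot)$ denotes the pairing $\IP{\cdot,\cdot}$ of Definition~\ref{def: hermit form}. Fix $f,h\in\Gamma_c(\cV_s)$ and set $u^\pm:=\Delta_s^\pm h$ and $v^\mp:=\Delta_s^\mp f$. By \eqref{eq:Green 3}, $\supp u^\pm\subset J^\pm(\supp h)$ and $\supp v^\mp\subset J^\mp(\supp f)$.

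The first step is to check that both pairings are finite and that integration by parts is legitimate. By global hyperbolicity of $\cM$, the set $J^\pm(K_1)\cap J^\mp(K_2)$ is compact for compact $K_1,K_2$. Hence $\supp u^\pm\cap\supp v^\mp$ is compact, and so is $\supp f\cap\supp u^\pm$, since $f$ has compact support. This is exactly the support condition under which the formal hermiticity identity
\begin{equation*}
\IP{F,\cP_sH}=\IP{\cP_sF,H}
\end{equation*}
holds, namely for $F,H$ with $\supp F\cap\supp H$ compact, as recalled in Remark~\ref{rem:uniqueness fibre metric}.

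The second step is the computation itself. Using \eqref{eq:Green 1} twice and formal hermiticity once,
\begin{equation*}
(f,\Delta_s^\pm h)=(\cP_s\Delta_s^\mp f,\Delta_s^\pm h)=(\Delta_s^\mp f,\cP_s\Delta_s^\pm h)=(\Delta_s^\mp f,h).
\end{equation*}
The middle equality is justified by the compactness of $\supp v^\mp\cap\supp u^\pm$ from the first step.

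The only point needing care is this use of formal hermiticity for sections that are not compactly supported. I would justify it directly by inserting a cutoff $\chi\in\Gamma_c$ with $\chi=1$ on a neighbourhood of $\supp v^\mp\cap\supp u^\pm$, and then applying Lemma~\ref{lemma: self dual} to $\chi v^\mp$ and $u^\pm$. Near the intersection the cutoff equals $1$, so it changes neither side of the identity. Otherwise the proof is routine.
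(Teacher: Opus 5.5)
Your proof is correct and is the paper's argument. The paper also uses the causal support properties to get compactness of $\supp(\Delta_s^\mp f)\cap\supp(\Delta_s^\pm h)$, and then computes $\IP{f,\Delta_s^\pm h}=\IP{\cP_s\Delta_s^\mp f,\Delta_s^\pm h}=\IP{\Delta_s^\mp f,\cP_s\Delta_s^\pm h}=\IP{\Delta_s^\mp f,h}$ via Lemma~\ref{lemma: self dual}. Your cutoff step is harmless but unnecessary, since Lemma~\ref{lemma: self dual} is already proved for pairs of sections whose supports intersect compactly.
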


\begin{proof}
By the support properties of $\Delta_s^\pm$, $\supp(\Delta^\mp_s f)\cap\supp(\Delta^\pm_s h)$ is compact. We can then use \eqref{eq:Green 1}, combined with Lemma \ref{lemma: self dual} to obtain 
\begin{align*}
\IP{f,\Delta_s^\pm h}=\IP{\cP_s\Delta_s^{\mp}f, \Delta_s^\pm h}=\IP{\Delta_s^\mp f, \cP_s\Delta_s^\pm h}=\IP{\Delta_s^\mp f,h}\, ,
\end{align*}
\qeds
\end{proof}

 Let us define the following spaces:
\begin{definition}
\label{def:phase spaces}
Let 
\begin{equation}
\Sol_{s}(\cM):=\{\phi\in \Gamma_{sc}(\cV_s) : \cP_s\phi=0\}\, .
\end{equation}
For $\phi,\psi\in \Sol_{s}(\cM)$, we define
\begin{align}
\sigma_s(\phi,\psi):=(-1)^{s}\int\limits_\Sigma J[\phi,\psi](n_\Sigma)\dVol_\gamma\, ,
\end{align}
where $\Sigma$ is a spacelike Cauchy surface with future-pointing unit normal vector $n_\Sigma$ and induced metric $\gamma$.
In addition, let $TS_s(\cM):=\Gamma_c(\cV_s)/\cP_s \Gamma_c(\cV_s)$.
Let $[f], [h]\in TS_s(\cM)$.  Then we define the sesquilinear form
\begin{align}
([f],[h])_{\Delta_s}&:=(-1)^s\IP{[f],\Delta_s [h]}.
\end{align}
\end{definition}

The following result is also fairly standard, we give the proof for the convenience of the reader.  
\begin{proposition}
\label{prop:iso phase space}
\begin{enumerate}
\item $(TS_s(\cM),(\cdot,\cdot)_{\Delta_s})$ is a charged symplectic space, in particular $(\cdot,\cdot)_{\Delta_s}:TS_s(\cM)\times TS_s(\cM)\to \cc$ is well-defined, anti-hermitian, and non-degenerate.
\item $(\Sol_{s}(\cM),\sigma_s)$ is a charged symplectic space. In particular, $\sigma_s$ is independent of the choice of $\Sigma$, anti-hermitian and non-degenerate. 
\item $\Delta_s:TS_s(\cM)\to \Sol_s(\cM)$ is a morphism of charged symplectic spaces, i.e. it is bijective and 
\begin{align}
\label{eq: ID of sympl forms}
\sigma_s(\Delta_s([f]),\Delta_s([h]))=([f],[h])_{\Delta_s}\, .
\end{align}
\end{enumerate}
\end{proposition}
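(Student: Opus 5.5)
The plan is to prove the three parts in the order (1), (3), (2), deriving the properties of $\sigma_s$ from those of $(\cdot,\cdot)_{\Delta_s}$ through the identity \eqref{eq: ID of sympl forms}.

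\emph{Part (1).} Write $\Delta_s=\Delta_s^--\Delta_s^+$, as in Proposition~\ref{prop:Green hyp props}. Well-definedness on the quotient follows from $\Delta_s\cP_s f=0$ for $f\in\Gamma_c(\cV_s)$ (Proposition~\ref{prop:Green hyp props}(3)) together with Lemma~\ref{lemma: self dual}: for $f,h\in\Gamma_c(\cV_s)$,
\[
\IP{\cP_s f,\Delta_s h}=\IP{f,\cP_s\Delta_s h}=0 .
\]
Anti-hermiticity comes from Lemma~\ref{lem:Green FH}: it gives
\[
\IP{f,\Delta_s h}=\IP{\Delta_s^+f-\Delta_s^-f,h}=-\IP{\Delta_s f,h}=-\overline{\IP{h,\Delta_s f}},
\]
and $(-1)^s$ is real. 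For non-degeneracy, suppose $\IP{f,\Delta_s h}=0$ for all $h$. Then $\IP{\Delta_s f,h}=0$ for all $h\in\Gamma_c(\cV_s)$. Since the pairing of Definition~\ref{def: hermit form} is non-degenerate, this gives $\Delta_s f=0$, hence $f\in\cP_s\Gamma_c(\cV_s)$ by Proposition~\ref{prop:Green hyp props}(3), i.e.\ $[f]=0$.

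\emph{Part (3).} The map $\Delta_s:TS_s(\cM)\to\Sol_s(\cM)$ is well defined by Proposition~\ref{prop:Green hyp props}(3), which also gives $\cP_s\Delta_s f=0$ and space-compact support. Injectivity follows because $\Ker\Delta_s=\cP_s\Gamma_c$, and surjectivity is Proposition~\ref{prop:Green hyp props}(4).

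The main computation is \eqref{eq: ID of sympl forms}. Take $\phi=\Delta_s f$ and $\psi=\Delta_s h$. Following the proof of Proposition~\ref{prop:Green hyp props}(4), choose the cutoff $\chi$ with $\chi=1$ on $J^-(\Sigma_-)$ and $\chi=0$ on $J^+(\Sigma_+)$, and set $h'=\cP_s(\chi\psi)$, so that $\psi=\Delta_s h'$ and $[h']=[h]$. Also take $\Sigma$ to the past of $\supp f$ shifted appropriately, and let $\Sigma_\pm$ straddle $\Sigma$. Write $\cP_s(\chi\psi)=[\cP_s,\chi]\psi$; its support lies between $\Sigma_-$ and $\Sigma_+$. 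Then
\[
\IP{f,\Delta_s h'}=\IP{\Delta_s^+ f-\Delta_s^- f,\cP_s(\chi\psi)} .
\]
Split this over the regions to the future and past of $\Sigma$ and integrate by parts. The key identity is that for solutions $u,v$,
\[
\IP{u,\cP_s(\chi v)}-\IP{\cP_s u,\chi v}=\nabla_a\big(\chi J^a[u,v]\big)\ \text{type terms},
\]
the same computation as in Lemma~\ref{lemma:cons J}. Concretely,
\[
\IP{u,\cP_s(\chi v)}_x=\nabla^a\big(\chi J_a[u,v]\big)-\text{(terms cancelling because }\cP_s u=\cP_s v=0).
\]
More precisely, one checks that
\[
\IP{u,[\cP_s,\chi]v}=\nabla^a\chi\, J_a[u,v]+\nabla^a\big(\chi J_a\big)-\chi\nabla^aJ_a ,
\]
which leads to $\IP{\Delta_s f,[\cP_s,\chi]\psi}=\int_M (\nabla^a\chi) J_a[\Delta_s f,\psi]\dVol_g$ up to sign. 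By Stokes' formula and the conservation in Lemma~\ref{lemma:cons J}, this reduces to $\pm\int_\Sigma J[\phi,\psi](n_\Sigma)\dVol_\gamma$. Tracking the signs and the factor $(-1)^s$ then gives \eqref{eq: ID of sympl forms}.

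\emph{Part (2).} Independence of $\Sigma$ follows from Lemma~\ref{lemma:cons J} and Stokes' formula applied to the region between two Cauchy surfaces; the boundary terms at spatial infinity vanish because of the space-compact supports. Once \eqref{eq: ID of sympl forms} holds, anti-hermiticity and non-degeneracy of $\sigma_s$ transfer from part (1) via the bijection $\Delta_s$.

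The step I expect to be the main obstacle is the careful integration by parts with the cutoff, in particular getting the sign and orientation conventions right ($n_\Sigma$ future-pointing, $\Delta_s=\Delta_s^--\Delta_s^+$, antilinearity in the first slot). I will fix these by first checking the scalar case $s=0$.
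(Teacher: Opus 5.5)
Your proposal is correct. Parts (1) and (2), and the bijectivity in (3), coincide with the paper's argument. The identity \eqref{eq: ID of sympl forms} you prove by a different decomposition.

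\emph{The paper's route.} It first writes $(f,h)_{\Delta_s}=(-1)^{s+1}\IP{\Delta_s f,h}$ and splits this integral over $J^+(\Sigma)$ and $J^-(\Sigma)$. It then inserts $h=\cP_s\Delta_s^-h$ on the first piece and $h=\cP_s\Delta_s^+h$ on the second; the integrands are compactly supported on each half-space because $\supp\Delta_s^\mp h\subset J^\mp(\supp h)$. Green's formula on each half-space then gives two fluxes $J[\Delta_s f,\Delta_s^\mp h]$ through $\Sigma$ with opposite outward normals, which combine to $J[\Delta_s f,\Delta_s h]$.

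\emph{Your route.} You replace $h$ by the equivalent representative $h'=\cP_s(\chi\psi)$ from the proof of Proposition~\ref{prop:Green hyp props}(4). Every integral is then over all of $\cM$, with integrand supported in the compact strip where $\nabla\chi\neq 0$, so no boundary terms arise. The flux appears in smeared form $\int_\cM(\nabla^a\chi)J_a$, and conservation of $J$ (Lemma~\ref{lemma:cons J}) converts it to the flux through any Cauchy surface.

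\emph{Comparison.} The paper's argument is shorter and needs no auxiliary representative, but it requires Green's formula on regions with boundary and the outward-normal bookkeeping. Yours needs only integration by parts of compactly supported quantities plus the conservation law, and it ties the identity directly to the surjectivity construction. It also makes your extra choice of $\Sigma$ to the past of $\supp f$ unnecessary.

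\emph{A slip in your computation.} The pointwise identity you display is wrong. Since $\nabla^a(\chi J_a)-\chi\nabla^aJ_a=(\nabla^a\chi)J_a$, as written it reads $2(\nabla^a\chi)J_a$ and would integrate to twice the correct value. The correct identity follows from $\nabla^aJ_a[u,w]=\IP{u,\cP_s w}-\IP{\cP_s u,w}$ and $J_a[u,\chi v]=\chi J_a[u,v]+(\nabla_a\chi)\IP{u,v}$. For solutions $u,v$ it reads
\[
\IP{u,[\cP_s,\chi]v}=(\nabla^a\chi)\,J_a[u,v]+\nabla^a\big((\nabla_a\chi)\IP{u,v}\big).
\]
The last term is the divergence of a compactly supported field and integrates to zero, which gives the integrated formula you then state.

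\emph{Signs.} Since $\chi$ decreases from $1$ to $0$ towards the future, $\int_\cM(\nabla^a\chi)J_a[\phi,\psi]\dVol_g=-\int_\Sigma J[\phi,\psi](n_\Sigma)\dVol_\gamma$. Combined with $\IP{f,\Delta_s h'}=-\IP{\phi,\cP_s(\chi\psi)}$, this yields exactly $(f,h)_{\Delta_s}=\sigma_s(\phi,\psi)$ with no further sign adjustment.
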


\begin{proof}
\begin{enumerate}
    \item The well-definedness and non-degeneracy of $(\cdot,\cdot)_{\Delta_s}$ follows immediately from the fact that $\Ker \Delta_s=\cP_s\Gamma_c(\cV_s)$ in part 3 of Proposition~\ref{prop:Green hyp props}. Moreover, by Lemma~\ref{lem:Green FH}
\begin{align*}
\IP{f,\Delta_s h}=\IP{f,\Delta_s^-h}-\IP{f,\Delta_s^+h}=\IP{\Delta_s^+ f,h}-\IP{\Delta_s^- f, h}=-\IP{\Delta_s f,h}=-\overline{\IP{h,\Delta_s f}}\, ,
\end{align*}  
showing that $([f],[h])_{\Delta_s}$ is anti-hermitian.
\item  The independence of $\sigma_s$ from the choice of $\Sigma$ follows from the fact that $J_a[\phi,\psi]$ is conserved as shown in Lemma \ref{lemma:cons J}, together with the support properties of $\phi$ and $\psi$, and an application of Stokes' theorem. The anti-hermiticity and weak non-degeneracy of $\sigma_s$ will follow from \eqref{eq: ID of sympl forms} once we have shown the third part.
\item As shown in Proposition~\ref{prop:Green hyp props}, for $\Delta_s:\Gamma_c(\cV_s)\to \Gamma_{sc}(\cV_s)$, one has $\Ker \Delta_s=\cP_s\Gamma_c(\cV_s)$ and $\Ran \Delta_s=\Sol_{s}(\cM)$, showing that it descends to a bijective map $\Delta_s:TS_s(\cM)\to \Sol_s(\cM)$. To show \eqref{eq: ID of sympl forms}, let $f, h\in \Gamma_c(\cV_s)$ be representatives of $[f]$ and $[h]$, and let $\Sigma$ be a spacelike Cauchy surface with future-pointing unit normal vector $n$ and induced metric $\gamma$. Then we find
\begin{align*}
(f,h)_{\Delta_s} =&(-1)^s\IP{f,\Delta_s h}= (-1)^{s+1}\IP{\Delta_s f,h}\\
=& (-1)^{s+1}\left[\int\limits_{J^+(\Sigma)} \IP{\Delta_s f, h}_x \dVol_g +\int\limits_{\mathclap{J^-(\Sigma)}} \IP{\Delta_s f, h}_x \dVol_g\right]\\
=&(-1)^{s+1}\left[\int\limits_{J^+(\Sigma)} \IP{\Delta_s f,\cP_s\Delta_s^- h}_x \dVol_g +\int\limits_{\mathclap{J^-(\Sigma)}} \IP{\Delta_s f,\cP_s\Delta^+_s h}_x \dVol_g\right]\\
=&(-1)^{s+1}\left[\hphantom{\partial i}
\int\limits_{\mathclap{\partial J^+(\Sigma)}} J_a[\Delta_s f, \Delta_s^- h] n^{a}_+\dVol_\gamma +\int\limits_{\mathclap{\partial J^-(\Sigma)}} J_a[\Delta_s f,\Delta^+_s h] n^{a}_- \dVol_\gamma\right]\\
=&(-1)^{s}\int\limits_{\Sigma} J_a[\Delta_s f, \Delta_s h] n^{a}\dVol_\gamma
\end{align*}
To get to the fourth line, we used partial integration and Green's second formula to shift the operator $\cP_s$ to $\Delta_sf$. This term drops out, and we are left with the boundary term. The minus-sign used in the last equality appears because, in our sign convention, Stokes' theorem leads to $n_\pm^{a}$ outward-pointing from the perspective of $J^{\pm}(\Sigma)$, and therefore $n_-^{a}=-n_+^{a}$ is future pointing.
\end{enumerate}
\qeds
\end{proof}

Thus, we have two equivalent representations of the classical phase space.

\subsection{The Teukolsky-Starobinsky identities}
The Teukolsky equations with spins in $\{0, \pm 1,\pm 2\}$ occur in the description of Klein-Gordon scalars, electromagnetism, and metric perturbation in Petrov type D spacetimes.

In particular, the operator $T_0$ reduces to the d'Alembert operator on $\cinf(\cM;\cc)$.

For $s=1$, the Teukolsky equation arises when one decouples different components of the Maxwell equation. The spin-weighted function $\phi_1\in\Gamma(\cB(1,1))$ is identified with $F_{lm}$, and $\phi_{-1}\in\Gamma(\cB(-1,-1))$ with $F_{\bar{m}n}$, where $F_{ab}$ is the electromagnetic field-strength tensor, and $F_{xy}$, $x,y\in\{l,n,m,\bar{m}\}$, denotes a contraction of $F_{ab}$ with elements of the null tetrad. They satisfy $T_1\phi_1=0$ and $T_1'\phi_{-1}=0$ if $F_{ab}$ satisfies the vacuum Maxwell equations $\nabla_{[a}F_{bc]}=\nabla^aF_{ab}=0$.

For $s=2$, one can in a similar way identify $\phi_2\in \Gamma(\cB(2,2))$ with $ \delta C_{lmlm}$ and $\phi_{-2}\in\Gamma(\cB(-2,-2))$ with $\delta C_{\bar{m}n\bar{m}n}$, where $ \delta C_{abcd}$ is the perturbation of the Weyl tensor. If this Weyl tensor perturbation belongs to a solution of the linearised vacuum Einstein equations, then $T_2\phi_2=0$ and $T_2'\phi_{-2}=0$.

Recall that $T_s'$ denotes the image of $T_s$ under the prime operation introduced in Proposition \ref{prop:N}, and that by \eqref{eq:Ts prime rel} and the relation $\zeta\propto \Psi_2^{-1/3}$ \cite{AAB}, the above implies that $\zeta^{ 2s}\phi_{-s}$ solves $T_{-s}(\zeta^{2s}\phi_{-s})=0$ for $s\in\{0,1,2\}$.

Then $\phi_s$ and $\phi_{-s}$, $s\in \{0,1,2\}$, satisfy the Teukolsky-Starobinsky identities
\cite{WK, Pound:2021}
\begin{subequations}
\begin{align}
    {\tho}^{2s}\zeta^{2s}\phi_{-s}&={\edt}'^{2s}\zeta^{2s}\phi_s-3M\delta_{s,2}\cL_\xi \overline{\phi_{s}}\\
    {\tho}'^{2s}\zeta^{2s}\phi_s&={\edt}^{2s}\zeta^{2s}\phi_{-s}+3M\delta_{s,2}\cL_\xi\overline{\phi_{-s}}\, ,
\end{align}
\end{subequations}
where $\delta_{s,2}$ is the Kronecker Delta, i.e. $\delta_{s,2}=1$ if $s=2$, and zero otherwise.
Here, by ${\tho}'^{2s}\phi_s$, we mean ${\tho}'_{(s,-s+1)}{\tho}'_{(s,-s+2)}\dots {\tho}'_{(s,s)}\phi_s$,
and similar for other powers of GHP differential operators.

$\phi_s$ and $\phi_{-s}$ also satisfy the decoupled, higher-order equations
\cite{WK, Pound:2021}
\begin{subequations}
\begin{align}
\label{eq:TS plus}
{\tho}^{2s}\overline{\zeta}^{2s}{\tho}'^{2s}(\zeta^{2s}\phi_s)&={\edt}^{2s}\overline{\zeta}^{2s}{\edt}'^{2s}(\zeta^{2s}\phi_s)-9\delta_{s,2}M^2\cL_\xi^s\phi_s\\
{\tho}'^{2s}\overline{\zeta}^{2s}{\tho}^{2s} (\zeta^{2s}\phi_{-s})&={\edt}'^{2s}\overline{\zeta}^{2s}{\edt}^{2s}(\zeta^{2s}\phi_{-s})-9\delta_{s,2}M^2\cL_\xi^s\phi_{-s}\, .
\label{eq:TS minus}
\end{align}
\end{subequations}
Note that the identities \eqref{eq:TS plus} and \eqref{eq:TS minus} are trivial in the case $s=0$.

Later, we will also need the complex conjugate of \eqref{eq:TS minus} for $\Phi=\zeta^{2s}\phi_{-s}$. It can be obtained by replacing
 ${\tho}_{(s,w)}\to {\tho}_{(-s,w)}$, ${\tho}'_{(s,w)}\to {\tho}'_{(-s,w)}$, ${\edt}_{(s,w)}\to{\edt}'_{(-s,w)}$, and ${\edt}'_{(s,w)}\to{\edt}_{(-s,w)}$, and reads
 \begin{align}
\label{eq:TS bar minus}
{\tho}'^{2s}\zeta^{2s}{\tho}^{2s}{\overline{\Phi}}=\left[{\edt}^{2s}\zeta^{2s}{\edt}'^{2s}-9\delta_{s,2}M^2\cL_\xi^s\overline{\zeta}^{-2s}\right]\overline{\Phi}\, .
\end{align} 


\subsection{The physical phase space}
In this section, let us consider the connection between the Teukolsky scalars and the physical theories that are described by them when considered on a spacetime of Kerr-NUT type, namely the linear scalar field, Maxwell and linearised gravity. 

For $s=0$, the Teukolsky equation reduces to the wave equation, so that solutions of the one can be directly identified with solutions of the other.

While for higher spins the identification is not that direct, it is still possible to recover the physical degrees of freedom from either one of the Teukolsky scalars $\phi_s$ or $\phi_{-s}$. 
For both $s=1$ and $s=2$, the electromagnetic vector potential or metric perturbation respectively can be reconstructed in a certain gauge, up to non-propagating and gauge degrees of freedom, from a Hertz potential, see e.g. \cite{Wald, Ori, WK, Pound:2021}. 

In this paper, we use the reconstruction scheme in the ingoing radiation gauge. In this setup, one reconstructs the field-strength tensor or the metric perturbation by applying a differential operator to a Hertz potential $\Phi_{-s,\IRG}\in \Gamma(\cB(-s,-s))$.
 If $\Phi_{-s,\IRG}$ satisfies 
\begin{align}
\label{eq:T-eq IRG Potential}
    T_{-s}\Phi_{-s,\IRG}=0\, ,
\end{align}
then the resulting field-strength tensor or metric perturbation are solutions to the vacuum Maxwell or linearised Einstein equations respectively.
One can then compute the Teukolsky scalars arising as contractions of the null tetrad with $F_{ab}$ and $\delta C_{abcd}$ respectively from the resulting vector potential or metric perturbation. 
These Teukolsky scalars are related to the Hertz potential $\Phi_{-s,\IRG}$ by \cite{WK, Pound:2021}
\begin{subequations}
\begin{align}
    \label{eq:Hertz-rel 1}
    \phi_s&=\frac{1}{2^s}{\tho}^{2s}\overline{\Phi_{-s,\IRG}}\, ,\\
    \phi_{-s}&=\frac{1}{2^s}\left({\edt}'^{2s}\overline{\Phi_{-s,\IRG}}-3M\delta_{s,2}\zeta^{-2s}\cL_\xi \Phi_{-s,\IRG}\right)
\end{align}
\end{subequations}
They satisfy $T_s \phi_s=0$ and $T_s'\phi_{-s}=0$ if \eqref{eq:T-eq IRG Potential} is valid. Moreover, inserting \eqref{eq:Hertz-rel 1} into the Teukolsky-Starobinsky-identity \eqref{eq:TS bar minus} and multiplying by $\overline{\zeta}^{2s}$ for convenience, one obtains the additional relation
\begin{align}
    \label{eq:Hertz-rel 2}
    2^s\overline{\zeta}^{2s}{\tho}'^{2s}\zeta^{2s}\phi_s=\left[\overline{\zeta}^{2s}{\edt}^{2s}\zeta^{2s}{\edt}'^{2s}-9\delta_{s,2}M^2\overline{\zeta}^{2s}\cL_\xi^2\overline{\zeta}^{-2s}\right]\overline{\Phi_{-s,\IRG}}\, .
\end{align}

\begin{definition}
   For $s\in \{0,1,2\}$, we define the operators
    \begin{align}
       \label{eq:def A_s}
        &A_s:\cinf(\cM;\cB(s,-s))\to \cinf(\cM;\cB(s,-s))\, , \\\nonumber
        &A_s f:= \left[\overline{\zeta}^{2s}{\edt}^{2s}\zeta^{2s}{\edt}'^{2s}-9\delta_{s,2}M^2\cL_\xi^2\right] f\, , \quad f\in \cinf(\cM;\cB(s,-s))\, ,\\
        &\widetilde{A}_s:\cinf(\cM;\cB(s,s))\to \cinf(\cM;\cB(s,s))\, , \\\nonumber
        &\widetilde{A}_s f:= \left[{\edt}^{2s}\overline{\zeta}^{2s}{\edt}'^{2s}\zeta^{2s}-9\delta_{s,2}M^2\cL_\xi^2\right] f\, , \quad f\in \cinf(\cM;\cB(s,s))\, .
    \end{align}
    Note that for $s=0$, this corresponds to setting $A_0=\id=\widetilde{A}_0$.
\end{definition}

\begin{proposition}
\label{prop:A_s properties}
    The operators $A_s: \Gamma(\cB(s,-s))\to \Gamma(\cB(s,-s))$ and $\widetilde{A}_s: \Gamma(\cB(s,s))\to \Gamma(\cB(s,s))$ have the following properties:
    \begin{enumerate}
        \item On $\MI$, fix $l$ and $n$ as in the Kinnersley tetrad to identify $A_s$ and $\widetilde{A}_s$ with operators  $A_s$, $\widetilde{A}_s:\Gamma(\cB(s)\vert_{\MI})\to \Gamma(\cB(s)\vert_{\MI})$. Then these operators are identical and can be expressed as
        \begin{align}
        \label{eq:A_s in Lpm}
            \widetilde{A}_s=A_s=\frac{1}{2^{2s}}\left[\sL^+_{-s+1}\sL^+_{-s+2}\dots \sL^+_s\sL^-_{-s+1}\cdots\sL^-_s-144\delta_{s,2}M^2\cL_\xi^2\right]\, ,
        \end{align}
        where 
        \begin{align}
        \label{eq:def sL}
            \sL_n^\pm=\partial_\theta\pm i\left(a\sin\theta\partial_t+\frac{1}{\sin\theta}\partial_\varphi\right)+n\cot\theta\, 
        \end{align}
        in the trivialization induced by the Kinnersley tetrad.
        \item $A_s$ commutes with $\varrho^2\overline{T_{-s}}$, while $\widetilde{A}_s$ commutes with $\varrho^2T_s$. Consequently,
        \begin{align*}
            A_s\overline{E^\pm_{-s}}f=\overline{E^\pm_{-s}}\varrho^{-2}A_s\varrho^2 f\, , \quad f\in \Gamma_c( \cB(s,-s))\, .
        \end{align*}
        \item  On $\MI$, fix $l$ and $n$ as in the Kinnersley tetrad \eqref{eq:Kinnersley} to identify $\Gamma(\cB(s,\pm s)\vert_{\MI})$ with $\Gamma(\cB(s)\vert_{\MI})$, and subsequently with $\cinf(\rr_{t^*}\times(r_+,\infty)_r;\Gamma(\cB_s^{\ss^2}))$ as in Section \ref{sec:ID to fixed scaling}. In this way, one can view $A_s=\widetilde{A}_s$ as an operator on $H^{4s}(\rr_{t^*};H^{4s}_{[s]}(\ss^2)) $ for any fixed but arbitrary $r_0\in (r_+,\infty)$. Then for any $\omega\in \rr$, the Fourier-transformed operator 
        \begin{align}
            a_s(\omega):=e^{i\omega t^*}A_se^{-i\omega t^*}: H^{m+4s}_{[s]}(\ss^2)\to H^{m}_{[s]}(\ss^2)
        \end{align}
        is positive and elliptic. 
        \item If $\phi_s\in \Gamma_{sc}(\cB(s,s))$ is a space-compact solution to $T_s\phi_s=0$, and if $\widetilde{A}_s\phi_s=0$, then $\phi_s=0$. Similarly, if $\overline{\phi_{-s}}\in \Gamma_{sc}( \cB(s,-s))$ is a space-compact solution to $\overline{T_{-s}}\overline{\phi_{-s}}=0$, and if $A_s\overline{\phi_{-s}}=0$, then $\overline{\phi_{-s}}=0$.
        \item  On $\MI$, fix $l$ and $n$ as in the Kinnersley tetrad to identify $\Gamma(\cB(s,\pm s))$ with the space $\cinf(\rr_{t^*}\times(0,r_+^{-1})_x; \cB_s^{\ss^2})$, and consider $A_s$ and $\widetilde{A}_s$ as operators thereon. Let $\phi \in \cinf(\rr_{t^*}\times [0,r_+^{-1})_x; \cB_s^{\ss^2})$. Then
        \begin{align}
            \lim\limits_{x\to 0} (A_s\phi)(t^*, x, \theta,\varphi^*)=A_s(\lim\limits_{x\to 0}\phi(t^*,x,\theta,\varphi^*))
        \end{align}
    \end{enumerate}
\end{proposition}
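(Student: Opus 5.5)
We treat the five items essentially in order, since (1) provides the concrete form used in (3)–(5).

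For (1), we work on $\MI$ in the Kinnersley tetrad, where $\tho=l^a\nabla_a$ and the spin coefficients are explicit: $\rho=-1/\bar\zeta$ up to the convention $\zeta=r-ia\cos\theta$, and $\tau,\tau'$ are proportional to $a\sin\theta$ over $\varrho^2$, $\zeta^2$ etc. On spin-weighted functions of boost weight $\pm s$, the GHP operators $\edt,\edt'$ take the standard form: up to factors $1/(\sqrt2\,\bar p)$ or $1/(\sqrt2\, p)$ and a spin-coefficient potential, they are $\sL^\pm_n$ from \eqref{eq:def sL} with suitable $n$. The plan is to write $\edt'_{(k,w)} = -\frac{1}{\sqrt2\,\bar\zeta}(\sL^-_{?}+\text{correction})$, where the correction is a multiple of $\tau'$ depending on $w$. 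Then we check that the products $\zeta^{2s}\edt'^{2s}$ and $\overline\zeta^{2s}\edt^{2s}$ telescope. The $\zeta$-powers are inserted exactly to absorb the $\cos\theta$-dependence of the prefactors, using $\sL^\pm_n (\zeta f)=\zeta\sL^\pm_n f \pm\dots$, which follows from $\partial_\theta\zeta=ia\sin\theta$. This yields \eqref{eq:A_s in Lpm}, and the $\cL_\xi^2$ term is simply $\partial_t^2$ with the numerical factor. The same computation with the order $\edt^{2s}\overline\zeta^{2s}\edt'^{2s}\zeta^{2s}$ on boost weight $+s$ produces the identical expression, since in the Kinnersley normalization the boost-weight-dependent parts of $w_a$ contracted with $m,\bar m$ differ only by terms that cancel, giving $A_s=\widetilde A_s$.

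For (2), we use the fact that $\varrho^2 T_s$ separates as in \eqref{eq:T_s symm decomp}: in the Kinnersley tetrad it is a radial operator (involving $\partial_t,\partial_\varphi,\partial_r$ with $r$-dependent coefficients) plus an angular operator $\cS$. This angular operator is a polynomial in $\sL^\pm$, $\partial_t$, $\partial_\varphi$ with only $\theta$-dependent coefficients, namely the spin-weighted spheroidal operator $\sL^+_{1-s}\sL^-_s$ plus $\partial_t$-terms. The operator $A_s$ involves only $\theta$ and Killing derivatives, so it commutes with the radial part. Commutation with the angular part is the classical Teukolsky–Starobinsky angular identity $\sL^{+}\cdots\sL^{-}\cdots$ intertwining the spin $s$ and spin $-s$ spheroidal operators (Chandrasekhar). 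This follows from the intertwining relations $\sL^-_{n+1}(\sL^+_{-n}\sL^-_{n+1}+\dots)=(\dots)\sL^-_{n}$ applied $2s$ times. The Killing parts commute trivially. On the axis, the statement extends by continuity, or by using the trivializations \eqref{eq:compl triv syst for killing}, and to $\MII$ by analyticity of coefficients in the $K^*$ form. The Green operator identity then follows: $u=A_s\overline{E^\pm_{-s}}f$ satisfies $\varrho^2\overline{T_{-s}}u=A_s\varrho^2 f$ and has past/future compact support, so uniqueness of $\widetilde E^\pm$ from Proposition~\ref{prop:Green hyp props} gives the claim.

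For (3), after Fourier transform in $t^*$ (with $\varphi^*$ built into $\cB^{\ss^2}_s$), $a_s(\omega)$ is an angular operator of order $4s$ whose principal part is $(\sL^+\sL^-)^{s}\sim\Delta_{\ss^2}^{2s}$ on the bundle, hence elliptic. For positivity, we note that $\sL^-_{n+1}$ and $-\sL^+_{-n}$ are formal adjoints on $L^2(\sin\theta\,d\theta)$ once $\partial_t\to -i\omega$ is real. Pairing then gives $\langle a_s(\omega)u,u\rangle=2^{-2s}\|\sL^-\cdots\sL^- u\|^2+144\delta_{s,2}M^2\omega^2\|u\|^2$, up to sign conventions. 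Strict positivity, meaning injectivity, requires ruling out kernel elements of $\sL^-_{-s+1}\cdots\sL^-_s$. These are explicit ODE solutions of the form $(\sin\theta)^{\cdots}e^{\dots a\omega\cos\theta}$, which are not in $H_{[s]}$ near the poles for spin $s\ge1$; at $\omega=0$ and $s=2$ the $M^2\omega^2$ term vanishes, so this regularity argument must carry the case. We expect \textbf{this step to be the main obstacle}, since it requires tracking exactly which of the $\sL$'s have smooth kernel on the bundle $\cB_s^{\ss^2}$.

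(4) follows from (3): a space-compact solution has a Fourier–Laplace transform in $t$ (in the exterior, using decay from \cite{Millet2}, or via a mode argument), and injectivity of $a_s(\omega)$ for every $\omega$ forces $\phi_s=0$ on $\MI$; unique continuation for the normally hyperbolic $T_s$ extends this to $\cM$. (5) is immediate from (1), since $A_s$ involves only $\partial_{t^*},\partial_\theta,\partial_{\varphi^*}$ with $x$-independent coefficients in the $K^*$-trivialization, and therefore commutes with taking $x\to0$.
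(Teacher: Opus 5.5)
The genuine gap is in (4). Your Fourier/injectivity argument at radii $r_0>r_+$ only yields $\phi_s=0$ on $\MI$. The appeal to ``unique continuation for the normally hyperbolic $T_s$'' does not carry this into $\MII$.

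Domain of dependence gives nothing here. Points of $\MII$ near the left future horizon $\{V=0,\,U>0\}$ of the Kruskal extension lie in the causal future of the left exterior. Hence past-inextendible causal curves in $\cM$ through them need never meet $\MI\cup\sH_+$. There is also no general unique continuation across the null hypersurface $\sH_+$: Holmgren-type arguments need non-characteristic surfaces, and plane waves $f(t-x)$ vanish on a half-space of Minkowski space.

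You have to use $\widetilde A_s\phi_s=0$ inside the black hole as well. The paper does this by choosing $r_0\in(r_+-\delta,r_+)\cup(r_+,\infty)$ with $\phi_s|_{r_0}\not\equiv0$. Such an $r_0$ exists because $r$ is a time function on $\MII$, so every past-inextendible causal curve starting in $\{r<r_0\}$ meets $\{r=r_0\}$. Consequently a solution vanishing on $\{r>r_+-\delta\}$ vanishes on all of $\cM$. Millet's estimates, which hold on $(r_+-\epsilon,\infty)$, still give $\phi_s|_{r_0}\in L^2(\rr_t)$ for such $r_0$, so the Plancherel/positivity argument runs there too.

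There are two further points. First, in (2) the mechanism you give fails for $a\omega\neq0$. After Fourier transform, $\overline{\cS_{-s}}$ differs from $\tfrac12\sL^+_{1-s}\sL^-_s$ by a constant plus a term proportional to $(2s-1)a\omega\cos\theta$. Because of this non-constant term, no single $\sL^-_n$ intertwines consecutive spin-weighted spheroidal operators. Only the full $2s$-fold product does, and that is precisely the non-trivial angular Teukolsky--Starobinsky identity. Citing that identity mode by mode, together with completeness of the spheroidal harmonics, would close the step. The paper instead writes $A_s$ explicitly as a polynomial in $\overline{\cS_{-s}}$, $\cL_\xi$, $\cL_\eta$ ([WK] for $s=1$, a computer-algebra expansion for $s=2$), from which commutation with $\varrho^2\overline{T_{-s}}$ is immediate.

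Second, in (3) your factorization $a_s(\omega)=2^{-2s}L^*L+9\delta_{s,2}M^2\omega^2$, with $L=\sL^-_{-s+1}\cdots\sL^-_s$, is a more elementary route to non-negativity than the paper's citation of positivity of the Teukolsky--Starobinsky constants. It already gives strict positivity for $s=2$: for $\omega\neq0$ the $M^2\omega^2$ term does it, and at $\omega=0$ the operator $L$ is the fourfold round-sphere lowering operator, which is injective on spin weight $2$. For $s=1$, however, ``kernel elements are singular at the poles'' is not enough. At $m=0$ the smooth function $e^{-a\omega\cos\theta}$ spans $\ker\sL^-_0$ on spin weight $0$, so you must show it is not in the range of $\sL^-_1$. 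This holds, for example, because it pairs non-trivially with $e^{a\omega\cos\theta}\in\ker\sL^+_0=\ker(\sL^-_1)^*$.
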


\begin{proof}
  {\bf{1)}} We show the calculation for $A_s$, the one for $\widetilde{A}_s$ works in the same way. We begin by recalling that upon adding subscripts to indicate on which spin and boost weight the derivative operators act, \eqref{eq:def A_s} reads
  \begin{align*}
      \left[\overline{\zeta}^{2s}{\edt}_{(s-1,-s)}\dots{\edt}_{(-s,-s)}\zeta^{2s}{\edt}'_{(-s+1,-s)}\dots{\edt}'_{(s,-s)}-9\delta_{s,2}M^2\cL_{\xi,(s,-s)}^2\right]\, .
  \end{align*}
  Moreover, employing the form of $m^a$ and $\Theta_a$ in the Kinnersley tetrad \eqref{eq:Kinnersley}, a straightforward computation shows
  \begin{subequations}
 \begin{align}
\label{eq:sL edt id}
{\edt}_{(s,w)}&=\frac{1}{\sqrt{2}}\left[\frac{1}{\overline{\zeta}}\sL^+_{-s}+(w-s)\partial_\theta \overline{\zeta}^{-1}\right]\, ,\\
{\edt}'_{(s,w)}&=\frac{1}{\sqrt{2}}\left[\frac{1}{\zeta}\sL^-_{s}+(w+s)\partial_\theta \zeta^{-1}\right]\, .
\label{eq:sL edt' id}
\end{align}
\end{subequations}
Plugging these identities into the definition of $A_s$ and carefully commuting factors of $\zeta$ and $\overline{\zeta}$ with the differential operators, one obtains \eqref{eq:A_s in Lpm}.

{\bf{2)}} Since $A_0=\id$, this case is trivial.
To show that $A_s$ and $\widetilde{A}_s$ commute with $\zeta\bar\zeta \overline{T_{-s}}$ and $\zeta\bar\zeta T_s$ respectively for $s\in\{1,2\}$, we want to expand $A_s$ and $\widetilde{A}_s$ in terms of symmetry operators. As noted earlier, this includes the operators $\overline{\cS_{-s}}$ and $\overline{\cR_{-s}}$ ($\cS_s$ and $\cR_s$ for $\widetilde{A}_s$) defined in \eqref{eq:T_s symm decomp}. Beyond that, it follows from Lemma~\ref{lemma:invar Teukolsky Killing flow} that $\cL_\xi$ and $\cL_{\eta}$ are symmetry operators for $T_s$ and $\overline{T_{-s}}$; in fact they commute with all GHP spin coefficients, as well as with $\Psi_2$ \cite{Pound:2021}.

Consequently, we wish to expand $A_s$ and $\widetilde{A}_s$ in powers of $\overline{\cS_{-s}}$, $\overline{\cR_{-s}}$,  (or $\cS_s$ and $\cR_s$), $\cL_\xi$, and $\cL_\eta$.

For $s=1$, the result for $\widetilde{A}_s$ is simply \cite[eq. (45b)]{WK}. For $A_s$, we start from the operator identity \cite{WK} 
\begin{align*}
    {\edt}'{\edt}'\overline{\zeta}^2{\edt}{\edt}\zeta^2\phi_{-1}=\left[\cS_1'^2+\cL_\eta\cL_\xi\right]\phi_{-1}\, .
\end{align*}
Renaming $\psi_{-1}=\zeta^2\phi_{-1}$, and multiply by $\zeta^2$, we find
\begin{align*}
   \zeta^2{\edt}'{\edt}'\overline{\zeta}^2{\edt}{\edt}\psi_{-1}=\left[\zeta^2\cS_1'^2\zeta^{-2}+\cL_\eta\cL_\xi\right]\psi_{-1}\, .
\end{align*}
Recalling the relations \eqref{eq:Ts prime rel}, \eqref{eq:T_s symm decomp}, and $\zeta\propto \Psi_2^{-1/3}$, we find that we can write upon complex conjugation
\begin{align*}
    A_1\overline{\psi_{-1}}=\left[\overline{\cS_{-1}}^2+\cL_\eta\cL_\xi\right]\overline{\psi_{-1}}\, .
\end{align*}

For $s=2$, we work in the Kinnersley tetrad and in Boyer-Lindquist coordinates, where $A_s$ and $\widetilde{A}_s$ agree.\footnote{We believe that this proof could be done using GHP operators, spin coefficients, and their algebraic relations. However, since $A_2$ is of 8th, and $\overline{T_{-2}}$ of second order, this would likely require computer assistance. We didn't attempt such a proof here.} These choices let us expand $\overline{\cS_{-2}}$ explicitly,\begin{align}
\label{eq:S_2 compl conj Kinnersley}
\overline{\cS_{-2}}&=\frac{1}{2}\left(\frac{1}{\sin\theta}\partial_\theta \sin \theta\partial_\theta+\frac{1}{\sin^2\theta} \partial_\varphi^2+a^2\sin^2\theta\partial_t^2+2a\partial_t\partial_\varphi\right.\\\nonumber
&\left.-4\i a\cos\theta\partial_t+4\i\frac{\cos\theta}{\sin^2\theta}\partial_\varphi-4\cot^2\theta+2\right)\, ,
\end{align}
and we recall that in this tetrad \footnote{This identification is valid in any tetrad that is invariant under the Killing flows of $\partial_t$ and $\partial_\varphi$.} $\partial_t=\cL_\xi$ and $\partial_\varphi=\frac{1}{a}\cL_\eta-a\cL_\xi$,
see \cite{Aksteiner:PhD}, Sections 2.4 and 2.5. Moreover, $\overline{\cS}_{-s}$ is related to $\cS_s$ by 
\begin{align}
\label{eq:Symm op relation}
 \overline{\cS_{-s}}=\cS_s+2s\, . 
\end{align}

Since the second term in $A_2$ is already a power of $\cL_\xi$, it remains to expand $\overline{\zeta}^4{\edt}^4\zeta^4{\edt}'^4$ in terms of $\overline{\cS_{-2}}$, $\partial_t$ and $\partial_\varphi$. We start from \eqref{eq:A_s in Lpm}, where the coordinate form of $\sL^\pm_n$ is given by \eqref{eq:def sL}. Subsequently, we follow the same procedure as in \cite{CTdC} for the proof of the existence of the Teukolsky-Starobinsky constants. We start with the coordinate expression of $\sL^{-}_1\sL^-_2$ and rewrite it by replacing $\partial_\theta^2 = 2 \overline{\cS_{-2}} + G$, where $G$ is a second-order differential operator which is at most first order in $\theta$ and which can be read off from \eqref{eq:S_2 compl conj Kinnersley}.
We continue this procedure by multiplying by $\sL_0^-$ from the left, and again replacing all second-order derivatives in $\theta$  as above. Proceeding with the remaining operators $\sL^\pm_n$ in \eqref{eq:A_s in Lpm} in the same manner, we rewrite $A_2$ in terms of $\overline{\cS_{-2}}$, $\partial_t$, $\partial_\varphi$, and, a priori, $\theta$ and a differential operator of first order in $\partial_\theta$.

 Making use of a modification of the Mathematica notebook provided by the authors of \cite{CTdC}, we find
\begin{align*}
A_2=&18 a^3 \partial_\varphi \partial_t^3 + 9 a^4 \partial_t^4 + 
 2 a \partial_\varphi \partial_t (\overline{\cS_{-2}}-2) (5\overline{\cS_{-2}}-13) + \left(6 - 5 \overline{\cS_{-2}} + \overline{\cS_{-2}}^2\right)^2 \\
 &+  a^2 \partial_t^2 (9 \partial_\varphi^2 + 2 (\overline{\cS_{-2}}-2) (5\overline{\cS_{-s}}-7))-9M^2\partial_t^2\, ,
\end{align*}
which is the desired expansion in terms of $\overline{\cS_{-s}}$, $\partial_t$ and $\partial_\varphi$.\footnote{This expression agrees with that found in \cite[Lemma 2.7]{CTdC} upon setting $\overline{S_{-2}}=-\frac{1}{2} L+2$, $\partial_t=-\i \frac{\nu}{a}$, and $\partial_\varphi=\i m$. Similarly, the expression agrees with \cite[Theorem 81.5]{chandrasekharBook} upon replacing $\overline{S_{-2}}=-\frac{1}{2} \lambda+2$, $\partial_t=\i \sigma^+$, and $\partial_\varphi=\i m$, recalling that $\alpha^2=a^2+\frac{am}{\sigma^+}$.}
Applying the relation \eqref{eq:Symm op relation} to the result above gives the corresponding result for $\widetilde{A}_2$. Since this is independent of $\theta$ and contains no first-order term in $\partial_\theta$, this shows that $A_s$ and $\widetilde{A}_s$ commute with $\varrho^2 \overline{T_{-s}}$ and $\varrho^2 T_s$, respectively.

We can then use the fact that the retarded and advanced propagators $\overline{E^\pm_{-s}}$ for $\overline{T_{-s}}$ and their properties listed in Proposition~\ref{prop:Green hyp props} extend to past compact/future compact sections of $\cB(s,-s)$ \cite[Theorem 3.8]{Baer} to compute
\begin{align*}
    A_s\overline{E_{-s}}^\pm f=\overline{E_{-s}}^\pm\overline{T_{-s}}A_s\overline{E_{-s}}^\pm f=\overline{E_{-s}}^\pm\varrho^{-2}A_s\varrho^2\overline{T_{-s}}\overline{E_{-s}}^\pm f=\overline{E_{-s}}^\pm\varrho^{-2}A_s\varrho^2 f\,,
\end{align*}
concluding the proof of the second part.

{\bf{3)}} We remain in a tetrad where $l$ and $n$ are fixed to the scaling of the Kinnersley tetrad \eqref{eq:Kinnersley}. Therefore, $A_s$ and $\widetilde{A}_s$ agree and can be treated simultaneously.

Using the result of part 1, it is then easy to see that $a_s(\omega)$ is a differential operator of order $4s$ with principal symbol 
\begin{equation*}
    \sigma_{4s}(a_s(\omega))=2^{-2s} \left(\xi_\theta^2+\frac{\xi_\varphi^2}{\sin^2\theta}\right)^{2s}\,.
\end{equation*}

Next, we note that for any fixed $\omega\in\rr$, the Fourier-transformed symmetry operator $e^{i\omega t^*}\cS_se^{-i\omega t^*}:\Gamma(\cB_s^{\ss^2})\to\Gamma(\cB_s^{\ss^2})$ has countably many real eigenvalues with eigenfunctions of the form $e^{im\varphi^*}S^{[s],(a\omega)}_{m\ell}(\theta)$ with $m\in \zz$ and $\ell\geq \max\{\abs{m},s\}$ called  spin-weighted spheroidal harmonics \cite{Teukolsky2}. Moreover, $\left\{e^{im\varphi}S^{[s],(a\omega)}_{m\ell}(\theta)\right\}_{m,\ell}$ form an orthonormal basis of $H^n_{[s]}(\ss^2)$ \cite[Proposition 2.1]{TdC}. This result follows from Sturm-Liouville theory.

In fact, for any $\omega\in \rr$, the elements of $\left\{e^{im\varphi}S^{[s],(a\omega)}_{m\ell}(\theta)\right\}_{m,\ell}$ are also eigenfunctions of $a_s(\omega)$, see for example the discussion in \cite[Lemma 3.1]{GH} for $s=2$ or in \cite[Def. 2.1, Lemma 2.7]{CTdC} for more general spin. 
Moreover, by \cite[Lemma 3.1]{GH} and \cite[Lemma 2.11]{CTdC}, the eigenvalues
 \begin{align*}
 a_s(\omega)e^{im\varphi}S^{[s],(a\omega)}_{m\ell}(\theta)=N(s,\omega,m,\ell)e^{im\varphi}S^{[s],(a\omega)}_{m\ell}(\theta)
\end{align*}  
satisfy $N(1,\omega,m,\ell)>0$ and $N(2,\omega,m,\ell)>9M^2\omega^2$, respectively. In particular, the eigenvalues are positive and non-zero. The invertibility therefore follows from the fact that the eigenfunctions form an orthonormal basis of $H^n_{[s]}(\ss^2)$. 

{\bf{4)}} We show the case of $\phi_s$ and $\widetilde{A}_s$, the case for $\overline{\phi_{-s}}$ and $A_s$ works in the same way.

Let us fix the scaling to that of the Kinnersley tetrad. While this does not work in all of $\cM$, we can apply it on $\cM\setminus \sH_+$.
Let $\phi_s\in \Gamma_{sc}(\cB(s,s))$ be a space-compact, non-trivial solution to $T_s\phi_s=0$, and assume $\widetilde{A}_s\phi_s=0$. For some  $\delta<r_+-r_-$, let $r_0\in(r_+-\delta,r_+)\cup(r_+,\infty)$ be chosen such that $\phi_s\vert_{r_0}$ is not vanishing identically. Since $\phi_s$ is a non-trivial solution to $T_s\phi_s=0$, such $r_0$ can always be found. 
We can combine the results of \cite{Millet:thesis} with their equivalent under the symmetry $(t,\varphi)\to(-t,-\varphi)$ 
to argue that for any such  $r_0$, $\phi_s\vert_{r_0}$  and $\widetilde{A}_s\phi_s\vert_{r_0}$ (which vanishes by assumption) are elements of $L^2(\rr_t; \Gamma(\cB_s^{\ss^2}))$.\footnote{See also the more detailed account of the results of \cite{Millet2} in Section \ref{sec:millet}.} Denote $\omega=(\theta, \varphi)$, and consider 
\begin{align*}
    0&=\int\limits_{\rr_t}\int\limits_{\ss^2} \overline{\phi_s}(r_0,t,\omega)\widetilde{A}_s\phi_s(r_0,t,\omega)\d^2\omega \d t = 
    \int\limits_{\rr_k}\int\limits_{\ss^2}\overline{\hat{\phi}_s}(r_0,k,\omega)a_s(k)\hat{\phi}_s(r_0,k,\omega)\d^2\omega \d k \\
    &=\int\limits_{\rr_k}\sum\limits_{m,\ell}\abs{\hat{\phi}_{s,m,\ell}(r_0,k)}^2N(s,k,m,\ell) \d k 
\end{align*}
where we have used Plancherel theorem in $t$ in the second step, and the completeness and orthonormality of the $\left(e^{im\varphi}S^{(\nu)}_{m\ell}(\theta)\right)$-basis in the third step. As discussed in part 3) above, $N(s,k,m,l)>0$. Moreover, since by our assumptions $\phi_s\vert_{r_0}$ is smooth and does not vanish identically, there must be $m$ and $\ell$ so that $\hat{\phi}_{s,m,\ell}(r_0,k)\neq 0$, so that we conclude
\begin{align*}
    \int\limits_{\rr_k}\sum\limits_{m,\ell}\abs{\hat{\phi}_{s,m,\ell}(r_0,k)}^2N(s,k,m,\ell) dk >0\, ,
\end{align*}
 leading to a contradiction. This concludes the proof of part 4.

 {\bf 5)}  We wish to take the limit along the flow induced by $\partial_x$ (or alternatively $\partial_{r^*}$ in the $K^*$-coordinate system). We note that, using $l$ and $n$ as in the Kinnersley tetrad as described in the assumptions, one has
 \begin{equation}
     [A_s,\partial_x]=[\widetilde{A}_s,\partial_x]=0\, ,
 \end{equation}
 or in other words $A_s$ is independent of $x$. Since one also has $[A_s,x]=0$, which can be checked by a straightforward computation in the (conformally rescaled) Kinnersley tetrad, one immediately obtains the desired result.
\qeds
\end{proof}

\begin{definition}
Let $s\in\{0,1,2\}$, and define the operator 
\begin{align}
\label{eq: Bs def}
    B_s: \Gamma(\cB(s,s))\to \Gamma(\cB(s,-s))\, ,\quad \phi_s\mapsto \overline{\zeta}^{2s}{\tho}'^{2s}\zeta^{2s} \phi_{s}\, .
\end{align}
We define the physical subspace of $\Sol_{s}(\cM)$ as
\begin{align}
    \Sol_{s,p}(\cM):=\{(\phi_s,\overline{\phi_{-s}})\in \Sol_{s}(\cM):  \overline{\phi_{-s}}\in\Ran(A_s) \text{ and }\overline{\phi_{-s}}&=B_s\phi_s\}\, .
\end{align}
Here, $A_s$ is considered as an operator on $\Ker(\overline{T_{-s}})\subset \Gamma_{sc}(\cB(s,-s))$.
\end{definition}

An interpretation based on the ingoing radiation gauge \eqref{eq:Hertz-rel 2} is that the elements of $\cS_{1,p}(\cM)$ are of the form $(F_{lm}, 2^{-1}A_1\overline{\Phi_{1,\IRG}})$, while those of $\cS_{2,p}(\cM)$ are of the form $(C_{lmlm}, 2^{-2}A_2\overline{\Phi_{2,\IRG}})$.
Assuming that $\Phi_{-s,\IRG}\in \Ker_{sc} (T_{-s})$, one has indeed $\phi_{\pm s}\in \Ker_{sc} (T_{\pm s})$ by \eqref{eq:Hertz-rel 1} and Proposition~\ref{prop:A_s properties}.

For later use, we also note the following property:
\begin{lemma}
\label{lemma:A_s exchange}
   Let $\phi=(\phi_s,\overline{\phi_{-s}})\in\Sol_{s,p}(\cM)$.  Then 
   \begin{subequations}
    \begin{align}
        {\tho}^{2s}A_s\overline{\phi_{-s}}=\widetilde{A}_s{\tho}^{2s}\overline{\phi_{-s}}\, ,\\
        A_sB_s\phi_s=B_s\widetilde{A}_s\phi_s\, ,\\
        \widetilde{A}_s^{-1}{\tho}^{2s}\overline{\phi_{-s}}={\tho}^{2s} A_s^{-1}\overline{\phi_{-s}}\, .
    \end{align}
    \end{subequations}
\end{lemma}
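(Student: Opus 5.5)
The plan is to derive all three identities from the decoupled Teukolsky--Starobinsky identities \eqref{eq:TS plus} and \eqref{eq:TS bar minus}. First I will rewrite these as factorizations of $A_s$ and $\widetilde{A}_s$ through $B_s$ and ${\tho}^{2s}$. Once that is done, the commutation relations are purely algebraic, and the only analytic input is the injectivity statement in Proposition~\ref{prop:A_s properties}(4).

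\textbf{Step 1: factorizations.} For $\phi_s\in\Gamma_{sc}(\cB(s,s))$ with $T_s\phi_s=0$, the left-hand side of \eqref{eq:TS plus} is ${\tho}^{2s}\overline{\zeta}^{2s}{\tho}'^{2s}\zeta^{2s}\phi_s={\tho}^{2s}B_s\phi_s$. The right-hand side is ${\edt}^{2s}\overline{\zeta}^{2s}{\edt}'^{2s}\zeta^{2s}\phi_s-9\delta_{s,2}M^2\cL_\xi^2\phi_s$, which is exactly $\widetilde{A}_s\phi_s$. Hence
\begin{align*}
{\tho}^{2s}B_s=\widetilde{A}_s \quad \text{on } \Ker T_s\, .
\end{align*}
Similarly, for $\overline{\Phi}\in\Gamma_{sc}(\cB(s,-s))$ with $\overline{T_{-s}}\,\overline{\Phi}=0$, I multiply \eqref{eq:TS bar minus} by $\overline{\zeta}^{2s}$ from the left. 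The factor $\overline{\zeta}^{2s}$ commutes with $\cL_\xi$, since $[\cL_\xi,\overline{\zeta}]=0$. This gives
\begin{align*}
B_s{\tho}^{2s}=A_s \quad \text{on } \Ker \overline{T_{-s}}\, .
\end{align*}
I also need two mapping properties: $B_s$ maps $\Ker_{sc}T_s$ to $\Ker_{sc}\overline{T_{-s}}$, and ${\tho}^{2s}$ maps $\Ker_{sc}\overline{T_{-s}}$ to $\Ker_{sc}T_s$. These are the classical Teukolsky--Starobinsky mapping properties recalled around \eqref{eq:Hertz-rel 1}. Space-compactness is preserved because these are differential operators.

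\textbf{Step 2: the two commutation identities.} Let $\phi\in\Sol_{s,p}(\cM)$, so that $\overline{\phi_{-s}}=B_s\phi_s$. For the second identity, I use $\widetilde{A}_s\phi_s={\tho}^{2s}B_s\phi_s$ and then apply the second factorization to $B_s\phi_s\in\Ker\overline{T_{-s}}$:
\begin{align*}
B_s\widetilde{A}_s\phi_s=B_s{\tho}^{2s}(B_s\phi_s)=A_sB_s\phi_s\, .
\end{align*}
For the first identity, I use $A_s\overline{\phi_{-s}}=B_s{\tho}^{2s}\overline{\phi_{-s}}$. Then I apply the first factorization to ${\tho}^{2s}\overline{\phi_{-s}}\in\Ker T_s$:
\begin{align*}
{\tho}^{2s}A_s\overline{\phi_{-s}}={\tho}^{2s}B_s({\tho}^{2s}\overline{\phi_{-s}})=\widetilde{A}_s{\tho}^{2s}\overline{\phi_{-s}}\, .
\end{align*}
This argument needs only $\overline{T_{-s}}\,\overline{\phi_{-s}}=0$, so the first identity holds for every element of $\Ker_{sc}\overline{T_{-s}}$.

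\textbf{Step 3: the inverse identity.} Because $\overline{\phi_{-s}}\in\Ran(A_s)$, there is a $\psi\in\Ker_{sc}\overline{T_{-s}}$ with $A_s\psi=\overline{\phi_{-s}}$. By Proposition~\ref{prop:A_s properties}(4) this $\psi$ is unique, so $A_s^{-1}\overline{\phi_{-s}}:=\psi$ is well defined. Applying the first identity to $\psi$ gives ${\tho}^{2s}\overline{\phi_{-s}}=\widetilde{A}_s({\tho}^{2s}\psi)$. Here ${\tho}^{2s}\psi\in\Ker_{sc}T_s$, so ${\tho}^{2s}\overline{\phi_{-s}}$ lies in the range of $\widetilde{A}_s$ on $\Ker_{sc}T_s$. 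By the injectivity of $\widetilde{A}_s$ on that space, again from Proposition~\ref{prop:A_s properties}(4), we obtain $\widetilde{A}_s^{-1}{\tho}^{2s}\overline{\phi_{-s}}={\tho}^{2s}A_s^{-1}\overline{\phi_{-s}}$.

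\textbf{Main obstacle.} I expect the hardest part to be Step 1, namely matching \eqref{eq:TS plus} and \eqref{eq:TS bar minus} exactly with $\widetilde{A}_s$, $A_s$ and $B_s$. This requires tracking the placement of the $\zeta,\overline{\zeta}$ factors and the boost weights on which each GHP operator acts. For $s=2$ it also requires checking that the $\cL_\xi^2$ terms agree, including the $\overline{\zeta}^{\pm 2s}$ conjugation in \eqref{eq:TS bar minus}. As a cross-check for $s=2$, I would verify the factorization in the Kinnersley tetrad using the $\sL^\pm_n$ form \eqref{eq:A_s in Lpm}.
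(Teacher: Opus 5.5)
Your proposal is correct. For the first two identities it is the paper's argument: both proofs read \eqref{eq:TS plus} and \eqref{eq:TS bar minus} (the latter multiplied by $\overline{\zeta}^{2s}$) as the factorizations ${\tho}^{2s}B_s=\widetilde{A}_s$ on $\Ker T_s$ and $B_s{\tho}^{2s}=A_s$ on $\Ker\overline{T_{-s}}$, and then compose them.

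The differences lie in the supporting steps.

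\textbf{Membership in the kernel.} To get ${\tho}^{2s}\overline{\phi_{-s}}\in\Ker_{sc}(T_s)$, the paper writes ${\tho}^{2s}\overline{\phi_{-s}}={\tho}^{2s}B_s\phi_s=\widetilde{A}_s\phi_s$ and uses part (2) of Proposition~\ref{prop:A_s properties}; this needs $\overline{\phi_{-s}}=B_s\phi_s$. You instead use the mapping property of ${\tho}^{2s}$ from the Hertz-potential discussion. As a result, your first identity holds on all of $\cT_{-s,sc}(\cM)$. That is the form actually invoked later in the proof of Lemma~\ref{lemma:phys subspace bij}, where it is applied to a preimage $\overline{\psi_{-s}}$ that need not be the second component of an element of $\Sol_{s,p}(\cM)$. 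For the second identity you need no mapping property of $B_s$ at all, since $B_s\phi_s=\overline{\phi_{-s}}\in\Ker\overline{T_{-s}}$ by the definition of $\Sol_s(\cM)$.

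\textbf{Third identity.} The paper routes through \eqref{eq:Hertz-rel 1}, identifying both sides with $\phi_s=2^{-s}{\tho}^{2s}\overline{\Phi_{-s,\IRG}}$. This presupposes that $\phi_s$ admits such a Hertz potential in $\cT_{-s,sc}(\cM)$. For a general element of $\Sol_{s,p}(\cM)$, that is essentially the surjectivity part of Lemma~\ref{lemma:phys subspace bij}. Your argument applies the first identity to $\psi=A_s^{-1}\overline{\phi_{-s}}$ and then uses injectivity of $\widetilde{A}_s$ on $\Ker_{sc}(T_s)$ from part (4). It is self-contained and avoids that dependence. The same injectivity, applied to $\widetilde{A}_s\phi_s={\tho}^{2s}\overline{\phi_{-s}}$, also recovers the paper's intermediate equality $\widetilde{A}_s^{-1}{\tho}^{2s}\overline{\phi_{-s}}=\phi_s$ if you want it.
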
 

\begin{proof} 
 Let $(\phi_s,\overline{\phi_{-s}})$ as above. Then by the Teukolsky-Starobinsky identities \eqref{eq:TS plus} and
 \eqref{eq:TS minus} and the fact that $\overline{\phi_{-s}}=B_s\phi_s$, one has
 \begin{align*}
     B_s\widetilde{A}_s\phi_s= \overline{\zeta}^{2s} {\tho}'^{2s}\zeta^{2s}{\tho}^{2s} \overline{\zeta}^{2s} {\tho}'^{2s} \zeta^{2s}\phi_s= \overline{\zeta}^{2s} {\tho}'^{2s}\zeta^{2s}{\tho}^{2s} B_s\phi_s=A_s B_s\phi_s\, .
 \end{align*}
Moreover, by \eqref{eq:TS bar minus}, one has
\begin{align*}
    {\tho}^{2s}A_s\overline{\phi_{-s}}={\tho}^{2s}\overline{\zeta}^{2s}{\tho}'^{2s}\zeta^{2s}{\tho}^{2s}\overline{\phi_{-s}}\, .
\end{align*}
 For $\phi\in \Sol_{s,p}(\cM)$, one has ${\tho}^{2s}\overline{\phi_{-s}}=\widetilde{A}_s\phi_s$, and by part (2) of Proposition~\ref{prop:A_s properties}, this entails ${\tho}^{2s}\overline{\phi_{-s}}\in\Ker_{sc}(T_s)$. One can thus apply the Teukolsky-Starobinsky-identity \eqref{eq:TS plus} to obtain
 \begin{align*}
     {\tho}^{2s}A_s\overline{\phi_{-s}}={\tho}^{2s}\overline{\zeta}^{2s}{\tho}'^{2s}\zeta^{2s}{\tho}^{2s}\overline{\phi_{-s}}=\widetilde{A}_s{\tho}^{2s}\overline{\phi_{-s}}\, .
 \end{align*}
 Finally, combining ${\tho}^{2s}\overline{\phi_{-s}}=\widetilde{A}_s\phi_s$, and $\overline{\phi_{-s}}=B_s\phi_s$ with \eqref{eq:Hertz-rel 1}, it is straightforward to see 
 \begin{align*}
     \widetilde{A}_s^{-1}{\tho}^{2s}\overline{\phi_{-s}}=\phi_s=2^{-s}{\tho}^{2s}\overline{\Phi_{-s,IRG}}={\tho}^{2s} A_s^{-1}\overline{\phi_{-s}}\, .
 \end{align*}
 \qeds
\end{proof}
\begin{definition}
 We set
 \begin{subequations}
\begin{align}
\cT_{s,sc}(\cM)&:=\{\phi\in \Gamma_{sc}( \cB(s,s)): {T_{s}}\phi=0\}\, ,\\
\cT_{-s,sc}(\cM)&:=\{\phi\in \Gamma_{sc}( \cB(s,-s)): \overline{T_{-s}}\phi=0\}\, .
\end{align}   
\end{subequations}
\end{definition}
We can show
 \begin{lemma}
 \label{lemma:phys subspace bij}
     The map $\overline{\psi_{-s}}\mapsto({\tho}^{2s}\overline{\psi_{-s}}, A_s\overline{\psi_{-s}})$ is a bijection between $\cT_{-s,sc}(\cM)$ and $\Sol_{s,p}(\cM)$
 \end{lemma}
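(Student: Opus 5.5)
We need to show three things about the map $\Xi:\overline{\psi_{-s}}\mapsto({\tho}^{2s}\overline{\psi_{-s}}, A_s\overline{\psi_{-s}})$: that it lands in $\Sol_{s,p}(\cM)$, that it is injective, and that it is surjective.

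\emph{Well-definedness.} Let $\overline{\psi_{-s}}\in\cT_{-s,sc}(\cM)$. By part (2) of Proposition~\ref{prop:A_s properties}, $A_s$ commutes with $\varrho^2\overline{T_{-s}}$, so $\overline{T_{-s}}A_s\overline{\psi_{-s}}=\varrho^{-2}A_s\varrho^2\overline{T_{-s}}\overline{\psi_{-s}}=0$. Since $A_s$ is a differential operator, it preserves space-compact support, hence $A_s\overline{\psi_{-s}}\in\Ran(A_s)\cap\cT_{-s,sc}(\cM)$.

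Next, interpret $\overline{\psi_{-s}}$ as $2^{-s}\overline{\Phi_{-s,\IRG}}$ with $T_{-s}\Phi_{-s,\IRG}=0$. The Hertz-potential relation \eqref{eq:Hertz-rel 1} (together with the fact that $T_s\phi_s=0$ holds whenever \eqref{eq:T-eq IRG Potential} does) gives $T_s{\tho}^{2s}\overline{\psi_{-s}}=0$. Support is again preserved, so ${\tho}^{2s}\overline{\psi_{-s}}\in\cT_{s,sc}(\cM)$ and the pair lies in $\Sol_s(\cM)$.

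The constraint $A_s\overline{\psi_{-s}}=B_s{\tho}^{2s}\overline{\psi_{-s}}$ is exactly the identity $\overline{\zeta}^{2s}{\tho}'^{2s}\zeta^{2s}{\tho}^{2s}=A_s$ on $\Ker\overline{T_{-s}}$. This is \eqref{eq:TS bar minus} after multiplication by $\overline{\zeta}^{2s}$, which is precisely how \eqref{eq:Hertz-rel 2} was derived. I would re-derive it directly for $\overline{\psi_{-s}}$ so as not to depend on the physical interpretation; this bookkeeping of the $\zeta$ powers and the $\cL_\xi^2$ term is the one computational point to check carefully.

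\emph{Injectivity.} If $\Xi\overline{\psi_{-s}}=0$, then in particular $A_s\overline{\psi_{-s}}=0$. Since $\overline{\psi_{-s}}$ is a space-compact solution of $\overline{T_{-s}}\overline{\psi_{-s}}=0$, part (4) of Proposition~\ref{prop:A_s properties} gives $\overline{\psi_{-s}}=0$.

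\emph{Surjectivity.} Let $(\phi_s,\overline{\phi_{-s}})\in\Sol_{s,p}(\cM)$. By definition, $\overline{\phi_{-s}}=A_s\overline{\psi_{-s}}$ for some $\overline{\psi_{-s}}\in\Ker(\overline{T_{-s}})\cap\Gamma_{sc}$, which is a space-compact solution since $A_s$ is considered on $\Ker_{sc}(\overline{T_{-s}})$. The second component of $\Xi\overline{\psi_{-s}}$ therefore already agrees with $\overline{\phi_{-s}}$, and it remains to show $\phi_s={\tho}^{2s}\overline{\psi_{-s}}$. Using the constraint from well-definedness, $B_s{\tho}^{2s}\overline{\psi_{-s}}=A_s\overline{\psi_{-s}}=\overline{\phi_{-s}}=B_s\phi_s$. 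Hence $\chi:=\phi_s-{\tho}^{2s}\overline{\psi_{-s}}\in\cT_{s,sc}(\cM)$ satisfies $B_s\chi=0$.

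The main obstacle is concluding $\chi=0$ from $B_s\chi=0$. I would first apply the Teukolsky--Starobinsky identity \eqref{eq:TS plus} to $\chi$, which gives $\widetilde{A}_s\chi={\tho}^{2s}\overline{\zeta}^{2s}{\tho}'^{2s}\zeta^{2s}\chi={\tho}^{2s}B_s\chi=0$. Part (4) of Proposition~\ref{prop:A_s properties} then yields $\chi=0$. The delicate point is the exact operator ordering: the left side of \eqref{eq:TS plus} is ${\tho}^{2s}\overline{\zeta}^{2s}{\tho}'^{2s}\zeta^{2s}$, which must match ${\tho}^{2s}B_s$, and the right side must match the definition of $\widetilde{A}_s$ including the $\cL_\xi$ term. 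I would verify this in the Kinnersley tetrad exactly as in the first computation of the proof of Lemma~\ref{lemma:A_s exchange}.
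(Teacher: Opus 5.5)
Your proposal is correct and is essentially the paper's argument: injectivity comes from part 4 of Proposition~\ref{prop:A_s properties} applied to $A_s\overline{\psi_{-s}}=0$, and surjectivity comes from showing $\widetilde{A}_s\bigl(\phi_s-{\tho}^{2s}\overline{\psi_{-s}}\bigr)=0$ via the Teukolsky--Starobinsky identities \eqref{eq:TS plus} and \eqref{eq:TS bar minus}, then invoking part 4 again. The differences are only organizational: you first establish $B_s{\tho}^{2s}\overline{\psi_{-s}}=A_s\overline{\psi_{-s}}$ and apply \eqref{eq:TS plus} to the difference, whereas the paper chains $\widetilde{A}_s\phi_s={\tho}^{2s}A_s\overline{\psi_{-s}}=\widetilde{A}_s{\tho}^{2s}\overline{\psi_{-s}}$ through Lemma~\ref{lemma:A_s exchange}; you also check explicitly that the image lies in $\Sol_{s,p}(\cM)$, which the paper leaves to the remarks after the definition of $\Sol_{s,p}(\cM)$.
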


 \begin{proof}
We show that the map is surjective and injective. 
First of all, note that by the definition of $\Sol_{s,p}(\cM)$, if $\phi=(\phi_s, \overline{\phi_{-s}})\in \Sol_{s,p}(\cM)$, then there must be a $\overline{\psi_{-s}}\in \cT_{-s,sc}(\cM)$ so that $\overline{\phi_{-s}}=A_s\overline{\psi_{-s}}$. As a consequence, by Lemma~\ref{lemma:A_s exchange}, and \eqref{eq:TS bar minus}, the identity $\widetilde{A}_s\phi_s={\tho}^{2s}\overline{\phi_{-s}}={\tho}^{2s}A_s\overline{\psi_{-s}}=\widetilde{A}_s{\tho}^{2s}\overline{\psi_{-s}}$
is satisfied by all $\phi_s\in \cT_{s,sc}$ so that $\overline{\phi_{-s}}=B_s\phi_s$ is contained in $\Ran(A_s)$. By part 4 of Proposition~\ref{prop:A_s properties}, this implies that $\phi_s$ must be of the form ${\tho}^{2s}\overline{\psi_{-s}}$, and that the map is surjective. 

To show that the map is also injective, we note that by part 4 of Proposition~\ref{prop:A_s properties}, $\overline{\phi_{-s}}=0$ implies $\overline{\psi_{-s}}=0$.
  \qeds   
 \end{proof}

\begin{remark}
    Instead of choosing the physical subspace in the way done above, we could have instead chosen the space
    \begin{align*}
        \{(\phi_s,\overline{\phi_{-s}})\in \Sol_{s}(\cM):  \phi_s\in \Ran(\widetilde{A}_s)\text{ and }\phi_s={\tho}^{2s}\overline{\phi_{-s}}\}\, .
    \end{align*}
    In this case, elements $(\phi_s,\overline{\phi_{-s}})$ could be naturally identified with
    $(2^{-1}\widetilde{A}_1\widetilde{\Phi}_{s,\ORG}, \overline{\zeta^{2}F_{\bar{m}n}})$ and $(2^{-2}\widetilde{A}_2\widetilde{\Phi}_{s,\ORG}, \overline{\zeta^{4}C_{\bar{m}n\bar{m}n}})$. Here $\widetilde{\Phi}_{s,\ORG}$ is the Hertz potential in outgoing radiation gauge, which satisfies $ T_s\widetilde{\Phi}_{s,\ORG}=0$.
It is related to the Teukolsky scalars of the corresponding metric perturbations by
\begin{subequations}
\begin{align}
    \label{eq:Hertz-rel 3}
    \phi_s&=\frac{1}{2^s}\left({\edt}^{2s}\overline{\zeta}^{2s}\overline{\widetilde{\Phi}_{s,\ORG}}+3M\delta_{s,2}\cL_\xi \widetilde{\Phi}_{s,\ORG}\right)\, ,\\
    \phi_{-s}&=\frac{1}{2^s}{\tho}'^{2s}\overline{\zeta}^{2s}\overline{\widetilde{\Phi}_{s,\ORG}}\, ,
\end{align}
\end{subequations}
    and these Teukolsky scalars satisfy $T_s\phi_s=T_s'\phi_{-s}=0$.
    For this choice of physical subspace there is a bijection between the physical subspace and $\cT_{s,sc}(\cM)$.
\end{remark}

\section{Analytic framework}
\label{sec:analytic}
In this section, we recall some basic definitions of $b-$ and scattering Sobolev spaces and the algebra of scattering pseudodifferential operators. We follow closely the presentation in \cite{Millet2}. 
Let $\cN$ be an $n$-dimensional manifold with boundary and $E$ a vector bundle over $\cN$ with connection $\Theta$. We suppose that there are a hermitian metric $m$ on $E$ and volume forms $\dVol^{b/sc}(x)$ on ${\cN}$. In local adapted coordinates $x=(x_0,x'),\, x_0\ge 0,\, x'\in \rr^{n-1}$, with $x_0=0$ locally defining the boundary of $\cN$, we choose $\dVol^b=\left\vert\frac{\d x_0\d x'}{x_0}\right\vert$ and $\dVol^{sc}=\left\vert\frac{\d x_0\d x'}{x_0^{n+1}}\right\vert$. We will often write $m_x(u,u)=:\vert u\vert_x^2$. 
\subsection{Sobolev spaces}
\label{sec:func setting}
We define the natural space $L_{b/sc}^2(E)$ as the completion of $\Gamma_c(E)$ for the inner product 
\begin{align}
\langle u,v\rangle =\int\limits_{\cN} m_x( u,v)\dVol^{b/sc}(x). 
\end{align}
\begin{definition}
 We define ${}^bT\cN$, the bundle of $b-$vectors on $\cN$, as the bundle whose smooth sections are smooth vector fields tangent to the boundary.  In local coordinates $({x_i})_{i=0}^{n-1}$ near a point of the boundary, if $x_0$ is a defining function of the boundary, such vector fields can be written as
\begin{align}
a_0(x)x_0\partial_{x_0}+\sum_{i=1}^{n-1}a_i(x)\partial_{x_i}
\end{align}
with $(a_i)_{i=0}^{n-1}$ a family of functions which are smooth up to the boundary. The dual bundle is denoted by ${}^bT^*\cN$.
We define ${}^{sc}T\cN$, the bundle of $sc$-vectors or scattering vectors, as the bundle whose sections are of the form $x_0Z$ for $Z\in \Gamma({}^{b}T\cN)$. The dual bundle is denoted by ${}^{sc}T^*\cN$. If $m\in \nn$, we denote by ${\rm Diff}^m_{b/sc}(E)$ the algebra of differential operators generated by the set $\{\id_E\}\cup\{\Theta_{X_1}...\Theta_{X_j},\, j\le m,\, X_i\in \Gamma({}^{b/sc}T\cN)\}$.        
\end{definition}
Starting with the algebra ${\rm Diff}^m_{b/sc}(E)$ we want to define natural function spaces.  
\begin{definition}
\label{def:function spaces and norms}
Fix a family of smooth vector fields $(Z_i)_{i=1}^n$ generating $\Gamma({}^{b/sc}T\cN)$ as a $C^{\infty}(\cN)$-module and fix a $b/sc$- volume form $\dVol^{b/sc}$. We define $H^0_{b/sc}(E)$ as the spaces $L^2_{b/sc}(E)$.  For $r\in \nn,\, r\ge 1$, we recursively define the $b/sc$-Sobolev spaces $H^{r}_{b/sc}(E)$ by completion of $\Gamma_c( E\vert_{\cN^0})$ in the norm
\begin{align}
\Vert u\Vert_{H^{r+1}_{b/sc}}^2:=\Vert u\Vert^2_{H^r_{b/sc}}+\sum_{i=1}^n\Vert \Theta_{Z_i}u\Vert^2_{H^r_{b/sc}}. 
\end{align}
Here, $\cN^0$ is the interior of $\cN$.
Similarly, for $m\in\nn$, we define the $C^m$-norm by 
\begin{align}
\Vert u\Vert_{C^m_{b/sc}}=\sup_{x\in \cN,\vert \alpha\vert\le m}\abs{ \Theta_{Z_1}^{\alpha_1}...\Theta_{Z_n}^{\alpha_n}u}_{x}. 
 \end{align}  
\end{definition}
Suppose now that, locally, we have $\cN=[0,1)_{x_0}\times Y$. We then define the weighted Sobolev spaces 
\begin{align}
\bar{H}^{m,\mu}_{b/sc}=x_0^{\mu}\bar{H}^m_{ b/sc}, 
\end{align}
where the bar indicates that the construction is based on extendible distributions.  We will also need the corresponding semiclassical spaces, where the norm is defined recursively by 
 \begin{align}
\Vert u\Vert_{H^{r+1}_{b/sc,h}}^2:=\Vert u\Vert^2_{H^r_{b/sc,h}}+\sum_{i=1}^n\Vert h\Theta_{Z_i}u\Vert^2_{H^r_{b/sc,h}}. 
\end{align}
\begin{remark}
It can sometimes be useful to define the $b-$ Sobolev spaces with respect to the $sc-$ volume form. We will denote these spaces by $\bar{H}^{m,\mu}_{(b)}$ etc.   
\end{remark}

We would like to work with $\cB(s,s)$, but the bundle does not carry any  natural or preferred hermitian metric. Rather than working with $\cB(s,s)$, we fix the scaling of the tetrad and identify $\cB(s,s)$ first with $\cB(s)$ and subsequently with $\rr_{\ft}\times[0,\frac{1}{r_+-\epsilon})_x\times \cB^{\ss^2}_s.$

Note that we have a natural identification between elements of $\Gamma(\cB(s))$ and $C^{\infty}(\rr_{\ft}\times(r_+-\epsilon,\infty);\Gamma(\cB^{\ss^2}_s))$, and between $\cD'(\cB(s))$ and $\cD'(\rr_{\ft}\times(r_+-\epsilon,\infty),\cD'(\cB^{\ss^2}_s))$.  Recall that for a vector bundle $E$, a distribution $u\in \cD'(E)$ is a linear continuous map $u: \Gamma_c( E^\#)\to \cc$, where $E^\#$ is the dual bundle.\footnote{More accurately, a distribution $u\in \cD'(E)$ is a linear continuous map $u: \Gamma_c( E^\#\otimes \Omega)\to \cc$, where $\Omega$ is the bundle of 1-densities over the base manifold $\cN$. However, since the metric-induced volume form provides a distinguished 1-density, we will use this to identify functions and 1-densities, suppressing $\Omega$.}

On $X=(r_+-\epsilon,\infty)_r\times \ss_{\theta,\varphi}^2$ and $\overline{X}:=\left[0,\frac{1}{r_+-\epsilon}\right)_x\times \ss_{\theta,\varphi}^2$, we consider the $b$-volume form $\dVol^b=\frac{\d x\d^2\omega}{x}$ and the scattering volume form $\dVol^{sc}=\frac{\d x\d^2\omega}{x^4}$. We can now define the following scalar product on $\Gamma_c( (0,\frac{1}{r_+-\epsilon})_x\times \cB^{\ss^2}_s)$:
\begin{align}
 \langle u,v \rangle_{b/sc}=\int\limits_X \bm_{\omega}(u(x,\omega),v(x,\omega))\dVol^{b/sc}(x,\omega)\, .  
\end{align}    
We define  $L^2_{b/sc}( (0,\frac{1}{r_+-\epsilon})_x\times \cB_s^{\ss^2})$ as the completion of ${\Gamma}_c( (0,\frac{1}{r_+-\epsilon})_x\times \cB^{\ss^2}_s)$ for the associated norm. Starting with these $L^2$ spaces, the Sobolev spaces are constructed as in the general setting. To do this, we remark that the connection $\Theta$ on $\cB(s,s)$ induces a family of connections $\Theta_{\ft_0}$ on $X\times \cB_s^{\ss^2}$, as in the construction in Remark \ref{rem3.7}. Because of the stationarity of the spacetime, these connections are in fact independent of $\ft_0$ if a stationary tetrad scaling is selected for the identification of $\cB(s,s)$ and $\cB(s)$. Assuming this to be the case, we drop the index $\ft_0$.
\begin{remark}
The drawback of the above definition is that the Sobolev spaces require working in a fixed scaling of the tetrad's null vectors to identify $\cB(s)$ and $\cB(s,s)$. Changing the  scaling of the tetrad will change the Sobolev norms of the field. 
\end{remark}
 \begin{remark}
Note that the connection we use is slightly different from the connection used in \cite{Millet2}. Millet requires for the connection that 
\begin{align*}
\Theta_{\partial_{\varphi}}=\partial_{\varphi}+is\cos\theta,\, \Theta_{\partial_{\theta}}=\partial_{\theta},\, \Theta_{\partial_x}=\partial_x 
\end{align*}
 in a suitable local trivialization. Nonetheless, the two connections give equivalent Sobolev norms. This follows from 
\begin{align*}
\Theta_a=\nabla_a-2sw_a,\quad w_a=\frac{1}{2}\left(n^b\nabla_al_b+m^b\nabla_a\bar{m}_b\right)
\end{align*}
on $\cB(s,s)$ and the fact that the contractions of $w_a$ with smooth vector fields give smooth bounded functions on $\cM$ in the scaling provided by the Kinnersley tetrad. Indeed, $w_a$ is smooth because it is built from smooth vector  fields, therefore the contraction with smooth vector fields is also smooth. As for possible growth issues at infinity we have to consider star-Kerr coordinates. All Christoffel symbols are bounded because the metric coefficients as well as their derivatives and inverses are bounded. A similar remark holds for the Kinnersley tetrad, see \eqref{KinnersleystarKstarl}, \eqref{KinnersleystarKstarn}. Therefore, for $\nu=\partial_r$, $\nu=\partial_{{}^*t}$, $\nu=\partial_{{}*\varphi}$, and $\nu=\partial_{\theta}$, the contractions $\nu^aw_a$ are smooth and bounded. This remains valid for other choices of the tetrad which display a similar behaviour at infinity.
\end{remark}

\subsection{Scattering pseudodifferential operators}
In this section, we introduce scattering pseudodifferential operators. Let us recall that symbols $S^{m,l}(\rr_z^n\times\rr_{\xi}^n;\cc)$ are smooth functions $a\in C^{\infty}(\rr_z^n\times\rr_{\xi}^n;\cc)$ such that 
\begin{align}
\label{estsymb}
\forall \alpha,\beta\in \nn^n\, \exists C_{\alpha\beta}>0:\quad \vert \partial_z^{\alpha}\partial_{\xi}^{\beta}a(y,\xi)\vert\le C_{\alpha\beta} \langle z\rangle^{l-\abs{\alpha}}\langle \xi\rangle^{m-\abs{\beta}}. 
\end{align}
We similarly define the space of semiclassical symbols of order $(m,l)$, $S_h^{m,l}(\rr_z^n\times\rr_{\xi}^n;\cc)$, as the set of $h$-indexed families $(u_h)_{h\in [0,1)}\in C^{\infty}([0,1)_h;S^{ m,l}(\rr_z^n\times\rr_{\xi}^n;\cc))$ such that for all $h\in [0,1)$ one has $u_h\in S^{m,l}(\rr_z^n\times\rr_{\xi}^n;\cc)$ and the estimate \eqref{estsymb} holds uniformly in $h\in[0,1)$.  We also put $S^{-\infty,-\infty}=\cap_{m,l} S^{m,l}$ and $S_h^{-\infty,-\infty}=\cap_{m,l} S_h^{m,l}$.    
 For a symbol $p\in S^{m,l}(\rr_z^n\times\rr_{\xi}^n ;\cc)$, we define the operator $Op(p)$ by its action on $\phi\in C_0^\infty(\rr^n;\cc)$,
\begin{align}
\label{quant1}
Op(p)\phi(z)=(2\pi)^{-n}\int e^{\i\xi(z-z')}p(z,\xi)\phi(z')\d^n z'\d^n\xi.
\end{align}  
We similarly define $Op_h(p_h)$ for a symbol $p_h\in S^{m,l}_h(\rr_z^n\times \rr_{\xi}^n;\cc)$ by 
\begin{align}
\label{quant2}
Op_h(p_h)u(y)=(2\pi h)^{-n}\int e^{\i h^{-1}\xi(z-z')}p_h(z,\xi)u(z')\d^n z'\d^n\xi.
\end{align}  
The classes of pseudodifferential operators $\Psi^{m,l}(\rr^n),\, \Psi_{h}^{m,l}(\rr^n)$ are then the classes of operators obtained by the above procedures. It is useful to look at the above procedure in the compactified picture to make the link to the scattering calculus of Melrose. We introduce spherical coordinates $(r,\omega)$ and put $x=\frac{1}{r}$. $\rr^n$ is then compactified to $\bar{\rr}^n$ by adding $x=0$.  A general covector $\zeta dz$ becomes 
\begin{align}
\zeta dz=\tau \frac{dx}{x^2}+\mu \frac{d\omega}{x}
\end{align}
with 
\begin{align}
\tau=-\omega\cdot \zeta,\quad \mu=\zeta-(\omega\cdot \zeta)\omega. 
\end{align}
Conversely, 
\begin{align}
\zeta=-\tau \omega+\mu,\quad \mu\cdot\omega=0. 
\end{align}
This entails $\abs{\zeta}^2=\tau^2+\abs{\mu}^2$. After this change of coordinates, the symbol $a(x,\omega,\tau,\mu)$ satisfies
\begin{align}
\label{symbest}
\vert V_1...V_k\partial^{\beta}_{(\tau,\mu)}a\vert\lesssim x^{-l+k}\langle (\tau,\mu)\rangle^{m-\abs{\beta}}
\end{align}
for all scattering vector fields $V_1,...,V_k$. The class of symbols satisfying \eqref{symbest} is denoted $S^{m,l}({}^{sc}T^*\bar{\rr^n})$ in the following. We similarly define the classes $S^{m,l}_{h}({}^{sc}T^*\bar{\rr^n})$. 
The symbol classes $S^{m,l}({}^{sc}T^*\bar{X};\cc)$, $S_{h}^{m,l}({}^{sc}T^*\bar{X};\cc)$ are defined in the same way. We then define the corresponding pseudodifferential operators by going back to euclidean coordinates and using the quantizations \eqref{quant1} and \eqref{quant2}. This defines the operator classes $\Psi_{sc}^{m,l}(\bar{X};\cc)$ and $\Psi_{sc,h}^{m,l}(\bar{X};\cc)$. 
We can then define the algebra of scattering pseudodifferential operators on the bundle $E:=[0,\frac{1}{r_+-\epsilon})_x\times \cB_s^{\ss^2}$.
\begin{definition}
We define $\Psi^{m,l}_{sc}(E)$ as the set of properly supported linear operators $A:\Gamma_c(E)\rightarrow \cD'(E)$ such that for all $\chi_1,\, \chi_2$ smooth and compactly supported on some open  trivializing set for $E,\, \chi_1A\chi_2\in \Psi^{m,l}_{sc}(\bar{X};\cc)$.
\end{definition} 
The invariant definition of the principal symbol is as follows.  Let 
\begin{align*}
\pi:{}^{sc}T^*\bar{X}\rightarrow \bar{X}
\end{align*}
be the canonical projection and $\pi^*E$ the pullback of the bundle $E$. Then the invariantly defined principal symbol $\sigma^m(A)$ is an element of 
\begin{align*}
\sigma^m(A)\in S^{m,l}/S^{m-1,l-1}\left({}^{sc}T^*\bar{X};\pi^*{\rm Hom(E)}\right). 
\end{align*} 
That means that a representation of $\sigma^m(A)$ is a map assigning to $(y,\xi)\in {}^{sc}T^*\bar{X}$ an element of ${\rm Hom}(E_y)$. We refer to \cite{HiLN} for details. Note that the hermitian metric $\bm$ on $\cB_s^{\ss^2}$ induces a hermitian metric on $E$ and thus a hermitian metric on $\pi^*{\rm Hom}(E)$. The symbol classes $S^{m,l}\left({}^{sc}T^*\bar{X};\pi^*{\rm Hom(E)}\right)$ are then defined by replacing the absolute value in \eqref{symbest} by this hermitian metric.  In order to not overload the notations we will write $S_{sc}^{m,l}$ instead of $S^{m,l}\left({}^{sc}T^*\bar{X};\pi^*{\rm Hom(E)}\right)$. 

The definition of the set $\Psi_{sc,h}^{m,l}(E)$ is obtained by replacing the operator $A$ by a family $(A_h)_{h\in [0,1)}$ of continuous linear operators from $C_c^{\infty}(E)$ to $\cD'(E)$ and $\Psi^{m,l}_{sc}(\bar{X};\cc)$ by $\Psi_{sc,h}^{m,l}(\bar{X};\cc)$ in the previous definition.  
 
Eventually, we denote by $\Psi^{m,l}_{sc,c}(E)$ the set of operators with Schwartz kernel supported in $((r_+-\epsilon+\eta,\infty)\times \ss^2)^2$ for some $0<\eta<\epsilon$. $\Psi^{m,l}_{sc,h,c}$ denotes the class of corresponding semiclassical operators.  

We say that an operator $P\in \Psi_{sc}^{m,l} (E)$ is elliptic if its principal symbol $p\in S_{sc}^{m,l}/S_{sc}^{m-1,l-1}$ has an inverse, in other words, if there exists $q\in S_{sc}^{-m,-l}/S_{sc}^{-m-1,-l-1}$ such that $pq=qp=[1]$, where $[1]$ is the equivalence class of the symbol $1$ in the space $S_{sc}^{0,0}/S_{sc}^{-1,-1}$. 

We can localize this definition near a point $(x,k)\in {}^{sc}T^*\bar{X}\setminus 0$ as follows: We say that the pseudodifferential operator $P\in \Psi^{m,l}_{sc} (E)$ with principal symbol $p\in S_{sc}^{m,l}/S_{sc}^{m-1,l-1}$ is elliptic at $(x,k)\in {}^{sc}\dot{T}^*\bar{X}$ if there exists $q\in S_{sc}^{-m,-l}/S_{sc}^{-m-1,-l-1}$ such that $pq=qp=[g]$ with $g=1$ on a conic neighbourhood of $(x,k)$. The subset of ${}^{sc}\dot{T}^*\bar{X}$ at which $P$ is elliptic is denoted ${\rm Ell}(P)$. The complementary subset is the characteristic set denoted by ${\rm Char}(P)$.

A point $(x,k)\in {}^{sc}\dot{T}^*\bar{X}$ \underline{is not} in the wavefront set of $P$ if there exists $A\in \Psi^{0,0}_{sc,c} (E)$ with $(x,k)\in {\rm Ell}(A)$ such that $AP\in \Psi^{-\infty,-\infty}_{sc,c} (E)$. We denote the wavefront set by $\WF(P)$. 

All these definitions have an analogue in the semiclassical setting. We say that a semiclassical operator $P\in\Psi_{sc,h}^{m,l} (E)$ is elliptic if its principal symbol $p\in S_{sc}^{m,l}/hS_{sc}^{m-1,l-1}$ has an inverse, in other words there exists $q\in S_{sc}^{-m,-l}/hS_{sc}^{-m-1,-l-1}$ such that $pq=qp=[1]$ where $[1]$ is the equivalence class of the symbol $1$ in the space $S_{sc}^{0,0}/hS_{sc}^{-1,-1}$.

Again, we can localize this definition near a point $(x,k)\in {}^{sc}T^*\bar{X}$. We say that the semiclassical operator $P\in \Psi^{m,l}_{sc,h} (E)$ with principal symbol $p\in S_{sc}^{m,l}/hS_{sc}^{m-1,l-1}$ is elliptic at $(x,k)$ if there exists $q\in S_{sc}^{-m,-l}/hS_{sc}^{-m-1,-l-1}$ such that $pq=qp=[g]$ with $g=1$ on a neighbourhood of $(x,k)$. The subset of ${}^{sc}T^*\bar{X}$ at which $P$ is elliptic is denoted ${\rm Ell}_h(P)$. The complementary subset is the characteristic set denoted by ${\rm Char}_h(P)$. 

The point $(x,k)\in {}^{sc}T^*\bar{X}$ \underline{is not} in the semiclassical wavefront set of a semiclassical operator $P\in \Psi^{m,l}_{sc,h} (E)$, denoted by $\WF_h(P)$, if there exists a semiclassical operator $A\in \Psi^{0,0}_{sc,h} (E)$ with $(x,k)\in {\rm Ell}_h(A)$ such that $AP\in h^{\infty}\Psi^{-\infty,-\infty}_{sc,h} (E)$.

\section{Classical field estimates}
\label{sec:millet}
In this section we recall some results on the classical field obtained by Millet in \cite{Millet2}. 
\subsection{Influence of the time reversal}
\label{sec:trev}
We want to consider backward solutions of 
\begin{align}
\label{eq:4.1}
T_{s}\phi_{s}=f_s,
\end{align}
whereas Millet considers forward solutions of the same equation. As we have seen in the previous section, the identification between $\Gamma(\cB(s,s))$  and $C^{\infty}(\rr_{\ft}\times (r_+-\epsilon,\infty);\Gamma(\cB_s^{\ss^2}))$, and therefore the size of the Sobolev norms, depends on the tetrad. Two different tetrads are of interest here. The Kinnersley tetrad $(l^K, n^K,m^K)$ and the tetrad used by Millet, 
\begin{align}
(l^M,n^M,m^M)=(\Delta l^K,\Delta^{-1}n^K, m^K). 
\end{align}
For a smooth section $\phi_s\in \Gamma(\cB(s,s))$, we write in the following $\phi^K_s$ and $\phi_s^M$ for the corresponding elements in $C^{\infty}(\rr_{\ft}\times [0,\frac{1}{r_+-\epsilon});\Gamma(\cB_s^{\ss^2}))$ obtained by using the Kinnersley tetrad, or the tetrad used in \cite{Millet2}, respectively. Equation \eqref{eq:4.1} can then be written as   
\begin{align}
\label{eq:4.1a}
T^K_{s}\phi^K_{s}=f_s^K,
\end{align}
where $f_s^K$ has to be understood as an element of $C_0^{\infty}(\rr_{\ft}\times (r_+,\infty);\Gamma(\cB_s^{\ss^2}))$. 
We apply the flip  $\psi: \, (t,\varphi)\mapsto (-t,-\varphi)$. By the discussion in Section \ref{sec:flip}, we know that this flip sends $T^K_s\phi_s^K$ to $T^M_{-s}2^s\zeta^{2s}\phi_{-s}^M$, where $\phi_{-s}$ is defined as $\phi_{-s}=\iota\circ\Psi_{(s,s)}^*\phi_s$. 
This entails that the function $\widetilde{\phi}^M_{-s}=2^s\zeta^{2s}\phi_{-s}^M$ is a solution of 
\begin{align}
\label{6.3}
T_{-s}^M\widetilde{\phi}_{-s}^M=f_{-s}^M.  
\end{align}
We want to compute the relationship between $\phi_{-s}^M$ and $\phi_s^K$. Let $z=\sqrt{\frac{2}{\Delta}}\zeta$. We then have by \eqref{4.23}
\begin{align}
\phi_{-s}(x)&=\iota\circ \Psi^*[(l^K,n^K,m^K),\phi_s(x)]=[(l^K,n^K,m^K)z^{-2},\phi_s(\psi^{-1}x)]\\\nonumber
&=[(l^M,n^M,m^M),\zeta^{-2s}2^{-s}\phi_s(\psi^{-1}x)].
\end{align}
Therefore, $\phi^M_{-s}(x)=\zeta^{-2s}2^{-s}\phi_s^K(\psi^{-1}x)$ and thus $\widetilde{\phi}^M_{-s}(x)=\phi_s^K(\psi^{-1}x)$. Note that $f_s$ and $f_{-s}$ have compact supports, and that $f^M_{-s}(x)=\zeta^{-2s}2^{-s}f_s^K(\psi^{-1}x)$. Using a local trivialization, we can then see that  there is a constant $C>0$ so that
\begin{subequations}
\begin{align}
&\Vert \phi^K_s\Vert_{\bar{H}^{m,l}_b([0,\frac{1}{r_+-\epsilon})\times \cB^{\ss^2}_s)}=\Vert \widetilde{\phi}^M_{-s}\Vert_{\bar{H}^{m,l}_b([0,\frac{1}{r_+-\epsilon})\times \cB^{\ss^2}_s)},\\
&C^{-1} \Vert f_{-s}^M\Vert_{\bar{H}_b^{m,l}([0,\frac{1}{r_+-\epsilon})\times \cB^{\ss^2}_{-s})}\le \Vert f_s^K\Vert_{\bar{H}^{m,l}_b([0,\frac{1}{r_+-\epsilon}))\times \cB^{\ss^2}_s)}\\\nonumber
&\le C \Vert f_{-s}^M\Vert_{\bar{H}_b^{m,l}([0,\frac{1}{r_+-\epsilon})\times \cB^{\ss^2}_{-s})}.
\end{align}
\end{subequations}
Note that in the above equations the spaces $\bar{H}_b^{m,l}([0,\frac{1}{r_+-\epsilon})\times \cB^{\ss^2}_{\mp s})$ are constructed using different tetrads and spacelike slices. To clarify the difference between the time function $\ft$  defined in Section~\ref{subsec:Kstar and starK} and the time function $\ft$ in \cite{Millet2}, we will also introduce indices on the time functions ($K$ for our time function, $M$ for the one used in \cite{Millet2}). We then define 
\begin{align*}
\Sigma^{K/M}_{\ft_0}=\{\ft^{K/M}=\ft_0\}.
\end{align*}
The Sobolev spaces in the previous subsection are then defined using the time slices $\Sigma^{K}_{-\ft_0}$ and $\Sigma^{M}_{\ft_0}$ for some $\ft_0\ge 0$. We then note that the restriction $\psi\vert_{\Sigma^{K}_{-\ft_0}}$  of the flip is a diffeomorphism onto $\Sigma^{M}_{\ft_0}$. 

Let us now consider the same flip $\psi$ at the level of the Fourier transform. We start with
\begin{align*}
\hat{T}_s(\sigma)\hat{\phi}^K_s(\sigma)=\hat{f}_s^K(\sigma)\Leftrightarrow T_s^Ke^{i\sigma\ft^K}\hat{\phi}^K_s(\sigma)=e^{i\sigma\ft^K}\hat{f}_s^K(\sigma)\,, .
\end{align*}
We now apply the flip and express everything in the tetrad used in \cite{Millet2}. This gives 
\begin{align*}
e^{i\sigma\ft^M}T_{-s}^M2^s\zeta^{2s}e^{-i\sigma\ft^M}\hat{\phi}_{-s}^M(\sigma)=f^M_{-s}(\sigma)\Leftrightarrow \hat{T}^M_{-s}(-\sigma)\hat{\tilde{\phi}}^M_{-s}(\sigma)=\hat{f}^M_{-s}(\sigma). 
\end{align*} 
The same remark on the norms holds also for the Fourier transformed sections and we therefore have to study $\hat{T}^M_{-s}(-\sigma)$.  
We will also use Kruskal coordinates and the tetrad  $(\mathfrak{l}, \mathfrak{n},\, \mathfrak{m})$. For a section $\phi_s\in \Gamma(\cB(s,s))$, we denote the corresponding element of $C^{\infty}(\rr_U\times\rr_V;\Gamma(\cB_s^{\ss^2}))$ by $\phi_s^{U}$. 

By the above discussion the results of Millet \cite{Millet2} easily translate to our situation. 
\subsection{Semiclassical flow}
\label{subsec:semiflow}
Here, we work with the tetrad used in \cite{Millet2}, the operator $\hat{T}^M_{s,h}(z)$, and the coordinate $\ft^M$.  

We drop the index $M$ in the following. We will consider the semiclassical operator
\begin{align*}
\hat{T}_{s,h}(z)=h^2\hat{T}_s(h^{-1}z)\, , \quad  z=\frac{\sigma}{\vert \sigma\vert}\, ,\quad h=\vert \sigma\vert^{-1}. 
\end{align*}
 Here, we suppose $0\le {\Im z}\le Ch$ for some $C>0$. Then, for $z_0\in\{\pm 1\}$, one has $z-z_0=\cO(h)$.
The semiclassical principal symbol is $p_h(\xi)=-\rho^2g^{-1}(\xi-z_0\d\ft)$. Let $H=H_{p_h}$ be the associated hamiltonian vector field.

We define 
\begin{align*}
\Sigma_{\pm}=p_h^{-1}\{0\}\cap\{\pm (\varrho^2 z_0+Mr\xi_r)>0\}.
\end{align*}
$\Sigma_{\pm}$ are connected components of the characteristic set $\Sigma_h=p_h^{-1}\{0\}$. Different time functions will be adapted in different regions, see \cite[Lemma 5.12]{Millet2} and \cite[Lemma 5.13]{Millet2} for a discussion on these coordinate changes. For an interval $I$ we define $U_I=I\times \ss^2$. 
\subsubsection{Semiclassical flow near the horizon}
We use the coordinate $t^*$ on the radial interval $I=(r_+-\epsilon,3M)$ and coordinates $\xi=\xi_r\d r+\zeta \d\varphi^*+\eta \d\theta$ for cotangent vectors. 
We will work on the fibre radial compactification ${}^{sc}\bar{T^*}X$ of ${}^{sc}T^*X$ and introduce
\begin{align*}
\widetilde{\rho}=\frac{1}{\sqrt{\xi_r^2+\frac{\zeta^2}{\sin^2\theta}+\eta^2}}. 
\end{align*} 
The rescaled hamiltonian vector field $\widetilde{H}=\widetilde{\rho}H_{p_h}$ extends to $\{\widetilde{\rho}=0\}$.
Let 
\begin{align*}
\Lambda_+&=\{\Delta_r=0,\, \eta=0,\, \zeta=0,\, \xi_r>0\},\, L_+=\Lambda_+\cap \{\widetilde{\rho}=0\},\\
\Lambda_-&=\{\Delta_r=0,\, \eta=0,\, \zeta=0,\, \xi_r<0\},\, L_-=\Lambda_-\cap \{\widetilde{\rho}=0\}.
\end{align*}
\subsubsection{Semiclassical flow near $x=0$}
We use the time function $^*t$ and the interval $I=(6M,\infty)$. We define the coordinates on the cotangent bundle by $\xi=\xi_r\d r+\zeta \d\varphi+\eta \d\theta$. Then $p_h(\xi)=-\rho^2g^{-1}(\xi-z_0\d{}^*t)$.  We will consider the renormalized Hamiltonian vector field $\widetilde{H}_{p_h}=r^{-1}H_{p_h}$ which extends to a continuous vector field on ${}^{sc}T^*((r_+-\epsilon,\infty]\times \ss^2)$.

We now put $x=\frac{1}{r}$,  $\xi_{sc}=\xi_r$,  $\eta_{sc}=\frac{\eta}{r}$, and $\zeta_{sc}=\frac{\zeta}{r}$. Let us further define 
\begin{align*}
{\mathcal R}_{out}=\{\xi_{sc}=\zeta_{sc}=\eta_{sc}=x=0\},\quad {\mathcal R}_{in}=\{\xi_{sc}=-2z_0,\, \zeta_{sc}=\eta_{sc}=x=0\}. 
\end{align*}
\subsubsection{Semiclassical flow near the trapped set}
We now use the time coordinate $t$ on the radial interval $I=(r_{min},r_{max})$. Here, $(r_{min},r_{max})$ is such that bicharacteristics leaving $(r_{min},r_{max})\times \ss^2$ leave any compact set of $(r_+,\infty)\times \ss^2$ in either the past or the future, see \cite{Millet2} for details. 
For given $z_0\in \{-1,1\}$, let $\Gamma_{\pm}$ be the set of points in $T^*U_I$ such that the bicharacteristics with corresponding initial data are trapped in the past\footnote{Here, past and future is with respect to the flow of the bicharacteristic.} ($+$) resp. future $(-)$ and $K_{z_0}=\Gamma_+\cap \Gamma_-$. 
\subsubsection{Properties of the semiclassical flow}
We will need the following (see \cite[Proposition 5.38]{Millet2}):
\begin{proposition}
\label{pp0}
We define the surface $B_{\epsilon}=\{r=r_+-\epsilon\}\subset {}^{sc}\bar{T}^*((r_+-2\epsilon,\infty]\times \ss^2)$. In this proposition, we say that a curve $\gamma$ defined 
on some interval $J$ is of type $(A,B)$ where $A$ and $B$ are two sets in $\{L_-,L_+,B_{\epsilon},{\mathcal R}_{in},{\mathcal R}_{out}\}$ if $\gamma$ tends to $A$ at $\inf J$ and to $B$ at $\sup J$. Moreover, if $A=B_{\epsilon}$ (resp. $B=B_{\epsilon}$), then $\inf J>-\infty$ (resp. $\sup J<\infty$), in all other cases the corresponding bound of $J$ is infinite.

Let $\gamma$ be a bicharacteristic for the renormalized Hamiltonian flow on ${}^{sc}\bar{T^*}((r_+-\epsilon,\infty]\times \ss^2)$ maximally defined on some interval $J$. 

Let $ z_0=-1$. 
\begin{itemize}
\item If $\gamma\subset \Sigma_+$, then either $\gamma\subset L_+$ or $\gamma$ is of type $(L_+,B_{\epsilon})$. 
\item If $\gamma\subset \Sigma_-$, then either $\gamma\subset L_-\cup K_{z_0}\cup {\mathcal R}_{in}\cup {\mathcal R}_{out}$, or $\gamma$ is of type $({\mathcal R}_{out},L_-)$, $({\mathcal R}_{out},K_{z_0})$, $({\mathcal R}_{out},{\mathcal R}_{in})$,  $(B_{\epsilon},L_-)$, $(B_{\epsilon},K_{z_0})$, $(B_{\epsilon},{\mathcal R}_{in})$, $(K_{z_0},L_-)$, or $ (K_{z_0},{\mathcal R}_{in})$. 
\end{itemize}
Let $z_0=1$. 
\begin{itemize}
\item If $\gamma\subset \Sigma_-$, then either $\gamma\subset L_-$ or $\gamma$ is of type $(B_{\epsilon},L_-)$. 
\item If $\gamma\subset \Sigma_+$, then either $\gamma\subset L_+\cup K_{z_0}\cup {\mathcal R}_{in}\cup {\mathcal R}_{out}$, or $\gamma$ is of type $(L_+,B_{\epsilon})$, $(L_+, K_{z_0})$, $(L_+,{\mathcal R}_{out})$, $({\mathcal R}_{in},K_{z_0})$, $({\mathcal R}_{in}, B_{\epsilon})$, $({\mathcal R}_{in},{\mathcal R}_{out})$, $(K_{z_0}, B_{\epsilon})$, or $(K_{z_0},{\mathcal R}_{out})$. 
\end{itemize}
\end{proposition}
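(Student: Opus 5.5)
The statement is exactly \cite[Proposition 5.38]{Millet2}, written for the operator $\hat{T}_{s,h}(z)$ in Millet's tetrad and with Millet's time function $\ft^M$. As explained in Section~\ref{sec:trev}, this is the setting we have reduced to, so the plan is mainly to check that the hypotheses match and then to recall the structure of the argument. The key observation is that $p_h(\xi)=-\varrho^2 g^{-1}(\xi-z_0\,\d\ft)$ depends only on the metric, $z_0$ and $\ft$, and not on $s$. The lower-order terms involving $\Gamma_a$ and $\Psi_2$ in \eqref{eq: Teuk Bini form} do not enter the principal symbol. Hence the renormalized flow $\widetilde{H}$ is the scalar-wave flow for the spectral parameter $z_0$, and the classification can be done region by region in the three coordinate patches of Section~\ref{subsec:semiflow}.

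\textbf{Step 1: the two components $\Sigma_\pm$.} First I would check that $\Sigma_\pm$ are invariant under the flow and are separated by the sign of $\varrho^2 z_0+Mr\xi_r$, which cannot vanish on $\Sigma_h$. Then I would find a global escape-type function, essentially a monotone function of $r$ such as $\xi_r$ or $r$ itself, outside a neighbourhood of the trapped set.

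\textbf{Step 2: the horizon.} Near $\Delta=0$, in $K^*$ coordinates, $\Lambda_\pm$ are radial sets and $L_\pm$ are their points at fibre infinity. Computing $\widetilde{H}\widetilde{\rho}$ and $\widetilde{H}\Delta$ shows that $L_+$ is a source and $L_-$ a sink in the fibre-compactified sense. In addition, $r$ is strictly monotone along bicharacteristics crossing $\{r=r_+-\epsilon\}$, which is why those curves exit through $B_\epsilon$ in finite parameter time.

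\textbf{Step 3: infinity.} Near $x=0$, using ${}^*t$ and the scattering variables, one computes $\widetilde{H}$ at $x=0$. This identifies $\mathcal{R}_{in}$ and $\mathcal{R}_{out}$ as radial sets, a source and a sink, with the roles fixed by the sign of $z_0$. The computation also shows that $x$ has a definite sign of derivative along $\Sigma_\mp$ near $x=0$.

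\textbf{Step 4: the trapped set.} In the intermediate region, Boyer--Lindquist $t$, one uses the structure of the Kerr trapped set for real frequency, namely $r$-normal hyperbolicity. Any bicharacteristic that is not trapped in one time direction leaves every compact set in that direction, by the choice of $(r_{min},r_{max})$.

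\textbf{Step 5: assembling the cases.} The allowed endpoint pairs then follow by combining the orientation information in each component. For instance, when $z_0=1$, on $\Sigma_-$ the only possibilities are a curve contained in $L_-$ or a curve running from $B_\epsilon$ into $L_-$. The reason is that on this component $r$ is monotonically decreasing toward the horizon, so neither infinity nor trapping can be reached. On $\Sigma_+$ one runs through sources $\{L_+,\mathcal{R}_{in},K_{z_0}\}$ to sinks $\{B_\epsilon,\mathcal{R}_{out},K_{z_0}\}$. The case $z_0=-1$ follows by the symmetry that reverses $\xi\mapsto-\xi$ together with the flow direction.

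\textbf{Main obstacle.} The delicate step is the global gluing: excluding curves that oscillate between regions, for example returning from near infinity to the horizon. This requires a single function that is monotone along the flow away from $K_{z_0}$ and from the radial sets. I would take it from \cite[Lemmas 5.12, 5.13]{Millet2}, which handle the changes of time function between patches. Since these are proven there for the exact same symbol, I would ultimately cite \cite[Proposition 5.38]{Millet2} directly, after verifying that the time reversal of Section~\ref{sec:trev} only relabels $z_0\mapsto -z_0$ and does not alter the flow.
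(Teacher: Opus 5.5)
Your proposal matches the paper: the paper gives no separate argument and simply quotes \cite[Proposition 5.38]{Millet2} in Millet's own setting ($\hat{T}^M_{s,h}(z)$, his tetrad and $\ft^M$, so the time reversal of Section~\ref{sec:trev} plays no role at this point), which is exactly what your plan ultimately does. One caveat on your heuristic recollection: for $z_0=1$, $r$ is not monotone on $\Sigma_-$. Along a curve of type $(B_{\epsilon},L_-)$ it goes from $r_+-\epsilon$ up to $r_+$, and for $a\neq 0$ some of these curves pass through the exterior part of the ergoregion and have a radial turning point. What actually keeps $\Sigma_-$ away from $x=0$ is that this component lies only over the region where $\partial_t$ is not timelike, while the absence of trapping in it comes from Millet's analysis.
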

Figure \ref{fig:semflow} summarizes the situation for the semiclassical flow. 
\begin{figure}
\centering
\begin{tikzpicture}
\draw (-3, 2) -- (-3,-2) node[yshift = -0.2cm] (B) {$B_{\epsilon}$};
\draw (0,0) node[draw] (K) {$K_{ 1}$};
\draw (2,0) node[draw] (Rin) {$\mathcal{R}_{\text{out}}$};
\draw (2,-1) node[draw] (Rout) {$\mathcal{R}_{\text{in}}$};
\draw (-1, 2) node[draw] (Lp) {$L_+$};
\draw (-1, -2) node[draw] (Lm) {$L_-$};
\draw[decoration={markings, mark=at position 0.5 with {\arrow{>}}}, postaction={decorate}] (Lp)--(K);
\draw[decoration={markings, mark=at position 0.5 with {\arrow{>}}}, postaction={decorate}] (Lp)--(Rin);
\draw[decoration={markings, mark=at position 0.5 with {\arrow{>}}}, postaction={decorate}] (Lp)--(-3,0.5);
\draw[decoration={markings, mark=at position 0.5 with {\arrow{>}}}, postaction={decorate}] (K)--(Rin);
\draw[decoration={markings, mark=at position 0.5 with {\arrow{>}}}, postaction={decorate}] (K)--(-3,0);
\draw[decoration={markings, mark=at position 0.5 with {\arrow{>}}}, postaction={decorate}] (Rout)--(Rin);
\draw[decoration={markings, mark=at position 0.5 with {\arrow{>}}}, postaction={decorate}] (Rout)--(K);
\draw[decoration={markings, mark=at position 0.5 with {\arrow{>}}}, postaction={decorate}] (Rout)--(-3,-0.5);
\draw[decoration={markings, mark=at position 0.5 with {\arrow{>}}}, postaction={decorate}] (-3, -1.5)--(Lm);
\end{tikzpicture}\hspace{1cm}
\begin{tikzpicture}
\draw (-3, 2) -- (-3,-2) node[yshift = -0.2cm] (B) {$B_{\epsilon}$};
\draw (0,0) node[draw] (K) {$K_{ -1}$};
\draw (2,0) node[draw] (Rin) {$\mathcal{R}_{\text{in}}$};
\draw (2,1) node[draw] (Rout) {$\mathcal{R}_{\text{out}}$};
\draw (-1, 2) node[draw] (Lp) {$L_+$};
\draw (-1, -2) node[draw] (Lm) {$L_-$};
\draw[decoration={markings, mark=at position 0.5 with {\arrow{>}}}, postaction={decorate}] (K)--(Lm);
\draw[decoration={markings, mark=at position 0.5 with {\arrow{>}}}, postaction={decorate}] (Rin)--(Lm);
\draw[decoration={markings, mark=at position 0.5 with {\arrow{>}}}, postaction={decorate}] (-3,-0.5)--(Lm);
\draw[decoration={markings, mark=at position 0.5 with {\arrow{>}}}, postaction={decorate}] (Rin)--(K);
\draw[decoration={markings, mark=at position 0.5 with {\arrow{>}}}, postaction={decorate}] (-3,0)--(K);
\draw[decoration={markings, mark=at position 0.5 with {\arrow{>}}}, postaction={decorate}] (Rin)--(Rout);
\draw[decoration={markings, mark=at position 0.5 with {\arrow{>}}}, postaction={decorate}] (K)--(Rout);
\draw[decoration={markings, mark=at position 0.5 with {\arrow{>}}}, postaction={decorate}] (-3,0.5)--(Rout);
\draw[decoration={markings, mark=at position 0.5 with {\arrow{>}}}, postaction={decorate}] (Lp)--(-3, 1.5);
\end{tikzpicture}
\caption{Structure of the semiclassical Hamiltonian flow for the original operator $\hat{T}_{s,h}^M(z_0)$ (for $ z_0=1$ on the left and $z_0 = -1$ on the right), see Section \ref{subsec:semiflow}. Source : \cite{Millet2}.}
\label{fig:semflow}
\end{figure}
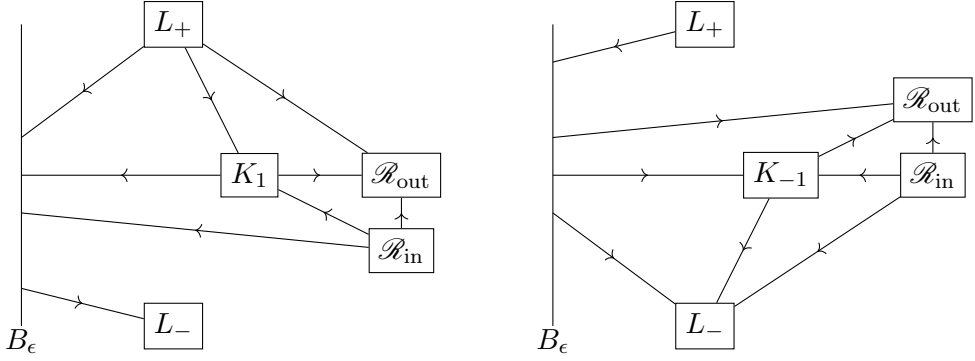

\subsection{Fredholm-type estimates}
 Fredholm estimates for non-elliptic problems in the spirit of this section were first introduced in \cite{Va13}. We will closely follow the exposition for the Teukolsky operator in \cite{Millet2}. We will consider the operator $\hat{T}_{s}^K(\sigma)$. Recall from the discussion in Section \ref{sec:trev}
 that estimates for this operator follow from the corresponding estimates for the operator $\hat{T}^M_{-s}(-\sigma)$.  This means that we have to consider the flow of Section \ref{subsec:semiflow} for $\widetilde{z}_0=-z_0$. In the following, we will drop the index $K$ and write $\Psi^{m,l}_{sc,h(,c)}$ for $\Psi^{m,l}_{sc,h(,c)}([0,\frac{1}{r_+-\epsilon})_x\times \cB^{\ss^2}_s)$ to avoid clutter of notation.  
 
 In \cite{Millet2}, Fredholm-type estimates were obtained from the precise understanding of the semiclassical Hamiltonian flow. We will recall some of them that will be useful for us.  
One obtains regularity near $ L_{\pm}$ and $\sR_{in}$ "for free". This regularity will then be propagated by the Propagation of Singularities Theorem.  
\subsubsection{Propagation of Singularities}
The following Proposition is the standard propagation of singularities, see e.g. \cite[Theorem 8.2.7]{HiLN}. 
 \begin{proposition}
\label{pp3}
Let $B_0,\, B_1,\, G\in \Psi^{0,0}_{sc,h,c}$  with compactly supported Schwartz kernels, and $r,\ell\in \rr$. Assume that for every $x\in \WF_h(B_0)\cap\Sigma_h$, there exists a $t>0$ (resp. $t<0$) such that $e^{ t H_{p_h}}x\in {\rm Ell}_h(B_1)$ and $(e^{sH_{p_h}})_{s\in [0,t]} x$  (resp. $(e^{sH_{p_h}})_{s\in [t,0]}x)$ remains in the elliptic set of $G$. Then for every $N>0$, there exists a constant $C_N>0$ such that we have the estimate
\begin{align*}
\Vert B_0u\Vert_{\bar{H}^{r,\ell}_{(b),h}}\le C_N\left(h^{-1}\Vert G\hat{T}_{s,h}(z)u\Vert_{\bar{H}^{r-1,\ell}_{(b),h}}+\Vert B_1u\Vert_{\bar{H}^{r,\ell}_{(b),h}}+h^N\Vert u\Vert_{\bar{H}_{(b),h}^{r-N,\ell}}\right) 
\end{align*}
\end{proposition}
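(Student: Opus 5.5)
This is the standard semiclassical propagation of singularities estimate, transplanted to the bundle-valued scattering calculus. I would reduce it to the scalar version of \cite[Theorem 8.2.7]{HiLN} via local trivializations. The key structural fact is that $\hat{T}_{s,h}(z)$ has scalar principal symbol $p_h\,\id$ in $S^{2,0}_{sc}/hS^{1,-1}_{sc}$. This follows from the form \eqref{eq: Teuk Bini form}, since the connection terms $2s\Gamma_a$, $w_a$ and the potential $\Psi_2$ are lower order. The symbol $p_h$ is real, because $\Im z=\cO(h)$ only contributes at subprincipal level.

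The argument is a positive commutator estimate.
\begin{enumerate}
\item By a partition of unity and the compact support of the kernels of $B_0$, $B_1$, $G$, reduce to microlocal statements near single bicharacteristic segments.
\item By compactness of $\WF_h(B_0)\cap\Sigma_h$ (in the fibre-compactified scattering cotangent bundle), cover it with finitely many neighbourhoods of points $x$, each carrying a time $t_x$ with $e^{t_xH_{p_h}}x\in{\rm Ell}_h(B_1)$ and the segment lying inside ${\rm Ell}_h(G)$.
\item On the part of $\WF_h(B_0)$ off $\Sigma_h$, use the elliptic estimate. Here $\hat{T}_{s,h}$ is elliptic, giving control by $\Vert G\hat{T}_{s,h}u\Vert_{\bar{H}^{r-2,\ell}}$ without the loss of $h^{-1}$, which is stronger than needed.
\item Along each segment, build an escape function. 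Take $a=\chi_0(\phi)\chi_1(\text{transversal})$, with $H_{p_h}\phi>0$ along the flow, so that $H_{p_h}a^2=-b^2+e$, where $e$ is supported near the endpoint in ${\rm Ell}_h(B_1)$. Multiply by weights $\widetilde{\rho}^{-2r+1}x^{-2\ell}$ to land in the right Sobolev orders.
\item Quantize to $A=Op_h(a)\,\id_E$ and compute $\Im\langle \hat{T}_{s,h}u,A^*Au\rangle$. Because $\hat{T}_{s,h}$ is formally self-adjoint up to $\cO(h)$ with respect to the hermitian metric $\bm$ (the $\Im z$ part and the non-selfadjoint connection terms are $h$ times zeroth order), the commutator $\frac{i}{h}[\hat{T}_{s,h},A^*A]$ has principal symbol $H_{p_h}a^2\,\id$.
\item Absorb the subprincipal endomorphism-valued terms by choosing $\chi_0$ of exponential type, $\chi_0(\phi)=e^{-\beta\phi}$ cut off, with $\beta$ large. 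This works because the segment is of bounded length and stays away from radial sets, so no threshold condition arises.
\item Close the estimate by iterating in $h^{1/2}$ steps of regularity. This yields the $h^N\Vert u\Vert_{\bar{H}^{r-N,\ell}}$ remainder, while the $h^{-1}$ factor comes from dividing the commutator by $h$.
\end{enumerate}

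The backward case $t<0$ follows by reversing the sign of $\phi$.

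The main obstacle is the bookkeeping of the bundle-valued lower-order terms and the weight $x^{\ell}$ at $x=0$. One must check that, in the scattering calculus, conjugating by $x^{\ell}$ changes $\hat{T}_{s,h}$ only by terms in $h\Psi^{1,-1}_{sc,h}$, which are absorbable by the large-$\beta$ choice.
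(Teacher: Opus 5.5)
Your positive-commutator argument is the standard proof of the result the paper relies on. Its ingredients are a real scalar principal symbol, endomorphism-valued subprincipal terms absorbed by an exponential weight along a segment away from radial sets, and iteration to get the $h^N$ remainder. The paper gives no argument of its own and simply cites \cite[Theorem 8.2.7]{HiLN}.

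The one step not justified by the hypotheses as printed is your step 3. The elliptic estimate on $\WF_h(B_0)\setminus\Sigma_h$ only gives control by $\Vert G\hat{T}_{s,h}(z)u\Vert$ where $G$ is elliptic. The assumption, however, only constrains points of $\WF_h(B_0)\cap\Sigma_h$.

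This is a defect of the statement rather than of your method. Take $G=B_1=0$ and $B_0$ elliptic at a point off $\Sigma_h$. Then a normalized wave packet concentrated at that point makes the left-hand side of order one, while the right-hand side is $\cO(h^N)$. You should therefore state explicitly the standard extra hypothesis $\WF_h(B_0)\subset{\rm Ell}_h(G)$. This costs nothing in the application in Proposition~\ref{RegHor}, where $G$ only has to vanish near $(x_0,k_0)$.

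Finally, the $x^{\ell}$ bookkeeping you single out as the main obstacle is vacuous here. Since $B_0$, $B_1$, $G$ have compactly supported kernels, the weights are equivalent to constants on their supports, as the paper remarks after Proposition~\ref{pp4}.
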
 
 for any $u\in \cD' \left(\left[0, \frac{1}{r_+-\epsilon}\right)_x\times \cB_s^{\ss^2}\right)$ in the strong sense, i.e., if the right-hand side is finite, then the left-hand side is finite and the inequality holds.
\subsubsection{Radial point estimates}
We first recall the estimates at $L_{\pm}$. Suitable estimates at radial points were first used in this setting in \cite{Va13}. The following is \cite[Proposition 6.1]{Millet2}:
\begin{proposition}
\label{pp1}
Let $ r\ge r'> \frac{1}{2}-( s-\frac{1}{\kappa_+}{\Im}(h^{-1}z)),\, \ell\in \rr$. If $A,B,G\in \Psi^{0,0}_{sc,h}$ with compactly supported Schwartz kernels are such that $A$ and $G$ are elliptic on $L_{\pm}$ and every forward (or backward) bicharacteritic curve from $\WF_h(B)$ tends to $L_{\pm}$ with closure in the elliptic set of $G$, then for all $N\in \nn$ there exists a constant $C_N>0$ such that $Au\in \bar{H}^{r',\ell}_{(b),h}$ implies\footnote{Note that the error $h\Vert u\Vert_{\bar{H}^{-N,\ell}_{(b),h}}$ in \cite{Millet2}  is improved here to $h^N\Vert u\Vert_{\bar{H}^{-N,\ell}_{(b),h}}$ by iterating the estimates, see \cite{Vasy2013}, footnote 33 on page 418.}
\begin{align*}
\Vert Bu\Vert_{\bar{H}^{r,\ell}_{(b),h}}\le C_N\left(h^{-1}\Vert G\hat{T}_{s,h}(z)u\Vert_{\bar{H}^{r-1,\ell-1}_{(b),h}}+h^N\Vert u\Vert_{\bar{H}^{-N,\ell}_{(b),h}}\right). 
\end{align*} 
\end{proposition}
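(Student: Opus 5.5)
The plan is to derive the statement from \cite[Proposition 6.1]{Millet2} in two steps. First, transport Millet's estimate to our operator using the time-reversal dictionary of Section~\ref{sec:trev}. Second, upgrade the remainder from $h\Vert u\Vert$ to $h^N\Vert u\Vert$ by the iteration indicated in \cite[footnote 33]{Vasy2013}.

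\emph{Step 1: transfer.} By Section~\ref{sec:trev}, the equation $\hat T^K_s(\sigma)\hat\phi^K_s=\hat f^K_s$ is equivalent, after the flip $\psi$, to $\hat T^M_{-s}(-\sigma)\hat{\widetilde\phi}{}^M_{-s}=\hat f^M_{-s}$. Moreover, $\Vert\phi^K_s\Vert_{\bar H^{m,\ell}_b}=\Vert\widetilde\phi^M_{-s}\Vert_{\bar H^{m,\ell}_b}$, and the source norms are equivalent up to a constant. The same holds for the semiclassical norms, because $\psi$ restricted to $\Sigma^K_{-\ft_0}$ is a diffeomorphism onto $\Sigma^M_{\ft_0}$, and $\zeta^{\pm 2s}$ are smooth, bounded, with bounded inverse on $\{r>r_+-\epsilon\}$.

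The lift of $\psi$ to ${}^{sc}T^*\bar X$ is a symplectomorphism. It exchanges $z_0\leftrightarrow -z_0$ and carries the sets $L_\pm$ of the $K$-picture to the corresponding radial sets of Millet's picture. Pseudodifferential operators $A,B,G\in\Psi^{0,0}_{sc,h}$ are conjugated by $\psi$ and by multiplication by $\zeta^{2s}$ to operators in the same class. This conjugation preserves ellipticity on $L_\pm$ and maps forward (backward) bicharacteristics from $\WF_h(B)$ to bicharacteristics of the flipped operator.

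I will check carefully how the threshold $\tfrac12-(s-\tfrac1{\kappa_+}\Im(h^{-1}z))$ arises. Millet's threshold for spin $-s$ and spectral parameter $-\sigma$ should, after the sign changes $s\to-s$ and $\Im$ of the spectral parameter, reproduce exactly the stated condition. I expect this sign bookkeeping to be the main obstacle, since both the spin and the orientation of time are flipped and the threshold depends on both. Concretely, I will recompute the subprincipal term of $\hat T^K_{s,h}$ at $L_\pm$ directly in $K^*$-coordinates. This is a short local computation near $\Delta=0$ with $\xi_r\to\pm\infty$. It gives an independent check of the threshold rather than relying only on the dictionary.

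\emph{Step 2: improving the error.} Step 1 gives the estimate with remainder $h\Vert u\Vert_{\bar H^{-N,\ell}_{(b),h}}$. The remainder actually produced in the positive commutator argument is microlocal: it has the form $h\Vert B'u\Vert_{\bar H^{r-1/2,\ell}_{(b),h}}+h^N\Vert u\Vert_{\bar H^{-N,\ell}_{(b),h}}$, with $B'$ elliptic on a neighbourhood of $\WF_h(B)$ and of the bicharacteristic segments leading to $L_\pm$, and contained in ${\rm Ell}_h(G)$. Choose nested operators $B=B_0,B_1,\dots,B_N$, each with $\WF_h(B_j)\subset{\rm Ell}_h(B_{j+1})$, all satisfying the hypotheses of the proposition with the same $G$. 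Apply the estimate to $B_{j}u$ at orders $r-j/2$, which stay above the threshold as long as $r-j/2\ge r'$. Once below that level, bound the term by the hypothesis $Au\in\bar H^{r',\ell}_{(b),h}$ together with propagation of singularities, Proposition~\ref{pp3}. Each step gains one factor of $h$, and the accumulated $G\hat T_{s,h}(z)u$ terms carry at most the prefactor $h^{-1}$ with lower order. Summing the $N$ inequalities yields the stated estimate with $h^N\Vert u\Vert_{\bar H^{-N,\ell}_{(b),h}}$.
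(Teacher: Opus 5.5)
Your overall route is the paper's. The paper gives no argument beyond quoting \cite[Proposition 6.1]{Millet2} through the time-reversal dictionary of Section~\ref{sec:trev} and citing \cite[footnote 33]{Vasy2013} for the improved remainder, and your Step 1 is a reasonable expansion of that. The independent threshold check you propose does come out right. In the $K$-picture, with the Kinnersley tetrad regular at $\sH_-$, the non-smooth horizon mode behaves like $(r-r_+)^{-s-\i k/\kappa_+}$ with $\Im k=\Im\sigma$, so its Sobolev regularity stops exactly at $\tfrac12-s+\tfrac{1}{\kappa_+}\Im\sigma$. One minor correction: $\zeta^{\pm 2s}$ is \emph{not} bounded as $r\to\infty$. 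This is harmless here only because $A,B,G$ have compactly supported kernels near $L_\pm$.

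The genuine gap is in how you close Step 2. Descending through the orders $r,r-\tfrac12,r-1,\dots$ allows only about $2(r-r')$ admissible steps. That number is fixed by $r-r'$, not by $N$, so it cannot produce $h^N$ for arbitrary $N$. The leftover term at the bottom also cannot be handled as you propose. The assumption $Au\in\bar H^{r',\ell}_{(b),h}$ is purely qualitative: it gives no bound in terms of $\Vert G\hat T_{s,h}(z)u\Vert$ or $\Vert u\Vert_{\bar H^{-N,\ell}_{(b),h}}$. Using it would leave a term like $h^k\Vert A u\Vert_{\bar H^{r',\ell}_{(b),h}}$ on the right-hand side, which is absent from the stated estimate. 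Proposition~\ref{pp3} does not rescue this either. It transports control from ${\rm Ell}_h(B_1)$ along the flow, and near the radial set $L_\pm$ there is no region of quantitative a priori control to propagate from; that is precisely why the radial estimate carries no $B_1$-term.

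The fix is simply not to lower the order. Semiclassical norms are monotone in the order, so
\[
h\Vert B_{j+1}u\Vert_{\bar H^{r-1/2,\ell}_{(b),h}}\le h\Vert B_{j+1}u\Vert_{\bar H^{r,\ell}_{(b),h}} .
\]
You can therefore reapply the microlocal estimate at the same order $r$, where only the threshold at $r$ is needed. Use nested $B_0=B,B_1,\dots,B_N$, all with wavefront sets in ${\rm Ell}_h(G)$ and satisfying the flow hypothesis. After $N$ steps, bound $\Vert B_Nu\Vert_{\bar H^{r,\ell}_{(b),h}}$ by the unimproved estimate of \cite[Proposition 6.1]{Millet2}. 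Its remainder $h\Vert u\Vert_{\bar H^{-N,\ell}_{(b),h}}$ then carries the accumulated prefactor $h^N$, while the $G\hat T_{s,h}(z)u$ terms keep at most the prefactor $h^{-1}$.
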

Estimates at ${\mathcal R}_{in}$ will also be important. An adapted estimate was first shown in \cite{Va21}. The following is \cite[Proposition 6.13]{Millet2}:
\begin{proposition}
\label{pp2}
Let $\cU$ be a neighbourhood of $\sR_{in}$ separated from the zero section. Then there exist $B_0,G\in \Psi^{0,0}_{sc,h}$ with $\WF_h(B_0)\cup \WF_h(G)\subset \cU$, $\WF_h(B_0)\subset {\rm Ell}_h(G)$ and $B_0$ elliptic at $\sR_{in}$. If $\eta>0$ is such that $-\eta h\le {\rm Im}(z)\le 0$, and in addition $r+\ell+\frac{1}{2}-2s>0$ and $u\in H^{r',\ell'}_{(b)}$ with $r'+\ell'+\frac{1}{2}-2s>0$, then for all $N\in \nn,$ there exists $C_N>0$ independent of $u$,  $h$, and $z$ so that 
\begin{align*}
\Vert B_0 u\Vert_{H^{r,\ell}_{(b),h}}\le C_N\left(h^{-1}\Vert G\hat{T}_{s,h}(z)u\Vert_{H^{r,\ell-1}_{(b),h}}+h^N\Vert u\Vert_{H^{-N,\ell}_{(b),h}}\right). 
\end{align*}   
\end{proposition}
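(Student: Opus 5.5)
The estimate of Proposition~\ref{pp2} is, up to the change of conventions explained in Section~\ref{sec:trev}, the radial point estimate of \cite[Proposition 6.13]{Millet2}. I would therefore first transfer it from $\hat{T}^M_{-s}(-\sigma)$ to $\hat{T}^K_s(\sigma)$. The map $\phi^K_s\mapsto \widetilde{\phi}^M_{-s}=\phi^K_s\circ\psi^{-1}$ preserves the $\bar H_b^{m,l}$-norms exactly. The right-hand sides are comparable up to the constant $C$ from Section~\ref{sec:trev}, because the factor $\zeta^{-2s}2^{-s}$ is a smooth symbol of order $(0,0)$ in the scattering calculus.

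Under the flip, $\sR_{in}$ for $z_0$ is mapped to the corresponding radial set for $\widetilde{z}_0=-z_0$. The sign condition on $\Im z$ is reversed accordingly. The spin is also replaced by $-s$, and this shift should be compensated by the conjugation by $\zeta^{2s}$, which reproduces the threshold $r+\ell+\frac12-2s>0$. Since this bookkeeping is error-prone, I would also check the estimate directly by a positive commutator argument, as follows.

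Work in local coordinates near $x=0$ with $\xi_{sc},\eta_{sc},\zeta_{sc}$. Take the commutant
\[
a=x^{-2\ell+1}\,\widetilde{\rho}^{-2r+1}\chi_0(x)\,\chi_1\bigl((\xi_{sc}+2z_0)^2+\eta_{sc}^2+\zeta_{sc}^2\bigr),
\]
with cutoffs localizing to $\cU$, and set $A=Op_h(a)\in\Psi^{2r-1,2\ell-1}_{sc,h}$. Compute
\[
\Im\langle \hat{T}_{s,h}(z)u,Au\rangle=\tfrac{1}{2i}\bigl\langle \bigl([\hat{T}_{s,h},A]+(\hat{T}_{s,h}-\hat{T}_{s,h}^*)A\bigr)u,u\bigr\rangle,
\]
where the adjoint is taken with respect to the $\bm$-metric and the scattering volume form. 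The principal part of the commutator is $h\,H_{p_h}a$. At $\sR_{in}$ the rescaled Hamiltonian vector field $\widetilde H_{p_h}$ satisfies $\widetilde H x=\beta_0 x$ and $\widetilde H\widetilde\rho=\beta_1\widetilde\rho$ with definite-sign $\beta_i$, because $\sR_{in}$ is a radial source or sink. Hence $H_{p_h}a=-(2r-1)\beta_1 a-(2\ell-1)\beta_0 a+(\text{terms from }\chi_1)$. The terms from $\chi_1$ are supported away from $\sR_{in}$ and have a good sign, by the flow structure in Proposition~\ref{pp0}.

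The skew-adjoint part $\frac{1}{2i}(\hat{T}_{s,h}-\hat{T}_{s,h}^*)$ contributes $h$ times a subprincipal symbol. At $\sR_{in}$ this symbol consists of a term proportional to $-\Im(h^{-1}z)$, which has a favourable sign for $-\eta h\le \Im z\le 0$, and a spin term coming from $2s\Gamma_a$. At $x=0$ the spin term equals $2s$ times the same radial rate, since $\breve n^a\Gamma_a\to -M/2$ by the remark on the conformally rescaled tetrad. Adding the $\frac12$ coming from the symmetrization of the weights, the total coefficient is a positive multiple of $r+\ell+\frac12-2s$. This yields
\[
\|B_0u\|^2_{H^{r,\ell}_{(b),h}}\lesssim h^{-2}\|G\hat{T}_{s,h}u\|^2_{H^{r,\ell-1}_{(b),h}}+\text{(lower-order terms)}.
\]

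The a priori regularity $u\in H^{r',\ell'}_{(b)}$, with $r'+\ell'+\frac12-2s>0$, is needed to justify the pairing. I would regularize with $a_\delta=a\,(1+\delta/x)^{-K}(1+\delta\widetilde\rho^{-1})^{-K}$, which keeps the sign of the main term uniformly in $\delta$ as long as the threshold holds at the level $(r',\ell')$. Then I would let $\delta\to0$. The lower-order errors are bounded by $h\|u\|$ in weaker norms and are improved to $h^N\|u\|_{H^{-N,\ell}_{(b),h}}$ by iterating the estimate, as in the footnote to Proposition~\ref{pp1}.

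The main obstacle is the exact computation of the subprincipal symbol of $\hat{T}_{s,h}$ at $\sR_{in}$ on the bundle $\cB_s^{\ss^2}$. This requires tracking the $\Gamma_a$ terms in the tetrad used (Kinnersley versus Millet's, and $\zeta^{2s}$-conjugated versus not), and confirming that the threshold is exactly $r+\ell+\frac12-2s$ with the stated sign of $\Im z$. I would first redo this computation in the Kinnersley scaling near $x=0$ using \eqref{eq:inv conf metric} and the expansions of $\breve l^a\breve\Gamma_a$ and $\breve n^a\breve\Gamma_a$ from the remark on the conformally rescaled tetrad.
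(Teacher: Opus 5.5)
Your main route is the paper's. Proposition~\ref{pp2} is not proved in the paper; it is imported from \cite[Proposition 6.13]{Millet2} through the time-reversal dictionary of Section~\ref{sec:trev}. However, two steps of your transfer are wrong, and the procedure as you describe it would not produce the stated threshold.

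First, $\zeta^{-2s}2^{-s}$ is not of order $(0,0)$ near $x=0$. One has $\zeta^{-2s}=x^{2s}(1-\i a x\cos\theta)^{-2s}$, so multiplying by it shifts the weight $\ell$ by $2s$, and it does so over $x=0$, which is exactly where $\sR_{in}$ lies. The comparability of forcing norms in Section~\ref{sec:trev} is harmless only because the forcings there are compactly supported.

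Second, and this is the real point, there is no residual factor and no conjugation by $\zeta^{2s}$ that could ``compensate'' anything. By \eqref{4.23}, $\iota\circ\Psi^*$ simply pulls back Kinnersley components into the tetrad $(l^K,n^K,m^K)\cdot z^{-2}$. In that tetrad the transported operator is $T_s'=\zeta^{-2s}T_{-s}\zeta^{2s}$. Passing to Millet's tetrad multiplies components by $2^{-s}\zeta^{-2s}$ and conjugates $T_{-s}$ accordingly, which cancels the factors $\zeta^{\pm 2s}$ exactly. Hence $(T^K_s u)\circ\psi^{-1}=T^M_{-s}(u\circ\psi^{-1})$, and both sides of Millet's estimate pull back isometrically. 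The hypotheses transform only by $s\mapsto -s$ and $z\mapsto -z$, that is, $\Im z\mapsto -\Im z$ and $z_0\mapsto -z_0$. A conjugation by $\zeta^{\pm 2s}\sim x^{\mp 2s}$ would move the threshold by $2s$ and spoil it.

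You can check the threshold directly. For the backward problem in the Kinnersley scaling, the waves to be excluded at this radial set are the outgoing ones, which behave like $x^{1+2s}e^{2\i\sigma r_*}$ (Teukolsky's $r^{-1-2s}$ fall-off). At finite scattering frequency each $b$-derivative costs a factor $x^{-1}$. So these waves lie microlocally in $\bar H^{r,\ell}_{(b)}$ if and only if $r+\ell+\frac12-2s<0$. This is exactly the stated threshold, with no extra shift.

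Your positive-commutator check does not settle this independently. $\sR_{in}$ sits at finite scattering frequency, where the horizon-region function $\widetilde\rho$ equals $\frac12$ and is not a defining function. So the relation $\widetilde H\widetilde\rho=\beta_1\widetilde\rho$ is not available there. The dependence on $r$ enters because the $b$-fibre weight is comparable to $x$ near $\sR_{in}$, which is why only the combination $r+\ell$ appears.

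Likewise, the spin shift at this radial set comes from the component of $\Gamma$ along its generator, $l^a\Gamma_a=p/\varrho^2\sim x$. It does not come from $\breve n^a\breve\Gamma_a\to -M/2$, which corresponds to $n^a\Gamma_a=O(x^2)$ before rescaling. As written, the sketch asserts the sign and size of the spin term rather than deriving them. The clean argument is the exact flip described above.
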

\subsubsection{Estimates at the trapped set}
 We will also need estimates near the trapped set. Suitable estimates were first shown by Wunsch-Zworski \cite{WuZw11} and Dyatlov \cite{Dy16}. The following version follows from \cite[Theorem 4.7]{HiVa16} and \cite[Proposition 6.22]{Millet2}:
\begin{proposition}
\label{pp4}
 Let $z=z_0+\cO(h)$ with $z_0\in \{-1,1\}$, and let $\ell\in \rr$ and $r>0$.  Then there exist a bounded neighbourhood $\cU$ of the trapped set $K_{-z_0}$ and operators $B_K, B_0$ in $\Psi^{0,0}_{sc,h}$ with $B_K$ elliptic on a neighbourhood of $K_{- z_0}$, $WF_h(B_0)\subset\cU, WF_h(B_K)\subset \cU$, and $\WF_h(B_0)\cap \Sigma_h\subset \Sigma_{- z_0}$ with either ${\rm Ell}_h(B_0)\cap \Gamma_+=\emptyset\, (z_0=-1)$ or ${\rm Ell}_h(B_0)\cap \Gamma_-=\emptyset\, (z_0=1)$.  Moreover, for all $N\in \nn$, there exists $C_N>0$ such that for all  $u\in \bar{H}^{r,\ell}_{(b),h}$ we have :
 \begin{align*}
 \Vert B_K u\Vert_{\bar{H}^{r,\ell}_{(b),h}}\le C_N \left(h^{-1}\Vert B_0u\Vert_{\bar{H}^{r,\ell}_{(b),h}}+h^{-2}\Vert G\hat{T}_{s,h}(z)u\Vert_{\bar{H}^{r,\ell-1}_{(b),h}}+h^N\Vert u\Vert_{\bar{H}^{-N,\ell}_{(b),h}}\right).
 \end{align*}
 \end{proposition}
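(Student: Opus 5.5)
This is a trapped-set estimate with a loss of $h^{-2}$ relative to the forcing. I would import it from the $r$-normally hyperbolic trapping estimate \cite[Theorem 4.7]{HiVa16}, after checking its hypotheses for the time-reversed Teukolsky operator. By Section~\ref{sec:trev}, estimates for $\hat T^K_s(\sigma)$ are equivalent to estimates for $\hat T^M_{-s}(-\sigma)$. The relevant norms agree up to constants, and the flip sends the slices $\Sigma^K_{-\ft_0}$ to $\Sigma^M_{\ft_0}$. The semiclassical parameter becomes $-z_0$, so the trapped set is $K_{-z_0}$ and the roles of $\Gamma_+$ and $\Gamma_-$ are interchanged. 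This interchange is why the condition on ${\rm Ell}_h(B_0)$ reads $\Gamma_+$ for $z_0=-1$ and $\Gamma_-$ for $z_0=1$. I would therefore first state the estimate for the operator and tetrad of \cite{Millet2}, and then transport it back.

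For the operator of \cite{Millet2}, the steps are as follows. (i) Recall from \cite{Millet2} (ultimately Dyatlov's analysis for subextreme Kerr) that $K_{\pm1}$ is a compact subset of $T^*U_I$, with $I=(r_{min},r_{max})$, and is $r$-normally hyperbolic for every $r$. The stable and unstable manifolds $\Gamma_\mp$ are smooth codimension-one submanifolds meeting transversally at $K$. Since the principal symbol is scalar, $p_h\,\id$, these are properties of the Kerr null geodesic flow only and do not depend on $s$. (ii) Control the subprincipal part. $\hat T_{s,h}(z)-p_h$ consists of $h$ times a first-order term coming from $2s\Gamma_a$, together with $\Im z=\cO(h)$. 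By \cite[Theorem 4.7]{HiVa16}, the estimate holds when the imaginary part of the subprincipal symbol, measured in some positive-definite fibre inner product, is dominated on $K$ by the minimal expansion rate. If it is not dominated, the estimate still holds after conjugating by a weight $e^{\phi/h}$ or by shifting the $h$-power. The cost is only a fixed power of $h$, and the loss $h^{-2}$ leaves room for this; \cite[Proposition 6.22]{Millet2} performs exactly this verification. The outcome is the estimate with the $h^{-1}\Vert B_0u\Vert$ control term on the unstable side and the remainder $h^N\Vert u\Vert_{\bar H^{-N,\ell}}$.

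(iii) Handle the Sobolev orders. Since $\cU$ is bounded and lies away from $x=0$ and from fibre infinity, the $(r,\ell)$ weights are irrelevant microlocally. Any $r>0$ and $\ell\in\rr$ is therefore allowed, and the operators $B_K$, $B_0$ and $G$ can be taken in $\Psi^{0,0}_{sc,h}$ with wavefront set in $\cU$. The factor $G$ is elliptic on $\cU$ and is inserted using microlocal elliptic parametrices. Finally, I undo the flip by writing $u=\phi^K_s$ and using the isometric identification of norms. This gives the stated inequality for $\hat T^K_{s,h}(z)$.

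The main obstacle is step (ii). The Teukolsky operator is not formally self-adjoint with respect to a positive fibre metric: its subprincipal term is complex and $s$-dependent. So one must confirm that the threshold condition of \cite{HiVa16} holds, or can be arranged by conjugation, uniformly for $z=z_0+\cO(h)$. I would take this directly from \cite[Proposition 6.22]{Millet2}, since the time reversal only replaces $s$ by $-s$ and $z_0$ by $-z_0$, and that proposition covers both signs.
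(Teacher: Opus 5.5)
Your proposal matches the paper. The paper gives no separate argument: it reads the proposition off from \cite[Theorem 4.7]{HiVa16}, with the threshold condition at the trapped set supplied by \cite[Proposition 6.22]{Millet2}, after passing to $\hat{T}^M_{-s}(-\sigma)$ via the time reversal of Section~\ref{sec:trev} (whence $K_{-z_0}$).

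One correction to your step (ii). A violated threshold could not be repaired by conjugating with $e^{\phi/h}$, which is not polynomially bounded in $h$, or with $e^{\phi}$, which cannot change averages of the imaginary part of the subprincipal symbol along the flow on $K$. Nor could it be absorbed into the $h^{-2}$ factor, since that is the intrinsic loss of the normally hyperbolic estimate rather than slack. So the verification in \cite{Millet2} is essential, but since you take it from there the argument stands.
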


\begin{remark}
    Note that the weights in $x$ in the definition of the spaces $H^{r,\ell}_{(b),h}$ only play a role in Proposition~\ref{pp2}. However, since all of these estimates are combined later on, the weights are included in all of them.
\end{remark}


\subsection{Decay estimates}
\label{SecDecayest}
We need decay estimates for backward solutions of \eqref{eq:4.1a}.
By the discussion at the beginning of this section, these estimates are equivalent to the corresponding estimates for the forward solution of \eqref{6.3}. 

\subsubsection{Energy decay}
We have the following local energy decay estimate : 
\begin{theorem}
\label{theorem:7.1}
Let $\alpha\in (0,1)$. Let $f^K\in C_0^{\infty}\left(\rr_t\times (0,\frac{1}{r_+-\epsilon})_x; \Gamma\left(\cB_s^{\ss^2}\right)\right)$. Let $\phi^K_s$ be the unique backward solution to the forcing problem \eqref{eq:4.1a}. Then for all $r,j\in \nn^{\ge 0}$, there exists $C>0$ so that 
\begin{align}
\label{th:7.1}
\Vert (\ft\partial_{\ft})^j \phi^K_s\Vert_{\bar{H}_{b}^{r,1-}}\le C\langle \ft\rangle^{-\alpha}. 
\end{align}
\end{theorem}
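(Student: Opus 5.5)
The plan is to reduce Theorem~\ref{theorem:7.1} to the corresponding forward decay statement in \cite{Millet2}, using the time-reversal dictionary of Section~\ref{sec:trev}. Given $f^K$ compactly supported, the backward solution $\phi^K_s$ of \eqref{eq:4.1a} is mapped by the flip $\psi$ to $\widetilde{\phi}^M_{-s}(x)=\phi^K_s(\psi^{-1}x)$. This is the forward solution of \eqref{6.3} with compactly supported source $f^M_{-s}=\zeta^{-2s}2^{-s}f^K_s\circ\psi^{-1}$. Backward support in $\ft^K$ becomes forward support in $\ft^M$, since $\psi$ maps $\Sigma^K_{-\ft_0}$ onto $\Sigma^M_{\ft_0}$. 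Moreover $\ft\partial_{\ft}$ is invariant under $\ft\mapsto-\ft$, and the identity $\Vert \phi^K_s\Vert_{\bar H^{m,l}_b}=\Vert \widetilde{\phi}^M_{-s}\Vert_{\bar H^{m,l}_b}$ holds slice by slice. Hence it suffices to prove
\[
\Vert (\ft\partial_{\ft})^j\widetilde{\phi}^M_{-s}\Vert_{\bar H^{r,1-}_b}\lesssim\langle\ft\rangle^{-\alpha}
\]
for the forward problem of $T^M_{-s}$ with a smooth compactly supported source. I then invoke the local energy decay theorem of \cite{Millet2} for spin $-s$, which gives exactly this bound.

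To be safe, I would also sketch why the estimate holds in the present setting. The argument takes the Fourier--Laplace transform in $\ft$ and writes $\phi=\frac{1}{2\pi}\int e^{-i\sigma\ft}\hat T_s(\sigma)^{-1}\hat f(\sigma)\,d\sigma$. Three ingredients are needed.
\begin{itemize}
\item \emph{High frequency.} Uniform bounds on $\hat T_s(\sigma)^{-1}$ for $|\Re\sigma|\to\infty$ on $\bar H^{r,\ell}_{(b)}$. These follow by gluing Propositions~\ref{pp3}, \ref{pp1}, \ref{pp2} and~\ref{pp4} along the flow of Proposition~\ref{pp0}, with the roles of $z_0=\pm1$ exchanged.
\item \emph{Mode stability.} Invertibility for real $\sigma\neq0$, taken from \cite{Millet2}.
\item \emph{Zero frequency.} A conormal expansion of $\hat T_s(\sigma)^{-1}$ near $\sigma=0$ of the form $\hat T_s(0)^{-1}+\cO(|\sigma|^{\alpha'})$ in weighted spaces; this regularity is what produces the weight $1-$.
\end{itemize}
The derivatives $(\ft\partial_{\ft})^j$ become $(\sigma\partial_\sigma)^j$ on the Fourier side. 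Since the low-frequency singularity is only Hölder, integrating by parts once in $\sigma$ yields the decay $\langle\ft\rangle^{-\alpha}$ for any $\alpha<1$.

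The main obstacle is the zero-frequency analysis: showing that $\hat T_s(\sigma)^{-1}$ has the required limited regularity at $\sigma=0$ between the weighted $b$-spaces, uniformly under repeated application of $\sigma\partial_\sigma$. I would not redo this; I would cite it from \cite{Millet2}. The only genuinely new checks are the following. The time-reversed operator $\hat T^M_{-s}(-\sigma)$ is precisely the one treated there, with spin $-s$. The threshold conditions in Propositions~\ref{pp1} and~\ref{pp2} (involving $r+\ell+\tfrac12-2s$) are satisfied for $\ell=1-$ and $r$ large, after replacing $s$ by $-s$. Finally, the change of time function between $\ft^K$ and $\ft^M$, which differ by bounded amounts on compact $r$-sets and by a smooth reparametrization near the ends, preserves both the $\bar H^{r,1-}_b$ norms and the $\langle\ft\rangle^{-\alpha}$ rate.
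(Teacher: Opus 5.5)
Your argument is the paper's proof. The flip $\psi$ turns the backward solution $\phi^K_s$ of \eqref{eq:4.1a} into the forward solution $\widetilde{\phi}^M_{-s}=\phi^K_s\circ\psi^{-1}$ of \eqref{6.3}, and since the $\bar H_b^{r,1-}$ norms and $\ft\partial_{\ft}$ are preserved, the bound is exactly \cite[Theorem 8.1]{Millet2}. The Fourier-side sketch and the comparison of time functions are not needed; for the latter, $\psi$ already maps $\Sigma^K_{-\ft_0}$ exactly onto $\Sigma^M_{\ft_0}$.
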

\begin{proof}
We apply the flip $\psi$ sending $t\mapsto -t$ and $\varphi\mapsto -\varphi$. By the discussion in Section~\ref{sec:flip}, we know that this flip sends $T^K_s$ to $T^M_{-s}$ and $\phi^M_{s}$ to $\psi_{-s}^M=2^{s}\zeta^{2s}\phi^M_{-s}$, where the $M$ stands for the tetrad used in \cite{Millet2}. Now, $\psi_{-s}^M$ is a forward solution of \eqref{6.3}. The result then follows from \cite[Theorem 8.1]{Millet2}. 
\qeds
\end{proof}
\begin{theorem}
\label{th6.7}
Let $f^K\in C_0^{\infty}\left(\rr_{\ft}\times \left(0,\frac{1}{r_+-\epsilon}\right)_x;\Gamma(\cB_s^{\ss^2})\right)$. Let $\phi^K_s$ be the unique backward solution to the forcing problem \eqref{eq:4.1a}. Then for all $j\in \nn$, there exists $C_j>0$ such that uniformly in $\ft\le -1$ we have 
\begin{align*}
\vert(\ft\partial_{\ft})^j\phi^K_s(\ft,x,\omega^*)\vert\le C_j\vert\ft\vert^{-3-\vert s\vert-s}. 
\end{align*}
\end{theorem}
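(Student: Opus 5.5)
The statement to prove is Theorem~\ref{th6.7}: pointwise decay $|\ft|^{-3-|s|-s}$ for $(\ft\partial_\ft)^j\phi^K_s$ as $\ft\to-\infty$, uniformly in $(x,\omega^*)$. I would reduce it to the forward problem, exactly as in Theorem~\ref{theorem:7.1}. Apply the flip $\psi:(t,\varphi)\mapsto(-t,-\varphi)$ of Section~\ref{sec:flip}. By Section~\ref{sec:trev}, $\widetilde\phi^M_{-s}(x)=\phi^K_s(\psi^{-1}x)$ is the forward solution of \eqref{6.3}, namely $T^M_{-s}\widetilde\phi^M_{-s}=f^M_{-s}$, with compactly supported forcing $f^M_{-s}=\zeta^{-2s}2^{-s}f^K_s\circ\psi^{-1}$. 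The restriction of $\psi$ maps $\Sigma^K_{-\ft_0}$ diffeomorphically onto $\Sigma^M_{\ft_0}$, and $\ft\partial_\ft$ is carried to the corresponding operator on the Millet side. Hence a bound $|(\ft\partial_\ft)^j\widetilde\phi^M_{-s}|\lesssim \ft^{-3-|s|-s}$ for $\ft\ge1$ gives the claim.

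The second step is to invoke the pointwise decay result of \cite{Millet2} for forward solutions of the spin $-s$ Teukolsky equation in Millet's tetrad. This is his price-law type statement, with rate $\ft^{-3-|s'|+s'}$ for spin $s'$, where the asymmetry in $s'$ comes from the tetrad scaling. With $s'=-s$ this exponent equals $-3-|s|-s$, matching the claim. The result is obtained in \cite{Millet2} from the resolvent expansion of $\hat T^M_{-s}(\sigma)^{-1}$ near $\sigma=0$, using the Fredholm estimates recalled in Propositions~\ref{pp1}--\ref{pp4} together with the absence of real and zero modes. Contour deformation then gives polyhomogeneous leading behaviour, and repeated application of $\ft\partial_\ft$ is handled by the conormal (symbolic) nature of the expansion. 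I would only check that the compact support of $f$ in $(0,\frac{1}{r_+-\epsilon})$ fits his hypotheses, and that his estimate is stated uniformly up to $x=0$ and down to $r_+-\epsilon$.

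The main obstacle is bookkeeping rather than analysis. Two points need care. First, the time functions $\ft^K$ and $\ft^M$ differ by bounded functions of $r$ on compact $r$-sets but interpolate differently near infinity and the horizon. Second, $|\zeta^{-2s}|$ and the factor $2^{-s}$ are bounded and nonvanishing, so they do not change decay rates. On the region where $\ft^M$ and $\psi^*\ft^K$ differ, one has $\langle\ft^M\rangle\sim\langle\ft^K\rangle$ uniformly because $\ft^M+\ft^K\circ\psi^{-1}$ is bounded. The vector fields $\ft\partial_\ft$ also agree up to terms $O(1)\partial_\ft$, which are controlled by lower $j$. An induction on $j$ then transfers all estimates.
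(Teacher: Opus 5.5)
Your proposal follows the paper's proof: apply the flip $\psi$ to turn the backward spin-$s$ problem in the Kinnersley tetrad into a forward spin-$(-s)$ problem in Millet's tetrad, then quote Millet's sharp pointwise decay. The paper cites \cite[Theorem 1.2]{Millet2}, together with \cite[Corollary 8.5]{Millet2} for the $(\ft\partial_\ft)^j$ derivatives.

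The step you leave as ``checking that the compact support of $f$ fits his hypotheses'' is made explicit in the paper. Millet's Theorem 1.2 is stated for compactly supported initial data rather than for a forcing problem. The paper therefore uses finite speed of propagation: after the time support of the forcing, the flipped solution is a homogeneous solution whose data on a slice $\Sigma_{\ft_0}$ are compactly supported, and Millet's theorem applies to it.
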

\proof
We again use the flip $\psi$. Let $\Sigma_{\ft_0}=\ft^{-1}\{\ft_0\}$. By finite speed of propagation, the data on $\Sigma_{\ft_0}$ is compactly supported for suitable chosen $\ft_0$. Then \cite[Theorem 1.2]{Millet2} and \cite[Corollary 8.5]{Millet2} give the result. 
\qed
\subsubsection{Radiation field and decay at null infinity}

Recall from the discussions in Section~\ref{sec:Kerr} the coordinates $x=\frac{1}{r}$ and $\ft$, and define the new coordinates $v=-x\ft$ and $\tau=-\ft^{-1}$. Let $\cN=[0,1)_{v}\times[0,1)_{\tau}$. Let $\bar{H}_b^{m,\mu,\nu}(\cN\times \cB_s^{\ss^2})=v^{\mu}\tau^{\nu}\bar{H}_b^{m}(\cN\times \cB_s^{\ss^2})$, where $\bar{H}_b^{m}(\cN\times\cB_s^{\ss^2})$ is the $b-$ Sobolev space defined in Section~\ref{sec:func setting} with extendible conditions at $v=1$ and $\tau=1$. Let $f^K\in C_0^{\infty}(\rr_{\ft}\times (r_+-\epsilon); \Gamma(\cB_s^{\ss^2}))$, and let $\phi^K_{s}$ be the backward solution of \eqref{eq:4.1a}.
\begin{lemma}
\label{lem6.3}
There exists $\phi_{s}^{K,rad}\in \bar{H}_b^{\infty,(3+s+\vert s\vert)-}([0,1)_{\tau}\times\cB_s^{\ss^2})$ such that
\begin{align}
\label{equ:4.2}
\phi^K_{s}-v\phi_{s}^{K,rad}\in \bar{H}_b^{\infty,2-,(3+s+\vert s\vert)-}(\cN\times \cB_s^{\ss^2}).
\end{align}
\end{lemma}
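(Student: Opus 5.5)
The plan is to reduce Lemma~\ref{lem6.3} to the corresponding radiation field statement of \cite{Millet2} for forward solutions, using the flip $\psi$ as in the proofs of Theorem~\ref{theorem:7.1} and Theorem~\ref{th6.7}.

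\emph{Step 1 (flip).} Apply $\psi:(t,\varphi)\mapsto(-t,-\varphi)$. By Section~\ref{sec:trev}, $\widetilde{\phi}^M_{-s}(x)=\phi^K_s(\psi^{-1}x)$ is the forward solution of \eqref{6.3}, and its source $f^M_{-s}$ is compactly supported. Under $\psi$, the slice $\Sigma^K_{-\ft_0}$ is mapped to $\Sigma^M_{\ft_0}$. Hence the coordinates $v=-x\ft^K$ and $\tau=-(\ft^K)^{-1}$ become $v=x\ft^M$ and $\tau=(\ft^M)^{-1}$ near $\sI_+$ in Millet's picture. These are exactly the blown-up coordinates near the corner of null infinity and future timelike infinity used in \cite{Millet2}.

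\emph{Step 2 (invoke \cite{Millet2}).} The forward radiation field result of \cite{Millet2} (the statement accompanying \cite[Theorem 1.2]{Millet2} and \cite[Corollary 8.5]{Millet2}) gives a radiation field with decay $\tau^{(3+|s|-s')-}$ for the spin $s'$ field in Millet's tetrad. Here $s'=-s$, so the exponent is $3+|s|+s$. It also gives a remainder in $\bar H_b^{\infty,2-,(3+|s|-s')-}$. I would apply this to $\widetilde{\phi}^M_{-s}$.

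\emph{Step 3 (pull back).} Pull the expansion back by $\psi$ to obtain $\phi^{K,rad}_s$ and the remainder. Since $\widetilde\phi^M_{-s}=\phi^K_s\circ\psi^{-1}$ with no multiplicative factor, the only work is checking that the weighted $b$-Sobolev spaces on $\cN\times\cB_s^{\ss^2}$ are preserved. The map $\psi$ is a diffeomorphism of $\cN$ preserving the boundary defining functions $v$ and $\tau$. On the sphere it acts as $\varphi\mapsto-\varphi$, which identifies $\cB^{\ss^2}_{-s}$ with $\cB^{\ss^2}_s$ isometrically for $\bm$, as in the norm equivalences of Section~\ref{sec:trev}. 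It therefore maps $b$-vector fields to $b$-vector fields and preserves the norms up to constants.

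\emph{Main obstacle.} The delicate point is matching conventions. First, Millet's radiation field is defined with respect to his tetrad and his $\ft^M$, so I must check that the factor $\zeta^{2s}2^s$ relating $\phi^M_{-s}$ and $\widetilde\phi^M_{-s}$ does not alter the $v$-weights at $x=0$. This should be fine, because $\zeta^{2s}\sim r^{2s}$ and the statement is written for $\widetilde\phi$, which corresponds exactly to $\phi^K_s$. Second, I must verify that the time functions agree near $\sI$: $\ft=t+r_*$ for $r\ge4M$ coincides, after the flip, with Millet's ${}^*t$. If the results of \cite{Millet2} are stated only with weaker $\tau$-decay, I would instead combine Theorem~\ref{th6.7}, which gives pointwise $|\ft|^{-3-|s|-s}$ decay, with the conormal regularity from Theorem~\ref{theorem:7.1} and the vector fields $(\ft\partial_\ft)^j$. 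The plan would then be to integrate $\partial_v$ of the rescaled field $v^{-1}\phi^K_s$ along $\tau=$ const to construct $\phi^{K,rad}_s$ directly.
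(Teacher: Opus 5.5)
Your proposal follows the paper's proof: apply the flip $\psi$ so that $\phi^K_s$ becomes the forward solution $\widetilde\phi^M_{-s}=2^s\zeta^{2s}\phi^M_{-s}$ of $T^M_{-s}$, with $(v^K,\tau^K)\mapsto(v^M,\tau^M)$, apply Millet's radiation-field result, and translate back. The paper cites \cite[Proposition 7.45]{Millet2} specifically for this step, rather than Theorem 1.2 or Corollary 8.5, which it uses for the pointwise decay in Theorem~\ref{th6.7}, so your fallback construction is not needed.
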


\begin{proof}
We apply the flip  $t\to -t$ and $\varphi\to -\varphi$. By the discussion in Section~\ref{sec:flip}, we know that this flip sends $T^K_s$ to $T^M_{-s}$ and $\phi^{K}_{s}$ to $\psi_{-s}^M=2^{s}\zeta^{2s}\phi^M_{-s}$, where the $M$ stands for the tetrad used in \cite{Millet2}. Additionally, $v^K$ is sent to $v^M$ and $\tau^K$ to $\tau^M$, where the index $K$ indicates the coordinates used here and the index $M$ indicates the coordinates used in \cite{Millet2}. Now we know that $\psi^{M}_{-s}$ is a forward solution of  $T^M_{-s}\psi^M_{-s}=0$,
and we can apply \cite[Proposition 7.45]{Millet2} to $\psi^M_{-s}$. Translating this to $\phi_{s}^K$ gives the result. 
\qeds
\end{proof}

\begin{corollary}
\label{cor:4.1}
Let $f^K\in C_0^{\infty}(\rr_{\ft}\times (r_+,\infty); \Gamma(\cB_s^{\ss^2}))$ and $\phi_s^K$ a backward solution of \eqref{eq:4.1a}. Then there exists a constant $c_f$ such that $rE_s^-(f^K)$ extends smoothly to $\sI_-$ and $\supp rE_s^-(f^K)\cap \sI_-\subset\{\ft\le c_f\}$. 
Furthermore, for all $n\in \nn$ there exists $C>0,\, m>0$ such that 
\begin{align}
\sup_{\omega\in \ss^2}\left\vert \partial_\ft^n(rE_s^-(f^K))\vert_{\sI^-}(\ft,\omega))\right\vert_{\omega}\le C \Vert f^K\Vert_{C^m} \langle \ft\rangle^{-2-s-\vert s\vert-n+}.
\end{align}
Here $\vert.\vert_{\omega}$ is the fibre norm on $\cB_s^{\ss^2}$ at the point $\omega\in \ss^2$. 
\end{corollary}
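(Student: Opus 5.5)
The plan is to deduce Corollary~\ref{cor:4.1} from Lemma~\ref{lem6.3}, which gives the radiation field. First, the support claim. Since $f^K$ is compactly supported, $\phi_s^K=E_s^-(f^K)$ is supported in $J^-(\supp f^K)$. Along $\sI_-$, points with $\ft=t^*$ large lie outside the causal past of a compact set, because $t^*=t+r_*$ is an advanced time. Concretely, finite speed of propagation in the conformally extended manifold $\overline{\cM_\epsilon}$ shows that $J^-(\supp f^K)\cap\{x\text{ small}\}\subset\{\ft\le c_f\}$ for some $c_f$ depending only on $\supp f^K$. This gives $\supp rE_s^-(f^K)\cap\sI_-\subset\{\ft\le c_f\}$. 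The same argument makes the problem trivial for $\ft\ge c_f$, so only the region $\ft\to-\infty$, i.e.\ $\tau=-\ft^{-1}\to 0$, needs work.

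Next, smooth extension and identification of the limit. I would use Lemma~\ref{lem6.3}. Since $v=-x\ft$, we have $r\phi_s^K=x^{-1}\phi_s^K=-\ft\, v^{-1}\phi_s^K$. Write $\phi_s^K=v\phi^{K,rad}_s+R$ with $R\in\bar H_b^{\infty,2-,(3+s+|s|)-}$. Then $v^{-1}R\in\bar H_b^{\infty,1-,\cdot}$, which vanishes at $v=0$ by $b$-Sobolev embedding, since it has infinite $b$-regularity and weight $>0$. Hence $r\phi^K_s|_{\sI_-}=-\ft\,\phi_s^{K,rad}$. This already gives continuity up to $\sI_-$.

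For smoothness in $x$ on the compact region $\{\ft\in[-T,c_f]\}$, I would not rely on the lemma. I would instead use Lemma~\ref{lem:rescaling}: $x^{-1}\phi_s$ solves the conformally rescaled Teukolsky equation, and $\breve g$ and $\breve T_s$ are smooth up to $x=0$ by \eqref{eq:g conf} and the remark on $\breve\Gamma$. The backward solution with compactly supported source in the interior is therefore smooth up to $\sI_-$ by standard hyperbolic theory on $\overline{\cM_\epsilon}$.

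Next, the decay bound. We have $\phi_s^{K,rad}\in\bar H_b^{\infty,(3+s+|s|)-}([0,1)_\tau\times\cB_s^{\ss^2})$. By $b$-Sobolev embedding on the compact $\ss^2$ and in $\tau$, this gives pointwise $|(\tau\partial_\tau)^k\phi^{K,rad}_s|\lesssim\tau^{3+s+|s|-}$. Since $\tau\partial_\tau=-\ft\partial_\ft$ and $\tau\sim\langle\ft\rangle^{-1}$, the identity $r\phi=-\ft\phi^{rad}$ gives $|(\ft\partial_\ft)^k(r\phi)|_{\sI_-}|\lesssim\langle\ft\rangle^{-2-s-|s|+}$. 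Converting $(\ft\partial_\ft)^k$ bounds into $\partial_\ft^n$ bounds yields the extra $\langle\ft\rangle^{-n}$.

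Finally, the constant $C\|f^K\|_{C^m}$. Here I would track the proof through the flip of Section~\ref{sec:trev}: Millet's estimates (via Theorem~\ref{th6.7} and \cite[Prop.~7.45]{Millet2}) are linear and bounded by finitely many Sobolev norms of the data. Those are controlled by $C^m$ norms of the compactly supported $f^K$, uniformly for supports in a fixed compact set. The point I expect to be the main obstacle is this quantitative dependence, together with making the support constant $c_f$ explicit. Lemma~\ref{lem6.3} as stated is qualitative. I would handle this by a closed graph / uniform boundedness argument: the linear map $f\mapsto\phi^{rad}$ from $C^\infty_K$, a Fréchet space, into the $b$-Sobolev space is well defined and has closed graph. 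Since $C^\infty_K$ is Fréchet, the map is continuous and so is bounded by some $C^m$ norm.
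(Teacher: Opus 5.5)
Your proposal is correct and is essentially the paper's argument. The paper's proof is the single remark that the corollary is a direct consequence of Lemma~\ref{lem6.3}, which is exactly your identification $r\phi^K_s|_{\sI_-}=-\ft\,\phi^{K,rad}_s$, followed by $b$-Sobolev embedding in $\tau=-\ft^{-1}$ and the conversion of $(\ft\partial_\ft)^k$ bounds into $\partial_\ft^n$ bounds.

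Your additional steps make explicit what the paper leaves implicit: finite speed of propagation for the support bound, the conformal rescaling of Lemma~\ref{lem:rescaling} for smoothness at finite $\ft$, and the closed-graph argument for the $\Vert f^K\Vert_{C^m}$ dependence. One point to tighten: $\d t^*$ is slightly spacelike when $a\neq 0$, so the support bound should come from a global time function on $\MI$ that differs from $t^*$ by a bounded amount near $\sI_-$ (for instance one equal to $t^*+c\,x$ there, with $c>a^2/2$).
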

\begin{proof}
This is a direct consequence of Lemma \ref{lem6.3}.
\qeds
\end{proof}

\begin{definition}
We define 
\begin{enumerate}
\item $\cS_s(\cM):=\left\{(\phi_s,\overline{\phi_{-s}})\in \Gamma(\cV_s)\,{\rm s.t.}\, \phi^K_{\pm s}\in C^{\infty}\left(\rr_{\ft}\times\left[0,\frac{1}{r_+-\epsilon}\right), \Gamma\left(\cB_{\pm s}^{\ss^2}\right)\right)\right.$ fulfils\\
$\hphantom{\cS_s(\cM):=\{}$\eqref{th:7.1} and  \eqref{equ:4.2} for $\left. \vphantom{ \left(\frac{1}{r_+}\right)}\ft\le 0 \right\},$
\item $\cS_s(\sI_-):=\left\{\breve{\phi}=(\breve \phi_s,\overline{\breve \phi}_{-s})\in \Gamma(\breve{\cV}_{s}\vert_{\sI_-}): \exists \ft_0(\breve{\phi})\in \rr\, \text{s.t.}\, 1_{\ft>\ft_0}\breve\phi=0\right.$ and\\ $\hphantom{\cS_s(\sI_-):=\{}\left. \forall n\in \nn,\, \exists C_n(\breve\phi) : \sup\limits_{\omega\in \ss^2}\abs{\partial_\ft^n \breve\phi^K_s}_{\omega}\leq C_n(\breve{\phi}) \langle\ft\rangle^{-2-s-\vert s\vert-n+},\right.$\\
$\hphantom{\cS_s(\sI_-):=\{}\left. \sup\limits_{\omega\in \ss^2}\abs{\partial_\ft^n\overline{\breve\phi^K_{-s}}}_{\omega}\leq C_n(\breve{\phi}) \langle\ft\rangle^{-2+s-\vert s\vert-n+} \right\}$,\\
where $1_{\ft>\ft_0}$ is the characteristic function of $(\ft_0,\infty)$.
\item The trace operator $T_{\sI}:\Sol_{s}(\cM)\to  \Gamma(\breve{\cV}_{s}\vert_{\sI_-})$ is defined by $\phi\mapsto \breve{\phi}\vert_{x=0}$.\footnote{ Recall $\breve\phi= x^{-1}\phi$.}
\end{enumerate}
\end{definition}


\subsubsection{Decay along the long horizon}
\label{subsec:decay hor}
We again consider a backward solution of \eqref{eq:4.1} for $f\in \Gamma_c(\cB(s,s))$. 
\begin{theorem}
\label{th:4.2}
There exists $d>1 $ and $U(\phi_s)\in\rr$ so that $1_{U>U(\phi_s)}\phi_s\vert_\sH=0$ and so that for all $n\in \nn$ there exists $C>0,\, m>0$ satisfying
\begin{align*}
\sup_{\omega\in \ss^2}\abs{\partial_U^n\phi_s^{U}\vert_{\sH}(U,\omega)}_{\omega}\leq C\Vert f^K\Vert_{C^m}\langle U\rangle^{s-n}(\log\langle U\rangle)^{-d-n}.
\end{align*} 
\end{theorem}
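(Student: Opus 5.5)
We want decay of $\phi_s^U|_{\sH}$ as $U\to-\infty$ along the long horizon $\sH=\{V=0\}$, in the Kruskal tetrad $(\mathfrak l,\mathfrak n,\mathfrak m)$. The plan is to reduce to the time decay along the horizon in the Kinnersley (or Millet) scaling from Section~\ref{SecDecayest}, and then change coordinates and tetrad. On $\sH\cap\overline{\MI}$, i.e.\ the past horizon $\sH_-$ seen from $\MIUII^{out}$, we have $U=-e^{-\kappa_+{}^*t}$, so $U\to-\infty$ corresponds to ${}^*t\to-\infty$, and $\partial_U=(\kappa_+\abs{U})^{-1}\partial_{{}^*t}$ up to the $\varphi_+$-shift. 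The support statement is the easy part. By finite speed of propagation, the backward solution $E^-_s f$ is supported in $J^-(\supp f)$. This set meets $\sH$ only for $U\le U(\phi_s)$, where $U(\phi_s)$ is determined by $\supp f$; the portion $U>0$ of $\sH$ bounds $\MII'$, which lies outside the past of $\supp f$.

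The first step is to express the restriction in the Kinnersley tetrad. The Kinnersley tetrad extends smoothly to $\sH_-$ in $\MIUII^{out}$ by \eqref{KinnersleystarKstarl}--\eqref{KinnersleystarKstarn}. The Kruskal tetrad is $(\mathfrak l,\mathfrak n,\mathfrak m)=(l,n,m)\cdot z$ with $\abs z=-U$ and $z$ real. Hence by the transition formula $\phi^U_s=\rho_{(s,s)}(z^{-1})^{-1}\phi^K_s=\abs{U}^{s}\phi^K_s$, possibly up to a smooth nonvanishing factor coming from the rescalings. This explains the factor $\langle U\rangle^{s}$. Each $\partial_U$ contributes a factor $\abs U^{-1}$ times $\partial_{{}^*t}$ and a derivative of $\abs U^s$, which gives $\langle U\rangle^{s-n}$, provided $(\partial_{{}^*t})^k\phi^K_s|_{\sH}$ is bounded by $\langle{}^*t\rangle^{-d-k}$ times $\Vert f\Vert_{C^m}$. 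Since $\log\langle U\rangle\sim\kappa_+\abs{{}^*t}$, this produces $(\log\langle U\rangle)^{-d-n}$. In fact only $\langle {}^*t\rangle^{-d}$ decay together with $\ft\partial_\ft$-regularity (the $b$-structure) is needed: $(\ft\partial_\ft)^j$ bounds translate into $(U\partial_U)^j$ bounds, each $\partial_U=U^{-1}(U\partial_U)$, and the additional logarithmic gain per derivative comes from $\ft^{-1}$ in $(\ft\partial_\ft)$, with $\abs{\ft}^{-1}\sim(\log\abs U)^{-1}$.

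The second step supplies this decay on the horizon. Near $r=r_+$ the coordinate $\ft$ equals ${}^*t$. Theorem~\ref{th6.7} (pointwise $\abs{(\ft\partial_\ft)^j\phi^K_s}\le C_j\abs\ft^{-3-\abs s-s}$, uniform up to $x=1/(r_+-\epsilon)$, hence including $r=r_+$) gives decay with exponent $3+\abs s+s>1$. So we may take any $1<d\le 3+\abs s+s$. Dependence on $\Vert f^K\Vert_{C^m}$ follows from linearity and continuity of the estimates in \cite{Millet2}, i.e.\ the constants are bounded by finitely many derivatives of the forcing. Near $U\to 0^-$ ($\ft\to+\infty$) and on the compact part, smoothness of $\phi_s$ and of the Kruskal tetrad on ${\rm M}_K$ controls everything, and $\langle U\rangle$ is bounded there.

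The main obstacle is the bookkeeping. One must check that the Kinnersley-to-Kruskal factor is exactly $\abs U^{s}$ times a smooth bounded function; this uses $\rho_{(s,s)}(z)=z^{-s}\abs z^{0}$, so the boost-weight part cancels. One must also check that $\partial_U$ on $\sH$ in KBL coordinates (at fixed $\varphi_+$) matches $\partial_{{}^*t}$ plus an angular Killing term, $\frac{a}{r_+^2+a^2}\partial_{{}^*\varphi}$, which preserves the estimates by axisymmetry. Finally, the uniformity constant in Theorem~\ref{th6.7} has to be made quantitative in $f$. If the pointwise bound only gave $(\ft\partial_\ft)^j$ control without gain, I would still obtain $\langle U\rangle^{s-n}(\log\langle U\rangle)^{-d}$, and would then use the extra room in the exponent ($3+\abs s+s$ versus $d$) to absorb the additional logarithms.
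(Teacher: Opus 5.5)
Your reduction is the same as the paper's. The support claim comes from finite speed of propagation, and $\phi^U_s=\abs{U}^s\phi^K_s$ comes from the Kruskal rescaling. (The transition factor is $\rho_{(s,s)}(z^{-1})=z^s$, not its inverse, but you land on the right factor.) Decay of $\phi^K_s$ along $\sH_-$ then comes from \cite{Millet2} via the flip, and using Theorem~\ref{th6.7} instead of \cite[Proposition 8.4]{Millet2} is fine.

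\textbf{The gap.} The problem is your claim that each $\partial_U$ buys an extra $(\log\langle U\rangle)^{-1}$. On $\sH_-$, at fixed $\varphi_+$, one has $U\partial_U=-\kappa_+^{-1}(\partial_{{}^*t}+\Omega_+\partial_{{}^*\varphi})$ with $\Omega_+=a/(r_+^2+a^2)$, and
\[
\partial_U^n\bigl(\abs{U}^s g\bigr)=U^{-n}\abs{U}^{s}\prod_{j=0}^{n-1}\bigl(U\partial_U+s-j\bigr)g .
\]
So $\partial_U^n\phi^U_s$ contains $\abs{U}^{s-n}(U\partial_U)^k\phi^K_s$ for every $0\le k\le n$, with generically nonzero coefficients. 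For $s=0$ the $k=1$ coefficient is $\pm(n-1)!$. For $s\notin\{0,\dots,n-1\}$ (e.g.\ negative spin) even the $k=0$ coefficient $s(s-1)\cdots(s-n+1)$ survives.

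Only $\ft\partial_\ft$ gains a factor $\abs{\ft}^{-1}\sim(\log\abs{U})^{-1}$. The lower-order terms gain nothing. Neither does the angular part $\Omega_+\partial_{{}^*\varphi}$: axisymmetry makes $\partial_{{}^*\varphi}\phi_s$ another backward solution, which decays at the same rate $\abs{\ft}^{-3-\abs{s}-s}$, not faster. Your bookkeeping therefore only gives $\abs{\partial_U^n\phi^U_s}\lesssim\langle U\rangle^{s-n}(\log\langle U\rangle)^{-(3+\abs{s}+s)}$, with a log exponent that does not improve with $n$.

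\textbf{The fallback does not repair this.} Since $d>1$ is fixed before $n$, the room $3+\abs{s}+s-d$ absorbs the missing logarithms only for finitely many $n$; for $s=0$, only $n\le 1$. Sharper bookkeeping cannot help either. Take $a=0$, $s=0$, and suppose $\phi^K_0\vert_{\sH}\sim c\abs{\ft}^{-3}$ with $c\neq0$ (generic Price-law behaviour). The formula above then gives $\partial_U^n\phi^U_0\sim c_n\abs{U}^{-n}(\log\abs{U})^{-4}$ with $c_n\neq 0$. This violates the displayed bound for every $n\ge 3$ and every $d>1$. The paper's own proof is exactly this reduction followed by \cite[Proposition 8.4]{Millet2}, so it faces the same limitation.

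\textbf{What is actually provable.} The argument does yield the bound with $(\log\langle U\rangle)^{-d}$ in place of $(\log\langle U\rangle)^{-d-n}$. The per-derivative gain holds for $\ft\partial_\ft$, i.e.\ for $\log\abs{U}\,U\partial_U$ at fixed ${}^*\varphi$, not for $\partial_U$. That weaker bound is all that is used later: integrability of $\sigma_{\sH,s}$, well-definedness of $w^+_\sH$, the boundary terms in the positivity proof, and the horizon estimates in the Hadamard section.

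\textbf{Minor point on the support claim.} Justify it by monotonicity of $U$ along future-directed causal curves, which gives $J^-(\supp f)\subset\{U\le\max_{\supp f}U\}$. The half $\{V=0,\,U>0\}$ of $\sH$ bounds $\MII$ and $\MI'$, not $\MII'$. It does meet $J^-(\supp f)$ when $\supp f$ meets $\MII$, so your stated reason is wrong even though the conclusion holds.
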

\begin{proof}
The existence of $U(\phi_s)$ follows from the finite speed of propagation. $\phi_s^{U}$ is smooth by the general theory of hyperbolic equations. The only issue is to show decay for $U\rightarrow -\infty$ and we can work in Kinnersley's tetrad and use $(\ft,r,\varphi_+,\theta)$ coordinates. We have $\phi_{s}^{U}=\abs{U}^{s}\phi_{s}^K$. We then have to show decay estimates on $\phi_s^K$. Again, we use the flip $\psi$, which sends $\phi_s^K$ to $2^s\zeta^{2s}\phi_{-s}^M$, forward solution of  \eqref{6.3}. The result then follows directly from \cite[Proposition 8.4]{Millet2}. 
\qeds
\end{proof}
\begin{definition}
    For $s\in\{0,1,2\}$, we define
    \begin{enumerate}
    \item $\cS_s(\sH):=\left\{ \phi=(\phi_s,\overline{\phi_{-s}}) \in \Gamma(\cV_{s}\vert_{\sH}): \exists d>1, U(\phi)\in\rr \text{ s.t. } 1_{U>U(\phi)}\phi=0\, \right.$ and \\
    $\hphantom{\cS_s}\left. \forall n\in\nn\:\exists C_n(\phi)>0: \sup\limits_{\omega\in \ss^2}\abs{\partial_U^n\phi^{U}_{\pm s}\vert_{\sH}(U,\omega)}_{\omega}\leq C_n(\phi)\langle U\rangle^{\pm s-n}(\log\langle U\rangle)^{-d-n}\right\}.$\\
Here $1_{x> x_0}$ is the characteristic function of  $(x_0,\infty)$.
\item The trace operator $T_\sH:\Sol_{s}(\cM)\to  \Gamma(\cV_{s}\vert_\sH)$ is defined by $\phi\mapsto \phi\vert_{\sH}$. 
\end{enumerate}
\end{definition}

It follows from part 4 of Lemma~\ref{prop:Green hyp props} applied to $\cP_s$ that every space-compact solution $\phi\in\Sol_s(\cM)$ is a difference of forward and backward solutions of an appropriate forcing problem and therefore that it and its restrictions lie in the above spaces. 

\section{Conservation of the symplectic form}
\label{sec:sympl form}
We have seen in Lemma \ref{lemma:cons J} that $\nabla_aJ^a[f,h]=0$ for $f,h\in C^\infty(\cM;\cV_s)$ satisfying $\cP_s f=\cP_s h=0$. By Stokes' theorem, this gives the conservation of an associated charged symplectic form $\sigma$ on a family of Cauchy surfaces, see Definition~\ref{def:phase spaces}.
We would like to use the conservation of $J$ and Stokes' theorem to express $\sigma$ in terms of integrals over past null infinity and the past horizon. Because of the singularity at $i^-$, this argument cannot be applied directly. The decay estimates of Section \ref{SecDecayest} are crucial in this context.    

\subsection{Conformal rescaling}
Let $\phi_s\in \Gamma(\cB(s,s))$ satisfy $T_s\phi_s=0$.
 The conformal rescaling induces the maps $\breve\cdot: \fN_0\to  \breve{\fN}_0$, $(l,n,m)\mapsto (\breve l, \breve n, \breve m)= (l, x^{-2}n, x^{-1}m)$ that we have introduced before, as well as the map $\breve\cdot:  \Gamma( \cV_{s}\vert_{ \MI})\to \Gamma(\breve{\cV}_{s}\vert_{\breve{\mathrm{M}}_{\mathrm{I}}})$, $[(l,n,m), \phi]\mapsto[(\breve l,\breve n, \breve m), \breve\phi]=[(\breve l,\breve n, \breve m), x^{-1}\phi]$.
Then, by Lemma~\ref{lem:rescaling} and the fact that $R=0$ on Kerr, we have
\begin{align}
\left[\breve{T}_s+\frac{1}{6}\breve{R}\right]\breve{\phi}_s=0\, ,
\end{align}
where we recall that $\breve{T}_s$ is the Teukolsky operator of the conformally rescaled metric, see \eqref{eq:T_s conf rescaled}.

Let now $\phi$, $\psi\in \Sol_{s}(\cM)$, and let $\breve{J}^a$ be the current given by 
\begin{align*}
\breve{J}_a[\breve{\phi},\breve{\psi}]&=\IP{\breve{\phi}, \breve{\cD}_{s,a}\breve{\psi}}-\IP{\breve{\cD}_{s,a}\breve{\phi},\breve \psi}\, \\
\breve{\cD}_{s,a}&=(\breve{\Theta}_a+2s\breve{B}_a)\oplus \overline{(\breve{\Theta}_a-2s\breve{B}_a)}=(\breve{\nabla}_a+2s\Gamma_a)\oplus(\breve{\nabla}_a-2s\overline{\Gamma}_a)\, .
\end{align*}
Then one can easily see by direct computation that 
\begin{align}
\label{eq:sigma conf covar}
\breve{J}_a[\breve\phi,\breve\psi]=x^{-2}J_a[\phi,\psi]\, .
\end{align}

Moreover, using exactly the same calculation as in the proof of Lemma~\ref{lemma:cons J}, but putting breves on all quantities, one obtains 
\begin{lemma}
\label{lemma:sigma conf inv}
$\breve{J}_a[\breve\phi,\breve\psi]$ is conserved in the conformally rescaled spacetime, i.e. 
\begin{align}
\breve{\nabla}_a\breve{g}^{ab}\breve{J}_b[\breve{\phi},\breve{\psi}]=0\,.
\end{align}
\end{lemma}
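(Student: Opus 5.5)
The plan is to repeat the computation of Lemma~\ref{lemma:cons J} verbatim in the rescaled spacetime $(\breve{\rm M}_{\rm I},\breve g)$, using the rescaled field equation in place of $\cP_s\phi=0$. Write $\breve\cP_s:=\breve g^{ab}\breve\cD_{s,a}\breve\cD_{s,b}-4s^2\breve\Psi_2\oplus\overline{(\cdot)}$. By \eqref{eq: Teuk Bini form} applied to $\breve g$, this is $\breve T_s\oplus\overline{\breve T_{-s}}$, with $\breve\Gamma_a=\Gamma_a$. By Lemma~\ref{lem:rescaling} with $R=0$, both components of $\breve\phi$ and $\breve\psi$ satisfy $(\breve T_{\pm s}+\tfrac16\breve R)(\cdot)=0$. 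For the $-s$ component we use the conjugated version: $\breve R$ is real, so conjugation only turns $\breve T_{-s}$ into $\overline{\breve T_{-s}}$. In short, $(\breve\cP_s+\tfrac16\breve R)\breve\phi=0$, and likewise for $\breve\psi$.

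The first step is to check that the fibre pairing $\IP{\cdot,\cdot}_x$ is parallel for $\breve\nabla$ combined with $\breve\cD_{s,a}$. Concretely, we need
\[
\breve\nabla_a\IP{F,H}=\IP{\breve\cD_{s,a}F,H}+\IP{F,\breve\cD_{s,a}H}
\]
up to the $\breve\nabla_a$ versus $\breve\cD_{s,a}$ exchange used in Lemma~\ref{lemma:cons J}. This reduces to the observation that the extra potential terms $2s\Gamma_a$ on the $\cB(s,s)$ slot and $-2s\overline{\Gamma}_a$ on the $\cB(s,-s)$ slot cancel in the pairing. The reason is that the pairing contracts the $s$ component of one field with the conjugate of the $-s$ component of the other, and $\overline{\IP{F,\cdot}}$ produces $\overline{\Gamma}_a$ with the opposite sign. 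This is the same algebra that underlies the unbreved lemma, so it transfers as is.

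With this in hand, differentiating $\breve J^a$ gives
\[
\breve\nabla_a\breve J^a=\IP{\breve\phi,\breve g^{ab}\breve\cD_a\breve\cD_b\breve\psi}-\IP{\breve g^{ab}\breve\cD_a\breve\cD_b\breve\phi,\breve\psi}=\IP{\breve\phi,\breve\cP_s\breve\psi}-\IP{\breve\cP_s\breve\phi,\breve\psi}.
\]
The potential terms cancel for the following reason: $-4s^2\breve\Psi_2$ on the $s$ slot pairs against $\overline{-4s^2\breve\Psi_2}$ conjugated. Substituting the field equations turns the right-hand side into $-\tfrac16\breve R\,\IP{\breve\phi,\breve\psi}+\tfrac16\breve R\,\IP{\breve\phi,\breve\psi}=0$. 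Here we use that $\breve R$ is real, so sesquilinearity makes the two terms cancel. The only point needing care is exactly this reality of $\breve R$, together with checking that the zeroth-order term in the conjugate operator is $\overline{\breve\Psi_2}$-compatible, as in Lemma~\ref{lemma: self dual}. That is the main, though mild, obstacle.

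As a consistency check, \eqref{eq:sigma conf covar} gives $\breve J^a=x^{-2}\breve g^{ab}J_b=x^{-4}J^a$. Since $\sqrt{|\breve g|}=x^4\sqrt{|g|}$, this means $\breve\nabla_a\breve J^a=x^{-4}\nabla_aJ^a$, which vanishes by Lemma~\ref{lemma:cons J} in the interior $x>0$. Smoothness up to $\sI_-$ then extends the identity by continuity. This gives an alternative route to the statement.
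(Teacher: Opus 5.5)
Your proposal is correct and follows the paper's proof, which simply reruns the computation of Lemma~\ref{lemma:cons J} with breves on all quantities. You also supply the step that one-line proof leaves implicit: the rescaled equations $(\breve T_{\pm s}+\tfrac{1}{6}\breve R)(\cdot)=0$ from Lemma~\ref{lem:rescaling} still give cancellation because $\breve R$ is real. Your alternative check is also valid: via \eqref{eq:sigma conf covar}, $\breve J^a=x^{-4}J^a$ and $\sqrt{|\breve g|}=x^4\sqrt{|g|}$, it avoids Lemma~\ref{lem:rescaling} entirely, at the cost of needing $\breve J$ to be $C^1$ up to $\sI_-$ to extend by continuity.
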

In addition, as a direct consequence of the above and Stokes' theorem, we obtain
\begin{lemma}
Let $\Sigma_t$ be a Cauchy surface of $\breve{\rm M}_{\rm I}$. In an abuse of notation, we will use the same name for the Cauchy surface of $\MI$ that can be identified with (the interior of) $\Sigma_t$ via the embedding of $\MI$ into $\breve{\rm M}_{\rm I}$. Let $\phi,\psi\in \Sol_{s}(\MI)$. Let $\breve{v}^a$ be a future directed normal vector to $\Sigma_t$, and $\breve{w}^a$ transverse to $\Sigma_t$ such that $\breve{g}(\breve{v},\breve{w})=1$. Assume that $\breve{v}^a=x^{c_1}v^a$, $\breve{w}^a=x^{c_2}w^a$ so that $c_1+c_2=-2$, or in other words $g(v,w)=1$. Then 
\begin{align}
\int\limits_{\Sigma_t}\breve{J}_a[\breve\phi,\breve\psi]\breve{v}^a(\breve{w}\lrcorner \dVol_{\breve g})=\int\limits_{\Sigma_t}J_a[\phi,\psi]v^a(w\lrcorner\dVol_g)
\end{align}
 and it is independent of the choice of Cauchy surface $\Sigma_t$. 
\end{lemma}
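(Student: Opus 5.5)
The statement splits into two claims: the conformal identity between the two integrands, and the independence from the choice of Cauchy surface. I will prove the identity pointwise on $\Sigma_t\cap\MI$ and then obtain the independence from the conservation laws already available.

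\textbf{Step 1: pointwise comparison of the integrands.} By \eqref{eq:sigma conf covar} we have $\breve J_a[\breve\phi,\breve\psi]=x^{-2}J_a[\phi,\psi]$. Here indices on $J$ are lowered, so no metric factor enters when contracting with a vector. Hence
\begin{align*}
\breve J_a\breve v^a = x^{-2}x^{c_1}J_a v^a .
\end{align*}
For the volume form, $\breve g=x^2g$ in four dimensions gives $\dVol_{\breve g}=x^4\dVol_g$. Therefore
\begin{align*}
\breve w\lrcorner\dVol_{\breve g}=x^{c_2+4}\,w\lrcorner \dVol_g .
\end{align*}
Multiplying, the total power of $x$ is $-2+c_1+c_2+4=0$ when $c_1+c_2=-2$, so the two $3$-forms agree pointwise on the interior.

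It remains to check that the normalizations match. We have $\breve g(\breve v,\breve w)=x^{2+c_1+c_2}g(v,w)$, so $\breve g(\breve v,\breve w)=1$ is equivalent to $g(v,w)=1$ exactly under the stated condition. Since $\{x=0\}\cap\Sigma_t$ has measure zero in $\Sigma_t$, the integral over $\Sigma_t\subset\breve{\rm M}_{\rm I}$ equals the integral over its interior part. This gives the displayed equality, provided both sides converge. I will check convergence using the smooth extendibility of $\breve\phi,\breve\psi$ to $x=0$, which holds for elements of $\Sol_s$ by Lemma~\ref{lem6.3}.

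\textbf{Step 2: independence of $\Sigma_t$.} Take two Cauchy surfaces $\Sigma_{t_1},\Sigma_{t_2}$ of $\breve{\rm M}_{\rm I}$. I will apply Stokes' formula from the preliminaries to $*\breve J$ on the region between them, using Lemma~\ref{lemma:sigma conf inv} ($\breve\nabla_a\breve J^a=0$). The space-compact support of $\phi,\psi$ together with finite propagation speed in the extended spacetime ensures the region is effectively bounded. The two boundary contributions then coincide, and the boundary orientation accounts for the sign. Alternatively, I can argue on the original spacetime, where the right-hand side is independent of the Cauchy surface by Lemma~\ref{lemma:cons J} and Proposition~\ref{prop:iso phase space}, and transfer this to the left-hand side via Step 1.

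The main obstacle is ensuring the domain in Stokes' theorem is genuinely compact. The breve fields and current must be smooth up to $\sI_-$ there, which uses Lemma~\ref{lem6.3}. Moreover, the relevant region between the two slices must stay away from $i^-$, so that no flux leaks through $i^-$; the space-compact support of $\phi,\psi$ is what I will use for this.
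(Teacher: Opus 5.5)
Your proposal is correct and follows the paper's own one-line argument. The pointwise identity is exactly \eqref{eq:sigma conf covar} combined with $\dVol_{\breve g}=x^4\dVol_g$ and the weights $x^{c_1},x^{c_2}$ (equivalently, $*\breve J=*J$ as $3$-forms on $\MI$), and independence of $\Sigma_t$ is conservation plus Stokes.

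One remark: the interior of $\Sigma_t$ is a Cauchy surface of $\MI$ and $\phi,\psi$ are space-compact there, so the integrands, and the region between two such slices, meet the supports in compact subsets of $\MI$ away from $x=0$. Hence the appeal to Lemma~\ref{lem6.3} and the concerns about $\sI_-$ and $i^-$ are unnecessary. Indeed, if two slices met $\sI_-$ in different cuts on the support of the fields, the region between them would have an extra boundary piece on $\sI_-$ carrying generally nonzero flux (cf.\ Theorem~\ref{theorem:cons sympl form}), so your first Stokes variant would not give coinciding boundary contributions.
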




\subsection{Symplectic form on $\sH$}
We have 
  \begin{proposition}
    For $\phi,\, \psi\in \cS_s(\sH)$,
    \begin{align}
    \label{eq:def sigma H}
\sigma_{\sH,s}(\phi, \psi)&:=(-1)^s\int\limits_{\sH} *J_a[\phi,\psi]=(-1)^{s}r_+^{-2}\int\limits_\sH*\breve{J}_a[\breve\phi,\breve\psi]\nonumber\\
&=2(-1)^{s}\int\limits_{\mathclap{\rr_U\times \ss^2}}\IP{\phi,\Theta_U \psi}(r_+^2+a^2)\d U\d^2\omega_+\, ,
\end{align}
is a well-defined charged symplectic form. Here, $\d^2\omega_+$ is the standard volume element of $\ss^2_{\theta,\varphi_+}$, and $\IP{\cdot, \cdot}$ is the fibre metric of $\cV_s$ defined in \eqref{eq:cV fibre metric}. $\cS_s(\sH)$ therefore has the structure of a charged symplectic space. If the integral on the right-hand side of \eqref{eq:def sigma H} is taken over some $\cO\subset \sH$, we refer to that expression as $\sigma_{\cO,s}$.
\end{proposition}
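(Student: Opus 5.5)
To prove the proposition, I would first reduce the flux of $*J$ through $\sH$ to the explicit integral in \eqref{eq:def sigma H}, and then check that this integral is finite and has the algebraic properties of a charged symplectic form.

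\textbf{Step 1: reducing the flux integral.} Work in KBL coordinates with the Kruskal tetrad $(\mathfrak{l},\mathfrak{n},\mathfrak{m})$ of \eqref{eq:Kruskal tetrad}, which is smooth across $\sH=\{V=0\}$ including the bifurcation sphere.
\begin{itemize}
\item Apply the Stokes formula from the preliminaries to the null hypersurface $\sH$. Take the normal $n^a$ proportional to $\partial_U$ (tangent and normal to $\sH$) and a transverse vector $l^a$ proportional to $\partial_V$. Then $\int_\sH *J=\int_\sH J_a n^a\, l\lrcorner\dVol_g$.
\item Using \eqref{eq:g Kruskal} at $V=0$, compute $g_{UV}=-\tfrac12\widetilde g_2$. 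The metric is $g\vert_{V=0}$ with the $(\theta,\varphi_+)$ block $-\varrho_+^2\d\theta^2+g_{\varphi\varphi}\d\varphi_+^2$, whose area element on $\{V=0, U=\mathrm{const}\}$ is $(r_+^2+a^2)\sin\theta\,\d\theta\,\d\varphi_+$, since $\sigma^2=(r_+^2+a^2)^2$ at $r=r_+$.
\item With the normalisation $g(n,l)=1$, the factors of $\widetilde g_2$ cancel. One is left with $(r_+^2+a^2)\,J_U\,\d U\,\d^2\omega_+$, up to a fixed constant to be tracked carefully.
\end{itemize}

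\textbf{Step 2: simplifying $J_U$.} By definition, $J_U=\IP{\phi,\cD_{s,U}\psi}-\IP{\cD_{s,U}\phi,\psi}$.
\begin{itemize}
\item By \eqref{eq:nGamma Kruskal}, $\mathfrak n^a\Gamma_a$ vanishes on $\sH$. Hence $\cD_{s,U}$ reduces to $\Theta_U$ there, since $\partial_U\propto\mathfrak n$ on $V=0$ and the $\varphi_+$ component of $\mathfrak n$ vanishes at $V=0$.
\item Integrate by parts in $U$: $-\int\IP{\Theta_U\phi,\psi}\,\d U=\int\IP{\phi,\Theta_U\psi}\,\d U$ minus boundary terms. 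This uses compatibility of $\Theta$ with the fibre pairing of Lemma~\ref{lemma:dual 2}, which holds because that pairing is built from dual-bundle evaluations.
\item This produces the factor $2$ and the final form in \eqref{eq:def sigma H}.
\item The middle equality, with the breve current, follows from \eqref{eq:sigma conf covar}: at $r=r_+$ one has $x=r_+^{-1}$, and the volume and normal rescalings combine to give the factor $r_+^{-2}$.
\end{itemize}

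\textbf{Step 3 (main obstacle): convergence and vanishing of boundary terms.} Elements of $\cS_s(\sH)$ vanish for $U>U(\phi)$, so only $U\to-\infty$ needs control. There, write $\IP{\phi,\Theta_U\psi}=\overline{\phi_{-s}}\,\Theta_U\psi_s+\phi_s\,\overline{\Theta_U\psi_{-s}}$ in the smooth tetrad. The decay built into $\cS_s(\sH)$ then gives the key bound.
\begin{itemize}
\item For the first term: $\abs{\phi_{-s}}\lesssim\langle U\rangle^{-s}(\log\langle U\rangle)^{-d}$ and $\abs{\Theta_U\psi_s}\lesssim\langle U\rangle^{s-1}(\log\langle U\rangle)^{-d-1}$. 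The product is $\cO(\langle U\rangle^{-1}(\log\langle U\rangle)^{-2d-1})$, which is integrable since $d>1$. The second term behaves the same way.
\item The weights $\pm s$ cancel exactly in these pairings. This is the point of mixing spin $+s$ and $-s$, and the step I expect to need the most care.
\item The boundary term $\IP{\phi,\psi}$ at $U\to-\infty$ decays like $(\log\langle U\rangle)^{-2d}$, so it vanishes.
\item One must also check that the connection term from $\Theta_U$ (the $w_a$ potential in the Kruskal tetrad) is bounded. This holds because the tetrad is smooth on all of ${\rm M}_K$ and is invariant under $v_\sH$ up to the $U$-scaling.
\end{itemize}

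\textbf{Step 4: charged symplectic structure.} Sesquilinearity is clear from the form of the integral. For anti-hermiticity, the same integration by parts gives $\sigma_{\sH,s}(\phi,\psi)=-\overline{\sigma_{\sH,s}(\psi,\phi)}$, using that the pairing $\IP{\cdot,\cdot}$ is hermitian. For non-degeneracy, suppose $\sigma_{\sH,s}(\phi,\psi)=0$ for all $\psi$. Choose $\psi=[a,(\chi,0)]$ and $\psi=[a,(0,\chi)]$ with $\chi$ compactly supported on $\rr_U\times\ss^2$, which lie in $\cS_s(\sH)$. The resulting conditions force $\Theta_U\phi_{\pm s}$ to vanish in the distributional sense. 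Hence each component of $\phi$ is constant in $U$, and the decay as $U\to-\infty$ then gives $\phi=0$.
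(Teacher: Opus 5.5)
Your overall route is the paper's: Kruskal tetrad and KBL coordinates, $\partial_U$ as null normal with a transverse vector normalised against it so that the flux density becomes $(r_+^2+a^2)J_U\,\d U\,\d^2\omega_+$, vanishing of $\mathfrak{n}^a\Gamma_a$ on $\sH$, exact cancellation of the weights $\langle U\rangle^{\pm s}$, integration by parts in $U$, and \eqref{eq:sigma conf covar} for the middle expression.

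\textbf{The gap in Step 3.} The justification of your key estimate is wrong as stated. You say it suffices that the zeroth-order (connection) part of $\Theta_U$ in the Kruskal trivialization be bounded. It does not suffice. After the weights cancel, an undifferentiated product such as $\phi_{-s}\psi_s$ is only bounded by $(\log\langle U\rangle)^{-d-d'}$, which is not integrable on $(-\infty,U_1)$. Integrability comes entirely from the extra factor $\langle U\rangle^{-1}$ that a genuine $\partial_U$ produces; note that the estimates defining $\cS_s(\sH)$ are stated for the coordinate derivatives $\partial_U^n\phi^U_{\pm s}$. So if $\Theta_U\psi_s$ contained a bounded but nonzero term $c\,\psi_s$, your bound $\abs{\Theta_U\psi_s}\lesssim\langle U\rangle^{s-1}(\log\langle U\rangle)^{-d-1}$ would no longer follow, and the integral would not be controlled.

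What is needed, and true, is that this term vanishes on $\sH$. Since $\Gamma_a=B_a-w_a$ and $\mathfrak{n}^aB_a=\tau\,\mathfrak{n}^a\overline{\mathfrak{m}}_a-\rho\,\mathfrak{n}^a\mathfrak{n}_a\equiv0$, one has $\mathfrak{n}^aw_a=-\mathfrak{n}^a\Gamma_a$, which vanishes at $V=0$ by \eqref{eq:nGamma Kruskal}. Hence $\cD_{s,U}$, $\Theta_U$ and the bare $\partial_U$ coincide on $\sH$ in the Kruskal scaling. This is exactly why the paper's explicit integrand contains only $\partial_U$.

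It also corrects the inference in your Step 2. The identity $\mathfrak{n}^a\cD_{s,a}=\mathfrak{n}^a\Theta_a$ holds identically because $\mathfrak{n}^aB_a\equiv0$, not because $\mathfrak{n}^a\Gamma_a=0$. The vanishing of $\mathfrak{n}^a\Gamma_a$ is what removes the connection term.

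\textbf{Smaller points.}
\begin{itemize}
\item Your non-degeneracy argument is an addition; the paper only checks convergence and anti-hermiticity. Once $\phi_{\pm s}$ is known to be $U$-independent, you must conclude $\phi=0$ from the support condition $1_{U>U(\phi)}\phi=0$, not from decay. For $s\ge1$ the bound $\langle U\rangle^{s}(\log\langle U\rangle)^{-d}$ allows nonzero constants in $\phi_s$.
\item The pairing reads $\IP{\phi,\Theta_U\psi}=\overline{\phi_s}\,\overline{\Theta_U\psi_{-s}}+\phi_{-s}\,\Theta_U\psi_s$. Your conjugations are misplaced, though this is harmless for the absolute-value bounds.
\item The factor $r_+^{-2}$ comes only from $\breve J_a=x^{-2}J_a$ at $x=r_+^{-1}$, with the same Hodge dual on both sides. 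If you also rescaled the volume form and the normal and transverse vectors as in the lemma following Lemma~\ref{lemma:sigma conf inv}, all powers of $x$ would cancel and no $r_+^{-2}$ would appear. So that rescaling is not the mechanism.
\end{itemize}
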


\begin{proof}
 As we consider the horizon $\sH$, we will work in the Kruskal tetrad \eqref{eq:Kruskal tetrad} which smoothly extends through the bifurcation sphere.  Let us work in the KBL-coordinates introduced in Section~\ref{subsec:Kruskal ext}. We can choose $\partial_U$ as the normal vector to $\sH$. Considering the metric in \eqref{eq:g Kruskal}, we then find that for the appropriately chosen vector $w$ such that $w$ is future directed, null, transversal to $\sH$ and $g(\partial_U,w)=1$, one obtains (with $\mathfrak{i}$ the injection $\rr_U\times\ss^2_{\theta,\varphi_+}\to M_K$)
\begin{align}
\mathfrak{i}^*(w^a\lrcorner \dVol_g)=(r_+^2+a^2)\d U\wedge \d^2\omega_+\, .
\end{align}
Taking into account that $\partial_U$ is proportional to $\mathfrak{n}^a$ on the horizon, with a proportionality factor that is bounded with bounded inverse, and that $\mathfrak{n}^a\Gamma_a$ in the Kruskal tetrad vanishes on $\sH$, we obtain for $\phi$, $\psi\in \cS_{s}(\sH)$
\begin{align}
\label{eq:current horizon}
&\int\limits_{\sH} *J_a[\phi,\psi]\\\nonumber
&=\int\limits_{\sH}\left[\overline{\phi_s^{U}} \partial_{U}\overline{\psi_{-s}^{U}}+\phi_{-s}^{U}\partial_{U}\psi_s^{U}-\overline{\partial_U\phi_s^{U}}\overline{\psi_{-s}^{U}}-\partial_U\phi_{-s}^{U}\psi_s^{U}\right](r_+^2+a^2)\d U\d^2 \omega_+\, .
\end{align}
Here, $\sH$ is identified with $\rr_U\times\ss^2_{\theta,\varphi_+}$. From this form, it is apparent by partial integration that the form is indeed anti-hermitian. Let now $U_0=\max\{U(\phi_s),U(\psi_s)\}$. Then
\begin{align*}
\abs{\,\int\limits_{\sH}*J_a[\phi,\psi]}=\abs{\hphantom{-\infty}\int\limits_{\mathclap{(-\infty,U_0]\times \ss^2}}*J_a[\phi,\psi]}\, .
\end{align*}
Let $U_1<\min(-2,U_0)$. We then estimate
\begin{align*}
\abs{\hphantom{-\infty}\int\limits_{\mathclap{(-\infty,U_0]\times \ss^2}}*J_a[\phi,\psi]}\leq\abs{\hphantom{-\infty}\int\limits_{\mathclap{(-\infty,U_1)\times \ss^2}}*J_a[\phi,\psi]}+\abs{\hphantom{123}\int\limits_{\mathclap{[U_1,U_0]\times \ss^2}}*J_a[\phi,\psi]}\, .
\end{align*}
The second integral is finite because $\phi$ and $\psi$ are smooth. By the definition of $\cS_s(\sH)$, we have 
\begin{align*}
    \abs{\overline{\phi_s^{U}} \partial_{U}\overline{\psi_{-s}^{U}}(U,\omega_+)}\leq c_1 \abs{U}^s\abs{\log\abs{U}}^{-d}\abs{U}^{-s-1}\abs{\log\abs{U}}^{-\widetilde d}=c_1\abs{U}^{-1}\abs{\log\abs{U}}^{-(d+\widetilde d)}
\end{align*}
on $\{U<U_1\}\cap\sH$ for some constants $c_1$ depending on $\phi$ and $\psi$, and $d>1, \widetilde{d}>1$.
Similar estimates can be obtained for the other terms.
We then find
\begin{align*}
&\abs{\hphantom{-\infty}\int\limits_{\mathclap{(-\infty,U_1)\times \ss^2}}*J_a[\phi,\psi]}
\leq C \int\limits_{\mathclap{(-\infty,U_1)}}\abs{U}^{-1}\abs{\log\abs{U}}^{-d-\widetilde d}\d U<\infty\, ,
\end{align*}
where $C$ is a positive constant that depends on $\phi$ and $\psi$. The first equality in \eqref{eq:def sigma H} follows from \eqref{eq:sigma conf covar}
, the second is obtained by integration by parts. 
\qeds
\end{proof}

\subsection{Symplectic form on $\Sigma_{\ft_0}$}
In the coordinates $(\ft,x,\theta,\varphi_*)$ on $\breve{\mathrm{M}}_\mathrm{I}$, let $\Sigma_{\ft_0}$ be the hypersurface $\{\ft=\ft_0\}$.
\begin{proposition}
\label{prop5.6}
For $\phi,\psi\in \cS_s(\cM)$ 
\begin{align}
\sigma_{\Sigma_{\ft_0}}(\breve\phi,\breve\psi):=(-1)^s\int\limits_{\Sigma_{\ft_0}}*\breve{J}_a[\breve{\phi},\breve{\psi}]
\end{align}
is  well-defined. In addition, we have 
\begin{align}
\sigma_{\Sigma_{\ft_0}}(\breve\phi,\breve\psi)\rightarrow 0,\quad \ft_0\rightarrow-\infty. 
\end{align}
\end{proposition}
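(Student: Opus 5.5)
Write the flux through the slice $\Sigma_{\ft_0}=\{\ft=\ft_0\}$ in the coordinates $(\ft,x,\theta,\varphi_*)$. With $\breve v$ the conormal direction $\d\ft$ and a transversal vector $\breve w$, this gives
\begin{align*}
\sigma_{\Sigma_{\ft_0}}(\breve\phi,\breve\psi)=(-1)^s\int\limits_{\Sigma_{\ft_0}}\breve J_a[\breve\phi,\breve\psi]\,\breve g^{ab}(\d\ft)_b\; \mu\, \d x\, \d^2\omega\, ,
\end{align*}
where $\mu$ is a smooth density on $\overline{\cM_\epsilon}$, bounded up to $x=0$ and up to $r=r_+-\epsilon$. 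Near $x=0$, formula \eqref{eq:inv conf metric} shows that $\breve g^{-1}(\d t^*,\cdot)$ involves only the smooth vector fields $\partial_{t^*}$, $\partial_x$ and $\partial_{\varphi^*}$. In the Kinnersley trivialization $\Gamma_a$ is bounded, and $\breve J$ is a finite sum of terms of the form $\overline{\breve\phi_{\pm s}}\,\breve\nabla\breve\psi_{\mp s}$ plus zero-order terms, paired through the fibre metric \eqref{eq:cV fibre metric}. Hence $\abs{\breve J(\d\ft)}$ is bounded pointwise by sums of products $\abs{\breve\phi_{\pm s}}\,(\abs{\partial_\ft\breve\psi_{\mp s}}+\abs{\partial_x\breve\psi_{\mp s}}+\abs{\Theta_\omega\breve\psi_{\mp s}}+\abs{\breve\psi_{\mp s}})$, and symmetrically with $\phi$ and $\psi$ exchanged. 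On the compact part $r\le 4M$ the slice is a smooth compact hypersurface of $\cM$ (in the ${}^*K$ part it crosses $\sH_+$ region smoothly), so finiteness there is trivial. Well-definedness therefore reduces to integrability as $x\to0$, which holds because $\breve\phi=x^{-1}\phi$ extends smoothly to $\sI_-$ by Lemma~\ref{lem6.3} (equivalently by Corollary~\ref{cor:4.1}). Thus the integrand is bounded on $[0,\frac1{4M}]\times\ss^2$.

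For the limit $\ft_0\to-\infty$, I would split $\Sigma_{\ft_0}$ into a compact-$r$ piece $\{x\ge x_1\}$ and a piece near $\sI_-$.

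On the compact-$r$ piece, $x$ is bounded below, so $\breve J=x^{-2}J$ is comparable to $J$, and the $b$- and scattering weights are harmless. The local energy decay of Theorem~\ref{theorem:7.1} with $r$ large gives $\Vert\phi_{\pm s}(\ft_0)\Vert_{H^1}\lesssim\langle\ft_0\rangle^{-\alpha}$ on compact $x$-sets, after Sobolev embedding in $\ft$ using the $(\ft\partial_\ft)^j$ control. By Cauchy--Schwarz, this piece is then $O(\langle\ft_0\rangle^{-2\alpha})\to0$. Alternatively, the pointwise bound of Theorem~\ref{th6.7}, $\abs{(\ft\partial_\ft)^j\phi}\lesssim\abs{\ft}^{-3-\abs{s}-s}$, applied also to angular derivatives, can be used to handle the pieces of compact $x$-support directly.

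Near $\sI_-$, I would use Lemma~\ref{lem6.3}. Writing $\breve\phi_{\pm s}=x^{-1}\phi_{\pm s}=-\ft\,\phi^{rad}_{\pm s}+x^{-1}R_{\pm s}$ with $v=-x\ft$, the radiation field $\phi^{rad}$ decays like $\tau^{(3+s+\abs s)-}=\abs{\ft}^{-(3+s+\abs s)+}$. The remainder $R$ lies in $\bar H_b^{\infty,2-,(3+s+\abs s)-}$, so $x^{-1}R=O(v^{1-}\abs{\ft}^{-(3+s+\abs s)+}\abs{\ft})$. Because the spaces are $b$-Sobolev spaces, $b$-derivatives $v\partial_v$, $\tau\partial_\tau$ and the angular derivatives preserve these bounds. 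Since $\partial_x=-\ft\partial_v$, the $\partial_x$-derivative costs at most a factor $\abs{\ft}$, and $\partial_\ft$ gains $\abs\ft^{-1}$ up to terms of size $x\partial_v$. Putting these together, the fields satisfy $\breve\phi_{\pm s}=O(\abs{\ft}^{-2-s-\abs s+})$ and $\partial\breve\psi_{\mp s}=O(\abs{\ft}^{-2+s-\abs s+})$. The products of spin $+s$ and $-s$ components then give $O(\abs{\ft_0}^{-4-2\abs s+})\cdot\abs{\ft_0}$ uniformly on $x\le x_1$, where the last factor accounts for one loss from a $\partial_x$-derivative. The slice has bounded $x$-range, so this piece tends to $0$.

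I expect the main obstacle to be exactly this bookkeeping near the corner $i^-$: Lemma~\ref{lem6.3} gives control in the blown-up coordinates $(v,\tau)$, and the $b$-decay it provides must be translated into pointwise bounds on $\partial_x\breve\phi$ along a fixed slice $\ft=\ft_0$ without losing more than one power of $\abs{\ft_0}$, with the spin-weighted exponents $\pm s$ balancing correctly. If the direct count fails, my first fallback would be to integrate by parts in $x$ to move the $\partial_x$-derivative onto the other factor, which would leave boundary terms at $\sI_-$ controlled by Corollary~\ref{cor:4.1}.
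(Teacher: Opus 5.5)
Your compact-$r$ step is essentially the paper's, but the near-$\sI_-$ step has a gap. Lemma~\ref{lem6.3} is a statement on the patch $\cN=[0,1)_v\times[0,1)_\tau$ only, i.e.\ on $\{x<1/\abs{\ft}\}$. On $\Sigma_{\ft_0}$, the strip $1/\abs{\ft_0}\le x\le x_1$ corresponds to $1\le v\le x_1\abs{\ft_0}$. This strip lies outside that patch and fills out $(0,x_1]$ as $\ft_0\to-\infty$.

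On that strip nothing you cite gives the claimed uniform pointwise bound on $\partial_x\breve\psi_{\mp s}$, nor on angular derivatives. Lemma~\ref{lem6.3} does not apply there. Theorem~\ref{th6.7}, as stated, controls only $(\ft\partial_\ft)^j$. Theorem~\ref{theorem:7.1} is an $L^2$ bound with the slow rate $\langle\ft\rangle^{-\alpha}$. Your fallback of integrating by parts in $x$ does not help, since it only moves $\partial_x$ onto the other factor, which is equally uncontrolled on the strip.

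Even inside $\{v<1\}$, Lemma~\ref{lem6.3} only gives $\partial_x\breve\psi_{s'}=x^{-1}v\partial_v\breve\psi_{s'}=O(v^{-\delta}\abs{\ft}^{-1-s'-\abs{s'}+\delta})$. So near $x=0$ the integrand is integrable rather than bounded. That suffices for well-definedness, but boundedness via smooth extension is not what the lemma gives for general elements of $\cS_s(\cM)$.

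The paper never needs pointwise control of the differentiated factor. It applies Cauchy--Schwarz with weight $x^{-2+\delta}$ on the undifferentiated factor, which Theorem~\ref{theorem:7.1} controls. It then proves $\int x^{-\delta}\abs{x\partial_x\breve\psi_s}^2\frac{\d x\,\d^2\omega^*}{x}\lesssim\abs{\ft_0}^{-3}$. This is done by writing $x\partial_x\breve\psi_s(\ft_0)=\int_{-\infty}^{\ft_0}\partial_\ft(x\partial_x\breve\psi_s)\,\d\ft$ and splitting at $x=-1/\ft$: one piece is handled in $(v,\tau)$ by Lemma~\ref{lem6.3}, the other by Theorem~\ref{th6.7}.

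The most economical repair of your argument is to place the cut at $x=1/\abs{\ft_0}$ rather than at a fixed $x_1$.
\begin{itemize}
\item On $\{1/\abs{\ft_0}\le x\le 1/(4M)\}$, write the $\partial_x$-term as $x^{-2}\,\overline{\phi_s}\,(x\partial_x-1)\overline{\psi_{-s}}\,\frac{\d x}{x}$ and apply Cauchy--Schwarz with Theorem~\ref{theorem:7.1}. Since $x^{-2}\le\abs{\ft_0}^{2\delta}x^{-2+2\delta}$ there, this piece is $O(\abs{\ft_0}^{2\delta}\langle\ft_0\rangle^{-2\alpha})\to0$ once $\delta<\alpha$.
\item The $\partial_\ft$, $\partial_{\varphi^*}$ and zeroth-order terms carry at least one extra power of $x$. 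They need no cut and follow from Theorem~\ref{theorem:7.1} directly, as in the paper.
\item On $\{x<1/\abs{\ft_0}\}=\{v<1\}$ your bookkeeping from Lemma~\ref{lem6.3} is legitimate. The $v^{-\delta}$ singularity is integrable over an $x$-interval of length $\abs{\ft_0}^{-1}$.
\end{itemize}
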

\begin{proof}
We start with the vector 
field $\breve{w}=\partial_{\ft}$ which is always transverse to $\Sigma_{\ft_0}$. Let $\breve{v}^a$ be a normal vector to $\Sigma_{\ft_0}$ such that $\breve{v}_a\breve{w}^a=1$. It will be useful to have an explicit expression in the region $r>4M$.   

For $r>4M$, the coordinates $(\ft,x,\theta,\varphi_*)$ agree with $(t^*,x,\theta,\varphi^*)$, and hence one must have $\breve{v}^a=\breve{g}^{-1}(\d t^*)$. From \eqref{eq:inv conf metric}, one finds explicitly 
\begin{equation*}
\breve{v}^a=:\varrho_x^{-2}\breve{v}_0^a= \varrho_x^{-2}(-a^2\sin^2\theta \partial_{t^*}+(1+a^2x^2)\partial_x-a\partial_{\varphi^*})\, .
\end{equation*}
Denoting $\omega^*=(\theta,\varphi^*)$, and $\d^2\omega^*=\sin\theta \d\theta \d\varphi^*$, we have $
\dVol_{\breve{g}}=\varrho_x^2 \d\ft\wedge \d x\wedge \d^2\omega^*$, and thus $\dVol_{\Sigma_{\ft_0}}=\breve{w}^a\lrcorner \dVol_{\breve{g}}=\varrho_x^2 \d x\wedge \d^2\omega^*$.
We then have 
\begin{align*}
\sigma_{\Sigma_{\ft_0}}(\breve\phi,\breve\psi)=&(-1)^s\int\limits_{\mathclap{\Sigma_{\ft_0}}}\breve{J}_a[\breve\phi,\breve\psi]\breve{v}^a(\breve{w}\lrcorner \dVol_{\breve g})=(-1)^s\int\limits_{\mathclap{\Sigma_{\ft_0}}}\breve{J}_a[\breve\phi,\breve\psi]\breve{v_0}^a \d x \d^2\omega^*\\
=&(-1)^s\int\limits_{\Sigma_{\ft_0}}\breve{v}_0^a\left[\overline{\breve{\phi}_s}(\breve{\nabla}_a-2s\overline{\Gamma}_a)\overline{\breve{\psi}_{-s}}+\breve{\phi}_{-s}(\breve{\nabla}_a+2s\Gamma_a)\breve{\psi}_s\right.\\
&\left.\hphantom{\int\limits_{\Sigma_{\ft}}\breve{v}_0^a}-\overline{\breve{\psi}_{-s}}(\breve{\nabla}_a+2s\overline{\Gamma}_a)\overline{\breve{\phi}_s}-\breve{\psi}_s(\breve{\nabla}_a-2s\Gamma_a)\breve{\phi}_{-s}\right]\d x \d^2\omega^*\, .
\end{align*}

A straightforward calculation using \eqref{eq:Kstar coord trafo}, \eqref{eq:g conf}, and \eqref{eq:Gamma} shows that
\begin{align}
\breve{v}_0^a\Gamma_a=x^2\left[\frac{\i a\cos\theta}{2}-\frac{M(1-a^2x^2)}{\Delta_x}\right]
\end{align}
in the Kinnersley tetrad.

Therefore, for $\breve{\phi},\, \breve{\psi}$ supported in $x\in[0,\frac{1}{4M}]$, the symplectic form $\sigma_{\Sigma_{\ft_0}}$ reduces to 
\begin{align}
\label{eq:sympl sigma t}
\sigma_{\Sigma_{\ft_0}}(\breve{\phi},\breve{\psi})=&(-1)^s\int\limits_{\Sigma_{\ft_0}}\left[(1+a^2x^2)\left(\overline{\breve{\phi}_s}\partial_x \overline{\breve{\psi}_{-s}}+ \breve{\phi}_{-s} \partial_x \breve{\psi}_s - \overline{\breve{\psi}_{-s}}\partial_x \overline{\breve{\phi}_s} -\breve{\psi}_s \partial_x \breve{\phi}_{-s} \right)\right.\nonumber\\
&-a^2\sin^2\theta \left(\overline{\breve{\phi}_s}\partial_{\ft}\overline{\breve{\psi}_{-s}}+\breve{\phi}_{-s}\partial_{\ft}\breve{\psi}_s-\overline{\breve{\psi}_{-s}}\partial_{\ft}\overline{\breve{\phi}_s}-\breve{\psi}_s\partial_{\ft}\breve{\phi}_{-s}\right)\\\nonumber
&-a\left(\overline{\breve{\phi}_s}\partial_{\varphi^*}\overline{\breve{\psi}_{-s}}+\breve{\phi}_{-s}\partial_{\varphi^*}\breve{\psi}_s-\overline{\breve{\psi}_{-s}}\partial_{\varphi^*}\overline{\breve{\phi}_s}-\breve{\psi}_s\partial_{\varphi^*}\breve{\phi}_{-s}\right)\\\nonumber
&+4sx^2\frac{M(1+a^2x^2)}{\Delta_x}\left(\overline{\breve{\phi}_{s}}\overline{\breve{\psi}_{-s}}-\breve{\phi}_{-s}\breve{\psi}_s\right)\\\nonumber
&\left.+4sx^2\frac{\i a\cos\theta}{2}\left(\overline{\breve{\phi}_{s}}\overline{\breve{\psi}_{-s}}+\breve{\phi}_{-s}\breve{\psi}_s\right)\right]\d x \d^2\omega^*
\end{align}
upon identifying $\Sigma_{\ft_0}=\left[0,r_+^{-1}\right]_x\times\ss^2_{\theta,\varphi^*}$. 

 To show that $\sigma_{\Sigma_{\ft_0}}$ is well-defined, we can cut the integral over $\Sigma_{\ft_0}$ into two parts, one part for $x\in [\frac{1}{4M}, r_+^{-1}]$ and the other part for $x\in[0,\frac{1}{4M}]$. The first part is clearly well-defined and goes to zero as $\ft_0\to -\infty$ by the properties of $\cS_s(\cM)$. For the second part, we can suppose that $\breve{\phi}$ and $\breve{\psi}$ are supported in $x\le \frac{1}{4M}$ and use the explicit formula \eqref{eq:sympl sigma t}. We start with the term in the first line of \eqref{eq:sympl sigma t}. We only consider the term 
\begin{align*}
\int\limits_{\Sigma_{\ft_0}}(\breve{\phi}_{-s}x\partial_x\breve{\psi}_s)\frac{\d x \d^2\omega^*}{x},
\end{align*}
the others can be treated in the same way. We note that by \eqref{equ:4.2}, $x\partial_x\breve{\psi}_{s}=v\partial_v\breve{\psi}_{s}=\widetilde{\psi}_s$
with 
\begin{align}
\label{6.9c}
\widetilde{\psi}_s\in \bar{H}^{\infty,1-,(2+s+\vert s\vert)-}_{b}. 
\end{align}
This allows us to write for $\ft_0\le -1$ 
\begin{align}
\widetilde{\psi}_{s}(\ft_0,x,\omega^*)=\int\limits^{\ft_0}_{-\infty}\partial_{\ft}\widetilde{\psi}_s(\ft,x,\omega^*)\d\ft\, ,
\end{align}
since for $x$ and $\omega^*$ fixed, $\abs{\widetilde{\psi}_s(\ft,x,\omega^*)}\to 0$ as $\ft\to -\infty$ by Theorem~\ref{th6.7}.
Thus, we can estimate
\begin{align}
\vert \widetilde{\psi}_{s}(\ft_0,x,\omega)\vert^2\lesssim \vert \ft_0\vert^{-\epsilon}\int\limits^{\ft_0}_{-\infty}\abs{\ft}^{-1+\epsilon}\vert \ft\partial_{\ft}\widetilde{\psi}_{s}(\ft,x,\omega)\vert^2\d\ft\, . 
\end{align}
We can then estimate for $\delta>0$
\begin{align*}
&\int\limits_{\ss^2}\int\limits_{0}^{\frac{1}{4M}} x^{-\delta}\vert \widetilde{\psi}_{s}(\ft_0,x,\omega^*)\vert^2\frac{\d x \d^2\omega^*}{x}\le C \vert\ft_0\vert^{-\epsilon}\int\limits_{\ss^2}\int\limits_{-\infty}^{\ft_0}\int\limits_0^{\frac{1}{4M}}\abs{\ft}^{-1+\epsilon}\abs{ \ft\partial_{\ft}\widetilde{\psi}_{s}(\ft,x,\omega^*)}^2\frac{\d x \d\ft\d^2\omega^*}{x^{1+\delta}}\\
&=C \vert\ft_0\vert^{-\epsilon}\int\limits_{\ss^2}\int\limits_{-\infty}^{\ft_0}\int\limits_0^{-\frac{1}{\ft}}\vert\ft\vert^{-1+\epsilon}\abs{ \ft\partial_{\ft}\widetilde{\psi}_{s}(\ft,x,\omega^*)}^2\frac{\d x \d\ft\d^2\omega^*}{x^{1+\delta}}\\
&+C \vert\ft_0\vert^{-\epsilon}\int\limits_{\ss^2}\int\limits_{-\infty}^{\ft_0}\int\limits_{-\frac{1}{\ft}}^{\frac{1}{4M}}\vert\ft\vert^{-1+\epsilon}\abs{ \ft\partial_{\ft}\widetilde{\psi}_{s}(\ft,x,\omega^*)}^2\frac{\d x \d\ft\d^2\omega^*}{x^{1+\delta}}\\
&=C(I_1+I_2). 
\end{align*}
Let us first estimate $I_2$. We have by Theorem~\ref{th6.7} 
\begin{align*}
I_2\lesssim \vert\ft_0\vert^{-\epsilon}\int\limits_{\ss^2}\int\limits_{-\infty}^{\ft_0}\int\limits_{-\frac{1}{\ft}}^{\frac{1}{4M}}\vert\ft\vert^{-1+\epsilon}\abs{\ft}^{-6}\frac{\d x \d\ft\d^2\omega^*}{x^{1+\delta}}\lesssim \vert\ft_0\vert^{-\epsilon}\int\limits_{\ss^2}\int\limits_{-\infty}^{\ft_0}\int\limits_{-\frac{1}{\ft}}^{\frac{1}{4M}}\abs{\ft}^{-6+\epsilon+\delta}\d x \d\ft\d^2\omega^*\lesssim \abs{\ft_0}^{-5+\delta}. 
\end{align*}
We now estimate $I_1$. We change coordinates by going from $x$ and $\ft$ to $v=-x\ft$ and $\tau=-\ft^{-1}$. Then
\begin{align*}
I_1= \vert\ft_0\vert^{-\epsilon}\int\limits_{\ss^2}\int\limits_0^1\int\limits_0^{-\frac{1}{\ft_0}}\tau^{-\epsilon-\delta}v^{-\delta}\vert (-\tau\partial_{\tau}+v\partial_v) \widetilde{\psi}_{s}\vert^2\frac{\d\tau \d v\d^2\omega^*}{v\tau}.
\end{align*}
By \eqref{6.9c}, we have $(-\tau\partial_{\tau}+v\partial_v)\widetilde{\psi}_{s}=\tau^{2+s+\vert s\vert-\delta}v^{1-\delta}\check{\psi}_s$ 
with $\check{\psi}_s\in \bar{H}_b^{\infty,0,0}$. As a result, we have
\begin{align*}
I_1\lesssim \vert{\ft_0}\vert^{-\epsilon}\int\limits_{\ss^2}\int\limits_0^1\int\limits_0^{-\frac{1}{\ft_0}}\tau^{4+2s+2\vert s\vert-3\delta-\epsilon}v^{2-3\delta}\vert \check{\psi}_s\vert^2\frac{\d\tau \d v\d^2\omega^*}{\tau v}\lesssim \abs{\ft_0}^{-3}
\end{align*}
for $\epsilon, \delta$ small enough. At the same time, we know by \eqref{th:7.1} 
that 
\begin{align*}
\int\limits_{\Sigma_{\ft_0}} x^{-2+\delta}\vert \phi_{-s}\vert^2\frac{\d x\d^2\omega^*}{x}\lesssim \langle \ft_0\rangle^{-2\alpha}.
\end{align*}
We then use the Cauchy-Schwarz inequality to obtain the desired estimate.

Next, we consider the terms in the third line of \eqref{eq:sympl sigma t}, which can be treated using \eqref{th:7.1}
 and the Cauchy Schwartz inequality, e.g.  
\begin{align*}
\left\vert\,\int\limits_{\Sigma_{\ft_0}}\overline{\breve{\phi}_s}\partial_{\varphi^*}\overline{\breve{\psi}_{-s}}\d x\d^2\omega^*\right\vert\le  \Vert\phi_s\Vert_{\bar{H}_b^{0,1/2}}\Vert \partial_{\varphi^*}\psi_{-s}\Vert_{\bar{H}_b^{0,1/2}}\xrightarrow{\ft_0\to -\infty} 0\, . 
\end{align*}
The same argument applies to the terms in line 2 and 4. The terms in line 2 have even better decay in $\ft$, the terms in line 4 have even better decay in $x$. 
\qeds
\end{proof}

\subsection{Symplectic form at null infinity}

We now consider $\sI_-$, in other words the surface $\{x=0\}\subset\breve{\mathrm{M}}_{\mathrm{I}}$ in the coordinates $(\ft=t^*,x,\theta,\varphi^*)$. 
\begin{proposition}
\label{prop:sigma I-}
    For $\breve{\phi},\breve{\psi}\in \cS_s(\sI_-)$, 
    \begin{align}
    \label{eq:def sigma I}
\sigma_{\sI,s}(\breve{\phi},\breve{\psi}):=(-1)^s\int\limits_{\sI_-}*\breve{J}[\breve{\phi},\breve{\psi}]=
2(-1)^{s}\int\limits_{\mathclap{\rr_{\ft}\times \ss^2}}\IP{\breve{\phi}, \breve{\Theta}_{\ft}\breve{\psi}}\d\ft \d^2\omega^*\, 
\end{align}
is a well-defined charged symplectic form. Here $\omega^*=(\theta,\varphi^*)$, and $\d^2\omega^*$ is the standard volume element of $\ss^2_{\omega^*}$. $\cS_s(\sI_-)$ therefore has the structure of a charged symplectic space. If the integral  on the right-hand side of \eqref{eq:def sigma I} is taken over some $\cO\subset \sI_-$, we refer to that as $\sigma_{\cO,s}$.
\end{proposition}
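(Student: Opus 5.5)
The plan mirrors the proof for $\sigma_{\sH,s}$: first compute $*\breve J$ on $\sI_-=\{x=0\}$ explicitly to get the second expression in \eqref{eq:def sigma I}, then show absolute convergence from the decay built into $\cS_s(\sI_-)$, and finally obtain anti-hermiticity by integrating by parts in $\ft$.

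\textbf{Step 1: the pulled-back current.} On $\sI_-$ we use the coordinates $(t^*,x,\theta,\varphi^*)$ and the conformally rescaled Kinnersley tetrad, for which $\breve l=2\partial_{t^*}$ and $\breve n=\tfrac12\partial_x$ at $x=0$.
\begin{itemize}
\item From \eqref{eq:inv conf metric} at $x=0$, the normal $\breve g^{-1}(\d x)$ equals $\varrho_x^{-2}\partial_{t^*}=\partial_{t^*}$, since $\varrho_x=1$ there. So $\sI_-$ is a null hypersurface with null normal $\breve l/2=\partial_{\ft}$.
\item Take $\breve w=\partial_x$, which is transverse. Then $\breve g(\partial_{t^*},\partial_x)=1$ by \eqref{eq:g conf}.
\item Since $\dVol_{\breve g}=\varrho_x^2\,\d t^*\wedge\d x\wedge\d^2\omega^*$, the pullback of $\partial_x\lrcorner\dVol_{\breve g}$ to $\{x=0\}$ is $\d\ft\wedge\d^2\omega^*$, up to an orientation sign fixed by the Stokes convention.
\end{itemize}
Hence $\int_{\sI_-}*\breve J=\int \breve J_a\,\partial_{\ft}^a\,\d\ft\,\d^2\omega^*$.

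In the tetrad, $\partial_{\ft}^a\breve\Gamma_a=\tfrac12\breve l^a\Gamma_a=\tfrac12 l^a\Gamma_a=\cO(x)$, which vanishes at $x=0$. Therefore the $\Gamma$-terms in $\breve\cD$ drop out and $\breve J_{\ft}=\IP{\breve\phi,\breve\Theta_{\ft}\breve\psi}-\IP{\breve\Theta_{\ft}\breve\phi,\breve\psi}$, which is a combination of $\partial_{\ft}$-derivatives of the components. Integrating the second term by parts in $\ft$ gives the factor $2$ in \eqref{eq:def sigma I}. The boundary terms vanish for two reasons: at $\ft\to+\infty$ by the support condition $1_{\ft>\ft_0}\breve\phi=0$, and at $-\infty$ because the product $\breve\phi_s\overline{\breve\psi_{-s}}$ decays.

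\textbf{Step 2: convergence.} This is the main point. As in the horizon case, the integral splits as $\int_{-\infty}^{\ft_1}+\int_{\ft_1}^{\ft_0}$ with $\ft_0=\max(\ft_0(\breve\phi),\ft_0(\breve\psi))$. The compact piece is finite by smoothness. On the noncompact piece, the bounds in the definition of $\cS_s(\sI_-)$ give
\[
\abs{\overline{\breve\phi_s}\,\partial_{\ft}\overline{\breve\psi_{-s}}}\lesssim \langle\ft\rangle^{-2-2s+}\langle\ft\rangle^{-2+s-s-1+}=\langle\ft\rangle^{-5-2s+}
\]
for $s\ge 0$, using $\abs{s}=s$. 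The mixed terms $\breve\phi_{-s}\partial_{\ft}\breve\psi_s$ are bounded in the same way by $\langle\ft\rangle^{-2}\langle\ft\rangle^{-3-2s+}$. All these are integrable in $\ft$, uniformly in $\omega^*$, and $\ss^2$ is compact.

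Here the weights are exactly complementary: spin $+s$ decays faster and spin $-s$ slower, and the pairing always mixes $+s$ with $-s$. This is the step to check carefully, since it relies on the fibre metric $\IP{\cdot,\cdot}$ pairing $\phi_s$ only with $\psi_{-s}$, as in \eqref{eq:cV fibre metric}. One must also check that the transition to the trivialization \eqref{eq:compl triv syst for killing} near the poles does not affect the bounds. It does not, because the phase factors $e^{\pm i\varphi}$ are $\ft$-independent and unimodular.

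\textbf{Step 3: anti-hermiticity.} From the symmetric form of $\breve J$, together with the hermiticity of $\IP{\cdot,\cdot}$ (Lemma~\ref{lemma:dual 2}), we get $\breve J[\breve\psi,\breve\phi]=-\overline{\breve J[\breve\phi,\breve\psi]}$. This makes $\sigma_{\sI,s}$ sesquilinear and anti-hermitian, and the restricted version $\sigma_{\cO,s}$ is defined by the same formula.
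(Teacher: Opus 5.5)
Your proposal is correct and follows the paper's proof in substance. You use the same null normal $\partial_{\ft}=\tfrac12\breve l$ and transverse $\partial_x$ (the paper uses $\breve l,\breve n$), the vanishing of $\breve l^a\Gamma_a$ at $x=0$, decay for absolute convergence, and one integration by parts for the factor $2$. You differ only in two places: you take the decay directly from the pointwise bounds defining $\cS_s(\sI_-)$ rather than from the b-Sobolev weights of \eqref{equ:4.2}, and you read off anti-hermiticity pointwise from $\breve J[\breve\psi,\breve\phi]=-\overline{\breve J[\breve\phi,\breve\psi]}$ instead of via partial integration.

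One small point: dropping the $\Gamma$-terms only leaves $\breve\nabla_{\ft}$, i.e.\ $\partial_{\ft}$ in the trivialization. To rewrite this as $\breve\Theta_{\ft}$, you also need that $\breve l^a\breve B_a$ (equivalently, given $\breve l^a\Gamma_a\to 0$, $\breve l^a\breve w_a$) vanishes on $\sI_-$. This is the fact the paper invokes for its final equality.
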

\begin{proof}
Recall that, in this limit, the conformally transformed Kinnersley tetrad behaves as $\breve{l}^a=2\partial_{\ft}$ and $\breve{n}^a=\tfrac{1}{2}\partial_x$, and that the metric determinant of the conformally transformed metric $\breve g$, restricted to $x=0$, is $-\sin^2\theta$. Therefore, a short calculation shows that $\breve{l}^a$ is a future directed normal vector of $\sI_-$, while $\breve{n}^a$ is transversal to it, and we have $\dVol_{\sI_-}=\breve{n}^a\lrcorner \dVol_{\breve{g}}=\frac{1}{2} \d\ft\wedge \d^2\omega^*$.
Moreover, we recall that $\breve{l}^a\Gamma_a$ vanishes linearly in $x$ as $x\to 0$.
We then find for the current 
\begin{align}
\label{eq:sympl scri}
\int\limits_{\sI_-}*\breve{J}_a[\breve\phi,\breve\psi]=&\int\limits_{\sI_-}\breve{l}^a\breve{J}_a[\breve{\phi},\breve{\psi}](\breve{n}^a\lrcorner\dVol_{\breve{g}})\\\nonumber
=&\int\limits_{\mathclap{\rr_{\ft}\times \ss^2}}(\overline{\breve{\phi}_{s}}\ft\partial_{\ft}\overline{\breve{\psi}_{-s}}+\breve{\phi}_{-s}\ft\partial_{\ft}\breve{\psi}_s-\overline{\breve{\psi}_{-s}}\ft\partial_{\ft}\overline{\breve{\phi}_{s}}-\breve{\psi}_s\ft\partial_{\ft}\breve{\phi}_{-s})\frac{\d\ft \d^2\omega^*}{\ft}\,.
\end{align}
It is apparent using partial integration that this form is indeed anti-hermitian.
We use \eqref{equ:4.2} to see that $\breve{\psi}_s\in \bar H_{\bop}^{\infty,(2-s+\vert s\vert)-}$, and similar for the other terms. 
To show the convergence of the integral let us consider, for example, the term 
\begin{align*}
\int\limits_{\mathclap{\rr_{\ft}\times \ss^2}}\breve{\phi}_{-s}\ft\partial_{\ft}\breve{\psi}_s\frac{\d\ft \d^2\omega^*}{\ft}=\int\limits_{\mathclap{(-\infty,0]_\tau\times \ss^2}}\breve{\phi}_{-s}\tau\partial_{\tau}\breve{\psi}_s\frac{\d\tau \d^2\omega^*}{\tau}. 
\end{align*}
Now, the fact that $\breve{\phi}_{-s}\in \bar H_{\bop}^{\infty,(2-s+\vert s\vert)-}$ and $\tau\partial_{\tau}\breve{\psi}_s\in \bar H_{\bop}^{\infty,(2+s+\vert s\vert)-}$ assures the convergence of the integral.

The decay also allows us to rewrite $\sigma_{\sI,s}(\breve{\phi},\breve{\psi})$ using partial integration as on $\sH$:
\begin{align*}
&\int\limits_{\sI_-}*\breve{J}_a[\breve\phi,\breve\psi]=2\int\limits_{\sI_-}\left[\overline{\breve\psi_s} \partial_{\ft} \overline{\breve\psi_{-s}} + \breve\phi_{-s} \partial_{\ft} \breve\psi_s\right] \d\ft \d^2\omega^*\\
=&2\int\limits_{\sI_-} \IP{\breve{\phi}, \breve{l}^a\breve{\cD}_{s,a}\breve{\psi}}(\breve{n}^a\lrcorner\dVol_{\breve g})=2\int\limits_{\sI_-}\IP{\breve{\phi}, \breve{l}^a\breve{\Theta}_{a}\breve{\psi}}(\breve{n}^a\lrcorner\dVol_{\breve g})\, .
\end{align*}
The last equality follows from the fact that $\breve{l}^a \breve{B}_a$ vanishes on $\sI_-$.
\qeds
\end{proof}

\subsection{Conservation of the symplectic form and symplectic boundary spaces}
\begin{theorem}
\label{theorem:cons sympl form}
 Let $\phi, \psi\in \Sol_{s}(\cM)$. We have 
\begin{align*}
\sigma_{\Sigma_{\ft_0}}(\breve{\phi},\breve{\psi})=\sigma_{\sH\cap\{\ft\le \ft_0\},s}(T_\sH\phi,T_\sH\psi)+\sigma_{\sI\cap\{\ft\le \ft_0\},s}(T_{\sI}{\phi},T_{\sI}{\psi})\, .
\end{align*}
\end{theorem}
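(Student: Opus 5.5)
We apply Stokes' theorem to the conserved current $\breve J$ of Lemma~\ref{lemma:sigma conf inv} on a truncated region of the conformal extension. We then remove the truncation using the decay results of Section~\ref{SecDecayest}, in the same spirit as Proposition~\ref{prop5.6}.

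\emph{Setup.} Fix $\ft_0$ and a second, much more negative time $\ft_1<\ft_0$. Let $\cR_{\ft_1,\ft_0}$ be the closed region of $\breve{\rm M}_{\rm I}$ bounded in the past by $\Sigma_{\ft_1}$ and in the future by $\Sigma_{\ft_0}$. This region is cut off on the inner side by the long horizon $\sH$, with $\ft\in[\ft_1,\ft_0]$. On the outer side it is cut off by $\sI_-\cap\{\ft_1\le\ft\le\ft_0\}$.

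Near $\sH$ we work in $\MIUII^{out}$, where $\ft={}^*t$. This is the region in which $\Sigma_{\ft}$ meets $\sH$ transversally, since $\sH$ corresponds to ${}^*t\to-\infty$ only at the bifurcation sphere. Concretely, in Kruskal coordinates $\{\ft=\ft_1\}$ meets $\sH=\{V=0\}$ at the finite value $U=-e^{-\kappa_+\ft_1}$. So $\cR_{\ft_1,\ft_0}$ is a compact region with piecewise smooth boundary, and its lateral pieces are exactly $\sH\cap\{\ft_1\le\ft\le\ft_0\}$ and $\sI_-\cap\{\ft_1\le\ft\le\ft_0\}$.

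\emph{Stokes on the compact region.} Since $\phi,\psi\in\Sol_s(\cM)$ are smooth up to $\sH$, and $\breve\phi,\breve\psi$ extend smoothly to $\sI_-$ by Corollary~\ref{cor:4.1}, the 3-form $*\breve J[\breve\phi,\breve\psi]$ is smooth on $\cR_{\ft_1,\ft_0}$ and closed by Lemma~\ref{lemma:sigma conf inv}. Stokes' theorem then gives
\[
\sigma_{\Sigma_{\ft_0}}(\breve\phi,\breve\psi)-\sigma_{\Sigma_{\ft_1}}(\breve\phi,\breve\psi)
=\sigma_{\sH\cap\{\ft_1\le\ft\le\ft_0\},s}(T_\sH\phi,T_\sH\psi)+\sigma_{\sI\cap\{\ft_1\le\ft\le\ft_0\},s}(T_\sI\phi,T_\sI\psi).
\]
Three points need checking here.
\begin{itemize}
\item On the horizon piece, $*\breve J=r_+^2\,*J$ by \eqref{eq:sigma conf covar}, so the integrand is the one in \eqref{eq:def sigma H}.
\item On $\sI_-$ the integrand is the one in \eqref{eq:sympl scri}.
\item The orientations must give the stated signs: both null boundary pieces lie to the past of $\Sigma_{\ft_0}$, and $\breve l$ resp. $\partial_U$ are future directed. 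The sign bookkeeping is the same as in part 3 of Proposition~\ref{prop:iso phase space}.
\end{itemize}
We also need the lateral integrands to be well defined, i.e. we need $T_\sH\phi\in\cS_s(\sH)$ and $T_\sI\phi\in\cS_s(\sI_-)$. This follows from part 4 of Proposition~\ref{prop:Green hyp props} together with Theorem~\ref{th:4.2} and Corollary~\ref{cor:4.1}, applied to the decomposition $\phi=E^-f-E^+f$. The forward part $E^+f$ vanishes near $\sH\cap\{\ft\le\ft_0\}$ and near $\sI_-\cap\{\ft\le\ft_0\}$ once $\ft_0$ is very negative. For general $\ft_0$ it is in any case smooth on the compact pieces.

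\emph{Limit $\ft_1\to-\infty$.} By Proposition~\ref{prop5.6}, $\sigma_{\Sigma_{\ft_1}}(\breve\phi,\breve\psi)\to0$. On the right-hand side, the truncated horizon and scri integrals converge to the integrals over $\sH\cap\{\ft\le\ft_0\}$ and $\sI\cap\{\ft\le\ft_0\}$ by dominated convergence. Absolute integrability was established in the proofs of well-definedness of $\sigma_{\sH,s}$ (the bound $|U|^{-1}|\log|U||^{-d-\tilde d}$) and of $\sigma_{\sI,s}$ (the $b$-Sobolev weights).

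\emph{Main obstacle.} The main obstacle is the first step: justifying Stokes at the corners where $\Sigma_{\ft_1}$ meets $\sH$ and $\sI_-$, and confirming that $\phi$ belongs to $\cS_s(\cM)$ so that Proposition~\ref{prop5.6} applies. This requires the first form of \eqref{eq:def sigma H} and \eqref{eq:def sigma I} (the current form, not the integrated-by-parts one), since the truncation produces boundary terms in the integrated form. I would first try to handle the corners by smoothing $\Sigma_{\ft_1}$ slightly near them, or simply by noting that all fields involved are smooth up to these corners.
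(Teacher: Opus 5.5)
Your proposal is correct and is essentially the paper's own proof: you apply the divergence theorem to the conserved current $\breve J$ on the slab between $\Sigma_{\ft_1}$ and $\Sigma_{\ft_0}$, then let $\ft_1\to-\infty$ using $\phi,\psi\in\cS_s(\cM)$ and Proposition~\ref{prop5.6}; your remarks on corners, orientations and dominated convergence only make explicit what the paper leaves implicit. Two small slips should be fixed: (i) along $\sH_-$ one has $U=-e^{-\kappa_+{}^*t}$, so the bifurcation sphere corresponds to ${}^*t\to+\infty$, not $-\infty$; (ii) the closed 3-form supplied by Lemma~\ref{lemma:sigma conf inv} is $\breve J^a\lrcorner\dVol_{\breve g}$, and since $\breve J^a=x^{-4}J^a$ and $\dVol_{\breve g}=x^4\dVol_g$ it equals $J^a\lrcorner\dVol_g=*J$ identically, so the horizon boundary term is directly $\int *J$ and no factor $r_+^2$ should enter.
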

\begin{proof}
By the divergence theorem we have for $\ft_1<\ft_0$
\begin{align*}
\sigma_{\Sigma_{\ft_0}}(\breve{\phi},\breve{\psi})=\sigma_{\sH\cap\{\ft_1\le \ft\le \ft_0\},s}(T_\sH\phi,T_\sH\psi)+\sigma_{\sI\cap\{\ft_1\le \ft\le \ft_0\},s}(T_\sI\phi,T_\sI\psi)+\sigma_{\Sigma_{\ft_1}}(\breve{\phi},\breve{\psi}).
\end{align*}
Noting that $\phi,\psi\in \cS_s(\cM)$, it is then sufficient to take the limit $\ft_1\rightarrow -\infty$ and to apply Proposition \ref{prop5.6}.
\qeds
\end{proof}
As a corollary of the above, we obtain
\begin{corollary}
\label{cor:sympl form}
   Let $\phi, \psi\in \Sol_{s}(\cM)$. Then  
   \begin{align*}
(\sigma_{\sH,s}\oplus \sigma_{\sI,s})((T_\sH\oplus T_\sI)\phi,(T_\sH\oplus T_\sI)\psi)=\sigma_s(\phi,\psi)\, ,
\end{align*}
where for $f,f'\in \cS_s(\sH)\, ,\,\,\breve{h},\breve{h}'\in \cS_s(\sI_-)$ the charged symplectic form $(\sigma_{\sH,s}\oplus \sigma_{\sI,s})$ is defined as
$(\sigma_{\sH,s}\oplus \sigma_{\sI,s})((f,\breve{h}),(f',\breve{h}')):=\sigma_{\sH,s}(f, f')+\sigma_{\sI,s}(\breve{h},\breve{h}')$.
Consequently, the map $(T_\sH\oplus T_\sI): \Sol_{s}(\cM)\to \cS_s(\sH)\oplus \cS_s(\sI_-)$ is an injective morphism of charged symplectic spaces that conserves the charged symplectic form.
\end{corollary}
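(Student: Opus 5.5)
To prove Corollary~\ref{cor:sympl form}, I would pass to the limit $\ft_0\to+\infty$ in Theorem~\ref{theorem:cons sympl form}. The left-hand side of that theorem must first be identified with $\sigma_s(\phi,\psi)$. For every $\ft_0$, the hypersurface $\Sigma_{\ft_0}$, completed inside $\cM$ by its part in $\MII$ (recall that for $r\le 3M$ the coordinate $\ft$ agrees with the star-Kerr time, which extends across $\sH_+$), is a Cauchy surface of $\cM$. By Lemma~\ref{lemma:sigma conf inv} and the lemma following it, the integral of $*\breve J$ over its conformal completion equals the integral of $*J$ over the physical surface. Proposition~\ref{prop:iso phase space}(2) states that $\sigma_s$ is independent of the Cauchy surface. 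For a spacelike or asymptotically null one this uses Stokes' theorem together with the support properties of $\phi,\psi\in\Gamma_{sc}$. However, $\Sigma_{\ft_0}$ reaches $\sI_-$, so a little care is needed. I would compare it with a standard spacelike Cauchy surface $\Sigma$ by applying the divergence theorem on the region between them in $\breve{\rm M}_{\rm I}$. The only additional boundary piece lies on $\sI_-$, and it vanishes because $\phi,\psi$ are space-compact, so $T_\sI\phi$ vanishes for $\ft$ large; alternatively one chooses $\Sigma$ so that the region meets $\sI_-$ only where the traces vanish. This yields $\sigma_{\Sigma_{\ft_0}}(\breve\phi,\breve\psi)=\sigma_s(\phi,\psi)$ for all $\ft_0$.

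Next, I would take $\ft_0$ large. The traces lie in $\cS_s(\sH)$ and $\cS_s(\sI_-)$, as noted after Theorem~\ref{th:4.2}, using Proposition~\ref{prop:Green hyp props}(4) and the decay results of Section~\ref{SecDecayest}. In particular, $T_\sH\phi$ vanishes for $U>U(\phi)$ and $T_\sI\phi$ vanishes for $\ft>\ft_0(\breve\phi)$. On $\sH$, the region $\{\ft\le\ft_0\}$ exhausts the horizon as $\ft_0\to\infty$, since $U=-e^{-\kappa_+{}^*t}$. Hence, for $\ft_0$ beyond both cutoffs, the truncated forms $\sigma_{\sH\cap\{\ft\le\ft_0\},s}$ and $\sigma_{\sI\cap\{\ft\le\ft_0\},s}$ coincide with the full forms $\sigma_{\sH,s}$ and $\sigma_{\sI,s}$. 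These full forms are absolutely convergent by the propositions defining them, so dominated convergence would suffice even without the cutoffs. This gives the stated identity.

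Finally, injectivity follows from the identity itself together with the non-degeneracy of $\sigma_s$ in Proposition~\ref{prop:iso phase space}. If $(T_\sH\oplus T_\sI)\phi=0$, then $\sigma_s(\psi,\phi)=0$ for all $\psi$, and therefore $\phi=0$. Linearity is clear, so the map is an injective morphism preserving the charged symplectic forms. I expect the main subtlety to be the first step, namely justifying that $\sigma_{\Sigma_{\ft_0}}$ equals $\sigma_s$ even though $\Sigma_{\ft_0}$ extends into the interior and to $\sI_-$. The decisive inputs there are compact spatial support and conformal invariance of the current.
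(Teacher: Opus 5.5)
There is a genuine gap in your first step. It comes from misidentifying which horizon the slices $\Sigma_{\ft_0}$ cross. For $r\le 3M$ one has $\ft={}^*t$, and the star-Kerr time is regular across the \emph{past} horizon $\sH_-\subset\MIUII^{out}$, not across $\sH_+$. Since $U=-e^{-\kappa_+{}^*t}$, the slice $\Sigma_{\ft_0}$ ends on the long horizon $\sH=\{V=0\}$ at $U(\ft_0)=-e^{-\kappa_+\ft_0}<0$. Its continuation to $r<r_+$ lies in the region $\{U<0,V<0\}$ of $\MK_K$, outside $\cM$.

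So $\Sigma_{\ft_0}$ cannot be completed to a Cauchy surface of $\cM$ by a piece in $\MII$. A Cauchy surface of $\cM$ is homologous not to $\Sigma_{\ft_0}$ alone but to $(\sH\cap\{U\ge U(\ft_0)\})\cup\Sigma_{\ft_0}\cup(\sI_-\cap\{\ft\ge\ft_0\})$. The horizon piece contains the bifurcation sphere and the past boundary $\{V=0,U>0\}$ of $\MII$. Its flux is nonzero in general, because traces of elements of $\Sol_s(\cM)$ are only known to vanish for $U>U(\phi)$, and $U(\phi)$ may be positive. Hence $\sigma_{\Sigma_{\ft_0}}(\breve\phi,\breve\psi)=\sigma_s(\phi,\psi)$ is false; the correct identity is
\[
\sigma_s(\phi,\psi)=\sigma_{\Sigma_{\ft_0}}(\breve\phi,\breve\psi)+\sigma_{\sH\cap\{U\ge U(\ft_0)\},s}(T_\sH\phi,T_\sH\psi)+\sigma_{\sI\cap\{\ft\ge\ft_0\},s}(T_\sI\phi,T_\sI\psi).
\]
Also, the $\sI_-$ term vanishes only for $\ft_0$ beyond the cutoff of the traces, not for all $\ft_0$.

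The same mistake reappears in your limiting step. By your own formula $U=-e^{-\kappa_+{}^*t}$, the sets $\sH\cap\{\ft\le\ft_0\}$ exhaust only $\sH\cap\{U<0\}$ as $\ft_0\to\infty$. So Theorem~\ref{theorem:cons sympl form} in the limit yields $\sigma_{\sH\cap\{U<0\},s}+\sigma_{\sI,s}$, not $\sigma_{\sH,s}+\sigma_{\sI,s}$. The two errors cancel in the final formula, which is why your conclusion is correct, but neither intermediate claim holds.

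The repair is what the paper does. First compute $\sigma_s$ on the surface displayed above. This is justified via an intermediate surface that reaches $i^0$ along a Boyer--Lindquist $t=$ const piece for $x\le\epsilon$, then Stokes' theorem in the conformal extension, space-compactness, and Propositions~\ref{prop5.6} and~\ref{prop:sigma I-}. Then add Theorem~\ref{theorem:cons sympl form} for one fixed $\ft_0$; it supplies exactly the complementary pieces $\sH\cap\{U\le U(\ft_0)\}$ and $\sI_-\cap\{\ft\le\ft_0\}$, so no limit in $\ft_0$ is needed. Your deduction of injectivity from the non-degeneracy of $\sigma_s$ is fine once the identity is established.
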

\begin{proof}
Since $\phi$ and $\psi$ are of space-compact support, the conservation of the current together with an application of Stokes' theorem allows us to compute $\sigma_s(\phi,\psi)$ on the Cauchy surface 
\begin{align*}
(\sH\cap\{U\ge U(\ft_0)\})\cup(\Sigma_{\ft_0}\cap\{x\ge \epsilon\})\cup \{t(x,\ft)=t(\epsilon,\ft_0),\, 0\leq x\le\epsilon\}\, 
\end{align*}  for some $0<\epsilon<(4M)^{-1}$,  see Figure~\ref{fig:Cauchy surf} for an illustration. Here, we assume $\ft_0<0$, $t(x,\ft)=\ft-r_*(x)$ is the Boyer-Lindquist coordinate, and we use the identification $U=-e^{-\kappa_+\ft}$, which is valid on $\sH_-$.
\begin{figure}
    \includegraphics[width=0.4\textwidth]{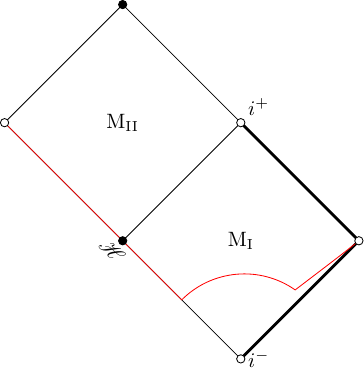}
    \caption{Illustration of the Cauchy surface used in the proof of Corollary~\ref{cor:sympl form} (red). The arc corresponds to the piece $\Sigma_{\ft_0}\cap\{x\ge \epsilon\}$. }
    \label{fig:Cauchy surf}
\end{figure}

Using Stokes' theorem and the space-compactness, as well as working in the conformally extended spacetime, we can then apply Proposition~\ref{prop5.6} and Proposition~\ref{prop:sigma I-} to show that this is the same as computing $\sigma_s$ on the Cauchy surface
\begin{align*}
 (\sH\cap\{U\ge U(\ft_0)\})\cup\Sigma_{\ft_0}\cup(\sI_-\cap\{\ft\ge \ft_0\}) \, .   
\end{align*}
Finally, it follows from Theorem~\ref{theorem:cons sympl form} that 
\begin{align*}
    \sigma_s(\phi, \psi)=&\sigma_{\Sigma_{\ft_0}}(\breve{\phi},\breve{\psi})+\sigma_{\sH\cap\{U\ge U(\ft_0)\},s}(T_\sH\phi,T_\sH\psi)+\sigma_{\sI\cap\{\ft\ge \ft_0\},s}(T_{\sI}{\phi},T_{\sI}{\psi})\\
    =&\sigma_{\sH\cap\{U\ge U(\ft_0)\},s}(T_\sH\phi,T_\sH\psi)+\sigma_{\sI\cap\{\ft\ge \ft_0\},s}(T_{\sI}\phi,T_{\sI}\psi)\\
    &+\sigma_{\sH\cap\{U\le U(\ft_0)\},s}(T_\sH\phi,T_\sH\psi)+\sigma_{\sI\cap\{\ft\le \ft_0\},s}(T_{\sI}{\phi},T_{\sI}{\psi})\\
    =& \sigma_{\sH,s}(T_\sH\phi,T_\sH\psi)+\sigma_{\sI,s}(T_{\sI}\phi,T_{\sI}\psi).
\end{align*}
\qeds
\end{proof}

\begin{definition}
\label{def:boundary spaces}
  We define $(\cS_s(\sH)\oplus \cS_s(\sI), \sigma_{\sH,s}\oplus \sigma_{\sI,s})$ to be the charged symplectic boundary space.  We define the physical boundary space as the subspace $\cS_{s,p}(\sH)\oplus\cS_{s,p}(\sI_-)$, where
\begin{subequations}
\begin{align}
    \cS_{s,p}(\sH):=&\left\{(\phi_s,\overline{\phi_{-s}})\in \cS_s(\sH):\overline{\phi_{-s}}\in\Ran(A_s)\text{ and }\overline{\phi_{-s}}=B_s\phi_s\right\}\, ,\\
    \cS_{s,p}(\sI_-):=&\left\{(\breve{\phi}_s,\overline{\breve{\phi}_{-s}})\in \cS_s(\sI_-): \overline{\breve{\phi}_{-s}}\in \Ran(A_s)\text{ and }\breve{\phi}_{s}={\tho}^{2s}A_s^{-1}\overline{\breve{\phi}_{-s}},\right\}\, .
\end{align}
\end{subequations}
\end{definition}

\begin{remark}
    At the horizon, we use the scaling of the Kruskal tetrad \eqref{eq:Kruskal tetrad} to identify $\cS_s(\sH)$ with a subset of $\cinf(\rr_U;\cB_s^{\ss^2}\otimes\cB_s^{\ss^2})$. Then the same computation used to derive \eqref{eq:nGamma Kruskal}, together with the identity $\kappa_+=\tfrac{r_+-M}{r_+^2+a^2}$, shows that ${\tho}'$ takes the form 
    \begin{align}
        {\tho}'\vert_{\sH}=\frac{r_+-M}{\varrho_+^2}\nabla_U
    \end{align}
    in the scaling given by the Kruskal tetrad when restricted to $\sH$, where $V=0$. It is therefore well-defined on $\cinf(\rr_U;\cB_s^{\ss^2})$. Moreover, in the same tetrad, a direct calculation shows that $[{\tho}', \zeta]=\cO(V)$, vanishing on $\sH$. Recalling that $\varrho^2=\abs{\zeta}^2$, this implies
    \begin{align}
    \overline{\phi_{-s}^U}=(r_+-M)^{2s}\partial_U^{2s}\phi_s^U\, 
\end{align}
for $(\phi_s,\overline{\phi_{-s}})\in \cS_{s,p}(\sH)$ expressed in the Kruskal tetrad.
    
    Similarly, in the Kinnersley tetrad, it is simple to see that $[A_s,x]=0$ and that, as discussed in Proposition~\ref{prop:A_s properties}, one can restrict $A_s$ to fixed $x$. Moreover, since $[\breve m,x]=[m,x]=0$ for $m$ in the Kinnersley tetrad, $A_s$ in the Kinnersley tetrad agrees with the version in the conformally rescaled tetrad. Hence, it can be considered as an operator on $\sI_-=\breve{\mathrm{M}}_{\mathrm{I}}\cap\{x=0\}$,  and it follows from Proposition~\ref{prop:A_s properties} that $[A_s, T_\sI]=0$. Additionally, working in the conformally rescaled Kinnersley tetrad, one finds
    \begin{align}
        \breve{\tho}_{(s,w)}=-x^2 \partial_x+\frac{2(1+a^2x^2)}{\Delta_x}\partial_{t^*}+\frac{2ax^2}{\Delta_x}\partial_{\varphi^*}+2wx\, ,
    \end{align}
    so that upon restriction to $\sI_-$, where $x=0$, one obtains a well-defined operator on $\cinf(\rr_{t^*};\cB_s^{\ss^2})$.
\end{remark}

\begin{remark}
   By restricting the trace operators to the physical phase space $\cS_{s,p}(\cM)$, and employing the relation \eqref{eq:Hertz-rel 1},  Lemma~\ref{lemma:A_s exchange}, and the fact that $[\breve{\tho},x]=-x^2$, one can embed the trace of the physical bulk phase space into the physical boundary space, i.e., $ T_\sH(\cS_{s,p}(\cM))\oplus T_{\sI}(\cS_{s,p}(\cM))\subset (\cS_{s,p}(\sH)\oplus \cS_{s,p}(\sI_-))$.
\end{remark}

\begin{lemma}
\label{lemma:injectivity proj bdy}
    The projections $\pi_\pm:\cS_{s,p}(\sH)\to \Gamma(\cB(s,\pm s)\vert_\sH)$, and $\pi_\pm:\cS_{s,p}(\sI_-)\to \Gamma(\breve{\cB}(s,\pm s)\vert_{\sI_-})$ onto the first (second) component are injective.
\end{lemma}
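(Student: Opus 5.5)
Both spaces are defined by a pair of relations linking the two components, so the plan is to show in each case that each component determines the other. By linearity it suffices to show that the kernel of each projection is trivial: if one component of an element of the physical boundary space vanishes, then so does the other.

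\emph{Horizon, $\pi_+$.} Let $(\phi_s,\overline{\phi_{-s}})\in\cS_{s,p}(\sH)$ with $\phi_s=0$. Then $\overline{\phi_{-s}}=B_s\phi_s=0$ directly from the definition, so $\pi_+$ is injective.

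\emph{Horizon, $\pi_-$.} Suppose $\overline{\phi_{-s}}=0$. By the preceding remark, in the Kruskal scaling we have $\overline{\phi^U_{-s}}=(r_+-M)^{2s}\partial_U^{2s}\phi^U_s$. Hence $\phi_s^U$ is a polynomial in $U$ of degree at most $2s-1$, with coefficients that are sections over $\ss^2$. Since $1_{U>U(\phi)}\phi=0$, the polynomial vanishes for large $U$, so all its coefficients vanish and $\phi_s=0$. The decay condition in $\cS_s(\sH)$ would also suffice here, since a nonzero polynomial cannot satisfy $\langle U\rangle^{s}(\log\langle U\rangle)^{-d}$ bounds with vanishing derivatives of order $2s$. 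The support condition is the cleaner argument, so I use that.

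\emph{Null infinity, $\pi_-$.} If $\overline{\breve\phi_{-s}}=0$, then $\breve\phi_s={\tho}^{2s}A_s^{-1}\overline{\breve\phi_{-s}}=0$ immediately. Here $A_s^{-1}$ is understood on $\Ran(A_s)$, and it is linear and well defined by the invertibility from part 3 of Proposition~\ref{prop:A_s properties}, applied frequency by frequency in $\ft$.

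\emph{Null infinity, $\pi_+$.} This is the step I expect to be the main obstacle. Suppose $\breve\phi_s=0$ and write $\overline{\breve\phi_{-s}}=A_s\chi$. Then ${\tho}^{2s}\chi=0$ on $\sI_-$. There $\breve{\tho}=2\partial_{t^*}+2wx$ restricts to $2\partial_{\ft}$ (the weight term vanishes at $x=0$), so $\chi$ is polynomial in $\ft$ of degree less than $2s$. The element $\chi=A_s^{-1}\overline{\breve\phi_{-s}}$ inherits vanishing for $\ft>\ft_0$: $A_s$ commutes with $\partial_{\ft}$ and is elliptic and positive in the angular variables for each frequency, and I would argue via the Fourier transform and part 3 of Proposition~\ref{prop:A_s properties} that $A_s^{-1}$ preserves this support property. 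Alternatively, I would take $\chi$ in the class where ${\tho}^{2s}A_s^{-1}$ was defined, with the decay of $\cS_s(\sI_-)$. Either way the polynomial is forced to vanish, so $\chi=0$ and therefore $\overline{\breve\phi_{-s}}=A_s\chi=0$. The delicate point is justifying that $A_s^{-1}\overline{\breve\phi_{-s}}$ has enough support or decay control to exclude polynomials. If needed, I would instead use the injectivity of $A_s$ on $L^2(\rr_{\ft};L^2(\cB^{\ss^2}_s))$, as in part 4 of Proposition~\ref{prop:A_s properties}, since the only polynomial in $L^2$ is zero.
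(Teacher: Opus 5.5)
Your case split and the key mechanism are exactly the paper's, and three of your four cases are complete. The mechanism is that the kernel of $\partial_U^{2s}$ on $\sH$, resp.\ of $\breve{\tho}^{2s}=2^{2s}\partial_\ft^{2s}$ on $\sI_-$, consists of polynomials of degree $<2s$, which the support condition then kills. One side remark is false, though harmless since you do not use it: on $\sH$ the decay bounds alone would not suffice for $s\geq 1$. Take $\phi_s^U$ constant in $U$ for $s=1$, or affine in $U$ for $s=2$. It satisfies all the bounds $\langle U\rangle^{s-n}(\log\langle U\rangle)^{-d-n}$ and has $\partial_U^{2s}\phi_s^U=0$. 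So the support condition is genuinely needed there, as you in fact use it.

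The last case ($\pi_+$ on $\sI_-$) is where your proposal is not closed. You rightly flag the need to control $\chi=A_s^{-1}\overline{\breve\phi_{-s}}$; the paper simply asserts $(0,\chi)\in\cS_s(\sI_-)$ at this point. However, your main remedy is false: $A_s^{-1}$ does not preserve support in $\{\ft\le\ft_0\}$. Already for $a=0$, $s=2$, on each spin-weighted harmonic $A_2$ acts as $c_\ell-9M^2\partial_\ft^2$ with $c_\ell>0$. Its inverse on $L^2(\rr_\ft)$ is convolution with a multiple of $e^{-\sqrt{c_\ell}\abs{\ft}/(3M)}$, a two-sided kernel. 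Your fallbacks each need an extra convention on which preimage $\chi$ is meant (in $L^2$, or in the decay class).

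None of this is necessary. On $\sI_-$, in the rescaled Kinnersley scaling, $A_s$ has $\ft$-independent coefficients, so it commutes with $\partial_\ft$. Since $\partial_\ft^{2s}\chi=2^{-2s}\breve{\tho}^{2s}\chi=2^{-2s}\breve\phi_s=0$, you get $\partial_\ft^{2s}\overline{\breve\phi_{-s}}=\partial_\ft^{2s}A_s\chi=A_s\partial_\ft^{2s}\chi=0$. Thus $\overline{\breve\phi_{-s}}$ itself is a polynomial of degree $<2s$ in $\ft$ that vanishes for $\ft>\ft_0$, hence it is zero. This uses only properties of $\overline{\breve\phi_{-s}}$ that hold by definition of $\cS_s(\sI_-)$ and needs no control of $\chi$. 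It also repairs the corresponding step of the paper's proof.
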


\begin{proof}
    Let us begin with the projections of $\cS_{s,p}(\sH)$. Injectivity of $\pi_+$ then follows directly from the definition. For the injectivity of $\pi_-$, consider the form of $B_s$ on the horizon in the  Kruskal tetrad. Considering it as an operator on $\Gamma( \cB(s,s)\vert_\sH)$, we find that a general solution to $B_s\phi_s=0$ takes the form 
    \begin{align*}
        \phi_s=\sum\limits_{k=0}^{2s-1}f_j(\theta,\varphi_+)U^k
    \end{align*}
    in the Kruskal tetrad. Comparing to the definition of $\cS_s(\sH)$, it is apparent that such $\phi_s$ can only belong to a $\phi\in \cS_{s,p}(\sH)$ if $\phi=0$. This implies the injectivity of $\pi_-$.

    Regarding the projections on $\sI_-$, it is clear from the definition that $\pi_-$ is injective. To show the injectivity of $\pi_+$, we note that if $\overline{\breve{\phi}_{-s}}$ is not the zero section and of the form $A_s\psi$, then $\psi=A_s^{-1}\overline{\breve{\phi}_{-s}}$ cannot be the zero section either. We then consider the operator $\breve{\tho}^{2s}$ on $ \Gamma(\breve{\cB}(s,-s)\vert_{\sI_-})$, in the conformally rescaled Kinnersley tetrad. In this tetrad, a general solution to ${\breve{\tho}}^{2s}\overline{\breve{\psi}_{-s}}=0$ takes the form
    \begin{align*}
        \overline{\breve{\psi}_{-s}}=\sum\limits_{k=0}^{2s-1}h_j(\theta,\varphi^*)\ft^k\, ,
    \end{align*}
so the only solution satisfying $(0,\overline{\breve{\psi}_{-s}})\in \cS_{s}(\sI_-)$ is $\overline{\breve{\psi}_{-s}}=0$, implying $\overline{\breve{\phi}_{-s}}=0$ , and hence the injectivity of $\pi_+$.
     \qeds
\end{proof}


\section{The CCR-algebra}
\label{sec:algebra}
\subsection{The CCR-algebra of a (charged) symplectic space}
In this part, we wish to construct the $*$-algebra of observables. In the present work, we choose to use the CCR-algebra of smeared fields instead of, e.g., the Weyl algebra.

\begin{definition}
Let $(V,\nu)$ be a (charged) symplectic space. Then we define the CCR-algebra $\cA(V,\nu)$ as $\cA_0(V,\nu)/\cR(V,\nu)$, where $\cA_0(V,\nu)$ is the unital $*$-algebra (over the field $\cc$) freely generated by abstract elements of the form $\Phi(v)$, $\Phi^*(v)$ with labels $v\in (V,\nu)$, and a unit element $\bbbone$. $\cR(V,\nu)$ is the ideal of $\cA_0(V,\nu)$ generated by the relations
\begin{itemize}
\item Linearity: $\Phi(\lambda v+w)-\overline{\lambda} \Phi(v)-\Phi(w)=0$, and  $\Phi^*(\lambda v+w)-\lambda \Phi^*(v)-\Phi^*(w)=0$ for all $v,w\in (V,\nu)$ and $\lambda\in \cc$
\item Commutation relation: $\left[\Phi(v),\Phi^*(w)\right]-\i\nu(v,w)\, \bbbone=0$,\\ $[\Phi(v),\Phi(w)]=[\Phi^*(v),\Phi^*(w)]=0$ for all $v,w\in (V,\nu)$.
\end{itemize}
The action of the $*$-involution is defined on the generators by $\bbbone^*=\bbbone$, $(\Phi(v))^*=\Phi^*(v)$, and $(\Phi^*(v))^*=\Phi(v)$ for all $v\in (V,\nu)$, and then continued such that it is anti-linear and satisfies $(ab)^*=b^*a^*$ for all $a, b\in \cA$.
\end{definition}

\begin{remark}
\label{rem:algbra induced map}
Let $U:(V,\nu)\to (W,\mu)$ be a linear map that conserves the (charged) symplectic form, i.e $\mu(Uv_1,Uv_2)=\nu(v_1,v_2)$  for all $v_1,\,v_2\in V$.
Then, by the weak non-degeneracy of $\nu$ and $\mu$, $U$ must be injective. Moreover, it induces an injective homomorphism of $*$-algebras, $\alpha_U:\cA(V,\nu)\to \cA(W,\mu)$ by setting
\begin{align}
\alpha_U(\Phi_V(v))=\Phi_W(Uv)\, ,\quad \alpha_U(\Phi^*_V(v))=\Phi^*_W(Uv)\, .
\end{align}
If $U$ is an isomorphism, the induced $*$-homomorphism is an isomorphism as well.
\end{remark}

\subsection{The CCR-algebra for the Teukolsky field}
In the present work, there are different choices for the symplectic space $(V,\nu)$. First of all, there are the two equivalent, enlarged phase spaces, $(TS_s(\cM),(\cdot,\cdot)_{\Delta_s})$ and $(\Sol_{s}(\cM),\sigma_s)$. As noted in Remark \ref{rem:algbra induced map}, the CCR-algebras constructed from these spaces will be equivalent. They will be denoted $\cA_s(\cM)$.

Additionally, consider the charged symplectic space $ (\cS_s(\sH)\oplus \cS_s(\sI_-),\sigma_{\sH,s}\oplus \sigma_{\sI,s})$ on the boundary. The algebra constructed from this space will be called the boundary algebra $\cA_{s,B}$.

\begin{remark}
By the results of Section \ref{sec:sympl form}, the trace map $T_{\sH}\oplus T_{\sI}: \Sol_{s}(\cM)\to \cS_s(\sH)\oplus \cS_s(\sI_-)$ preserves the charged symplectic form and is injective. As described in Remark \ref{rem:algbra induced map}, it induces the injective $*$-homomorphism $\alpha_{tr}:\cA_s(\cM)\to \cA_{s,B}$. 
\end{remark}

However, as we have discussed, these phase spaces are artificially doubled. Moreover, while they do have hermitian, non-degenerate forms, these are not positive.
\begin{definition}
 The physical subalgebra of $\cA_s(\cM)$, which will be called $\cA_{s,p}(\cM)$, is the one generated from the linear subspace $(\Sol_{s,p}(\cM),\sigma_s)\subset(\Sol_{s}(\cM),\sigma_s)$. 
 
Further, we will consider the subalgebra $\cA_{s,B,p}$ of $\cA_{s,B}$ generated from the physical subspace $(\cS_{s,p}(\sH)\oplus \cS_{s,p}(\sI_-), \sigma_{\sH}\oplus \sigma_{\sI})$ as the physical boundary algebra. 
\end{definition}

By a restriction of $T_{\sH}\oplus T_{\sI}$ to $\Sol_{s,p}(\cM)$, one can then embed the physical bulk algebra $\cA_{s,p}(\cM)$ into the physical boundary algebra $\cA_{s,B,p}$.

\begin{remark}
    Note that for $s=0$, the physical algebra $\cA_{0,p}(\cM)$ is isomorphic to the usual CCR-algebra of the free, massless scalar field.
\end{remark}

\begin{remark}
\label{rem:alg autom of killing flow}
Let $(\upsilon_b)_{b\in\rr}$ be a continuous one-parameter group of isomorphisms of $\cM$ induced by a Killing field of the form $\partial_t+c\partial_\varphi$ with $c\in\rr$. For every $b\in\rr$, set $\Upsilon_{(s,w),b}$ to be the cover of $\upsilon_b$ on $\cB(s,w)$, as described in Definition~\ref{def:cover of diffeos}. Let $\Upsilon_{(s,w),b}^*$ be the corresponding push-forward maps acting on $\Gamma(\cB(s,w))$. Finally, set for all $b\in \rr$
\begin{align}
  \mathfrak{Y}_b^*=\Upsilon_{(s,s),b}^*\oplus \Upsilon_{(s,-s),b}^*  \, .
\end{align}
 Then $(\mathfrak{Y}_b^*)_{b\in\rr}:\Gamma(\cV_s)\to \Gamma(\cV_s)$ is a group of push-forward maps satisfying
\begin{align}
\label{eq: commute with P_s}
    \mathfrak{Y}_b^*\circ \cP_s=\cP_s\circ \mathfrak{Y}_b^*\, ,
\end{align}
by Lemma~\ref{lemma:invar Teukolsky Killing flow} and the direct sum structure.

As a direct consequence, one has
\begin{align}
    \mathfrak{Y}_b^*\circ\Delta_s=\Delta_s\circ \mathfrak{Y}_b^*\vert_{\Gamma_c(\cV_s)}\, ,
\end{align}
and the maps $\mathfrak{Y}_b^*$ descend to maps 
\begin{align}
    \mathfrak{Y}_b^*:TS_s(\cM)\ni [f]\mapsto [\mathfrak{Y}_b^*f]\in TS_s(\cM)\, .
\end{align}
Hence, they induce a 1-parameter group of $*$-algebra homomorphisms $\alpha_b:\cA_s(\cM)\to \cA_s(\cM)$ by 
\begin{align}
    \alpha_b(\Phi([f]))=\Phi([\mathfrak{Y}_b^*f])\, , \quad \alpha_b(\Phi^*([f]))=\Phi^*([\mathfrak{Y}_b^* f])\, ,
\end{align}
and extended to the whole algebra by linearity. Combining Lemma~\ref{lemma:trafo of triv. flow} with Remark~\ref{rem:compl triv}, it is apparent that the $\alpha_b$ satisfy $\alpha_b\circ\alpha_c=\alpha_{b+c}$ for all $b$, $c\in\rr$.

In the same way, any (family of) endomorphisms of $\Gamma_c(\cV_s)$ satisfying \eqref{eq: commute with P_s} induce (a family of) $*$-algebra homomorphisms of $\cA_s(\cM)$. 
\end{remark}

\subsection{States on the CCR-algebra}
In the algebraic approach to quantum field theory, states are linear, positive, and normalized maps from the algebra of observables to $\cc$. In other words, a state is a linear map $\omega:\cA(V,\nu)\to \cc$ satisfying $\omega(\bbbone)=1$ and $\omega( a^*a)\geq 0$ for any $a\in \cA(V,\nu)$. Due to the structure of the algebra, any state is determined by its $n$-point functions
\begin{align*}
&w^{(n;k)}_\omega:V^{\otimes n}\to \cc\, ,\\
&w^{(n;k)}_\omega(v_1,\dots,v_k;v_{k+1},\dots,v_n):=\omega(\Phi(v_1)\dots\Phi(v_k)\Phi^*(v_{k+1})\dots\Phi^*(v_n))\, .
\end{align*} 
Here, we have taken into account the commutation relations of the algebra to order the arguments. $n$-point functions with other orders of the $\Phi$ and $\Phi^*$ can be obtained from these using the commutation relation. 
Note that the $n$-point function $w^{(n;k)}_\omega$ is anti-linear in the first $k$ arguments and linear in the last $n-k$ arguments.

A particularly simple but interesting class of states on the CCR-algebra $\cA(V,\nu)$ are the quasi-free states.
\begin{definition}
Let $\omega:\cA(V,\nu)\to \cc$ be a state on the CCR-algebra of $(V,\nu)$. Then it is {\it quasi-free} if $w^{(n;k)}_\omega=0$ whenever $n\neq 2k$ and 
\begin{align*}
w^{(2k;k)}_\omega(v_1,\dots,v_k;v_{k+1},\dots,v_{2k})=\sum\limits_{\pi\in S_k}\prod\limits_{i=1}^k w^{(2;1)}_\omega(v_i, v_{\pi(i)+k})\, ,
\end{align*}
where $S_n$ is the group of permutations of $n$ elements. 
\end{definition}
In other words, quasi-free states are completely determined by their two-point functions, see e.g. \cite{GW} or \cite[Proposition 17.14]{DG}.\footnote{Note that in the notation of \cite{DG}, the above defines a gauge invariant, quasi-free state.}

In turn, given a sesquilinear form $w^+:V\times V\to \cc$ with $\omega^+(v,v)\geq 0$ and $w^-(v,v):=\omega^+(v,v)-\i\nu(v,v)\geq 0$ for any $v\in V$, this map is the two-point function of a quasi-free state on $\cA(V;\nu)$. It completely defines the state.

\subsection{Hadamard states}
\label{sec:Hadamard intro}
Not all algebraic states correspond to physically acceptable states of the theory. A widely accepted criterion for identifying physical states is to demand that they satisfy the Hadamard condition. For the class of quasi-free states, the Hadamard condition can be formulated in terms of the wavefront set of the two-point function.

Let $B$ be a finite-dimensional (complex) vector bundle over a manifold $(M,g)$.  Recall that a distribution $u\in \cD'(M;B)$ is a linear continuous map $u: \Gamma_c( B^\#)\to \cc$, where $B^\#$ is the dual bundle. 
We then define the wavefront set as one would expect from the literature, see e.g. \cite{H:vol1}.
\begin{definition}
Let $u\in \cD'(M)$, and let $(x,k)\in \dot{T}^*M$. Let $X^\alpha$ be a coordinate system covering a neighbourhood of $x$. Then $(x,k)$ is a direction of rapid decrease if there is a smooth, compactly supported function $\chi\in \cinf_0(M)$ with $\chi(x)\neq 0$, and an open cone $V\subset T^*_xM$ containing $k$ so that 
\begin{align}
    \sup\limits_{l\in V}\left(1+\norm{l}^N\right)\abs{u\left(\chi e^{iX\cdot l}\right)}<\infty
\end{align}
for any $N\in \nn$. Here, $\norm{\cdot}$ is any norm on $T^*_xM$, and $X\cdot l=X^\alpha(x) l(\partial_{X^\alpha})(x)$.

The wavefront set $\WF(u)\subset \dot{T}^*M$ is the set of all $(x,k)\in \dot{T}^*M$ which are not directions of rapid decrease. 

Let now $u\in \cD'(M;B)$ for some finite-dimensional vector bundle $B$. Then one can define 
\begin{align}
    \WF(u):=\bigcup\limits_{\mathclap{S\in \Gamma(B^\#)}}\WF\left(f\mapsto u(fS)\right)\, .
\end{align}
\end{definition}
Above, $f\in \cinf_0(M)$, and it is sufficient to take the union over $S\in \Gamma(B^\#)$ forming a local basis of $B^\#$ \cite{F}.



Let us now consider the two-point function of a quasi-free state on the physical subalgebra $\cA_{s,p}(\cM)$. Then the two-point function is a sesquilinear form $w^+:\Sol_{s,p}(\cM)\times\Sol_{s,p}(\cM)\to\cc$. By Lemma~\ref{lemma:phys subspace bij}, this induces the distributions
\begin{subequations}
\label{eq:spacetime 2pt fct}
\begin{align}
    &W^+\in \cD'(\cM\times\cM;\cB(s,s)\boxtimes\cB(-s,s))\, ,\\\nonumber
    &W^+(f,h):=w^+\left(({\tho}^{2s}\overline{E_{-s}}(\overline{f}), A_s\overline{E_{-s}}(\overline{f})), ({\tho}^{2s}\overline{E_{-s}}(h), A_s\overline{E_{-s}}(h))\right)\, ,\\
   &W^-\in \cD'(\cM\times\cM;\cB(s,s)\boxtimes\cB(-s,s))\, ,\\\nonumber
    &W^-(f,h):=(w^+-i\sigma)\left(({\tho}^{2s}\overline{E_{-s}}(\overline{f}), A_s\overline{E_{-s}}(\overline{f})), ({\tho}^{2s}\overline{E_{-s}}(h), A_s\overline{E_{-s}}(h))\right)\, ,
\end{align}
\end{subequations}
where $\boxtimes$ denotes the exterior tensor product of vector bundles, $f\in \Gamma_c( \cB(-s,-s))$, and $h\in\Gamma_c( \cB(s,-s))$. We will refer to these distributions as the {\it spacetime two-point functions}.
It is straightforward to see that 
\begin{align}
\label{eq:2pt fct weak bisol}
    W^\pm(T_{-s}f, h)=W^\pm(f, \overline{T_{-s}}h)=0\, .
\end{align}

Following \cite{F}, we define

\begin{definition}
\label{def:Hadamard}
 The state $\omega$ on the algebra $\cA_{s,p}(\cM)$ satisfies the Hadamard condition if 
\begin{align}
\label{eq:Had cond}
    \WF'(W^\pm)\subset \cN^\pm\times\cN^\pm\, ,
\end{align}
where the primed wavefront set is defined as
\begin{align}
    \WF'(W^\pm)=\{(x,k;y,l)\in  \dot{T}^*\cM^2:(x,k;y,-l)\in \WF(W^\pm)\}\, ,
\end{align}
and 
\begin{align}
    \cN^\pm:=\{(x,k)\in  \dot{T}^*\cM: g^{-1}_x(k,k)=0, \pm g^{-1}(k,\fT)>0\}\, 
\end{align}
denote the future and past lightcones, where $\fT$ is the one-form representing the time orientation.   
\end{definition}

\section{The Unruh state}
\label{sec:Unruh}

\subsection{Definition of the Unruh state}

In this section, we will define the Unruh state on the physical subalgebra  $\cA_{s,p}(\cM)$. We do so by defining a state on $\cA_{s,B,p}$ and then using the pullback via the trace map $(T_{\sH}\oplus T_{\sI})\vert_{\Sol_{s,p}(\cM)}$. 

We begin on the horizon.
As a first step, recall that $\pi_\pm:\cS_{s,p}(\sH)\to \Gamma(\cB(s,\pm s)\vert_\sH)$ is the projection onto the first/second component.
\begin{definition}
    Let $\cU_s:\pi_+(\cS_{s,p}(\sH))\to \Gamma( \cB(s,0)\vert_\sH)$ be given by 
\begin{align}
    \cU_s(\phi_s)=\tfrac{\varrho_+^{2s}}{(r_+-M)^s} {\tho}'^s\phi_s\, .
\end{align} 
Further, let $\widetilde{\cU}_s$ be the operator defined on $\pi_{-}(\cS_{s,p}(\sH))$ by 
\begin{align}
   \widetilde{\cU}_s(\overline{\phi_{-s}})= (r_+-M)^{2s}\cU_s\pi_+\circ\pi_-^{-1}\overline{\phi_{-s}}\, ,
\end{align}
For $\phi,\phi'\in\cS_{s,p}(\sH)$, we define
\begin{align}
    w^+_\sH(\phi,\phi'):&=2\int\limits_{\mathclap{\rr\times\ss^2}}\left[\left\langle\cU_s\phi_s, 1_{\rr_+}(\i\Theta_U)(\i\Theta_U) \widetilde{\cU}_s\overline{\phi_{-s}'}\right\rangle_0\right.\\
    &\left.+\left\langle\widetilde{\cU}_s\overline{\phi_{-s}},\vphantom{\overline{\phi_{-2}'}} 1_{\rr_+}(\i\Theta_U)(\i\Theta_U) \cU_s\phi_s'\right\rangle_0\right] (r_+^2+a^2)\d U \d^2\omega_+\nonumber \\
&=4(r_+-M)^{2s}\int\limits_{\mathclap{\rr\times\ss^2}} \overline{D_U^s\phi_s^U} 1_{\rr_+}(D_U)D_U D_U^s\phi_{s}'^U (r_+^2+a^2) \d U \d^2\omega_+\nonumber
\end{align}
where in the last line we have used the definitions of $\cU_s$ and $\widetilde{\cU}_s$, as well as the trivialization given by the Kruskal tetrad \eqref{eq:Kruskal tetrad}, identifying $\phi_s, \phi_s'\in \pi_{+}\cS_{s,p}(\sH)$ with elements of $C^\infty(\rr_U; \Gamma(\cB_s^{\ss^2}))$. We have also introduced the notation $D_U=\i\partial_U$ and used the hermitian fibre metric $\IP{\cdot,\cdot}_0$ on $\cB(s,0)$ introduced in Corollary~\ref{cor:B(s,0) IP}.
\end{definition}

Next, we consider $\sI_-$.
\begin{definition}
     Let $\psi=(\breve{\psi}_s,\overline{\breve{\psi}_{-s}})\in \cS_{s,p}(\sI_-)$. Then we define the operator $\cW_s:\pi_{-}(\cS_{s,p}(\sI_-))\to \Gamma( \breve{\cB}(s,0)\vert_{\sI_-})$ by
\begin{align}
    \cW_s(\overline{\breve{\psi}_{-s}})=\breve{\tho}^s\overline{\breve{\psi}_{-s}}
\end{align}
and analogously we define $\widetilde{\cW}_s:\pi_+(\cS_{s,p}(\sI_-))\to \Gamma(\breve{\cB}(s,0)\vert_{\sI_-})$ by
\begin{align}
    \widetilde{\cW}_s(\breve{\psi}_s)= \breve{\tho}^sA_s^{-1}\pi_-\circ\pi_+^{-1}(\breve{\psi}_s)\,,
\end{align}
where $\pi_\pm:\cS_{s,p}(\sI_-)\to\Gamma( \breve{\cB}(s,\pm s)\vert_{\sI_-})$ is the projection onto the first/second component.
For $\psi,\psi'\in\cS_{s,p}(\sI_-)$, we set
\begin{align}
\label{eq:def wI}
    w^+_\sI(\psi,\psi'):=&2\int\limits_{\mathclap{\rr\times\ss^2}}\left[\left\langle\widetilde{\cW}_s\breve{\psi}_s, 1_{\rr_+}(\i\Theta_\ft)(\i\Theta_\ft) \cW_s\overline{\breve{\psi}_{-s}'}\right\rangle_0\right.\\
    &\left.+\left\langle\cW_s\overline{\breve{\psi}_{-s}},\vphantom{\overline{\phi_{-2}'}} 1_{\rr_+}(\i\Theta_\ft)(\i\Theta_\ft) \widetilde{\cW}_s\breve{\psi}_s'\right\rangle_0\right] \d\ft \d^2\omega^*\nonumber \\\nonumber
=&2^{2s+1}\int\limits_{\mathclap{\rr\times\ss^2}}\left[ \overline{D_\ft^sA_s^{-1}}\breve{\psi}_{-s} 1_{\rr_+}(D_\ft)D_\ft^{s+1} \overline{\breve{\psi}_{-s}}'\right.\\ \nonumber
&\left.+ \overline{D_\ft^s}\breve{\psi}_{s-}1_{\rr_+}(D_\ft)D_\ft^{s+1} A_s^{-1}\overline{\breve{\psi}_{-s}'}\right]\d\ft \d^2\omega^*\nonumber\, ,
\end{align}
where in the last line we have used the scaling of the conformally rescaled Kinnersley tetrad to identify $\breve{\psi}_{-s}$, $\breve{\psi}_{-s}'$ with elements of $\cinf(\rr_\ft;\Gamma(\cB_{-s}(\ss^2)))$.
\end{definition}

\begin{remark}
    Note that $\cU_0$ and $\widetilde{\cU}_0$ both simplify to the identity operator on $\cinf(\sH)$, while $\cW_0$ and $\widetilde{\cW}_0$ simplify to the identity operator on $\cinf(\sI_-)$.
\end{remark}

\begin{remark}
\label{remark:func calc}
    To define $1_{\rr_+}(\i\Theta_U)$,  we note that the hermitian fibre metric $\IP{\cdot,\cdot}_0$ in fact equips $\cB(s,0)$ with a fibre-wise inner product. We can thus define the positive-definite hermitian form 
    \begin{align*}
    \IP{f,h}_{\sH}=\int\limits_{\mathclap{\rr\times\ss^2}}\IP{f,h}_0(r_+^2+a^2)\d U\d^2\omega_+\, , \quad f,h\in \Gamma_{\sH}(\cB(s,0))\, ,
    \end{align*}
    and define $L^2_s(\sH)$ as the completion of $\Gamma_\sH(\cB(s,0))$ with respect to the norm induced by this pairing.

    It then follows along the lines of standard results 
    that $\i\Theta_U$ is self-adjoint on the domain $\{f\in L^2_s(\sH):\Theta_U f\in L^2_s(\sH)\}$.
    Since $1_{\rr_+}(x)\in L^\infty(\rr)$, the operator $1_{\rr_+}(\i\Theta_U)$ can then be defined as a bounded operator by functional calculus.
    
    The same argument applies to $1_{\rr_+}(\i\Theta_\ft)$: Define $L^2_s(\sI_-)$ as the completion of the space $\Gamma_{\sI_-}(\cB(s,0))$ with respect to the norm induced by 
    \begin{align*}
    \IP{f,h}_{\sI}=\int\limits_{\mathclap{\rr\times\ss^2}}\IP{f,h}_0\d\ft \d^2\omega^*\, ,
    \end{align*}
    and realize that $\i\Theta_\ft$ with domain $\{f\in L^2_s(\sI_-):\Theta_\ft f\in L^2_s(\sI_-)\}$
    is self-adjoint, so that $1_{\rr_+}(\i\Theta_\ft)$ can be defined by functional calculus.
\end{remark}

\begin{lemma}
\label{lemma:well def wH}
$w_{\sH}^+$ is well defined. 
\end{lemma}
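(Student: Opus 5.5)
We have to show that the integral defining $w^+_\sH(\phi,\phi')$ is finite for $\phi,\phi'\in\cS_{s,p}(\sH)$. We work with the final expression in the Kruskal trivialization. There the integrand pairs $u:=D_U^s\phi_s^U$ with $1_{\rr_+}(D_U)D_U u'$, where $u':=D_U^s\phi_s'^U$.

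The plan is to read $1_{\rr_+}(D_U)D_U$ as the positive part $|D_U|$ made one-sided, namely $1_{\rr_+}(D_U)D_U=\tfrac12(|D_U|+D_U)$. Then $w^+_\sH(\phi,\phi')$ is a sesquilinear form which, by Cauchy--Schwarz, is bounded by $\norm{|D_U|^{1/2}u}_{L^2}\norm{|D_U|^{1/2}u'}_{L^2}$ in $L^2(\rr_U\times\ss^2;\cB_s^{\ss^2})$. So it suffices to show $u\in\dot H^{1/2}(\rr_U;L^2(\ss^2))$, i.e. $\int|k|\,\|\hat u(k)\|^2_{L^2(\ss^2)}\,\d k<\infty$. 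Before that, $1_{\rr_+}(D_U)$ must make sense as in Remark~\ref{remark:func calc}, which needs $u\in L^2$. The key facts come from the definition of $\cS_s(\sH)$. First, $\partial_U^n\phi^U_s=\cO(\langle U\rangle^{s-n}(\log\langle U\rangle)^{-d-n})$ with $d>1$. Second, the support is bounded above by $U(\phi)$. Hence $u=D_U^s\phi_s^U$ is smooth, vanishes for $U>U(\phi)$, and satisfies
\[
|\partial_U^n u|\lesssim\langle U\rangle^{-n}(\log\langle U\rangle)^{-d-s-n}\qquad(n\geq 0).
\]
In particular $u$ is bounded, and $\partial_U u=\cO(\langle U\rangle^{-1}(\log\langle U\rangle)^{-d-s-1})$ is integrable and square integrable.

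The main obstacle is that $u$ itself (for $s=0$, and also in general) decays only logarithmically, so it is not in $L^2$. This is the problem already met in \cite{DMP}. I would therefore avoid $L^2$ of $u$ and put the form entirely on the derivative: write $D_U u'=v'$ and $u=D_U^{-1}$-type expressions, or more directly use
\[
\int \bar u\,1_{\rr_+}(D_U)D_Uu'\,\d U=\int_0^\infty k\,\overline{\hat u(k)}\hat u'(k)\,\d k=\int_0^\infty k^{-1}\,\overline{\widehat{\partial_Uu}(k)}\,\widehat{\partial_Uu'}(k)\,\d k .
\]
The right-hand side makes sense since $\partial_U u\in L^1\cap L^2$. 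It then remains to show $\int_{\rr}|k|^{-1}\|\widehat{\partial_U u}(k)\|^2\,\d k<\infty$. For large $|k|$ this follows from $\partial_U u\in L^2$. For small $|k|$, following \cite{DMP}, I would use the logarithmic weight. The decay $|\partial_U u|\lesssim\langle U\rangle^{-1}(\log\langle U\rangle)^{-d-s-1}$ together with $\partial_U^2u=\cO(\langle U\rangle^{-2}(\log)^{-d-s-2})$ gives the modulus-of-continuity estimate $|\widehat{\partial_Uu}(k)-\widehat{\partial_Uu}(0)|\lesssim |\log|k||^{-(d+s)}$. This comes from splitting the Fourier integral at $|U|\sim|k|^{-1}$ and integrating by parts once in the outer region.

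One further point is needed because $\widehat{\partial_Uu}(0)=\int\partial_Uu\,\d U=-\lim_{U\to-\infty}u=0$, which holds since $u\to0$ logarithmically. With this, $\|\widehat{\partial_Uu}(k)\|^2\lesssim|\log|k||^{-2d-2s}$, and $\int_0^{1/2}k^{-1}|\log k|^{-2d}\,\d k<\infty$ because $2d>1$. Applying Cauchy--Schwarz in $k$ then gives finiteness. Finally, we check that this regularized expression agrees with the functional-calculus definition in Remark~\ref{remark:func calc}. We would do this by approximating $u$ with cutoffs $\chi(U/R)u$ and passing to the limit using the same bounds.
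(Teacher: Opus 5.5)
Your argument is correct, but it follows a different route from the paper's.

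\textbf{How the paper proceeds.} The paper splits the form with a partition of unity $\chi_++\chi_-=1$ in $U$.
\begin{itemize}
\item The terms involving the compactly supported $\chi_+D_U^s\phi^U_s$ are controlled by showing $1_{\rr_+}(D_U)D_U D_U^s\phi^U_s\in L^2(\d U)$.
\item The remaining $\chi_-\,\chi_-$ term is transported by $U=-e^{-\kappa_+{}^*t}$ and \cite[Lemma D.1]{GHW}. It becomes a pairing in ${}^*t$ with a bounded Fourier multiplier times $D_{{}^*t}$.
\item In ${}^*t$ the logarithmic decay in $U$ becomes polynomial decay $\langle{}^*t\rangle^{-d-s}$, so both factors lie in $L^2(\d{}^*t)$.
\end{itemize}

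\textbf{How yours differs.} You never leave the $U$ variable. You show directly that $D_U^s\phi^U_s\in\dot H^{1/2}(\rr_U)$. The only nontrivial point is low frequencies, which you settle with $\widehat{\partial_Uu}(0)=0$ and the logarithmic modulus of continuity. Your route is more elementary and self-contained: no cutoff bookkeeping and no external lemma. It also makes transparent that what is really used is $u\in\dot H^{1/2}$, which $2(d+s)>1$ guarantees. The paper's route has the advantage that it produces exactly the thermal multiplier in ${}^*t$ that is reused later, in Remark~\ref{rem:wh restriction to H-} and Proposition~\ref{prop:KMS and ground}.

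\textbf{Three small remarks.}
\begin{itemize}
\item The integration by parts in the outer region is unnecessary. The bound $\int_{|U|>1/|k|}|\partial_Uu|\,\d U\lesssim|\log|k||^{-(d+s)}$ holds directly.
\item In the final identification via $u_R=\chi(U/R)u$, the naive uniform pointwise bound on $\partial_Uu_R$ loses one logarithm. The extra term $R^{-1}\chi'(U/R)u$ is only $\cO(\langle U\rangle^{-1}(\log\langle U\rangle)^{-d-s})$. Fed into your splitting, this gives just $|\log|k||^{-(d+s-1)}$, which is not enough when $d+s\le 3/2$. You can fix this in either of two ways.
  \begin{itemize}
  \item Use that this term lives on $R\le|U|\le 2R$.
  \item More simply, apply dominated convergence on $(0,1/2)$. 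Use $|\widehat{\partial_Uu_R}|\lesssim\|\partial_Uu_R\|_{L^1}\lesssim 1$ against $k^{-1}|\widehat{\partial_Uu'}(k)|$, which is in $L^1$ since $d+s>1$. On $(1/2,\infty)$, use the $L^2$ convergence $\partial_Uu_R\to\partial_Uu$.
  \end{itemize}
\item You start from the closed formula in the Kruskal trivialization. That $\widetilde{\cU}_s$, and hence this formula, is well defined relies on the injectivity of $\pi_-$ from Lemma~\ref{lemma:injectivity proj bdy}, which the paper records at the start of its proof.
\end{itemize}
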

\begin{proof}
To begin, note that only derivatives along $\sH$ are involved in $\cU_s$, making the operator well-defined on $\pi_+(\cS_{s,p}(\sH))\subset \Gamma( \cB(s,s)\vert_\sH)$.
In addition, by Lemma~\ref{lemma:injectivity proj bdy}, $\pi_-$ is injective and therefore bijective onto its rank, so that $\pi_-^{-1}$ is well-defined thereon. Moreover, all derivatives involved in $\widetilde{\cU}_s$ are tangential to $\sH$, making it well-defined as a map from $\pi_{-}(\cS_{s,p}(\sH))$ to $\Gamma(\cB(s,0)\vert_\sH)$. 

Let $\chi_{\pm}\ge 0,\, \chi_++\chi_-=1$ be a partition of unity with $\supp \chi_-\subset (-\infty,-1),$ and $\supp \chi_+\subset (-2,\infty)$. Let us also set $C_s=4(r_+-M)^{2s}(r_+^2+a^2)$. For $\phi, \phi'\in \cS_{s,p}(\sH)$, we write using the Kruskal tetrad
\begin{align*}
w_{\sH}^+(\phi,\phi')=C_s&\int\limits_{\mathclap{\rr\times \ss^2}}\left[1_{\rr_+}(D_U)D_U\widetilde{\phi}_s^{U}\chi_-\overline{\widetilde{\phi'}_s^{U}}
+1_{\rr_+}(D_U)D_U\widetilde{\phi}_s^{U}\chi_+\overline{\widetilde{\phi'}_s^{U}}\right]\d U\d^2\omega_+.
\end{align*} 
Here, we have introduced the notation $\widetilde{\phi}_s^{U}=D_U^s\phi_s^{U}$. 
Let us first consider the second integral. By  the definition of $\cS_{s,p}(\sH)$ in Definition~\ref{def:boundary spaces}, $\chi_+\widetilde{\phi'}_s^{U}$ is a smooth compactly supported function and thus in $L^2$. We then estimate
\begin{align*}
\norm{ 1_{\rr_+}(D_U)D_U\widetilde{\phi}_s^{U}}_{L^2}\le \norm{ D_U\widetilde{\phi}_s^{U}}_{L^2}\le \norm{ D_U \chi_+\widetilde{\phi}_s^{U}}_{L^2}+\norm{ D_U \chi_-\widetilde{\phi}_s^{U}}_{L^2}. 
\end{align*}
$\chi_+\widetilde{\phi}_s^{U}$ being compactly supported, the first term is finite. We then have 
\begin{align*}
D_U\chi_-\widetilde{\phi}_s^{U}=\i\chi_-'\widetilde{\phi}_s^{U}+\chi_-D_UD_U^sU^s\phi_s^{K}\, ,
\end{align*}
where the prime denotes a derivative with respect to $U$, and the $K$-superscript refers to the fact that we have switched to the Kinnersley tetrad \eqref{eq:Kinnersley}. 
The first term is smooth and compactly supported and thus in $L^2$. For the second term we use that 
\begin{align*}
D_U^2U^2=(UD_U)^2+3\i UD_U-2\, , \quad D_UU=UD_U+\i \, .
\end{align*}
Therefore, by the  definition of $\cS_{s,p}(\sH)$ in Definition~\ref{def:boundary spaces}, we see that 
\begin{align*}
\vert \chi_-D_U^{s+1}U^s\phi_s^{K}\vert\lesssim\langle U\rangle ^{-1}(\ln \langle U\rangle)^{-\delta}, \quad \delta>1\, . 
\end{align*}
As a consequence, we only have to treat 
\begin{align*}
&\int\limits_{\mathclap{\rr\times \ss^2}} (1_{\rr_+}(D_U)D_U\widetilde{\phi}_s^{U})\overline{\chi_-{\widetilde{\phi}}_s'^{U}}\d U\d^2\omega_+\\
=&\int\limits_{\mathclap{\rr\times \ss^2}} \left[(\chi_+\widetilde{\phi}_s^{U})\overline{(1_{\rr_+}(D_U)D_U\chi_-{\widetilde{\phi}}_s'^{U})} +(D_U1_{\rr_+}(D_U)\chi_- \widetilde{\phi}_s^{U})\overline{\chi_-{\widetilde{\phi}}_s'^{U}}\right]\d U\d^2\omega_+\, .
\end{align*}
By the same argument as above, the first term is finite (using also that $1_{\rr_+}(D_U)$ is a bounded operator). For the second term, we now use \cite[Lemma D.1]{GHW} and the change of variables $U=-e^{-\kappa_+ {}^*t}$ to obtain 
\begin{align*}
\int\limits_{\mathclap{\rr\times \ss^2}}\chi_{\beta}^+(D_{^*t})\chi_-D_U^sU^s\phi_s^{K}\overline{D_{{}^*t}\chi_-D_U^sU^s{\phi'}_s^{K}}\d{}^*t\d^2\omega_+, 
\end{align*}
where $\chi_{\beta}^+(x)=(1+e^{-\beta x})^{-1}$ with $\beta=\frac{2\pi}{\kappa_+}$.
It is now sufficient to use that $\chi_{\beta}^+(D_{^*t})$ is a bounded operator on $L^2(\rr\times \ss^2;\d{}^*t\d^2\omega_+)$ and that, by the definition of $\cS_{s,p}(\sH)$ in Definition~\ref{def:boundary spaces}, we have 
\begin{align*}
\chi_-D_U^sU^s\phi_s^{K},\,  D_{{}^*t}\chi_-D_U^sU^s\overline{\phi}_s'^{K}\in L^2(\rr\times \ss^2;\d{}^*t\d^2\omega_+). 
\end{align*}
\qeds
\end{proof}

\begin{lemma}
\label{lemma:well def wI}
    $w_\sI^+$ is well-defined.
\end{lemma}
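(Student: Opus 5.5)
Unlike the horizon case, no Kruskal-type coordinate change is needed here: on $\sI_-$ the problem is purely one of decay in $\ft$, and $1_{\rr_+}(D_\ft)$ is bounded on $L^2(\rr_\ft\times\ss^2)$. The plan is to show that both terms in \eqref{eq:def wI} are pairings of two $L^2$ functions.

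\textbf{Well-definedness of the operators.} By Lemma~\ref{lemma:injectivity proj bdy}, $\pi_+$ is injective on $\cS_{s,p}(\sI_-)$, so $\pi_-\circ\pi_+^{-1}$ is well defined on $\pi_+(\cS_{s,p}(\sI_-))$. By Definition~\ref{def:boundary spaces}, the image $\overline{\breve\psi_{-s}}$ lies in $\Ran(A_s)$. By the injectivity argument of Proposition~\ref{prop:A_s properties}, part 4 (Fourier transform in $\ft$, the spheroidal basis, and $N(s,\omega,m,\ell)>0$), $A_s^{-1}\overline{\breve\psi_{-s}}$ is a uniquely defined smooth section. Moreover $\breve{\tho}$ restricted to $x=0$ is $2\partial_{t^*}$ plus a zero-order term vanishing at $x=0$, as noted in the Remark after Definition~\ref{def:boundary spaces}. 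Hence $\cW_s$ and $\widetilde\cW_s$ are tangential operators that make sense on $\sI_-$.

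\textbf{Decay of the ingredients.} For $\overline{\breve\psi_{-s}}\in\pi_-(\cS_{s,p}(\sI_-))$ the definition of $\cS_s(\sI_-)$ gives
\[
|\partial_\ft^n\overline{\breve\psi_{-s}}|\lesssim\langle\ft\rangle^{-2-n+}
\]
and vanishing for $\ft>\ft_0$. Hence $D_\ft^s\overline{\breve\psi_{-s}}$ and $D_\ft^{s+1}\overline{\breve\psi_{-s}}$ are in $L^2(\d\ft\,\d^2\omega^*)$. For the $\psi_s$-part, use the relation $\breve\psi_s=\breve{\tho}^{2s}A_s^{-1}\overline{\breve\psi_{-s}}$, which on $\sI_-$ reads $\breve\psi_s=2^{2s}\partial_\ft^{2s}A_s^{-1}\overline{\breve\psi_{-s}}$. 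Combined with $|\partial^n_\ft\breve\psi_s|\lesssim\langle\ft\rangle^{-2-2s-n+}$, this shows that $\chi:=A_s^{-1}\overline{\breve\psi_{-s}}$ satisfies $\partial_\ft^{2s}\chi\in\langle\ft\rangle^{-2-2s+}L^\infty$ and $\partial_\ft^{2s}\chi=0$ for $\ft>\ft_0$.

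The \textbf{main obstacle} is then to control the lower derivatives $D^s_\ft\chi$, $D^{s+1}_\ft\chi$, because a priori $\chi$ is only determined up to polynomials in $\ft$. I would argue this in the following steps.
\begin{itemize}
\item[(i)] Take $\chi$ as the solution obtained by integrating $\partial_\ft^{2s}\chi$ from $+\infty$.
\item[(ii)] Show it coincides with $A_s^{-1}\overline{\breve\psi_{-s}}$, using the injectivity of $A_s$ (a polynomial in $\ft$ with $A_s$ applied cannot match a compactly-supported-in-the-future $\overline{\breve\psi_{-s}}$ unless it is zero).
\item[(iii)] Conclude $\chi=0$ for $\ft>\ft_0$ and $|\partial_\ft^k\chi|\lesssim\langle\ft\rangle^{-2-k+}$ for $k\ge 1$ by successive integration.
\end{itemize}
Integrating $2s-k$ times from $+\infty$ costs $2s-k$ powers of decay, so $k=s$ gives $\langle\ft\rangle^{-2-s+}$ and $k=s+1$ gives $\langle\ft\rangle^{-3-s+}$. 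Both are square integrable.

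\textbf{Conclusion.} Each integrand in \eqref{eq:def wI} has the form $\langle F,1_{\rr_+}(D_\ft)G\rangle$ with $F,G\in L^2(\rr\times\ss^2,\cB_s^{\ss^2})$. The angular dependence is harmless because everything is smooth on $\ss^2$ and the norm is the fibre metric $\bm$. Since $1_{\rr_+}(D_\ft)$ is bounded by Remark~\ref{remark:func calc}, Cauchy--Schwarz gives finiteness of $w^+_\sI(\psi,\psi')$.
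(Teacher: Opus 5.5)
\textbf{Gap in step (iii).} Your overall structure is the paper's: $\pi_+$ is injective, the operators are tangential, the four ingredients are in $L^2$, $1_{\rr_+}(\i\Theta_\ft)$ is bounded, and Cauchy--Schwarz finishes. You are also right that the decay of $\widetilde\cW_s\breve\psi_s=\breve{\tho}^sA_s^{-1}\overline{\breve\psi_{-s}}$ is not literally part of the definition of $\cS_s(\sI_-)$, which the paper passes over. However, step (iii) fails as written. Since $\breve\psi_s$ vanishes for $\ft>\ft_0$, integrating from $+\infty$ gives
\[
\partial_\ft^{2s-1}\chi(\ft)=-2^{-2s}\int_\ft^{\ft_0}\breve\psi_s\,\d\ft'\;\longrightarrow\;-2^{-2s}\int_\rr\breve\psi_s\,\d\ft'\qquad(\ft\to-\infty),
\]
and this constant has no reason to vanish. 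Iterating, $\partial_\ft^k\chi$ behaves at $-\infty$ like a polynomial of degree $2s-1-k$ whose coefficients are the moments $\int_\rr\ft^j\breve\psi_s\,\d\ft$. The rule ``each integration costs one power of decay'' holds only when you integrate from $-\infty$, and then you lose the vanishing for $\ft>\ft_0$. Step (ii) uses only the \emph{support} of $\overline{\breve\psi_{-s}}$, never its decay at $-\infty$, and that information is indispensable. For example, take $\chi=(1-\rho(\ft))Y(\omega)$ with $\rho$ a smooth step ($\rho=0$ for $\ft<\ft_0-1$, $\rho=1$ for $\ft>\ft_0$). Then $\chi$ vanishes for $\ft>\ft_0$ and $\partial_\ft^{2s}\chi$ is compactly supported, yet $\chi$ does not decay. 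Only the non-decay of $A_s\chi=a_s(0)Y\neq0$ for $\ft\ll0$ excludes it.

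\textbf{How to close it.} Add the $-\infty$ analogue of your step (ii).
\begin{itemize}
\item Let $\chi_-$ solve $2^{2s}\partial_\ft^{2s}\chi_-=\breve\psi_s$, obtained by integrating from $-\infty$. It satisfies $\abs{\partial_\ft^k\chi_-}\lesssim\langle\ft\rangle^{-2-k+}$ as $\ft\to-\infty$ for $0\le k\le 2s$; higher $\ft$-derivatives are derivatives of $\breve\psi_s$.
\item Then $Q:=\chi-\chi_-$ is a polynomial in $\ft$ of degree $<2s$, and $A_sQ=\overline{\breve\psi_{-s}}-A_s\chi_-\to0$ as $\ft\to-\infty$. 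To see this, pair with smooth angular test sections so the angular derivatives in $A_s$ fall on the test section; this is needed because $\cS_s(\sI_-)$ controls only $\partial_\ft$-derivatives.
\item A polynomial in $\ft$ tending to zero vanishes identically, so $A_sQ=0$. The leading-coefficient argument with $a_s(0)$ injective then gives $Q=0$.
\end{itemize}
Hence $\chi=\chi_-$ has the decay you claim and, by (ii), vanishes for $\ft>\ft_0$; equivalently, the moments $\int_\rr\ft^j\breve\psi_s\,\d\ft$, $j<2s$, vanish. With this inserted, the rest of your argument goes through.

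Your appeal to part 4 of Proposition~\ref{prop:A_s properties} for uniqueness of $A_s^{-1}\overline{\breve\psi_{-s}}$ is circular, since that argument presupposes $\chi\in L^2(\rr_\ft)$. It is the support argument in (ii) that actually yields uniqueness.
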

\begin{proof}
Again, we begin by remarking that, by Lemma~\ref{lemma:injectivity proj bdy}, $\pi_+^{-1}$ is well-defined on $\pi_+\cS_{s,p}(\sI_-)$, and all differential operators involved are tangential to $\sI_-$, making the operators $\cW_s$ and $\widetilde{\cW}_s$ well-defined.

We will  fix the normalisation of the null vectors $\breve l$ and $\breve n$ to that of the conformally rescaled Kinnersley tetrad. 

By the definition of $\cS_{s,p}(\sI_-)$, for $\breve\psi\in \cS_{s,p}(\sI_-)$, both $\cW_s\overline{\breve\psi_{-s}}$ and $\widetilde{\cW}_s\breve\psi_s$ decay sufficiently fast for $\ft\to \pm \infty$ so that they are in $L^2_s(\sI_-)$, and the same is true for $\Theta_\ft \cW_s\overline{\breve\psi_{-s}}$ and  $\Theta_\ft\widetilde{\cW}_s\breve\psi_s$. As mentioned before, $1_{\rr_+}(\i\Theta_\ft)$ can be defined on $L^2_s(\sI_-)$ as a bounded operator, so that we obtain
\begin{align*}
    \abs{\omega_\sI^+(\breve\psi,\breve\psi')}=&2\left\vert\,\int\limits_{\rr\times\ss^2}\left[\left\langle\cW_s\breve{\psi}_s, 1_{\rr_+}(\i\Theta_\ft)(\i\Theta_\ft) \cW'_s\overline{\breve{\psi}_{-s}'}\right\rangle_0\right.\right.\\
    &\left.\left.\hphantom{abs}+\left\langle\cW'_s\overline{\breve{\psi}_{-s}},\vphantom{\overline{\phi_{-2}'}} 1_{\rr_+}(\i\Theta_\ft)(\i\Theta_\ft) \cW_s\breve{\psi}_s'\right\rangle_0\right] \d\ft \d^2\omega^*\right\vert.
   \end{align*}
This is bounded by the Cauchy-Schwartz inequality. 
\qeds
\end{proof}

We can now define the state $\omega_U$:
\begin{definition}
\label{def:U state}
Let $\phi,\, \psi\in \Sol_{s,p}(\cM)$. Then the Unruh state $\omega_U$ on $\cA_{s,p}(\cM)$ is defined as the quasi-free state determined by the two-point function
\begin{align*}
w^+_U(\phi,\psi):=w^+_\sH(T_\sH\phi,T_\sH\psi)+w^+_\sI(T_\sI\phi,T_\sI\psi)\, .
\end{align*}
\end{definition}


\begin{lemma}
    $w^+_U$ is well-defined.
\end{lemma}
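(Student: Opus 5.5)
The claim reduces to showing that the two boundary traces of any $\phi\in\Sol_{s,p}(\cM)$ land in the domains on which $w^+_\sH$ and $w^+_\sI$ are already known to be well defined. Indeed, $w^+_U$ is the sum of $w^+_\sH\circ(T_\sH\times T_\sH)$ and $w^+_\sI\circ(T_\sI\times T_\sI)$. Lemma~\ref{lemma:well def wH} and Lemma~\ref{lemma:well def wI} give well-definedness of $w^+_\sH$ on $\cS_{s,p}(\sH)\times\cS_{s,p}(\sH)$ and of $w^+_\sI$ on $\cS_{s,p}(\sI_-)\times\cS_{s,p}(\sI_-)$. So it suffices to show that $T_\sH\phi\in\cS_{s,p}(\sH)$ and $T_\sI\phi\in\cS_{s,p}(\sI_-)$ for every $\phi\in\Sol_{s,p}(\cM)$.

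First, the decay. By part 4 of Proposition~\ref{prop:Green hyp props}, every $\phi\in\Sol_s(\cM)$ is of the form $\Delta_s f$ with $f\in\Gamma_c(\cV_s)$, i.e. a difference of a backward and a forward solution of a forcing problem with compactly supported source. Near $\sH$ and $\sI_-$ only the backward part matters for the past asymptotics: the forward solution $\Delta_s^+f$ vanishes in a neighbourhood of the past ends of $\sH$ and $\sI_-$, and on the remaining compact portions everything is smooth. Theorem~\ref{th:4.2} then gives the required decay of $\phi_s$ along $\sH$ in the Kruskal tetrad. Corollary~\ref{cor:4.1}, via Lemma~\ref{lem6.3}, gives the decay of $T_\sI\phi$ along $\sI_-$. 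Both statements apply to the $+s$ component and, by the same arguments, to the $-s$ component, which satisfies the conjugate Teukolsky equation with spin $-s$. The support conditions $1_{U>U(\phi)}=0$ and $1_{\ft>\ft_0}=0$ come from finite speed of propagation. Hence $T_\sH\phi\in\cS_s(\sH)$ and $T_\sI\phi\in\cS_s(\sI_-)$, as already noted after the definition of $\cS_s(\sH)$.

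Second, the physical constraints. On $\sH$, the relations $\overline{\phi_{-s}}=B_s\phi_s$ and $\overline{\phi_{-s}}\in\Ran A_s$ hold in the bulk. Since $B_s$ only involves ${\tho}'$, which is tangential to $\sH$ by the remark following Definition~\ref{def:boundary spaces}, restriction commutes with $B_s$. Restriction also commutes with $A_s$, which contains only angular derivatives and $\cL_\xi$, both tangential to $\sH$. On $\sI_-$, Lemma~\ref{lemma:phys subspace bij} writes $\phi=({\tho}^{2s}\overline{\psi_{-s}},A_s\overline{\psi_{-s}})$. The conformal weight $x^{-1}$ then has to be commuted through ${\tho}^{2s}$ using $[\breve{\tho},x]=-x^2$, and $[A_s,T_\sI]=0$ by Proposition~\ref{prop:A_s properties}(5). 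Together these give $\breve\phi_s={\tho}^{2s}A_s^{-1}\overline{\breve\phi_{-s}}$ at $x=0$. This is precisely the embedding recorded in the remark after Definition~\ref{def:boundary spaces}.

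I expect the $\sI_-$ constraint to be the main obstacle. One must check that the lower-order terms generated when ${\tho}^{2s}$ is commuted past the rescaling, which come from the $2wx$ term and from $[\breve{\tho},x]$, carry positive powers of $x$ and therefore vanish on $\sI_-$. One must also make sure that $A_s^{-1}$ is meaningful there; I would take it as the injective inverse on $\Ran A_s$ coming from Proposition~\ref{prop:A_s properties}(3),(4). With these two inclusions in hand, $w_U^+$ is the sum of two finite sesquilinear expressions, so it is well defined.
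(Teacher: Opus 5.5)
Your proposal is correct and follows the paper's argument. The paper likewise reduces the claim to $T_\sH\phi\in\cS_{s,p}(\sH)$ and $T_\sI\phi\in\cS_{s,p}(\sI_-)$, citing the discussion in Section~\ref{sec:sympl form} and in particular the remark after Definition~\ref{def:boundary spaces}, and then applies Lemma~\ref{lemma:well def wH} and Lemma~\ref{lemma:well def wI}; you simply spell out the decay and constraint checks for the traces in more detail than the paper does.
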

\begin{proof}
For $\phi \in \Sol_{s,p}(\cM)$,  one has $T_\sH\phi\in \cS_{s,p}(\sH)$ and $T_\sI\phi\in \cS_{s,p}(\sI_-)$  as discussed in Section~\ref{sec:sympl form}. Therefore, the statement follows directly from Lemma~\ref{lemma:well def wH} and Lemma~\ref{lemma:well def wI}.
\qeds
\end{proof}

 \begin{remark}
\label{rem:2pt distr estimates}
 We note that the spacetime two-point function 
 \begin{align}
   W^+_U(f,h)=w^+_U\left(({\tho}^{2s}\overline{E_{-s}}(\overline{f}), A_s\overline{E_{-s}}(\overline{f})), ({\tho}^{2s}\overline{E_{-s}}(h), A_s\overline{E_{-s}}(h))\right)\, ,
 \end{align}
 for $f\in \Gamma_c(\cB(s,s))$, $h\in \Gamma_c(\cB(s,-s))$, is indeed a bi-distributions, i.e., it satisfies estimates
 \begin{align}
     \abs{W^+_U(f,h)}\leq C \norm{f}_{C^m}\norm{h}_{C^m}
 \end{align}
 for some $m\in \nn$ and some constant $C>0$ that depends on the supports of $f$ and $h$. This follows from the estimates in the proofs of Lemma~\ref{lemma:well def wH} and Lemma~\ref{lemma:well def wI}, together with the estimates in  Corollary~\ref{cor:4.1} and Theorem~\ref{th:4.2}.
\end{remark}

\subsection{Positivity}
It remains to show the positivity of the two-point function $w^+_U$.
\begin{theorem}
Let $\phi=(\phi_s, \overline{\phi_{-s}})\in \Sol_{s,p}(\cM)$. Then 
\begin{subequations}
\begin{align}
w_U^+(\phi,\phi)\geq 0\, ,\\
w_U^-(\phi,\phi):=w_U^+(\phi,\phi)-\i\sigma(\phi,\phi)\geq 0\, ,
\end{align}
\end{subequations}
and hence $w_U^+$ induces a well-defined quasi-free state $\omega_U$ on $\cA_{s,p}(\cM)$.
\end{theorem}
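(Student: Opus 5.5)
Since $w_U^+$ is the sum of a horizon part and a scri part, and by Corollary~\ref{cor:sympl form} the charged symplectic form splits the same way, $\sigma_s(\phi,\phi)=\sigma_{\sH,s}(T_\sH\phi,T_\sH\phi)+\sigma_{\sI,s}(T_\sI\phi,T_\sI\phi)$, it suffices to prove the two inequalities separately on $\cS_{s,p}(\sH)$ and on $\cS_{s,p}(\sI_-)$ (restricted to the images of the trace maps). On each boundary piece the plan is to use the physical constraint to rewrite both $w^+$ and $\sigma$ as quadratic forms in a single boost-weight-zero section, where $\IP{\cdot,\cdot}_0$ is positive definite by Corollary~\ref{cor:B(s,0) IP}. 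Positivity then follows from the spectral calculus of Remark~\ref{remark:func calc}. The key identity is that $\i\sigma$ equals the same quadratic form with $1_{\rr_+}(x)x$ replaced by $x$, so that $w^-$ corresponds to $(1_{\rr_+}(x)-1)x=1_{\rr_-}(x)(-x)\geq 0$.

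On $\sH$, I work in the Kruskal tetrad and use $\overline{\phi_{-s}^U}=(r_+-M)^{2s}\partial_U^{2s}\phi_s^U$. The expression for $w^+_\sH(\phi,\phi)$ in the Kruskal trivialization is $4(r_+-M)^{2s}\int \overline{\widetilde\phi}\,1_{\rr_+}(D_U)D_U\widetilde\phi$ with $\widetilde\phi=D_U^s\phi_s^U$. I then write the integrand spectrally, $\int 1_{\rr_+}(k)k|\hat{\widetilde\phi}(k)|^2\,\d k\geq0$, using the Fourier transform in $U$ on $L^2_s(\sH)$. This requires checking that $\widetilde\phi$ and $D_U\widetilde\phi$ lie in $L^2$, or that the pairing is the limit of such, as in the proof of Lemma~\ref{lemma:well def wH}. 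Next, inserting the constraint into \eqref{eq:def sigma H}, $\sigma_{\sH,s}(\phi,\phi)$ becomes $2(-1)^s(r_+-M)^{2s}\int[\overline{\phi_s}\partial_U^{2s+1}\phi_s+\ldots]$. Integrating by parts $s$ times, with boundary terms vanishing by the decay $\langle U\rangle^{\pm s-n}(\log)^{-d-n}$ in $\cS_s(\sH)$ and the support condition at large $U$, this becomes $\pm 4\i(r_+-M)^{2s}\int\overline{\widetilde\phi}\,D_U\widetilde\phi$. The sign $(-1)^s$ together with the $\i^{2s}$ from the $D_U$'s should combine into exactly $-\i\sigma=-\langle\widetilde\phi,D_U\widetilde\phi\rangle$ times the same constant. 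That gives $w^-_\sH=4(r_+-M)^{2s}\int 1_{\rr_-}(k)|k||\hat{\widetilde\phi}|^2\geq0$.

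On $\sI_-$, I use the conformally rescaled Kinnersley tetrad, where $\breve\tho=2\partial_{\ft}$ on $x=0$ and $\breve\phi_s=\breve\tho^{2s}A_s^{-1}\overline{\breve\phi_{-s}}$. Set $\chi:=A_s^{-1}\overline{\breve\phi_{-s}}$. The two terms of \eqref{eq:def wI} become $2^{2s+1}\int[\overline{D^s\chi}\,1_{\rr_+}(D)D^{s+1}A_s\chi+\overline{D^{s}A_s\chi}\,1_{\rr_+}(D)D^{s+1}\chi]$ after moving $s$ of the $\partial_\ft$ derivatives. Since $A_s$ commutes with $D_\ft$, is symmetric on $L^2(\ss^2)$, and after Fourier transform in $\ft$ acts as the positive operator $a_s(\omega)$ (Proposition~\ref{prop:A_s properties}, part 3), the form becomes $2^{2s+2}\int 1_{\rr_+}(k)k^{2s+1}\langle\hat\chi(k),a_s(k)\hat\chi(k)\rangle\,\d k\ge0$. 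Here I expand in spin-weighted spheroidal harmonics with eigenvalues $N(s,k,m,\ell)>0$, noting that $k^{2s+1}\ge0$ on $k\ge 0$. The same manipulations applied to \eqref{eq:def sigma I} should give $-\i\sigma_\sI$ as the same integral without $1_{\rr_+}$, so $w^-_\sI$ carries $1_{\rr_-}(k)(-k^{2s+1})\ge0$, using that $2s+1$ is odd.

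I expect the main obstacle to be the precise sign and constant bookkeeping: $(-1)^s$, the powers of $\i$, and the factor $(r_+-M)^{2s}$ versus $2^{2s}$ must make $-\i\sigma$ match $w^+$ with $1_{\rr_+}$ removed exactly. The boundary-term justification in the integrations by parts is the other delicate point, especially near $U\to-\infty$, where the weights $U^{\pm s}$ in the Kruskal tetrad only marginally compensate. If the direct $U$-Fourier argument is too weak there, I would fall back on the change of variables $U=-e^{-\kappa_+{}^*t}$ and \cite[Lemma D.1]{GHW}, as in Lemma~\ref{lemma:well def wH}. The factor $1_{\rr_+}(D_U)$ then becomes the bounded positive multiplier $\chi^+_\beta(D_{{}^*t})$, and positivity of both $w^\pm$ follows from $0\le\chi^+_\beta\le1$.
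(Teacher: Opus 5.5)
Your proposal follows the paper's proof essentially step for step. You split via Corollary~\ref{cor:sympl form} and reduce each boundary piece to one boost-weight-zero quantity. You obtain $w^+\ge 0$ from positivity of $1_{\rr_+}(D)D$, on $\sI_-$ together with positivity of $A_s$, which commutes with $D_\ft$. You obtain $w^-\ge 0$ from $s$ integrations by parts showing that $\i\sigma$ is the same form with $1_{\rr_+}$ removed, which is exactly the cancellation the paper verifies in its steps (3) and (5). In your $\sI_-$ paragraph, $-\i\sigma_\sI$ should read $\i\sigma_\sI$, but your conclusion $1_{\rr_-}(k)(-k^{2s+1})\ge 0$ there is right.

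On the technical point you flag: $D_U^s\phi_s^U$ in general decays only logarithmically as $U\to-\infty$, so it need not lie in $L^2(\d U)$, and your limiting argument is genuinely required. Your fallback, as stated, would not supply it, for two reasons. First, for this bosonic form the ${}^*t$ multiplier is $\pm k/(1-e^{\mp\beta k})$, with $\beta=2\pi/\kappa_+$ (Remark~\ref{rem:wh restriction to H-}); it is positive but unbounded, not a bounded $\chi^+_\beta\in[0,1]$. Second, the ${}^*t$ chart only covers $U<0$, whereas $T_\sH\phi$ may also be supported in $U>0$.
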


\begin{proof}
The proof proceeds in multiple steps:\\
(1) Recalling the definition of $w_U^+$ and Theorem~\ref{theorem:cons sympl form}, we can write for $\phi$, $\psi\in  \Sol_{s,p}( \cM)$
\begin{align}
    w_U^-(\phi, \psi)=&w_\sH^+(T_\sH\phi,T_\sH\psi)+w_\sI^+(T_\sI\phi,T_\sI\psi)\\\nonumber 
    &-\i\left[ \sigma_\sH (T_\sH \phi,T_\sH \psi) +\sigma_\sI (T_\sI \phi , T_\sI \psi ) \right]\\\nonumber
    =&w_\sH^-(T_\sH\phi,T_\sH\psi)+w_\sI^-(T_\sI\phi,T_\sI\psi)\, ,
\end{align}
where we use the natural definition
\begin{subequations}
\begin{align}
   w_\sH^-(T_\sH\phi,T_\sH\psi)&=w_\sH^+(T_\sH\phi,T_\sH\psi)-\i\sigma_\sH(T_\sH\phi,T_\sH\psi)\\
    w_\sH^-(T_\sI\phi,T_\sI\psi)&=w_\sI^+(T_\sI\phi,T_\sI\psi)-\i\sigma_\sI(T_\sI\phi,T_\sI\psi)\, .
\end{align}
\end{subequations}
\\
(2) Let us start with $w^+_{\sH}(T_\sH\phi, T_\sH\phi)$. Then writing $(T_\sH\phi)=(\phi_s,\overline{\phi_{-s}})$, and using the definition of $\widetilde{\cU}_s$, we can write
\begin{align*}
w^+_\sH(T_\sH\phi,T_\sH\phi):&=4(r_+-M)^{2s}\left\langle\cU_s\phi_s, 1_{\rr_+}(\i\Theta_U)(\i\Theta_U)\cU_s\phi_{s}\right\rangle_\sH\, .
\end{align*}
Here, we have employed the notation from Remark~\ref{remark:func calc}. From this form, it is directly visible that this expression is positive, since the operator $1_{\rr_+}(\i\Theta_U)(\i\Theta_U)$ acting on $\Dom(\Theta_U)\subset L^2_s(\sH)$ is a positive operator. 
\\
(3)
In the next step, we use the identity $\id=1_{\rr_+}(\i\Theta_U)+1_{\rr_-}(\i\Theta_U)$ following from functional calculus on $L^2_s(\sH)$ and the definition for $\widetilde{\cU}_s$ to express
\begin{align}
    w^-_\sH(T_\sH\phi,T_\sH\phi')=&2^{2}(r_+-M)^{2s}\left\langle\cU_s\phi_s, -1_{\rr_-}(\i\Theta_U)(\i\Theta_U) \cU_s\phi_{s}'\right\rangle_\sH\\\nonumber
    &+2\int\limits_{\mathclap{\rr\times\ss^2}} \left[ \overline{\cU_s\phi_s} (\i\Theta_U) \widetilde{\cU}_s\overline{\phi_{-s}'} +\overline{\widetilde{\cU}_s}\phi_{-s} (\i\Theta_U)\cU_s\phi_s'\right.\\\nonumber
    &\left.-(-1)^s(\overline{\phi_s}(\i\Theta_U)\overline{\phi_{-s}'}+\phi_{-s}(\i\Theta_U)\phi_s')\right](r_+^2+a^2)\d U\d^2\omega_+\, .
\end{align}
The well-definedness of this expression follows from the well-definedness of its two parts.

Comparing also to the previous part of the proof, it is apparent that the first part is positive if $\phi_s=\phi_s'$, since $-1_{\rr_-}(\i\Theta_U)(\i\Theta_U)$ is a positive operator on the domain of $\Theta_U$ in $L^2_s(\sH)$.

We thus focus on the second part and show that it vanishes.
Expressing the second part in the Kruskal tetrad, and utilizing Definition~\ref{def:boundary spaces} of $\cS_{s,p}(\sH)$, it reads
\begin{align*}
    c_s  \int\limits_{\mathclap{\rr\times\ss^2}} \left[ 2 \partial_U^s \overline{\phi_s^U} \partial_U^{s+1} \phi_s'^U -(-1)^s \left( \overline{\phi_s^U} \partial_U^{2s+1} \phi_s'^U +\partial_U^{2s} \overline{\phi_s^U} \partial_U \phi_s'^U \right) \right] \d U \d^2\omega_+\, ,
\end{align*}
where $c_s=2\i(r_+-M)^{2s}(r_+^2+a^2)$.
To show that this vanishes, we will use partial integration. After one partial integration, we obtain
\begin{align*}
   c_s  \int\limits_{\mathclap{\rr\times\ss^2}} \left[ 2 \partial_U^s \overline{\phi_s^U} \partial_U^{s+1} \phi_s'^U -(-1)^{s-1} \left( \partial_U\overline{\phi_s^U} \partial_U^{2s} \phi_s'^U +\partial_U^{2s-1} \overline{\phi_s^U} \partial_U^2 \phi_s'^U \right) \right] \d U \d^2\omega_+\\
   -(-1)^sc_s\int\limits_{\ss^2}\left[\overline{\phi_s^U} \partial_U^{2s}\phi_s'^U +\partial_U^{2s-1} \overline{\phi_s^U} \partial_U \phi_s'^U\right]_{-\infty}^{\infty}\d^2\omega_+\, .
\end{align*}
By the definition of $\cS_{s,p}(\sH)$, we can estimate
\begin{align*}
    \abs{\overline{\phi_s^U}\partial_U^{2s}\phi_s'^U}, \abs{\partial_U^{2s-1}\overline{\phi_s^U}\partial_U\phi_s'^U}\leq \abs{\log\IP{U}}^{-d}\, 
\end{align*}
for some $d>1$,  which vanishes for $U\to -\infty$. For $U\to \infty$, both terms actually vanish to arbitrary order due to the support properties of functions in $\cS_{s,p}(\sH)$. Therefore, the boundary term vanishes. Repeating the same process another $s-1$ times, we arrive at
\begin{align*}
    c_s  \int\limits_{\mathclap{\rr\times\ss^2}} \left[ 2 \partial_U^s \overline{\phi_s^U} \partial_U^{s+1} \phi_s'^U - \left( \partial_U^s\overline{\phi_s^U} \partial_U^{s+1} \phi_s'^U +\partial_U^{s} \overline{\phi_s^U} \partial_U^{s+1} \phi_s'^U \right) \right] \d U \d^2\omega_+=0\, ,
\end{align*}
as desired. 
\\
(4)
Next, consider $w_\sI^+(T_\sI \phi,T_\sI\phi)$. We utilize the notation $T_\sI\phi=(\breve \phi_s,\overline{\breve \phi_{-s}})$. Taking into account the definition of $\widetilde{\cW}_s$, it can be written as
\begin{align*}
w^+_\sI(T_\sI\phi,T_\sI\phi) =&2  \left[ \left\langle \cW_s A_s^{-1}\overline{\breve{\phi}_{-s}}, 1_{\rr_+}(\i\Theta_\ft)(\i\Theta_\ft) \cW_s\overline{\breve{\phi}_{-s}}\right\rangle_\sI\right.\\
    &\left.+\left\langle\cW_s\overline{\breve{\phi}_{-s}}, 1_{\rr_+}(\i\Theta_\ft) (\i\Theta_\ft) \cW_sA_s^{-1}\overline{\breve{\phi}_{-s}}\right\rangle_\sI\right]\, .
\end{align*}
In the following, we work in the conformally rescaled Kinnersley tetrad. This allows us to consider $\overline{\breve{\phi}_{-s}}$ as elements of $\cinf(\rr_\ft;\Gamma(\cB_s^{\ss^2}))$. In that tetrad, we have $\cW_s=2^s\partial_\ft^s$, and $A_s^{-1}$ commutes with $\cW_s$, so that one can write 
\begin{align*}
   \left\langle \cW_s A_s^{-1}\overline{\breve{\phi}_{-s}}, 1_{\rr_+}(\i\Theta_\ft)(\i\Theta_\ft) \cW_s\overline{\breve{\phi}_{-s}}\right\rangle_\sI=2^{2s}\int\limits_{\mathclap{\rr\times\ss^2}} \overline{ A_s^{-1}} \partial_\ft^s\breve{\phi}_{-s}^K 1_{\rr_+}(D_\ft)D_\ft \partial_\ft^s\overline{\breve{\phi}_{-s}^K} \d\ft \d^2\omega^*\, ,\\
   \left\langle\cW_s\overline{\breve{\phi}_{-s}}, 1_{\rr_+}(\i\Theta_\ft) (\i\Theta_\ft) \cW_sA_s^{-1}\overline{\breve{\phi}_{-s}}\right\rangle_\sI=2^{2s}\int\limits_{\mathclap{\rr\times\ss^2}}\partial_\ft^s\breve{\phi}_{-s}^K 1_{\rr_+}(D_\ft)D_\ft A_s^{-1}\partial_\ft^s\overline{\breve{\phi}_{-s}^K} \d\ft \d^2\omega^*\, .
\end{align*}
The positivity of the second term then follows directly from the fact that both $A_s^{-1}$ and  $1_{\rr_+}(D_\ft)D_\ft$ are positive operators on $L^2(\rr_\ft\times \ss^2)$, the intersection of their domains including $\partial_\ft^s \overline{\breve{\phi}_{-s}}$. In addition, the two operators commute, so that their product is again a positive operator.
For the positivity of the first term, we use that the operator $D_\ft$ is  self-adjoint on $\{\phi\in L^2_s(\sI_-):\partial_\ft\in L^2_s(\sI_-)\}$, and so is the real function $1_{\rr_+}(D_\ft)D_\ft$ thereof. We can thus transfer this operator to the other side to arrive at the same situation as for the first term. The positivity of the product follows again from the commutativity of $1_{\rr_+}(D_\ft)D_\ft$ with $A_s^{-1}$.
\\
(5) Similar to step (3), we use functional calculus to obtain 
\begin{align}
\label{eq:wI_minus}
    w^-_{\sI}(\psi,\psi')=&2\IP{ \cW_s A_s^{-1} \overline{\breve{\psi}_{-s}} ,(-1_{\rr_-}(\i\Theta_\ft)) (\i \Theta_\ft) \cW_s \overline{\breve{\psi}_{-s}'} }_\sI\\\nonumber
    &+2\IP{\cW_s \overline{\breve{\psi}_{-s}} ,(-1_{\rr_-}(\i\Theta_\ft)) (\i\Theta_\ft) \cW_s  A_s^{-1} \overline{\breve{\psi}_{-s}'} }_\sI\\\nonumber
    &+2 \int\limits_{\mathclap{\rr\times\ss^2}} \left[\overline{{\tho}^s A_s^{-1}}\breve{\psi}_{-s}(\i\Theta_\ft){\tho}^s\overline{\breve{\psi}_{-s}'}+\overline{{\tho}^s}\breve{\psi}_{-s}(\i\Theta_\ft){\tho}^sA_s^{-1}\overline{\breve{\psi}_{-s}'}\right.\\\nonumber
   &\left.-(-1)^s\left(\overline{{\tho}^{2s}A_s^{-1}}\breve{\psi}_{-s} (\i\Theta_\ft)\overline{\breve{\psi}_{-s}'}+\breve{\psi}_{-s}(\i\Theta_\ft){\tho}^{2s}A_s^{-1}\overline{\breve{\psi}_{-s}'}\right)\right]\d\ft \d^2\omega^*\, .
\end{align}
The positivity of the first two terms follows from essentially the same reasoning as in step (4), where the positivity of $\omega_\sI^+(T_\sI\phi, T_\sI\phi)$ was shown. Similar to step (3), it remains to show that the third and last row in \eqref{eq:wI_minus} cancel.

To do so, we fix the conformally rescaled Kinnersley tetrad, in which the last two terms in \eqref{eq:wI_minus} can be rewritten as
\begin{align*}
    &2^{1+2s}\i \int\limits_{\mathclap{\rr\times\ss^2}} \left[\partial_\ft^s\overline{A_s^{-1}}\breve{\psi}_{-s}^K\partial_\ft^{s+1}\overline{\breve{\psi}_{-s}'^K}+\partial_\ft^s\breve{\psi}_{-s}^K\partial_\ft^{s+1}A_s^{-1}\overline{\breve{\psi}_{-s}'^K}\right.\\\nonumber
   &\left.-(-1)^s\left(\partial_\ft^{2s}\overline{A_s^{-1}}\breve{\psi}_{-s}^K\partial_\ft\overline{\breve{\psi}_{-s}'^K}+\breve{\psi}_{-s}^K\partial_\ft^{2s}A_s^{-1}\overline{\breve{\psi}_{-s}'^K}\right)\right]\d\ft \d^2\omega^*\, .
\end{align*}
Using the decay of elements in $\cS_{s,p}(\sI_-)$ and partial integration as in part (3) of the proof, one finds that the two parts cancel.
\qeds
\end{proof}

\section{The Hadamard property of the Unruh state}
\label{sec:Hadamard}
In this section, we will show
\begin{theorem}
\label{thm:Had}
    The Unruh state $\omega_U$ on $\cA_{s,p}(\cM)$ defined by the two-point function $w^+_U$ is Hadamard in the sense of Definition~\ref{def:Hadamard}.
\end{theorem}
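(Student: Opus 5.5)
We follow the strategy of \cite{DMP, GHW, HaK}: reduce the Hadamard condition to a wavefront set statement for the boundary two-point functions, composed with the trace maps, and handle separately the parts of phase space that reach $\sH$ and those that reach $\sI_-$. First, by \eqref{eq:2pt fct weak bisol}, $W^\pm_U$ are bisolutions of $T_{-s}\otimes\overline{T_{-s}}$ (up to the identification in \eqref{eq:spacetime 2pt fct}). By propagation of singularities for the normally hyperbolic operator $T_{-s}$ in each variable, $\WF'(W^\pm_U)\subset (\cN\cup\{0\})\times(\cN\cup\{0\})$ is invariant under the bicharacteristic flow. Since $W^+_U-W^-_U=\i\sigma$ is given by the commutator function, whose wavefront set is known, it suffices to prove $\WF'(W^+_U)\subset \cN^+\times\cN^+$; the statement for $W^-_U$ then follows as in \cite{F}, either by the same argument with $1_{\rr_-}$, or by hermiticity. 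Using the Killing flow and the flow-invariance, it further suffices to exclude from $\WF'(W^+_U)$ pairs $(x,k;y,l)$ with $k\in\cN^-$, together with the zero-covector parts $(x,k;y,0)$.

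Next, we split $W^+_U=W^+_\sH+W^+_\sI$ and write each term as a composition. For example, $W^+_\sH(f,h)=\IP{\cU_s T_\sH{\tho}^{2s}\overline{E_{-s}}\bar f,\, 1_{\rr_+}(D_U)D_U\, \cU_s T_\sH{\tho}^{2s}\overline{E_{-s}}h}$, and similarly for $\sI_-$ with $A_s^{-1}$ and $\cW_s$. The kernel of $1_{\rr_+}(D_U)D_U$ on $\rr_U\times\ss^2$ has wavefront set contained in $\{(U,\xi_U;U,\xi_U):\xi_U\geq0\}\cup\{\text{conormal to the diagonal in }\ss^2\}$; the analogue holds on $\sI_-$. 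The operator $A_s^{-1}$ is an elliptic pseudodifferential operator on the sphere, by Proposition~\ref{prop:A_s properties}(3), so it does not enlarge wavefront sets. Let $\chi\in\cinf(\cM)$ localize near a point $x$. The map $f\mapsto T_\sH{\tho}^{2s}\overline{E_{-s}}\chi f$ is a Fourier integral operator whose canonical relation is given by following null bicharacteristics from $(x,k)$ back to $\sH$, on the part of phase space whose backward flow hits $\sH$ transversally. The covector it produces there has $\xi_U$ of the same sign as the time orientation of $k$. The composition calculus then shows that these contributions lie in $\cN^+\times\cN^+$. The sphere directions are handled by a standard cone-argument, as in \cite{GHW}.

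The main obstacle is the part of phase space that does not reach the boundary at finite coordinate values: bicharacteristics tending backward to $i^-$, to the trapped set $K$, or to the radial sets $L_\pm$ and $\mathcal R_{in}$. There, composition with the traces is not a clean FIO, and the decay estimates are only polynomial. To handle this we use a microlocal Fourier-side argument. Write $\chi W^+_U\chi$ through the Fourier transform in $\ft$, decompose the frequencies $\sigma$ into $\pm$ half-lines, and for $\sigma\to\pm\infty$ pass to the semiclassical operator $\hat T_{s,h}(z)$. Using Propositions~\ref{pp3}, \ref{pp1}, \ref{pp2} and \ref{pp4} together with the flow picture of Proposition~\ref{pp0}, we aim to show that the boundary projectors $1_{\rr_+}$ kill the wrong-sign frequencies modulo $\cO(h^\infty)$ microlocally away from $\Sigma_\pm$ of the correct sign. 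The hard step is at the trapped set, where we pay $h^{-2}$ losses, and at $L_\pm$, where the threshold condition $r>\tfrac12-s+\ldots$ must be met. For both we rely on Theorem~\ref{theorem:7.1} and on the uniform $h^N$ errors to obtain rapid decrease of $W^+_U$ in cones containing past-directed covectors.

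Finally, we extend the result from $\MI$ across $\sH_+$ into $\MII$ by propagation of singularities along the flow, using the Kruskal tetrad, which is smooth at the bifurcation sphere. This concludes the proof of Theorem~\ref{thm:Had}.
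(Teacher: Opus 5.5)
There is a genuine gap in the step you call the main obstacle: obtaining the \emph{sign} of singular covectors along bicharacteristics that do not reach $\sH$ or $\sI_-$ transversally, above all the past-trapped ones. The semiclassical estimates (Propositions~\ref{pp3}--\ref{pp4}, with the $h^{-2}$ loss at $K_{\pm1}$) only give polynomial control of backward solutions. They cannot give rapid decrease there: a conic neighbourhood of a past-trapped covector contains escaping bicharacteristics, so the boundary traces of $\overline{E_{-s}}$ applied to an oscillatory test section are not $\cO(\lambda^{-\infty})$.

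Rapid decrease must therefore come from the spectral structure of the boundary state, and your description of that structure is off. Only $w^\pm_\sI$ is a ground state for $\partial_\ft=\partial_t$. On $\sH_-$ the factor $1_{\rr_+}(D_U)D_U$ becomes the thermal multiplier $\chi_\pm(D_{{}^*t})$ of Remark~\ref{rem:wh restriction to H-}. So $w^\pm_\sH$ is KMS at $\beta=2\pi/\kappa_+$ for $v_\sH=\partial_t+\frac{a}{r_+^2+a^2}\partial_\varphi$, and splitting $\sigma$ into half-lines is the wrong variable for that term.

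More fundamentally, the sign of a Killing frequency does not fix the time orientation where that Killing field is spacelike: the ergoregion for $\partial_t$, the far region for $v_\sH$. For each sign of $\sigma$ both $\Sigma_+$ and $\Sigma_-$ occur (Proposition~\ref{pp0}), so ``killing wrong-sign frequencies'' proves nothing without a geometric input. The paper supplies two ingredients, neither of which appears in your plan:
\begin{itemize}
\item Proposition~\ref{prop:KMS and ground} (invariance plus KMS/ground-state conditions), fed into the Sahlmann--Verch argument (Lemmas~\ref{lemma:SV2.1} and~\ref{lemma:passivity Had proof}). This shows that $(x,\pm\xi;x,\mp\xi)$ is a direction of rapid decrease for $W^\pm_\sH$ when $\xi(v_\sH)<0$, and for $W^\pm_\sI$ when $\xi(\partial_t)<0$.
\item The fact from \cite[Prop.~6.6]{HaK} that past-directed covectors on past-trapped null geodesics satisfy both inequalities (Corollary~\ref{cor:Had bw trapped}).
\end{itemize}

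The same issue leaves bicharacteristics running from $\sI_-$ to $\sI_+$ uncovered. There $\xi(\partial_t)$ settles $W^\pm_\sI$, but $v_\sH$ may stay spacelike along the whole geodesic. A past-directed $\xi$ can then have $\xi(v_\sH)>0$, and the thermal factor suppresses nothing. What has to be proved is that $W^\pm_\sH$ is smooth there: the horizon trace of backward solutions with sources $e^{-\i\lambda(\tau_0,k_0)\cdot(\ft,x)}\varphi$ must be $\cO(\lambda^{-\infty})$ in $L^2(\langle\sigma\rangle^M\d\sigma)$. This is Propositions~\ref{RegHor} and~\ref{propRegHor}. They rest on the fact that such semiclassical bicharacteristics connect $\sR_{in}$ and $\sR_{out}$ stably (Lemma~\ref{lemma10.19}). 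Flow lines from the horizon region therefore reach $L_\pm$, $K$ or $\sR_{in}$ without meeting them, and radial and trapping estimates apply there. Your sketch names the right estimates but aims them at the wrong statement.

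Two smaller points:
\begin{itemize}
\item $A_s$ contains $\partial_\ft$. Its principal symbol on $\rr_\ft\times\ss^2$ is proportional to $(\xi_\theta^2+(a\sin\theta\,\xi_\ft+\xi_\varphi/\sin\theta)^2)^{2s}$, which vanishes on non-zero covectors. So $A_s^{-1}$ is not an elliptic pseudodifferential operator; part~(3) of Proposition~\ref{prop:A_s properties} holds only for each fixed $\omega$. The paper avoids $A_s^{-1}$ altogether by writing $W^\pm_\sI$ through $\mfi_s$ and $\mfi'_s$.
\item The FIO composition on the non-compact horizon $\sH$ cannot be done globally. It needs the cut-off $\chi_q$, with the tails controlled by Theorem~\ref{th:4.2}, Corollary~\ref{cor:4.1} and propagation of singularities on $\MK_K$.
\end{itemize}
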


\begin{remark}
    The results of our previous section indicate that the Unruh state could in principle be defined on the whole Kruskal extension $\MK_K$. This would require replacing all arguments based on finite speed of propagation with arguments based on the decay results in Corollary~\ref{cor:4.1} and Theorem~\ref{th:4.2}, as well as adding an additional term  to the two-point function which is analogous to $w^+_{\sI}$, but defined on $\sI_+'$, the conformal future null boundary of $\MI'=\{U>0,V<0\}\subset \MK_K$. The reason we do not discuss this extension is that the Hadamard property of the Unruh state is expected to break down at $\sH$, rendering the extension unphysical. In our proof of the Hadamard property, this manifests in the invalidity of Lemma~\ref{lemma:time ori horizon}, which guarantees that for a point $(y,\xi)\in (\cN^+\cup\cN^-)\vert_\sH$ belonging to any bicharacteristic transversal to the horizon, the condition $k(\partial_U)>0$ implies that $(y,\xi)\in \cN^+$, see also \cite[Remark 8.4]{GHW}. Further indication is given by the uniqueness result of Kay and Wald \cite{KW}, though there are some subtleties related to the algebra on which the state should be considered. Moreover, the work by Kay and Wald does not state where exactly the Hadamard property or the definition of the state would break down. Finally, there is some numerical evidence for the case $s=0$ in the form of the stress-energy tensor computed in the Unruh state for scalar fields on Kerr \cite{ZCLOO}. The results found there, together with the stationarity of the state, imply upon coordinate transformation that the expectation value of the stress-energy tensor of the scalar field in the Unruh state diverges on $\sH_-$. This indicates a breakdown of the Hadamard property.
\end{remark}

To begin the proof of Theorem~\ref{thm:Had}, recall that the two-point functions $w^\pm_U$ defined in Definition~\ref{def:U state} induce spacetime two-point functions $W^\pm_U$ via \eqref{eq:spacetime 2pt fct}. They can be written as
\begin{align}
    W^\pm_{ U}=&W^\pm_\sH+W^\pm_\sI\, ,
\end{align}
where $W^\pm_{\sH}$ and $W^\pm_{\sI}$ are the spacetime two-point functions obtained via \eqref{eq:spacetime 2pt fct} from the pullbacks of $w^\pm_{\sH}$ and $w^\pm_\sI$, respectively.

\subsection{Restriction to the diagonal}
 In this section, we show that it is sufficient to study the primed wavefront set of $W^\pm_U$ over the diagonal in $T^*\cM\times T^*\cM$. The idea is based on \cite[Proposition 6.2]{SVW}, see also \cite{GHW, Klein}.

We recall that $W^\pm_{ U}$ is by construction a weak solution to a Teukolsky equation in both entries, see \eqref{eq:2pt fct weak bisol}. Since $T_{-s}$ is a normally hyperbolic operator, the wavefront set of $W^\pm_{ U}$ must be contained in the lightcone $(\cN\cup\{0\})\times(\cN\cup\{0\})$ by Propagation of Singularities Theorem as in \cite[Lemma 6.5.5]{DH}. Here, $\cN=\cN^+\cup\cN^-$, and $\{0\}$ denotes the zero section of $T^*\cM$.\footnote{Note that the zero section is not contained in $\cN$ according to our definition of the lightcone.}

Moreover, by the same result, if $(x,k;y,l)\in \cN\times\cN$ is in $\WF(W^\pm_{ U})$, then the same must be true for all $(x',k';y',l')\in (\cN\cup\{0\})\times(\cN\cup\{0\})$ with $(x,k)\sim (x',k')$ and $(y,l)\sim (y',l')$, where two points in $(\cN\cup\{0\})$ are related by $\sim$ if they belong to the same bicharacteristic strip of the Teukolsky operator, or in other words the same null geodesic lifted to $T^*\cM$. \footnote{Note that here, we refer to the bicharacteristics of $T_s$ in $T^*\cM$ rather than those of $\hat{T}_s(\sigma)$ in $T^*X$ discussed in Section~\ref{subsec:semiflow}.}

Let $h\in\Gamma_c(\cB(s,-s))$. Then we define the notation
\begin{align}
\label{eq:notation soln of test fct}
    \phi_h=({\tho}^{2s}\overline{E_{-s}} h, A_s\overline{E_{-s}}h)
\end{align}
for the corresponding element of $\Sol_{s,p}(\cM)$, using the bijection from Lemma~\ref{lemma:phys subspace bij}.
Next, we note that 
\begin{lemma}
   Let $f\in \Gamma_c(\cB(-s,-s))$,  and $h\in\Gamma_c(\cB(s,-s))$, and let $\phi_{\bar f}$ and $\phi_h$  be the corresponding elements of $\Sol_{s,p}(\cM)$. Then 
   \begin{align}
   \label{eq:sympl on red space}
       \sigma (\phi_{\bar f}, \phi_h) =(-1)^s \int\limits_\cM f \left( \varrho^2(\overline{A_s})^t\varrho^{-2}{\tho}^{2s}\overline{E_{-s}}-E_s({\tho}^{2s})^t \varrho^{-2} A_s\varrho^2\right) h\dVol_g\, ,
   \end{align}
   where $({\tho}^{2s})^t: \Gamma(\cB(s,-s))\to \Gamma( \cB(s,s))$ and $(\overline{A_s})^t: \Gamma(\cB(s,s))\to\Gamma(\cB(s,s))$ are the differential operators that are formally dual to ${\tho}^{2s}$ and $\overline{A_s}$, i.e., 
   \begin{align*}
       \int\limits_\cM {\tho}^{2s}fh\dVol_g=\int\limits_\cM f({\tho}^{2s})^th\dVol_g\, , \quad
        \int\limits_\cM \overline{A_s}fk\dVol_g=\int\limits_\cM f(\overline{A_s})^tk\dVol_g
   \end{align*}
   for all $f\in \Gamma(\cB(-s,-s))$, $h\in \Gamma(\cB(s,-s))$, and $k\in \Gamma(\cB(s,s))$ so that $\supp f\cap\supp h$ or respectively $\supp f \cap \supp k$ is compact.
   \end{lemma}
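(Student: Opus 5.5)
We will compute $\sigma(\phi_{\bar f},\phi_h)$ through the isomorphism of Proposition~\ref{prop:iso phase space}(3). The idea is to exhibit both $\phi_{\bar f}$ and $\phi_h$ as $\Delta_s$ applied to compactly supported sections of $\cV_s$, and then use $\sigma_s(\Delta_s F,\Delta_s H)=(F,H)_{\Delta_s}=(-1)^s\IP{F,\Delta_s H}$.

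\textbf{Step 1: representing $\phi_h$.} Take $\phi_h=({\tho}^{2s}\overline{E_{-s}}h, A_s\overline{E_{-s}}h)$ with $\overline{E_{-s}}=\overline{E_{-s}}^--\overline{E_{-s}}^+$. Part (2) of Proposition~\ref{prop:A_s properties} gives
\[
A_s\overline{E_{-s}}h=\overline{E_{-s}}\varrho^{-2}A_s\varrho^2 h .
\]
For the first component, ${\tho}^{2s}\overline{E_{-s}}h$ lies in $\Ker_{sc}(T_s)$. By part 4 of Proposition~\ref{prop:Green hyp props} it equals $E_s h_1$ for some compactly supported $h_1$. It is cleaner not to identify $h_1$ explicitly. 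Instead, write $\phi_h=\Delta_s H$ only for the second slot, $H=(0,\varrho^{-2}A_s\varrho^2h)$, plus a first slot $E_s h_1$.

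\textbf{Step 2: direct bilinear pairing.} A more direct route uses the structure of the fibre metric, which only pairs the $s$-component of one argument with the $-s$-component of the other. Then
\[
\sigma(\phi_{\bar f},\phi_h)=(-1)^s\big[\text{pairing of }(\phi_{\bar f})_s\text{ with }(\phi_h)_{-s}\big]+(-1)^s\big[\text{pairing of }(\phi_{\bar f})_{-s}\text{ with }(\phi_h)_s\big].
\]
Each bracket is a symplectic pairing between a solution of $T_s$ and a solution of $\overline{T_{-s}}$. Such a pairing can be evaluated against a source: if $\psi=\overline{E_{-s}}k$, then the pairing of $u\in\Ker_{sc}T_s$ with $\psi$ on a Cauchy surface equals $\pm\int u\,k\,\dVol_g$. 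This follows from splitting $\cM$ at $\Sigma$ and applying Green's formula with \eqref{eq:PI identitiy}, exactly as in the proof of Proposition~\ref{prop:iso phase space}(3).

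Apply this to the two brackets:
\begin{itemize}
\item For the term $(\phi_{\bar f})_s$ against $(\phi_h)_{-s}=\overline{E_{-s}}\varrho^{-2}A_s\varrho^2h$, use the source $\varrho^{-2}A_s\varrho^2h$ and then write $(\phi_{\bar f})_s={\tho}^{2s}\overline{E_{-s}}\bar f$. Moving $E_s$ via Lemma~\ref{lem:Green FH}-type antisymmetry and ${\tho}^{2s}$ via its formal dual produces $-\int f\,E_s({\tho}^{2s})^t\varrho^{-2}A_s\varrho^2 h$.
\item For the term $(\phi_{\bar f})_{-s}=A_s\overline{E_{-s}}\bar f=\overline{E_{-s}}\varrho^{-2}A_s\varrho^2\bar f$ against $(\phi_h)_s={\tho}^{2s}\overline{E_{-s}}h$, use the source $\varrho^{-2}A_s\varrho^2\bar f$. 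After complex conjugation and transposing $\overline{A_s}$ (with the $\varrho^{\pm2}$ factors), this gives $\int f\,\varrho^2(\overline{A_s})^t\varrho^{-2}{\tho}^{2s}\overline{E_{-s}}h$.
\end{itemize}

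\textbf{Step 3: bookkeeping.} The main obstacle is tracking signs and conjugations: the $(-1)^s$ convention, the antisymmetry $E^t=-E$ (equivalently $E_s^\pm$ and $E_{-s}^\mp$ being formally dual by \eqref{eq:PI identitiy}), and the orientation of the Stokes normal. We must also justify that the integrations by parts have compact overlap. We will fix the signs by checking the $s=0$ case, where $A_0=\id$ and ${\tho}^0=\id$, and the formula must reduce to $\int f(\overline{E_0}-E_0)h$ consistent with the standard scalar identity. We will then carry the same signs through for general $s$.
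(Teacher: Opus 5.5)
Your route is sound and a little leaner than the paper's. The paper inverts $\Delta_s$ on \emph{both} arguments via a cutoff, writing $\phi=\Delta_s\cP_s(\chi\phi)$ as in Proposition~\ref{prop:Green hyp props}, and evaluates $(-1)^s\IP{\cP_s\chi\phi_{\bar f},\phi_h}$. It then removes the cutoffs using $E_sT_s\chi v=v$, $\overline{E_{-s}}\,\overline{T_{-s}}\chi w=w$ and $\int E_{-s}f\,k\,\dVol_g=-\int f\,E_sk\,\dVol_g$.

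You instead split $\sigma$ into its two cross-pairings. You feed each one the compactly supported source that Proposition~\ref{prop:A_s properties}(2) provides for the $-s$ components: $\varrho^{-2}A_s\varrho^2h$, resp.\ $\varrho^{-2}A_s\varrho^2\bar f$. The $s$-components never need a source, so your Step~1 can be dropped.

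Two precisions are needed:
\begin{itemize}
\item The one-sided identity should read $S(u,\overline{E_{-s}}k)=c\int\bar u\,k\,\dVol_g$, with $\bar u$ rather than $u$, since a section of $\cB(s,s)$ does not pair with $\cB(s,-s)$ without conjugation. It rests on the component-wise form of the divergence identity in Lemma~\ref{lemma:cons J}.
\item ``After complex conjugation'' in your second bullet should be the exact decomposition $\sigma(\Phi,\Psi)=(-1)^s\bigl[S(\Phi_1,\Psi_2)-\overline{S(\Psi_1,\Phi_2)}\bigr]$, where $S$ is the Cauchy-surface integral of the cross current.
\end{itemize}

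The genuine gap is the sign bookkeeping you postpone to Step~3: carried out, it does not give the sign you assert. The decomposition yields
\[
\sigma(\phi_{\bar f},\phi_h)=(-1)^s c\Bigl[\int{\tho}^{2s}E_{-s}f\;\varrho^{-2}A_s\varrho^2h\,\dVol_g-\int\varrho^{-2}\overline{A_s}\varrho^2f\;{\tho}^{2s}\overline{E_{-s}}h\,\dVol_g\Bigr].
\]
With $c=-1$, the value fixed by Proposition~\ref{prop:iso phase space}(3), this is verbatim the paper's last intermediate line. Transposing the first term with $\int E_{-s}f\,k=-\int f\,E_sk$ gives
\[
\sigma(\phi_{\bar f},\phi_h)=-(-1)^s c\int f\Bigl(\varrho^2(\overline{A_s})^t\varrho^{-2}{\tho}^{2s}\overline{E_{-s}}+E_s({\tho}^{2s})^t\varrho^{-2}A_s\varrho^2\Bigr)h\,\dVol_g .
\]
The relative sign is $+$ regardless of any orientation convention. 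So your first bullet, like the printed statement, has the wrong sign.

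Your proposed $s=0$ check would expose this rather than confirm it. $T_0$ is real, so $\overline{E_0}=E_0$, and the printed formula collapses to $\int f(\overline{E_0}-E_0)h\,\dVol_g=0$. But $\sigma_0(\phi_{\bar f},\phi_h)=2\int_\Sigma\bigl(E_0f\,\nabla_nE_0h-\nabla_nE_0f\,E_0h\bigr)\dVol_\gamma$ is twice a Klein--Gordon symplectic form and is not identically zero.

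So what your argument proves, once the signs are tracked, is the identity with $+E_s({\tho}^{2s})^t\varrho^{-2}A_s\varrho^2$. The minus in the statement is a typo; the final transposition in the paper's own proof also produces $+$. For the later use in Corollary~\ref{cor:Had reduction} (vanishing for spacelike separated supports, wavefront set in $\cN\times\cN$) the sign is harmless. As written, however, your proposal asserts, and plans to ``verify'', a false relative sign.
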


\begin{proof}
    First, we note that for $f\in \Gamma_c(\cB(-s,-s))$ and $h\in \Gamma_c(\cB(s,s))$, it follows from the computations in the proof of Lemma~\ref{lemma: self dual} that 
    \begin{align}
        \int\limits_\cM f(x) T_s h(x)\dVol_g=\int\limits_\cM T_{-s}f(x) h(x) \dVol_g\, ,
    \end{align}
   in any trivialization, and consequently
    \begin{align}
    \label{eq:E_s duality}
        \int\limits_\cM f(x) E_s h(x)\dVol_g=-\int\limits_\cM E_{-s}f(x) h(x) \dVol_g\, .
    \end{align}

    Let us now consider $\sigma(\phi_{\bar f},\phi_h)$, with $f$ and $h$ as in the statement of the lemma.
    We note that making use of the symplectomorphism in Proposition~\ref{prop:iso phase space}, and in particular the inverse of $\Delta_s$ constructed in the proof of Proposition~\ref{prop:Green hyp props}, we can write
    \begin{align*}
        \sigma(\phi_{\bar f}, \phi_h)=([(T_s\chi {\tho}^{2s}\overline{E_{-s}}\bar f, \overline{T_{-s}}\chi A_s\overline{E_{-s}}\bar f)], [(T_s\chi {\tho}^{2s}\overline{E_{-s}}h, \overline{T_{-s}}\chi A_s\overline{E_{-s}}h)])_{\Delta_s}\, .
    \end{align*}
    Here, $\chi\in \cinf(\cM;\rr)$ is as in the proof of Proposition~\ref{prop:Green hyp props}, i.e., for two Cauchy surfaces $\Sigma_\pm$, so that $\Sigma_+\subset I^+(\Sigma_-)$, $\chi$ satisfies $\chi\vert_{J^-(\Sigma_-)}=1$ and $\chi\vert_{J^+(\Sigma_+)}=0$.
    Using some fixed trivialization, we can then rewrite this further as
    \begin{align*}
    &([(T_s\chi {\tho}^{2s}\overline{E_{-s}}\bar f, \overline{T_{-s}}\chi A_s\overline{E_{-s}}\bar f)], [(T_s\chi {\tho}^{2s}\overline{E_{-s}}h, \overline{T_{-s}}\chi A_s\overline{E_{-s}}h)])_{\Delta_s}\\
    &= (-1)^s\int\limits_\cM \left( \overline{T_s}\chi{\tho}^{2s}E_{-s}f \overline{E_{-s}}\overline{T_{-s}}\chi A_s\overline{E_{-s}}h + T_{-s}\chi \overline{A_s}E_{-s}f E_sT_s\chi{\tho}^{2s}\overline{E_{s}}h \right) \dVol_g  \\
      &=(-1)^s\int\limits_\cM \left( \overline{T_s}\chi{\tho}^{2s}E_{-s}f A_s\overline{E_{-s}}h-\overline{A_s}E_{-s}f T_s\chi {\tho}^{2s}\overline{E_{-s}}h\right) \dVol_g\\
      &=(-1)^s\int\limits_\cM \left( -{\tho}^{2s}E_{-s}f \varrho^{-2}A_s\varrho^2 h+\varrho^{-2}\overline{A_s}\varrho^2f {\tho}^{2s}\overline{E_{-s}}h\right) \dVol_g\, .
    \end{align*}
    In the first step, we have made use of \eqref{eq:E_s duality} and the fact that $A_s\overline{E_{-s}}:\Gamma_c(\cB(s,-s))\to \cT_{-s,sc}(\cM)$, so that $\overline{E_{-s}}\overline{T_{-s}}\chi A_s\overline{E_{-s}}h= A_s\overline{E_{-s}}h$ for all $h\in \Gamma_c(\cB(s,-s))$,
     see the proof of Proposition~\ref{prop:Green hyp props}. In the second step, we have used part 2 of Proposition~\ref{prop:A_s properties} to exchange $\overline{E_{-s}}$ and $A_s$, combined with \eqref{eq:E_s duality} and the fact that ${\tho}^{2s}\overline{E_{-s}}:\Gamma_c(\cB(s,-s))\to \cT_{s,sc}(\cM)$, so that we obtain a similar reduction as before.     
     
     As a last step, by the definition of the formal dual, and using again \eqref{eq:E_s duality}, we obtain \eqref{eq:sympl on red space}.
    \qeds
\end{proof}


From the foregoing lemma, it is apparent that for $f\in\Gamma_c(\cB(-s,-s))$ and $h\in\Gamma_c(\cB(s,-s))$ with spacelike separated supports, one has $\sigma(\phi_{\bar f},\phi_h)=0$.
Moreover, it follows from Propagation of Singularities (see \cite[Theorem 6.5.3]{DH}) and the fact that differential operators with smooth coefficients do not increase the wavefront set that 
\begin{align}
    \WF(\varrho^2(\overline{A_s})^t{\tho}^{2s}\overline{E_{-s}}-E_s({\tho}^{2s})^t \varrho^{-2} A_s\varrho^2)\subset \cN\times\cN\, .
\end{align}
Thus, following the same argument as in the proof of \cite[Lemma V.1]{Klein},  see also \cite{SVW}, it follows from this property together with the positivity of the state 

\begin{corollary}
\label{cor:Had reduction}
    The spacetime two-point functions satisfy the Hadamard condition \eqref{eq:Had cond} if and only if \eqref{eq:Had cond} holds on the diagonal in $T^*\cM\times T^*\cM$.
    \end{corollary}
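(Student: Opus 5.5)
The ``only if'' direction is trivial, so the plan concerns the converse: assume $\WF'(W^\pm_U)$ restricted to the diagonal, i.e.\ to points $(x,k;x,k')$, lies in $\cN^\pm\times\cN^\pm$. I will follow the scheme of \cite[Proposition 6.2]{SVW} and \cite[Lemma V.1]{Klein}.

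\emph{Step 1: localize to the lightcone.} By \eqref{eq:2pt fct weak bisol} and propagation of singularities for the normally hyperbolic operator $T_{-s}$ in each variable, the set $\WF'(W^\pm_U)$ lies in $(\cN\cup\{0\})\times(\cN\cup\{0\})$. Moreover, it is invariant under moving either entry independently along bicharacteristic strips (lifted null geodesics).

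\emph{Step 2: exclude the zero section.} I will use positivity of $w^\pm_U$, which gives the Cauchy--Schwarz bound
\[
|W^\pm(f,h)|^2\le W^\pm(f,\bar f)\,W^\pm(\bar h,h)
\]
(up to the obvious identifications of test sections). Applied to localized oscillatory test sections, this bound shows that a point $(x,k;y,0)$ in $\WF'(W^\pm_U)$ forces $(x,k;x,-k)$, or its primed counterpart, to lie in the wavefront set on the diagonal. That case is then handled by Step 4. This is the standard argument of \cite{SVW}.

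\emph{Step 3: exploit the commutator.} The preceding lemma shows $W^+_U-W^-_U=\i\sigma$ has wavefront set in $\cN\times\cN$. Since this kernel is supported in causally related pairs, its wavefront set is contained in pairs of points joined by a null geodesic with $k'$ parallel transported from $k$.

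\emph{Step 4: reduce to the diagonal.} Take $(x,k;y,l)\in\WF'(W^\pm_U)$ with both covectors in $\cN$. I distinguish two cases.

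In the first case, $x$ and $y$ are not on a common null geodesic with $l$ the transport of $k$. Then I would like to use global hyperbolicity to move the point $y$ along its bicharacteristic to a point $y'$ near $x$. This requires an argument that the bicharacteristic through $(y,l)$ passes over $x$'s neighbourhood, which is generally false. So the intended argument is different: I use the positivity bound from Step 2 together with propagation along each strip separately. The bound yields that $(x,k)$ and $(y,l)$ each appear on the diagonal, as $(x,k;x,k)$ and $(y,l;y,l)$ in $\WF'$. The diagonal hypothesis then gives $(x,k),(y,l)\in\cN^\pm$.

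In the second case, the points are connected. I move both points along the common strip to a common point, obtaining a diagonal point, and conclude by the hypothesis. Invariance along strips then gives the sign at the original points.

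The main obstacle is the precise Cauchy--Schwarz microlocal step. It requires the two-point function to be a continuous positive bi-distribution (Remark~\ref{rem:2pt distr estimates}). It also requires the density of $\Sol_{s,p}$-test sections of the form $\phi_h$, which holds by Lemma~\ref{lemma:phys subspace bij}. Finally, the bundle structure must be handled by working in a local trivialization, so that wavefront sets are taken componentwise as in \cite{F}. I would then quote \cite[Lemma V.1]{Klein} for the concluding combinatorics.
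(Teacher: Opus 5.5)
Your handling of pairs $(x,k;y,l)$ with both covectors nonzero is fine. The positivity (Cauchy--Schwarz) lemma of \cite{SVW} puts $(x,k;x,k)$ and $(y,l;y,l)$ in $\WF'(W^\pm_U)$, and the diagonal hypothesis finishes. The ``common strip'' sub-case and the false start in Step~4 are therefore unnecessary.

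The gap is Step~2. Suppose $(x,k;y,0)\in\WF'(W^\pm_U)$. Positivity only gives $(x,k;x,k)\in\WF'(W^\pm_U)$, and the hypothesis then yields $k\in\cN^\pm$. That is perfectly consistent, so nothing is excluded. Step~4 assumes both covectors lie in $\cN$, so it does not treat this case either. Yet Definition~\ref{def:Hadamard} requires such points to be absent, since $\cN^\pm$ excludes the zero section. For $y\neq x$ you cannot move them to the diagonal by propagation of singularities: the Hamiltonian flow is trivial in the slot with vanishing covector.

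The missing idea is the commutator, which you state in Step~3 but never use. Write $\Sigma$ for the kernel of $(f,h)\mapsto\sigma(\phi_{\bar f},\phi_h)$, so that $W^-_U=W^+_U-\i\Sigma$. Since $\WF(\Sigma)\subset\cN\times\cN$ contains no point with a vanishing component, $(x,k;y,0)\in\WF(W^+_U)$ holds if and only if $(x,k;y,0)\in\WF(W^-_U)$. Apply positivity to both $W^+_U$ and $W^-_U$, which is legitimate because $W^\pm_U(\bar h,h)=w^\pm_U(\phi_h,\phi_h)\geq 0$. This gives $(x,k;x,k)\in\WF'(W^+_U)\cap\WF'(W^-_U)$, so the diagonal hypothesis forces $k\in\cN^+\cap\cN^-=\emptyset$, a contradiction. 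The case $(x,0;y,l)$ is identical. This is exactly the combination the paper invokes from \cite[Lemma V.1]{Klein}: the wavefront set of the commutator lies in $\cN\times\cN$, together with positivity of the state.

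Two minor corrections. The inclusion $\WF(\Sigma)\subset\cN\times\cN$ follows from propagation of singularities for $E_s,\overline{E_{-s}}$ and the fact that differential operators do not enlarge wavefront sets. It does not follow from $\Sigma$ being supported on causally related pairs. Also, no density statement is needed for positivity, since the inequality above holds for every test section directly.
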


\subsection{The backwards trapped set} 
After this preparation, the next step is to check the validity of the Hadamard property for each bicharacteristic of the Teukolsky operator $T_s$. Projecting the bicharacteristics to $\cM$, one obtains the null geodesics $\gamma:(\tau_-,\tau_+)\to \cM$. We can separate this into three different cases, making use of the analysis in \cite{GHW}: past-trapped null geodesics, null geodesics approaching $\sH$ towards the past, and null geodesic approaching $\sI_-$ towards the past.
In this section, we begin with the first case, the past-trapped null geodesics. 
\begin{definition}
    Let $\gamma:(\tau_-,\tau_+)\to \cM$ be a maximally extended null geodesic on $\cM$. Then we say that $\gamma$ is past-trapped  if there exist a $t_0\in \rr$ and $r_+<r_0<R_0<\infty$, so that $r_0\leq r(\gamma(\tau))\leq R_0$ for all $\tau$ for which $\gamma(\tau)\in \MI$ and $t(\gamma(\tau))\leq t_0$. The set of all past-trapped null geodesics will be called $\Gamma^-$.
    
    Similarly, $\gamma$ will be called future-trapped if there are $t_0$ and $r_+<r_0<R_0<\infty$, so that $r_0\leq r(\gamma(\tau))\leq R_0$ for all $\tau$ for which $\gamma(\tau)\in \MI$ and $t(\gamma(\tau))\geq t_0$. The set of all future-trapped null geodesics will be called $\Gamma^+$. We set $\Gamma=\Gamma^+\cup\Gamma^-$. The intersection $K=\Gamma^+\cap\Gamma^-$ is the trapped set.
\end{definition}

\begin{remark}
    We note that, in contrast to the discussion in Section~\ref{subsec:semiflow}, future and past refer to the time orientation of the background rather than the direction of the bicharacteristic flow in this section. We recall that for $(y,\xi)\in K$, one has $\xi_t\neq 0$, see \cite{Dyatlov}. As a consequence, with regard to the semi-classical flow as depicted in Figure~\ref{fig:semflow}, the geodesics in $\Gamma$ correspond to  semiclassical bicharacteristics flowing to or from $K_{\pm 1}$.
\end{remark}

Since the trapped set is located in region $\MI$, we focus on that part of the spacetime, and consider test sections in $\Gamma_{\MI}(\cB(s,s))$.
We will use the notation 
\begin{align}
    X_\pm(x):=\pm 1_{\rr_\pm}(x)x\, .
\end{align}

Let us briefly summarize our strategy for this case.
It will be similar to \cite{DMP} and \cite{Klein}: In Lemma~\ref{prop:KMS and ground}, we will first show that $w^\pm_\sH$ satisfies the analyticity properties of a KMS-state with respect to $v_\sH$, and $w^\pm_\sI$ the analyticity properties of a ground state with respect to $\partial_t$. Following the work of \cite{SV}, we use these properties in Lemma~\ref{lemma:SV2.1} to obtain estimates which restrict the so-called "global asymptotic pair correlation spectrum", which was introduced in \cite{SV} as a modification of wavefront sets in certain settings. In fact, we translate these estimates into restrictions on the wavefront sets of $w^\pm_\sH$ and $w^\pm_\sI$ in Lemma~\ref{lemma:passivity Had proof}. The proof relies on choosing adapted coordinate systems and makes use of an alternative formulation of the condition of a direction of rapid decrease given in \cite[Proposition 2.1c]{V}.
We begin this section by analysing the restriction of the two-point functions to $\Sol_s,p(\MI)$.
\begin{remark}
\label{rem:wh restriction to H-}
Let $h\in C_0^\infty(\rr_-)$, and let $U$ be the KBL coordinate. Then one can check that for $U<0$, one has
\begin{subequations}
    \begin{align}
    (X_{\pm}(D_U)h)(U)=\frac{e^{\kappa_+ {}^*t}}{\kappa_+} \chi_\pm(D_ {{}^*t})h({}^*t)\, ,\\
    \chi_\pm(x)=\frac{xe^{\pm \pi x/\kappa_+}}{e^{ \pi x/\kappa_+}-e^{- \pi x/\kappa_+}}=\frac{\pm x}{1-e^{\mp2\pi x/\kappa_+}}\, ,
\end{align}
\end{subequations}
see for example \cite{DMP}, and compare also \cite[App. D]{GHW}.
Consequently, for $\phi,\phi'\in \cS_{s,p}(\sH)$ with $\supp\phi\cup\supp\phi'\subset\{U<U_0<0\}$, and working in the Kruskal tetrad,\footnote{The Kruskal tetrad is not a global trivialization, since the vector field $m$ does not extend to the axis of roatation $\{\sin\theta=0\}$. However, as discussed in Remark~\ref{rem:compl triv}, there is a complete set of trivializations, each of which can be reached from the Kruskal tetrad by a ${}^*t$-independent phase factor, leaving the results below intact.} we can write
\begin{align}
\label{eq:w_hplus on H-}
    w_{\sH}^\pm(\phi,\phi')=c_s\int\limits_{\mathclap{\rr\times\ss^2}} \overline{\partial_U^s\phi_s}({}^*t,\omega_+)\chi_\pm(D_{{}^*t})\partial_U^s\phi'_s({}^*t,\omega_+)\d{}^*t \d^2\omega_+\, ,
\end{align}
where we have used the notation $c_s=4(r_+-M)^{2s}(r_+^2+a^2)$ to collect the constant appearing in the state.

Note also that by the definition of $\cS_{s,p}(\sH)$, $\cU_s\phi_s$ and $\widetilde{\cU}_s\overline{\phi_{-s}}$ are elements of the space $L^2(\sH_-; \cB(s,0))=L^2(\rr_{{}^*t}\times \ss^2; \cB(s,0);(r_+^2+a^2)\d{}^*t\d^2\omega_+)$ when $\phi=(\phi_s,\overline{\phi_{-s}})\in \cS_{s}(\sH)$ is supported in $\{U<U_0\}$ for some $U_0<0$. Here, the inner product on $\cB(s,0)$ is the same as defined in Corollary~\ref{cor:B(s,0) IP} and used in Remark~\ref{remark:func calc}.

Working in the Kruskal tetrad and ${}^*K$-coordinates, and denoting the Fourier transformation in ${}^*t$ by $\cF_{{}^*t}$, one has on the domain of $\Theta_{{}^*t}$ in $L^2(\sH_-; \cB(s,0))$ 
\begin{align}
\label{eq:Fourier def chi}
   \chi_\pm(\i\Theta_{{}^*t})=\cF^{-1}_{{}^*t} \chi_\pm(k)\cF_{{}^*t}
\end{align}

In the same way,  for $\phi\in \cS_{s,p}(\sI_-)$, $\cW_s\overline{\phi_{-s}}$ and $\widetilde{\cW}_s\phi_s$ are elements of $L^2(\rr_\ft\times \ss^2;\cB(s,0))$ with measure $\d\ft \d^2\omega^{*}$ and $\cB(s,0)$-inner product as in Remark~\ref{remark:func calc}. Thus, by working in the conformally rescaled Kinnersley tetrad and coordinates $(\ft,x,\theta,\varphi^*)$, one obtains on the domain of $\Theta_{t^*}$ in $L^2(\rr_\ft\times \ss^2;\cB(s,0))$
\begin{align}
\label{eq:Fourier def xpm}
     X_\pm(\i\Theta_\ft)=\cF^{-1}_\ft X_\pm(k)\cF_\ft\, .
\end{align}
\end{remark}

\begin{remark}
    Let $(\mathfrak{Y}_{\sH,b}^*)_{b\in \rr}$ be the group of push-forwards on $\Gamma(\cV_s)$ belonging to the Killing field $v_\sH$  as introduced in Remark~\ref{rem:alg autom of killing flow} (i.e., set $c=a(r_+^2+a^2)^{-1}$ in Remark~\ref{rem:alg autom of killing flow}).
    Since the flow generated by $v_\sH$ leaves $\sH_-:= \sH \cap \{ U<0 \} $ invariant, $(\mathfrak{Y}_{\sH,b}^*)_{b\in \rr}$ maps the space
    \begin{align*}
        \cS_{s}(\sH_-)=\{\phi\in \cS_s(\sH):\exists U_0<0: \supp(\phi)\subset\{U<U_0\}\}
    \end{align*}
    into itself. By finite speed of propagation, the traces of elements in $\Sol_{s}(\MI)$ are elements of this space.

    Similarly, if $(\fY_{\sI,b}^*)_{b\in\rr}$ denotes the group of push-forwards on $\Gamma(\cV_s)$ belonging to $\partial_t$ as introduced in Remark~\ref{rem:alg autom of killing flow} (by setting $c=0$ therein), then $\fY_{\sI,b}^*$ maps $\cS_{s}(\sI_-)$ into itself.
\end{remark}

Next,  consider $\phi\in \cS_{s,p}(\sH)\cap\cS_{s}(\sH_-)$, and let $\phi_s(x)$  and $\overline{\phi_{-s}}(x)$ be the first and second component of $\phi$ expressed in the Kruskal tetrad \eqref{eq:Kruskal tetrad}. Then, following Lemma~\ref{lemma:trafo of triv. flow}, one has
\begin{subequations}
\begin{align}
    \Upsilon_{\sH,(s,s),b}^*\phi_s({}^*t, \theta,\varphi_+)&=e^{s\kappa_+ b}\phi_s({}^*t-b,\theta,\varphi_+)\, ,\\
    \Upsilon_{\sH,(s,-s),b}^*\overline{\phi_{-s}}({}^*t, \theta,\varphi_+)&=e^{-s\kappa_+ b}\overline{\phi_{-s}}({}^*t-b,\theta,\varphi_+)\, .
\end{align}
\end{subequations}
Moreover, a direct computation in this tetrad confirms that
\begin{align}
    B_s\Upsilon_{ \sH,(s,s),b}^*\phi_s({}^*t,\theta,\varphi_+)&=\Upsilon_{ \sH,(s,-s),b}^*B_s\phi_s({}^*t,\theta,\varphi_+)\\\nonumber
    &=\Upsilon_{ \sH, (s,-s),b}^*\overline{\phi_{-s}}({}^*t,\theta,\varphi_+)\, ,
\end{align}
indicating that $(\mathfrak{Y}_{ \sH,b}^*)_{b\in \rr}$ maps $\cS_{s}(\sH_-)\cap\cS_{s,p}(\sH)$ to itself. Moreover, one can check in a similar way that
\begin{subequations}
\begin{align}
    \cU_s\Upsilon_{\sH,(s,s),b}^*\phi_s({}^*t,\theta,\varphi_+)&=\Upsilon_{\sH,(s,0),b}^*\cU_s\phi_s({}^*t,\theta,\varphi_+)\, ,\\
    \widetilde{\cU}_s \Upsilon_{\sH,(s,-s),b}^* \overline{\phi_{-s}}({}^*t,\theta,\varphi_+) &=  (r_+-M)^{2s}\Upsilon_{ \sH,(s,0),b}^*\cU_s\phi_s({}^*t,\theta,\varphi_+)\, .
\end{align}
\end{subequations}
Similarly, for $\phi\in \cS_{s,p}(\sI_-)$ with components $(\phi_s,\overline{\phi_{-s}})$ in the conformally rescaled Kinnersley tetrad \eqref{eq:conf Kinnersley} one computes 
\begin{subequations}
\label{eq:flow commutator A tho}
\begin{align}
    \left[\Upsilon_{\sI,(s,-s),b}^*, A_s\right]&=0\, ,\\
   \Upsilon_{\sI,(s,w),b}^*\tho&=\tho\Upsilon_{\sI,(s,w-1),b}^*\, ,
\end{align}
\end{subequations}
so that $ (\mathfrak{Y}_{\sI,b}^*)_{b\in \rr}$ maps $\cS_{s,p}(\sI_-)$ to itself, and moreover
\begin{subequations}
\begin{align}
 \cW_s\Upsilon^*_{\sI,(s,-s),b}\overline{\phi_{-s}}(\ft, \theta, \phi^*)&=\Upsilon^*_{\sI,(s,0),b}\cW_s\overline{\phi_{-s}}(\ft, \theta, \varphi^*)\, ,\\
    \widetilde{\cW}_s\Upsilon^*_{\sI,(s,s),b}\phi_{s}(\ft, \theta, \phi^*)&=\Upsilon^*_{\sI,(s,0),b}\widetilde{\cW}_s\phi_{s}(\ft, \theta, \varphi^*)\, .
\end{align}
\end{subequations}

This allows us to show that $w^\pm_\sH$ has the same analyticity properties as a KMS state, while $w^\pm_\sI$ has the analyticity properties of a ground state:
\begin{proposition}
\label{prop:KMS and ground}
Consider the state defined by the two-point functions $w^\pm_\sH+w^\pm_\sI$ restricted to the algebra $\cA_{s,B,p}(\MI)$. Then
\begin{enumerate}
    \item $w^\pm_\sH$ satisfies the KMS condition with respect to $v_\sH$ for inverse temperature $\beta=\frac{2\pi}{\kappa_+}$, i.e., for all $\phi,\phi'\in \cS_{s,p}(\sH)\cap\cS_{s}(\sH_-)$
    \begin{enumerate}
        \item for all $b\in \rr$, one has
        \begin{align}  
    \label{eq:wH-invariance}
    w_\sH^\pm(\mathfrak{Y}_{\sH,b}^*\phi, \mathfrak{Y}_{ \sH,b}^* \phi')=w_\sH^\pm(\phi,\phi')\, .
    \end{align}
        \item the functions $\rr\ni b\mapsto w_\sH^\pm(\phi, \mathfrak{Y}_{\sH,b}^*\phi')\in \cc$ are bounded.
    \item the identities  
    \begin{align}
       \label{eq:KMS identities}
        \int\limits_\rr \hat f(b)w_\sH^\pm(\phi, \mathfrak{Y}_{\sH,b}^* \phi')\d b=\int\limits_\rr \hat f (b\pm \i\beta)\omega_\sH^\mp(\phi, \mathfrak{Y}_{\sH,b}^* \phi')\d b
    \end{align}
   hold for all $f\in \coinf(\rr;\rr)$, where $\hat f$ denotes its Fourier-transform.
    \end{enumerate}    
    \item $w_\sI^\pm$ satisfies the ground state conditions with respect to $\partial_t$, i.e., for all $\phi,\phi'\in \cS_{s,p}(\sI_-)$
    \begin{enumerate}
        \item for all $b\in \rr$, one has
        \begin{align}  
    \label{eq:wI-invariance}
    w_\sI^\pm(\fY_{\sI,b}^*\phi, \fY_{\sI,b}^* \phi')=w_\sI^\pm(\phi,\phi')\, .
    \end{align}
        \item the functions $\rr\ni b\mapsto w_\sI^\pm(\phi, \fY_{\sI,b}^*\phi')\in \cc $ are bounded.
    \item the identities  
    \begin{align}
       \label{eq:GS identities}
        \int\limits_\rr \hat f(b)w_\sI^\pm(\phi, \mathfrak{Y}_{\sI,b}^* \phi')\d b=0
    \end{align}
   hold for all $f\in \coinf(\rr_\pm^*;\rr)$, where $\hat f $ denotes its Fourier-transform.
    \end{enumerate}
    \end{enumerate}
\end{proposition}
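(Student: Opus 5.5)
The plan is to reduce both statements to one-dimensional Fourier analysis in the flow parameter, using the explicit forms of $w^\pm_\sH$ and $w^\pm_\sI$ from Remark~\ref{rem:wh restriction to H-}. For part (1), take $\phi,\phi'\in\cS_{s,p}(\sH)\cap\cS_s(\sH_-)$ and work in the Kruskal tetrad with ${}^*K$-coordinates. By \eqref{eq:w_hplus on H-},
\[
w^\pm_\sH(\phi,\phi')=c_s\langle u,\chi_\pm(D_{{}^*t})u'\rangle,\qquad u=\partial_U^s\phi_s,\quad u'=\partial_U^s\phi'_s .
\]
By the intertwining relation $\cU_s\Upsilon^*_{\sH,(s,s),b}=\Upsilon^*_{\sH,(s,0),b}\cU_s$ established just before the statement, the flow acts on $u$ as the plain translation ${}^*t\mapsto {}^*t-b$ on $\cB(s,0)$, since the weights $e^{\pm s\kappa_+ b}$ disappear for boost weight $0$. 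The weight $(-U)^{s}$ relating $\partial_U^s$ to the ${}^*t$-derivatives is absorbed here as well.

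Invariance (a) then follows because translations commute with $\chi_\pm(D_{{}^*t})=\cF^{-1}_{{}^*t}\chi_\pm(k)\cF_{{}^*t}$, see \eqref{eq:Fourier def chi}, and preserve the $L^2$ pairing. For (b), Fourier transform to get
\[
w^\pm_\sH(\phi,\fY^*_{\sH,b}\phi')=c_s\int e^{-\i kb}\chi_\pm(k)\,\langle\hat u(k),\hat u'(k)\rangle_{L^2(\ss^2)}\,\d k .
\]
Boundedness in $b$ follows once $|\chi_\pm(k)|\,|\hat u|\,|\hat u'|$ is integrable. Since $|\chi_\pm(k)|\lesssim \langle k\rangle$, it suffices that $u,u'\in H^{1/2}$ in ${}^*t$; this holds because $u$ and $D_{{}^*t}u$ lie in $L^2(\sH_-;\cB(s,0))$, as noted in the remark. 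For (c), pairing with $\hat f(b)$ gives $\int f(k)\chi_\pm(k)g(k)\,\d k$ up to constants and sign conventions, where $g$ collects the angular pairing. Shifting $b\mapsto b\pm\i\beta$ in $\hat f$ multiplies the integrand by $e^{\mp\beta k}$, since $f\in\coinf$ makes $\hat f$ entire. The algebraic identity $\chi_\pm(k)=e^{\mp\beta k}\chi_\mp(k)$ with $\beta=2\pi/\kappa_+$, checked directly from the formula for $\chi_\pm$, then yields \eqref{eq:KMS identities}. The sign of the shift must be matched to the Fourier convention fixed in the preliminaries.

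For part (2), work in the conformally rescaled Kinnersley tetrad. By \eqref{eq:flow commutator A tho} and the intertwining of $\cW_s,\widetilde{\cW}_s$, the flow of $\partial_t$ again acts on $\cW_s\overline{\breve\psi_{-s}}$ and $\widetilde{\cW}_s\breve\psi_s$ by pure $\ft$-translation, because $\cB(s,0)$ carries no weight factor. Then $w^\pm_\sI$ is a sum of two terms of the form $\langle v,X_\pm(D_\ft)v'\rangle$ by \eqref{eq:Fourier def xpm}. Invariance and boundedness follow exactly as above: $|X_\pm(k)|\le|k|$, and $v,\partial_\ft v\in L^2$ by the decay built into $\cS_{s,p}(\sI_-)$. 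For (c), $\int\hat f(b)\,w^\pm_\sI(\psi,\fY^*_{\sI,b}\psi')\,\d b$ reduces to $\int f(\mp k)X_\pm(k)\cdots\,\d k$, with orientation depending on convention. This vanishes because $X_\pm$ is supported on $\rr_\pm$ while $f$ is supported in $\rr^*_\mp$ after the reflection, matching the stated $\rr^*_\pm$ once conventions are fixed.

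The main obstacle is bookkeeping rather than analysis. First, one must verify that the flows really act as translations after applying $\cU_s$, $\widetilde{\cU}_s$, $\cW_s$, $\widetilde{\cW}_s$, including the $A_s^{-1}$ inside $\widetilde{\cW}_s$. This requires that $A_s^{-1}$ commutes with the flow, which follows from $[\Upsilon^*,A_s]=0$ and injectivity. Second, one must use the cross terms in $w_\sH$ rewritten via $\widetilde{\cU}_s=(r_+-M)^{2s}\cU_s\pi_+\pi_-^{-1}$, so that \eqref{eq:w_hplus on H-} is legitimate, and track signs in the Fourier conventions. The Fubini exchanges in (c) are justified by the $L^2$ bounds from (b).
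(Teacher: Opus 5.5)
Your proposal is correct. Parts (a) and (b) are essentially the paper's argument: invariance holds because $\ft$- and ${}^*t$-translations commute with $\chi_\pm(D_{{}^*t})$ and $X_\pm(D_\ft)$, and boundedness follows from $u,D_{{}^*t}u\in L^2$. The paper uses Cauchy--Schwarz in position space, $|w^\pm_\sH(\phi,\fY^*_{\sH,b}\phi')|\le |c_s|\,\|u\|\,\|\chi_\pm(D_{{}^*t})u'\|$, where you use an $L^1$ bound on $\chi_\pm g$ on the Fourier side; this is the same estimate.

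For (c) you take a genuinely different route. The paper introduces $\chi^b_\pm(x)=e^{\i bx}\chi_\pm(x)$ and $X^b_\pm(x)=e^{\i bx}X_\pm(x)$. It shows that the resulting $F^\pm_\sH(b)$ is holomorphic and bounded on the strip $\rr\pm\i(0,\beta)$, and that $F^\pm_\sI(b)$ is holomorphic and bounded on $\cc^*_\pm$. It identifies the boundary values with $w^\pm_\sH(\phi,\fY^*_{\sH,b}\phi')$ on one edge and, via $\chi_\pm=e^{\pm\beta x}\chi_\mp$, with $w^\mp_\sH$ on the other. It then shifts or closes contours using the decay of $\hat f$.

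You instead Fourier transform in $b$ and apply Fubini. This is justified by $\chi_\pm g\in L^1$ together with the fact that $\hat f$ and $\hat f(\cdot\pm\i\beta)=\widehat{e^{\mp\beta\cdot}f}$ are Schwartz on $\rr$. Both identities then reduce to pointwise statements in the spectral variable. Your route is shorter and needs no boundary-value or contour estimates. The paper's route yields the stronger analytic form of the KMS and ground-state conditions, namely holomorphic extension of $b\mapsto w^\pm(\phi,\fY^*_b\phi')$ to the strip or half-plane. However, only the integrated identities and boundedness are used later, in Lemma~\ref{lemma:SV2.1}, so your version suffices.

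On the signs you flagged: with the paper's Fourier convention,
$\int\hat f(b)e^{\i kb}\d b=\sqrt{2\pi}f(-k)$ and $\int\hat f(b\pm\i\beta)e^{\i kb}\d b=\sqrt{2\pi}e^{\pm\beta k}f(-k)$. The correct identity is $\chi_\pm(k)=e^{\pm\beta k}\chi_\mp(k)$; for instance, $\chi_+(k)/\chi_-(k)=e^{\beta k}$. Both of the signs you wrote are flipped, so they compensate and your argument goes through.

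Similarly, the reflection is always $f(-k)$, independent of $\pm$. For $f\in\coinf(\rr^*_\pm)$, $f(-\cdot)$ is supported in $\rr^*_\mp$, where $X_\pm$ vanishes. This gives exactly the stated ground-state condition.
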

\begin{proof}
Let us begin with statement (1a), the invariance of $w_\sH^\pm$. We use the formulation \eqref{eq:w_hplus on H-} for $w_\sH^\pm$ restricted to functions in $\cS_{s,p}(\sH)\cap \cS_{s}(\sH_-)$. Then we obtain together with the above results
\begin{align}
\label{eq:KMS lemma 1a}
    w_{\sH}^\pm&(\mathfrak{Y}_{ \sH,b}^*\phi,\mathfrak{Y}_{\sH,b}^*\phi')\\\nonumber
    &=c_s\int\limits_{\mathclap{\rr\times\ss^2}}  \overline{\partial_U^s\Upsilon_{\sH,(s,s),b}^*\phi_s} ({}^*t,\omega_+) \left(\chi_\pm(D_{{}^*t})\partial_U^s\Upsilon_{\sH,(s,s),b}^* \phi'_s\right)({}^*t,\omega_+)\d{}^*t \d^2\omega_+\\\nonumber
    &=c_s\int\limits_{\mathclap{\rr\times\ss^2}} \left(\overline{\partial_U^s\phi_s}\right)({}^*t-b,\omega_+)\left(\chi_\pm(D_{{}^*t})\Upsilon_{\sH,(s,0),b}^*\partial_U^s\phi'_s\right)({}^*t,\omega_+)\d{}^*t \d^2\omega_+\\\nonumber
    &=c_s\int\limits_{\mathclap{\rr\times\ss^2}} \left(\overline{\partial_U^s\phi_s}\right)({}^*t-b,\omega_+)\left(\chi_\pm(D_{{}^*t})\partial_U^s\phi'_s\right)({}^*t-b,\omega_+)\d{}^*t \d^2\omega_+\\\nonumber
    &=w_{\sH}^\pm(\phi,\phi')\, .
\end{align}
The first line follows directly from \eqref{eq:w_hplus on H-}, the second from the above results, the  final  by a redefinition of the integration variable. The third line is a consequence of 
\begin{align}
    \chi_\pm(D_{{}^*t})\Upsilon_{\sH, (s,0),b}^* \phi = \Upsilon_{\sH,(s,0),b}^* \chi_\pm(D_{{}^*t}) \phi\, ,
\end{align}
which can be checked by a direct computation, using the representation of $\chi_\pm(D_{{}^*t})$ given in \eqref{eq:Fourier def chi}.

Analogously, by working in the conformally rescaled Kinnersley tetrad \eqref{eq:conf Kinnersley}, and employing the representation \eqref{eq:Fourier def xpm} of $X_\pm(D_\ft)$ via Fourier transform in $\ft$, one verifies 
\begin{align}
    X_\pm(D_\ft) \Upsilon_{\sI,(s,0),b}^*\phi=\Upsilon_{\sI,(s,0),b}^*X_\pm(D_\ft)\phi
\end{align}
for all $\phi$ in the domain of $X_\pm(D_\ft)$.
A calculation that is essentially the same as in \eqref{eq:KMS lemma 1a} then shows the invariance of $w_\sI^\pm$, statement (2a).

 For the boundedness property, statement (1b), consider $\phi\in \cS_{s,p}(\sH)\cap\cS_{s}(\sH_-)$. Then by the arguments presented in the proof of Lemma~\ref{lemma:well def wH}, $\cU_s\phi_s$ and $\chi_\pm(\i\Theta_{{}^*t})\cU_s\phi_{s}$ are elements of $L^2(\rr_{{}^*t}\times \ss^2; \cB(s,0))$. An application of the Cauchy-Schwarz inequality working in the Kruskal tetrad then results in 
 \begin{align*}
     \abs{\omega_\sH^\pm(\phi,\mathfrak{Y}_{\sH,b}^*\phi')}=&\abs{c_s\int\limits_{\mathclap{\rr\times\ss^2}} \left(\overline{\partial_U^s\phi_s}\right)({}^*t,\omega_+)\left(\Upsilon_{\sH,(s,0),b}^*\chi_\pm(D_{{}^*t})\partial_U^s\phi'_s\right)({}^*t,\omega_+)\d{}^*t \d^2\omega_+}\\
     \leq&\abs{c_s}\norm{\overline{\partial_U^s\phi_s}}_{L^2(\sH_-)}\norm{\Upsilon_{\sH,(s,0),b}^*\chi_\pm(D_{{}^*t})\partial_U^s\phi'_s}_{L^2(\sH_-)}\\
     =&\abs{c_s}\norm{\overline{\partial_U^s\phi_s}}_{L^2(\sH_-)}\norm{\chi_\pm(D_{{}^*t})\partial_U^s\phi'_s}_{L^2(\sH_-)}\, ,
 \end{align*}
 where  we have abbreviated $L^2(\sH_-)=L^2(\rr_{{}^*t}\times \ss^2; \cB(s,0))$. The last step follows again by a redefinition of the integration variable. Since the bound is finite and uniform in $b$, this shows the desired result.

 The boundedness in statement (2b) can be shown in the same way by taking into account that for $\phi\in\cS_{s,p}(\sI_-)$ with components $(\phi_s, \overline{\phi_{-s}})$ in the conformally rescaled Kinnersley tetrad, $\partial_\ft^{s(+1)}\overline{\phi_{-s}}$ and  $\partial_\ft^{s(+1)}A_s^{-1}\overline{\phi_{-s}}$ are elements of $L^2(\rr\times\ss^2; \cB(s,0))$.

 It remains to show the identities \eqref{eq:KMS identities} and \eqref{eq:GS identities}.
 To this end, let us first note the identity
 \begin{align}
 \label{eq: w+ to w-}
     \chi_\pm(x)=e^{\pm\beta x}\chi_\mp(x)\, ,
 \end{align}
 from which we can infer that for $f\in L^2(\rr_{{}^*t}\times \ss^2; \cB(s,0))$ we have
 \begin{align}
     \left(\chi_\pm(D_{{}^*t})f\right)({}^*t,\omega_+)=\left(\chi_\mp(D_{{}^*t})f\right)({}^*t\mp \i\beta, \omega_+)\, 
 \end{align}
 in the Kruskal tetrad, using the form \eqref{eq:Fourier def chi} of $\chi_\pm(D_{{}^*t})$. Using the Fourier representation of $\Upsilon^*_{\sH,(s,0),b}$, one has 
 \begin{align}
     \chi_\pm(D_{{}^*t})\Upsilon_{\sH,(s,0),b}^* f=\left(e^{ \i b\cdot}\chi_\pm \right)(D_{{}^*t})f\, , \, .
 \end{align}
Analogously, using the formulation \eqref{eq:Fourier def xpm} for $X_\pm(\i\Theta_\ft)$ in the conformally rescaled Kinnersley tetrad,  and the Fourier representation of and $\Upsilon^*_{\sI,(s,0),b}$, one can infer that
\begin{align}
    X_\pm(D_\ft)\Upsilon_{\sI,(s,0),b}^*f=\left(e^{\i b\cdot} X_\pm\right)(D_\ft) f
\end{align}
in the conformally rescaled Kinnersley tetrad for any $f$ in the domain of $X_\pm(D_\ft)$.

 We note that the function
 \begin{align}
 \label{eq:def chipmb}
     x\mapsto \chi_\pm^b(x):=\chi_\pm(x)e^{ibx}=e^{\i\Re b x}\frac{\pm x}{e^{\pm \Im b x}-e^{\mp (\beta-\Im b)x}}
 \end{align}
 is bounded  for $b$ in the strip $\rr\pm \i(0,\beta)\subset \cc$.\footnote{ To see this, we first note that for $x\to 0$, one finds $\chi^b_\pm(x)\to \beta$. One can therefore extend the function continuously to $x=0$. Moreover, in this strip one has both $\Im b>0$ and $\beta-\Im b>0$. Therefore, for large $\abs{x}$, one of the terms in the denominator becomes small, while the other increases exponentially, so that $\chi_\pm^b(x)\to 0$ as $x\to \pm \infty$.}
 Moreover, for any $x\in \rr$, the function $b\mapsto \chi_\pm^b(x)$ is holomorphic on $\rr\pm \i(0,\beta)$, and its complex derivative $b\mapsto (\i x)\chi_\pm^b(x)$ is bounded  uniformly in $x\in \rr$.

Similarly, the function
\begin{align*}
    b\mapsto X_\pm^b(x):= e^{\i bx}X_\pm(x)
\end{align*}
is a holomorphic function of $b$ and it is bounded uniformly in $x$ for $b$ in the upper/lower half-plane $\cc^*_\pm$. The same is true for its complex derivative with respect to $b$, $\i xX_\pm(x)e^{\i bx}$. 
 
 We can then use this to define the functions 
 \begin{align*}
     F^\pm_\sH[\phi,\phi'](b):=& c_s\int\limits_{\mathclap{\rr\times\ss^2}} \overline{\partial_U^s\phi_s}({}^*t,\omega_+)\chi_\pm^b(D_{{}^*t})\partial_U^s\phi'_s({}^*t,\omega_+)\d{}^*t \d^2\omega_+\, ,\quad b\in\rr\pm \i(0, \beta)\\
      F^\pm_\sI[\psi,\psi'](b):=& 2^{1+2s}\int\limits_{\mathclap{\rr\times\ss^2}} \left[\overline{\partial_\ft^sA_s^{-1}}\psi_{-s}(\ft,\omega^*)X_\pm^b(D_{\ft})\partial_\ft^s\overline{\psi'_{-s}}(\ft,\omega^*)\right.\\
      &\left.+\overline{\partial_\ft^s}\psi_{-s}(\ft,\omega^*)X_\pm^b(D_{\ft})\partial_\ft^sA_s^{-1}\overline{\psi'_{-s}}(\ft,\omega^*)\right]\d\ft \d^2\omega^{*}\, ,\quad b\in \cc^*_\pm
 \end{align*}
 for any $\phi,\phi'\in \cS_{s,p}(\sH)\cap \cS_{s}(\sH_-)$ (using their components in the Kruskal tetrad) and $\psi, \psi'\in \cS_{s,p}(\sI_-)$ (using the components in the conformally rescaled Kinnersley tetrad).
 
On the same domains,  the functions
 \begin{align*}
   F^\pm_\sH[\phi,\phi']'(b):=& c_s\int\limits_{\mathclap{\rr\times\ss^2}} \overline{\partial_U^s\phi_s}({}^*t,\omega_+)(\i D_{{}^*t})\chi_\pm^b(D_{{}^*t})\partial_U^s\phi'_s({}^*t,\omega_+)\d{}^*t \d^2\omega_+\, ,\\
   F^\pm_\sI[\psi,\psi']'(b):=& 2^{1+2s}\int\limits_{\mathclap{\rr\times\ss^2}} \left[\overline{\partial_\ft^sA_s^{-1}}\psi_{-s}(\ft,\omega^*)\i D_\ft X_\pm^b(D_{\ft}) \partial_\ft^s \overline{\psi'_{-s}}(\ft,\omega^*)\right.\\
      &\left.+\overline{\partial_\ft^s}\psi_{-s}(\ft,\omega^*)\i D_\ft X_\pm^b(D_{\ft}) \partial_\ft^s A_s^{-1} \overline{\psi'_{-s}}(\ft,\omega^*)\right]\d\ft \d^2\omega^{*}
 \end{align*}
 are also well-defined by the boundedness of $(\i x)\chi_\pm^b(x)$ and $(\i x)X_\pm^b$ on the respective domains. Moreover, by functional calculus, it follows that $F^\pm_\sH[\phi,\phi']'(b)$ and $F^\pm_\sI[\psi,\psi']'(b)$ are the respective complex derivatives of $F^\pm_\sH[\phi,\phi'](b)$ and $F^\pm_\sI[\psi,\psi'](b)$ for any fixed but arbitrary pairs $\phi,\phi'\in \cS_{s,p}(\sH)\cap \cS_{s}(\sH_-)$ or $\psi, \psi'\in \cS_{s,p}(\sI_-)$, and for $b$ in the respective domain, $b\in \rr \pm \i (0, \beta)$ or $b\in \cc^*_\pm$. Hence, $F^\pm_\sH[\phi,\phi'](b)$ is holomorphic in the strip $b\in \rr \pm \i (0, \beta)$, and $F^\pm_\sI[\psi,\psi'](b)$ is holomorphic in $\cc_\pm^*$. 

 From previous results, we can also infer that for $b\in\rr$
 \begin{subequations}
 \begin{align}
  \lim\limits_{\epsilon\to 0}F^\pm_\sH[\phi,\phi'](b\pm \i\epsilon)= w^\pm_\sH(\phi, \mathfrak{Y}^*_{\sH,b}\phi')\, ,\\
  \lim\limits_{\epsilon\to 0}F^\pm_\sI[\psi,\psi'](b\pm \i\epsilon)= w^\pm_\sI(\psi, \fY^*_{\sI,b}\psi')\, ,
 \end{align}
 \end{subequations}
 and hence both functions remain bounded in the limit $\Im b\to 0$. Moreover, combining the property \eqref{eq: w+ to w-} of $\chi_\pm(x)$ and the definition \eqref{eq:def chipmb} of $\chi_\pm^b(x)$, we obtain for $b\in \rr$ and $\epsilon \in (0,\beta)$
 \begin{align}
 \label{eq:wH+-wH- relation}
     F^\pm_\sH[\phi,\phi'](b\pm \i(\beta-\epsilon))=F^\mp_\sH[\phi,\phi'](b\mp \i\epsilon)\, .
 \end{align}
Taking the limit of $\epsilon\to 0$ on both sides and employing the results above then shows that the limit of $F^\pm_\sH[\phi,\phi'](b)$ as $\Im b\to \pm i2\pi\kappa_+^{-1}$ is bounded as well.
From this, it finally follows that for any $f\in \coinf(\rr;\rr)$
\begin{align*}
    \int\limits_\rr \hat f(b)w_\sH^\pm(\phi, \mathfrak{Y}^*_{\sH, b}\phi')db&=\int\limits_\rr \hat f(b)F_\sH^\pm[\phi,\phi'](b)db\\
    &=\int\limits_\rr\hat f( b)F_\sH^\mp[\phi,\phi'](b\mp \i\beta)\d b\\
    &=\int\limits_{\rr\pm i\beta} \hat f(\widetilde{b}\pm \i\beta)F_\sH^\mp[\phi,\phi'](\widetilde{b})\d\widetilde{b}\\
    &=\int\limits_{\rr}\hat f (\widetilde{b}\pm \i\beta ) F_\sH^\mp[\phi,\phi'](\widetilde{b})\d\widetilde{b}\\
    &=\int\limits_{\rr}\hat f(\widetilde{b}\pm \i\beta)w_\sH^\mp(\phi, \mathfrak{Y}^*_{\sH,\widetilde{b}} \phi') \d\widetilde{b}\, .
\end{align*}
To reach the second line, we use \eqref{eq:wH+-wH- relation}.
The third line is reached by a change of coordinates to $b\mp \i\beta$. The fourth follows from the residue theorem, taking into account that $F^\pm[\phi,\phi'](b)$ is bounded on the strip $\rr\pm \i[0, \beta]$  and holomorphic on $\rr\pm \i(0, \beta)$, and $\hat f(z)$ vanishes for $\Re z\to \pm\infty$ as long as $\Im z$ remains bounded. Renaming the integration variable then concludes the proof of statement (1c).

For statement (2c), we use that for $f\in \coinf(\rr^*_\pm)$, $\hat f(b)$ is holomorphic, and we can find some $\epsilon>0$ depending on the support of $f$ so that 
\begin{align*}
    \abs{\hat f(b)}\leq C_N e^{-\epsilon\Im b} (1+\abs{b})^{-N}
    \end{align*}
for $b\in \cc_\pm$ and for any $N\in \nn$, where $C_N>0$ depends on $f$ and $N$.
Consider the closed path $C^\pm$ in $\cc$ obtained by following the edges of the box $\{-R\leq \Re z<R, 0\leq \pm \Im z \leq R\}$  counter-clockwise, and taking the limit $R\to \infty$. Then using the residue theorem together with the decay estimates for $\hat f$ and the boundedness of $F^\pm_\sI[\psi,\psi']$, one can show that \eqref{eq:GS identities} holds.
\qeds
\end{proof}

 Using these results, it is now possible to show the following result, following the relevant parts of the proof of \cite[Proposition 2.1]{SV} step by step:
 \begin{lemma}
 \label{lemma:SV2.1}
     Let $(g_{j}^\lambda)_{\lambda>0}\subset \Gamma_c(\cB(s,-s))$,  $j\in\{1,2\}$, be one-parameter families of sections, and assume that they satisfy
     \begin{subequations}
     \begin{align}
     \label{eq:estimate EV gjlambda sH}
        w^\pm_\sH(T_\sH\phi_{g^\lambda_j}, T_\sH\phi_{g^\lambda_j})&\leq c(1+\lambda)^{l}\\
        \label{eq:estimate EV gjlambda sI}
        w^\pm_\sI(T_\sI\phi_{g^\lambda_j}, T_\sI\phi_{g^\lambda_j})&\leq c'(1+\lambda)^{l'}
     \end{align}
     \end{subequations}
     for some constants  $c, c'>0$  and $l,l'\in\rr$, where we have used the notation \eqref{eq:notation soln of test fct}. Let $\xi=(\xi_1,\, \xi_2)\in \rr^2 \setminus \{0\}$ so that $ \xi_2>0$.
     
     Then for any  such $(g^\lambda_j)_{\lambda>0}$ and $\xi$, one can find a test function $h\in \coinf(\rr^2)$ satisfying $\hat h(0,0)=1$, and an open neighbourhood $V^\pm_\xi \subset \rr^2 \setminus \{0\}$ of $\pm \xi$, so that (for $k\cdot t=k_1t_1+k_2t_2$)
     \begin{subequations}
     \begin{align}
         \sup\limits_{k\in V^\pm_\xi}\abs{\,\int\limits_{\rr^2}e^{\i\lambda k\cdot t} \hat h(t) \omega^\pm_\sH(\fY^*_{ \sH, t_1}T_\sH\phi_{g^\lambda_1}, \fY^*_{\sH,t_2}T_\sH\phi_{g^\lambda_2}) \d^2t}&=\cO(\lambda^{-\infty})\, ,\\
          \sup\limits_{k\in V^\pm_\xi}\abs{\,\int\limits_{\rr^2}e^{\i\lambda k\cdot t}\hat h(t) \omega^\pm_\sI(\fY^*_{\sI,t_1}T_\sI\phi_{g^\lambda_1}, \fY^*_{\sI,t_2}T_\sI\phi_{g^\lambda_2}) \d^2t}&=0\, ,
          \label{eq:decay estimate wsI}
     \end{align}
     \end{subequations}
      where the second equation holds for sufficiently large $\lambda$. Here, $\abs{f(\lambda)}=\cO(\lambda^{-\infty})$ means that for any $N\in\nn$, one can find $C_N, \lambda_N>0$, so that $\abs{f(\lambda)}\leq C_N\lambda^{-N}$ for all $0<\lambda_N<\lambda$.
 \end{lemma}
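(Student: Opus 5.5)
The plan is to follow the proof of \cite[Proposition 2.1]{SV}, using only the structural input from Proposition~\ref{prop:KMS and ground}: invariance, boundedness, and the KMS/ground-state analyticity. For fixed $\lambda$, set
\begin{align*}
G^\pm_{\sH,\lambda}(t_1,t_2):=w^\pm_\sH(\fY^*_{\sH,t_1}T_\sH\phi_{g^\lambda_1},\fY^*_{\sH,t_2}T_\sH\phi_{g^\lambda_2}),
\end{align*}
and define $G^\pm_{\sI,\lambda}$ analogously with $w^\pm_\sI$ and $\fY^*_{\sI,t}$. By the invariances \eqref{eq:wH-invariance} and \eqref{eq:wI-invariance}, together with the group property $\fY^*_{t_1}\fY^*_{t_2}=\fY^*_{t_1+t_2}$, each of these depends only on $t_2-t_1$, namely $G_\lambda(t_1,t_2)=F_\lambda(t_2-t_1)$ with $F_\lambda(b)=w^\pm(T\phi_{g^\lambda_1},\fY^*_b T\phi_{g^\lambda_2})$. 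I would choose $\hat h=\hat h_1\otimes\hat h_2$ with $h_j\in\coinf(\rr)$ and $\hat h_j(0)=1$. A change of variables $t_1=u$, $t_2=u+b$ then gives
\begin{align*}
\int e^{\i\lambda k\cdot t}\hat h(t)G_\lambda\,\d^2t=\int \widehat{H}_{\lambda,k}(b)\,F_\lambda(b)\,\d b,
\end{align*}
where $\widehat H_{\lambda,k}(b)=e^{\i\lambda k_2 b}\int e^{\i\lambda(k_1+k_2)u}\hat h_1(u)\hat h_2(u+b)\,\d u$. This function is the Fourier transform of the function $H_{\lambda,k}(\kappa)=\int h_1(\lambda(k_1+k_2)-\kappa')\,h_2(\kappa'+\ldots)$, i.e.\ a convolution of translates of $h_1,h_2$. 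Its support is concentrated near the frequency $\lambda k_2$, up to a fixed compact error set determined by $\supp h_1\cup\supp h_2$. I will need to track the sign conventions here so that this frequency is $\lambda k_2$ rather than $-\lambda k_2$.

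\emph{Ground state part.} Since $\xi_2>0$, I choose $V^\pm_\xi$ so small that $\pm k_2>\xi_2/2$ on it. Then for $\lambda$ large the function $H_{\lambda,k}$ (real after symmetrizing $h$) has compact support in $\rr^*_\mp$ or $\rr^*_\pm$ as appropriate. Applying \eqref{eq:GS identities}, approximating the sign convention so that the support lands in the half-line annihilated by $w^\pm_\sI$, gives exactly zero for sufficiently large $\lambda$. This yields \eqref{eq:decay estimate wsI}. Because \eqref{eq:GS identities} is stated for real $f$, I would split $H_{\lambda,k}$ into real and imaginary parts, each supported in the same half-line.

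\emph{KMS part.} Here I use \eqref{eq:KMS identities} to shift the contour by $\pm\i\beta$. This converts the integral into $\int \widehat H_{\lambda,k}(b\pm\i\beta)F^\mp_\lambda(b)\,\d b$, and $\widehat H_{\lambda,k}(b\pm\i\beta)$ picks up a factor of size $e^{-\beta\lambda|k_2|}$ times a polynomially bounded factor, uniformly for $k\in V^\pm_\xi$. The signs must be matched: for $w^+$ and $k$ near $+\xi$ one damps via the shift to $w^-$, and for $w^-$ near $-\xi$ symmetrically. Alternatively, and this is the route I would try first, write $F^\pm_\lambda$ via the holomorphic extension $F^\pm_\sH[\cdot,\cdot](b)$ on the strip and shift the $b$-contour directly. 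The remaining factor $|F^\mp_\lambda|$ is bounded by Cauchy--Schwarz and the boundedness step (1b): $|F_\lambda(b)|\le \prod_j w^\mp_\sH(T_\sH\phi_{g^\lambda_j},T_\sH\phi_{g^\lambda_j})^{1/2}$, which is polynomial in $\lambda$ by \eqref{eq:estimate EV gjlambda sH}. Here $w^\mp\ge0$ by positivity of the state, and the invariance from (1a) handles the translated argument. Exponential damping times polynomial growth is $\cO(\lambda^{-\infty})$.

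\emph{Main obstacle.} The delicate step is controlling $\widehat H_{\lambda,k}$ off the real axis uniformly in $k\in V^\pm_\xi$. This requires the one-sided frequency support (for the ground state) and the exponential bound after the complex shift (for KMS), while the error from the tails of $h_1,h_2$ must be kept compactly supported and independent of $\lambda$. I would follow the construction of $h$ in \cite{SV} verbatim, choosing $h_1,h_2$ with small support around $0$ so that $V^\pm_\xi$ can be taken as a small neighbourhood on which $\lambda k_2$ dominates the support radius.
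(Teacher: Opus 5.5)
Your proposal is correct and matches the paper's proof (itself a transcription of \cite[Proposition 2.1]{SV}) essentially step by step. Invariance reduces the double integral to $\int \widehat{f_{\lambda,k}}(s)\,w^\pm(T\phi_{g^\lambda_1},\fY^*_s T\phi_{g^\lambda_2})\,\d s$ with $f_{\lambda,k}$ supported near $\lambda k_2$. The ground-state identity then kills the $\sI_-$ term for large $\lambda$. For the horizon term, the KMS shift by $\pm\i\beta$ together with Cauchy--Schwarz gives the bound $e^{-\lambda\epsilon\beta}(1+\lambda)^{l''}$.

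One small correction: the inverse transform of your $\widehat{H}_{\lambda,k}$ is the \emph{product} $h_1(-p-\lambda k_1)\,h_2(p-\lambda k_2)$ of translates, not a convolution. This is exactly what gives support in $\lambda k_2+\supp h_2$. After translating by $\lambda k_2$ it becomes $h_1(-p-\lambda(k_1+k_2))\,h_2(p)$, which is supported in $\supp h_2$ with derivatives bounded uniformly in $\lambda$ and $k$. That in turn gives the uniform $L^1$ bound on its transform along $s\pm\i\beta$ that your ``main obstacle'' paragraph asks for.
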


 \begin{remark}
 \label{rem:SV2.2}
     By the results of \cite[Lemma 2.2]{V} this remains valid if $\hat h$ is replaced by $f\cdot \hat h$ for some $f\in \coinf(\rr^2)$, after potentially shrinking $V^\pm_\xi$. It also remains valid if the families $(g_j^\lambda)_{\lambda>0}$ depend on additional parameters in such a way that the estimates \eqref{eq:estimate EV gjlambda sH}  and \eqref{eq:estimate EV gjlambda sI} remain valid and are uniform in the additional parameters, see \cite[Remark 2.2]{SV}.
 \end{remark}

 \begin{proof}
 For the convenience of the reader, we will give a brief account of the proof, which follows the relevant steps in the proof of \cite[Lemma 2.1]{SV}.
Throughout, we set $\beta=2\pi\kappa_+^{-1}$.

Let $h_j\in\coinf(\rr;\rr)$, $j\in\{1,2\}$ be any two test functions that satisfy $\widehat{h_j}(0)=1$, and set $h(t_1,t_2)=h_1(t_1)h_2(t_2)$. Then clearly $\hat h(0, 0)=1$, as required. Further, choose some $\epsilon>0$ and set $V^\pm_\xi$ to be a sufficiently small open neighbourhood of $\pm\xi$ so that $\pm k_2>\epsilon$ for all $(k_1,k_2)\in V^\pm_\xi$.

For $\lambda>0$ and $k\in V^\pm_\xi\subset \rr^2$, define the functions
\begin{align*}
    g_{\lambda,k}(p)&:=\sqrt{2\pi}h_1(-p- \lambda(k_1+k_2))h_2(p)\, , \quad p\in \rr\\
    f_{\lambda,k}(p)&:=\sqrt{2\pi}g_{\lambda,k}(p- \lambda k_2)=h_1(-p-\lambda k_1)h_2(p-\lambda k_2)\, ,\quad p\in\rr\, .
\end{align*}
By the definition of $V^\pm_\xi$, there is a $\lambda_0>0$ so that $f_{\lambda,k}\in \coinf(\rr_\pm^*)$ for all $ 0<\lambda_0<\lambda$ and all $k\in V^\pm_\xi$.

Using the convolution theorem, one can show
  \begin{align}
    \widehat{f_{\lambda,k}}(s)=\int\limits_\rr e^{ \i s'\lambda k_1} e^{\i(s+s')\lambda k_2}\widehat{h_1}(s') \widehat{h_2}(s+s')\d s'\, .
    \end{align}
    Combining this with \eqref{eq:wH-invariance} and a change of coordinates, this entails
   \begin{align}   
      \label{eq:convol FT id 1}
        \int\limits_{\rr^2} e^{ \i\lambda k\cdot t}\hat{h}(t)w^\pm_\sH(\fY^*_{\#,t_1}T_\#\phi_{g^\lambda_1}, \fY^*_{\#,t_2}T_\#\phi_{g^\lambda_2})\d^2t\\\nonumber
        =\int\limits_\rr \widehat{f_{\lambda,k}}(s)w^\pm_\#(T_\#\phi_{g^\lambda_1}, \fY^*_{\#,s}T_\#\phi_{g^\lambda_2})\d s\, ,
    \end{align}
  where $\#\in\{\sH, \sI\}$.  
Further, using the convolution and Paley-Wiener theorems, it follows that
    \begin{align}
    \label{eq:convol FT id 2}
   \sup\limits_{\lambda>0, k\in \rr^2} \int\limits_\rr \abs{\widehat{g_{\lambda,k}}(s\pm \i\beta)} \d s\leq c<\infty\, .
\end{align}  
From the Fourier transform, one obtains also
\begin{align}
\label{eq:convol FT id 3}
\widehat{f_{\lambda,k}}(s\pm \i\beta)=e^{ \mp\lambda k_2\beta} e^{ \i s\lambda k_2}\widehat{g_{\lambda,k}}(s\pm \i\beta)\, .
\end{align}

It then follows from \eqref{eq:GS identities} combined with \eqref{eq:convol FT id 1} that
\begin{align*}
    \sup\limits_{k\in V^\pm_\xi}\abs{\,\int\limits_{\rr^2} e^{\i\lambda k\cdot t}\hat{h}(t)w^\pm_\sI(\fY^*_{\sI,t_1}T_\sI\phi_{g^\lambda_1}, \fY^*_{\sI,t_2}T_\sI\phi_{g^\lambda_2})\d^2t}=0
\end{align*}
for all $ 0<\lambda_0<\lambda$, showing \eqref{eq:decay estimate wsI}.

Furthermore, the combination of the above results with \eqref{eq:KMS identities} allows to estimate
\begin{align*}
    &\sup\limits_{k\in V^\pm_\xi}\abs{\,\int\limits_{\rr^2}e^{\i\lambda k\cdot t} \hat h(t) w^\pm_\sH(\fY^*_{\sH,t_1}{ T_\sH}\phi_{g^\lambda_1}, \fY^*_{\sH,t_2}{ T_{\sH}}\phi_{g^\lambda_2}) \d^2t}\\
    =&\sup\limits_{k\in V^\pm_\xi}\abs{\int\limits_{\rr} \widehat{f_{\lambda,k}}(s) w^\pm_\sH({ T_\sH}\phi_{g^\lambda_1}, \fY^*_{\sH,s}{ T_\sH}\phi_{g^\lambda_2}) \d s}\\
    =&\sup\limits_{k\in V^\pm_\xi}\abs{\int\limits_{\rr} \widehat{f_{\lambda,k}}(s\pm \i\beta) w^\mp_\sH({ T_\sH}\phi_{g^\lambda_1}, \fY^*_{\sH,s}{ T_\sH}\phi_{g^\lambda_2}) \d s}\\
    \leq &\sup\limits_{k\in V^\pm_\xi} \e^{\mp \lambda k_2\beta}\int\limits_\rr \abs{\widehat{g_{\lambda,k}}(s\pm \i\beta) }
    \abs{ w^\mp_\sH ( T_\sH\phi_{g^\lambda_1}, \fY^*_{\sH,s}{ T_\sH}\phi_{g^\lambda_2})}\d s\\
    \leq &C e^{-\lambda\epsilon\beta}  (1+\lambda)^{l''}=\cO(\lambda^{-\infty})
\end{align*}
for some positive constant $C$,  and some $l''\in\rr$. The first step uses \eqref{eq:convol FT id 1}, the second \eqref{eq:KMS identities}, the third \eqref{eq:convol FT id 3}. The final step combines \eqref{eq:convol FT id 2} with an application of the Cauchy-Schwarz inequality to $\abs{ w^\mp_\sH ({ T_\sH}\phi_{g^\lambda_1}, \fY^*_{\sH, s}{ T_\sH}\phi_{g^\lambda_2})}$ and \eqref{eq:estimate EV gjlambda sH}.
     \qeds
 \end{proof}

The above finally allows us to show the following result
\begin{lemma}
\label{lemma:passivity Had proof}
    Let $(x,\xi)\in T^*(\MI)$, with $\xi$ non-vanishing and null.
    \begin{enumerate}
        \item If $\xi(v_\sH)<0$, then $(x,\pm\xi; x,\mp\xi)$ is a direction of rapid decrease for $W^\pm_\sH$. 
        \item If $\xi(\partial_t)<0$, then $(x,\pm\xi; x,\mp\xi)$ is a direction of rapid decrease for $W^\pm_\sI$.
    \end{enumerate}
\end{lemma}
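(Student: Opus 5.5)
We will follow the strategy of \cite[Proposition 2.1]{SV} (see also \cite{DMP, Klein}) and translate Lemma~\ref{lemma:SV2.1} into a wavefront set statement, using the characterization of directions of rapid decrease in \cite[Proposition 2.1c]{V}.

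\textbf{Adapted coordinates and test families.} Fix $(x,\xi)$ as in the statement and treat case (1); case (2) is identical with $\partial_t$, $\fY^*_{\sI}$ and \eqref{eq:decay estimate wsI} in place of $v_\sH$, $\fY^*_\sH$ and the KMS estimate. Since $v_\sH$ is a Killing field that does not vanish near $x\in\MI$, we choose local coordinates $(y^0,y^1,y^2,y^3)$ near $x$ with $v_\sH=\partial_{y^0}$. In these coordinates the flow $\upsilon_b$ is translation in $y^0$. We also trivialize $\cB(s,-s)$ by a $v_\sH$-invariant tetrad from Remark~\ref{rem:compl triv}, so that by Lemma~\ref{lemma:trafo of triv. flow} the push-forward $\Upsilon^*_b$ acts as pure translation. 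Let $\chi\in\coinf$ be supported near $x$ and let $S$ be a local basis section. For $k$ near $\xi$ we set
\begin{align*}
g^\lambda_{1,k}=\overline{\chi\, e^{\i\lambda k\cdot y}S}\, ,\qquad g^\lambda_{2,k'}=\chi\, e^{\i\lambda k'\cdot y}S\, .
\end{align*}
By \cite[Proposition 2.1c]{V}, the point $(x,\pm\xi;x,\mp\xi)$ is a direction of rapid decrease of $W^\pm_\sH$ if, for some $h$ with $\hat h(0)=1$, the expression
\begin{align*}
\int e^{\i\lambda k\cdot t}\hat h(t)\, W^\pm_\sH\big(\upsilon^*_{t_1} g_1,\upsilon^*_{t_2} g_2\big)\,\d^2t
\end{align*}
is $\cO(\lambda^{-\infty})$ uniformly for $(k_1,k_2)$ near $(\pm\xi_0,\mp\xi_0)$ with $\xi_0=\xi(v_\sH)$. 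Here the translations in $y^0$ produce the pairing with the $y^0$-component of the covector. Using the commutation of $\fY^*$ with $E_{-s}$, ${\tho}^{2s}$, $A_s$ and with the trace maps (Remark~\ref{rem:alg autom of killing flow}, \eqref{eq:flow commutator A tho}, and the $B_s$ and $\cU_s$ intertwining identities), this integral equals the left-hand side of Lemma~\ref{lemma:SV2.1} with $g^\lambda_j$ as above, with the remaining spatial frequencies treated as extra parameters as in Remark~\ref{rem:SV2.2}.

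\textbf{Polynomial bounds and sign bookkeeping.} We verify the hypotheses \eqref{eq:estimate EV gjlambda sH} and \eqref{eq:estimate EV gjlambda sI} uniformly in $k$ from the bidistribution bound of Remark~\ref{rem:2pt distr estimates}, since $\|g^\lambda_j\|_{C^m}\lesssim(1+\lambda)^m$. We then check the sign convention. The condition $\xi(v_\sH)<0$ must correspond to $\xi_2>0$ in Lemma~\ref{lemma:SV2.1} after accounting for the antilinear first slot, the prime in $\WF'$, and the sign in $e^{\i\lambda k\cdot t}$. Once that matches, Lemma~\ref{lemma:SV2.1} gives the rapid decay. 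The cutoff $\hat h$ is harmless by Remark~\ref{rem:SV2.2}.

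\textbf{Main obstacle.} The delicate step is the identification itself. One has to show rigorously that smearing with the translated, oscillating test sections and integrating against $\hat h$ reproduces exactly the localized Fourier-transform criterion of \cite{V} for a bundle-valued bidistribution. In particular, the $v_\sH$-direction in the $y$-coordinates has to be decoupled from the other frequencies so that the conic neighbourhood $V^\pm_\xi$ is uniform. We will handle this by choosing $k$ in a small cone, applying Remark~\ref{rem:SV2.2} with the transverse components as parameters, and using that $\fY^*$ acts without extra phase factors in the invariant trivialization.
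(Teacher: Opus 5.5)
Your overall architecture matches the paper's: flow-adapted coordinates, a flow-invariant trivialization so that $\Upsilon^*_b$ is a pure translation, polynomial bounds from Remark~\ref{rem:2pt distr estimates}, the intertwining $T_\#\phi_{\Upsilon^*_b g}=\fY^*_{\#,b}T_\#\phi_g$, and then Lemma~\ref{lemma:SV2.1} with Remark~\ref{rem:SV2.2}. The gap is the step that turns Lemma~\ref{lemma:SV2.1} into a wavefront statement, which fails as you set it up.

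The criterion from \cite[Prop.~2.1(c)]{V} that the paper uses works with \emph{dilated} families $g^\lambda_{\#,i}(y)=g_{\#,i}(\delta^\#_{\lambda^p}y)$, $p\geq 1$. These concentrate at scale $\lambda^{-p}\leq\lambda^{-1}$. They are translated over a whole neighbourhood (in time by $\upsilon_{\#,\widetilde t}$, in space by $\varsigma_{\#,\cv{p}}$) and Fourier transformed in all eight translation variables. The criterion says nothing about fixed-size oscillating windows Fourier transformed in the time shifts only.

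With your families the time frequency is counted twice. Write $\kappa=(\kappa_1,\kappa_2)$ for your $(k_1,k_2)$, to separate it from the oscillation frequencies $k,k'$. In the invariant trivialization one has $\Upsilon^*_t(\chi e^{\i\lambda k\cdot y}S)=e^{-\i\lambda k_0 t}(\upsilon^*_t\chi)e^{\i\lambda k\cdot y}S$. So, contrary to your remark, the flow does produce a phase, exactly where it matters. The integrand of your $t$-integral is therefore $e^{\i\lambda\mu\cdot t}\hat h(t)G_\lambda(t)$, where $\mu=(\kappa_1+k_0,\kappa_2-k_0')$ and $G_\lambda(t)=W^\pm_\sH\big(\overline{(\upsilon^*_{t_1}\chi)e^{\i\lambda k\cdot y}S},(\upsilon^*_{t_2}\chi)e^{\i\lambda k'\cdot y}S\big)$. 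The $t$-derivatives of $G_\lambda$ fall only on the windows and are bounded polynomially in $\lambda$. Integrating by parts in $t$, your expression is $\cO(\lambda^{-\infty})$ as soon as $\mu$ stays away from $0$, irrespective of the singularities of $W^\pm_\sH$. It carries information only at the single matched value $\kappa^*=(-k_0,k_0')$.

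A supremum over a neighbourhood of $(\pm\xi_0,\mp\xi_0)$ therefore does not characterise directions of rapid decrease unless $\kappa^*$ is placed inside it. With $k$ near $\xi$ as you wrote, $\kappa^*_1\approx-\xi_0$, so for $W^+_\sH$ your criterion is vacuous. The ``sign bookkeeping'' you postpone is the actual missing content, not a formality.

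There are two ways to close the gap.

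\textbf{The paper's route.} Use the dilated families above. The polynomial bounds then come from $\norm{g^\lambda_{\#,i}}_{C^k}\leq\lambda^{kp}\norm{g^1_{\#,i}}_{C^k}$. Lemma~\ref{lemma:SV2.1} is applied to the $2$-vector $(\xi_0,-\xi_0)$, with $\xi_0=\xi(v_\sH)$ (resp.\ $\xi(\partial_t)$), so its condition $\xi_2>0$ is exactly your hypothesis. Via Remark~\ref{rem:SV2.2}, the spatial shifts $\cv p$ enter as uniform parameters and the time cutoff enters as a multiplier. Integrating the resulting uniform bound against $\eta\, e^{\i\lambda(\cdot)}$ over the compact set of spatial shifts gives Verch's criterion on a product neighbourhood $V^\pm_{\xi_0}\times\widetilde V^\pm_\#$ whose spatial part is arbitrary.

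\textbf{Keeping oscillating windows.} This also works if you make the matching explicit.
\begin{enumerate}
\item Treat $(k,k')$ as extra parameters (Remark~\ref{rem:SV2.2}) and take $(-k,k')$ near $(\pm\xi,\mp\xi)$, so that $\kappa^*$ lies in the neighbourhood $V^\pm$ provided by Lemma~\ref{lemma:SV2.1}.
\item Insert a compactly supported cutoff in $t$ and evaluate the supremum at $\kappa=\kappa^*$.
\item The $t$-integral then collapses to $W^\pm_\sH\big(\overline{\widetilde\chi_1e^{\i\lambda k\cdot y}S},\widetilde\chi_2e^{\i\lambda k'\cdot y}S\big)$, with $\widetilde\chi_j=\int a_j(t)\,\upsilon^*_t\chi\,\d t$. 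Here $a_j$ are the cut-off (and, for $j=1$, conjugated) factors of the product weight $\hat h=\hat h_1\otimes\hat h_2$, and for suitable choices $\widetilde\chi_j(x)\neq0$.
\end{enumerate}
This is the ordinary definition of a direction of rapid decrease at $(x,\pm\xi;x,\mp\xi)$, and \cite{V} is not needed. Neither the dilation nor this matching appears in your proposal.
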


\begin{proof}
     This proof follows largely parts 1 and 2 of the proof of \cite{SV}, see also \cite{Klein}.
      Let $\#$ replace either $\sH$ or $\sI$. We can then cover a neighbourhood $\cU$ of $x$ by the coordinate systems $\fk_{\#}: \cU\to \fk_{\#}(\cU)\subset \rr^4$, $y\mapsto (t^\#, \cv{y}^\#)$.
   
    We choose these coordinates such that $x$ is mapped to $0\subset \rr^4$, and $v_\sH=\partial_{t^\sH}$ or $\partial_t=\partial_{t^\sI}$, respectively. This can for example be achieved by setting
     \begin{align*}
      (t^{\sH}, \cv{y}^\sH)(y)&=(t, r, \theta, \varphi_+)(y)-(t(x), r(x), \theta(x), \varphi_+(x))\, ,\\
      (t^{\sI}, \cv{y}^\sI)(y)&=(t^*, r^{-1}, \theta, \varphi^*)(y)-(t^*(x), r^{-1}(x), \theta(x), \varphi^*(x))
    \end{align*}
    
    Moreover, we can fix the tetrad over $\cU$ to be the Kinnersley \eqref{eq:Kinnersley} tetrad, which also fixes a local trivialization of $\cV_s$ over $\cU$.\footnote{In doing so, we assume that $\cU$ does not intersect the axis of rotation, and hence that $x$ does not lie on it. If it does, then we can use the corresponding stereographic coordinates instead $\theta$ and $\varphi_+$ or $\varphi^*$, and replace the Kinnersley tetrad by the one from \eqref{eq:compl triv syst for killing} that extends smoothly to $x$, and where $c=\Omega_+$ ($c=0$) in \eqref{eq:compl triv syst for killing}. We note that we can always assume $\cU$ to be sufficiently small so that it does not intersect both poles of the sphere $\{t^\#=r=\text{const.}\}$.}

   Our coordinate systems are build in such a way that we can find some $c>0$ so that for $\abs{b}<c$, the isomorphisms $\upsilon_{\sH,b}$ and $\upsilon_{\sI,b}$ induced by $v_\sH$ and $\partial_t$ respectively can be written as
   \begin{align}
       \upsilon_{\#,b}(y)= \fk_\#^{-1}\left(\fk_\#(y)+(b,\cv{0})\right)
   \end{align}
   for $y$ in some smaller neighbourhood $K\subset \cU$ of $x$. In particular, for $y\in K$ and $\abs{b}<c$, $\upsilon_{\#,b}(y)\in \cU$. We also assume that $K\subset \cU$ is chosen small enough, so that we can define spatial translations by
   \begin{align}
    \varsigma_{\#, \cv{p}}(y)=\fk_{\#}^{-1}\left(\fk_\#(y)+(0,\cv{p})\right) 
   \end{align}
   for $\cv{p}\in \rr^3$ contained in a small neighbourhood $\cv{C}$ of the origin. Furthermore, for $0<\lambda$ and $\abs{y}$ sufficiently small, we can define 
   \begin{align}
       \delta^\#_\lambda(y)=\fk_\#^{-1}\left(\lambda\cdot\fk_\#(y)\right)\, .
   \end{align}

   The push-forward belonging to the cover map $\Upsilon_{\#,(s,w),b}$ of $(\upsilon_{\#,b})_{\abs{b}<c}$ on $\cB(s,w)$ is the same as before. In the local trivialization induced by  the Kinnersley tetrad, its action on $\phi\in \Gamma_K( \cB(s,w))$ takes the simple form 
   \begin{align*}
       \Upsilon_{\#,(s,w),b}^*\phi(y)=\phi(\upsilon_{\#,b}^{-1}y)\, .
   \end{align*}

    For $s\in \{0,1,2\}$, let $\iota_{s}\in \Gamma(\cB(s,-s))$ be a smooth section so that in the Kinnersley tetrad, one has 
   \begin{align*}
       \iota_{ s}(y)=1 \quad \forall y\in \cU\, .
   \end{align*}
    Then $\iota_{s}$ and $\overline{\iota_{s}}$ provide local bases for $\cB(s,-s)$ and  $\cB(-s,-s)$, respectively. We then consider the scalar bi-distributions
    \begin{align*}
        \widetilde{W}^\pm_\sH(f,h):=W^\pm_\sH(\overline{\iota_{s}}f, \iota_{s}h)\, ,\\
         \widetilde{W}^\pm_\sI(f,h):=W^\pm_\sI(\overline{\iota_{s}}f, \iota_{s}h)\, .
    \end{align*}
    
In the following, we identify $\cU$ and $\fk_{\#}(\cU)$ to simplify notation. Let $g_{\#,i}\in \coinf(\cU)$, $i\in \{1,2\}$, so that $\supp g_{\#,i}\subset K$ and $\widehat{(g_{\#,1}\otimes g_{\#,2})}(0,0)=1$, where the Fourier transform is taken in $\rr^4\supset \fk_{\#}(\cU)$. For an arbitrary but fixed $p\geq 1$ and $ 1\leq \lambda$, set $g^\lambda_{\#,i}(y)= g_{\#,i}(\delta^{\#}_{\lambda^{ p}}y)$ whenever $ \delta^\#_{\lambda^{ p}}y\in \cU$, and $g^\lambda_{\#,i}(y)=0$ otherwise.
For $0<\lambda<1$, we set $g^\lambda_{\#,i}(y)=g^1_{\#,i}(y)$.

Then for all $\lambda>0$, $g_{\#,i}^\lambda\in \coinf(\cM)$ with support in $K$. As a result, $\upsilon_{\#,b}^*\varsigma_{\#, \cv{p}}g^\lambda_{\#,i}\in \coinf(\cM)$ is well-defined for $\abs{b}<c$ and $\cv{p}\in\cv{C}$, and is supported in $\cU$.

 As detailed in Remark~\ref{rem:2pt distr estimates}, we have  
\begin{align*}
\abs{\widetilde{W}^\pm_\#(g_{\#,i}^\lambda, g_{\#,i}^\lambda)}=\abs{w^\pm_\#(T_\#\phi_{\iota_{s}\overline{g^\lambda_{\#,i}}}, T_\#\phi_{\iota_{s}g^\lambda_{\#,i}})}\leq C\norm{\iota_s g_{\#,i}^\lambda}_{C^k(\cU)}^2\leq C'\norm{g_{\#,i}^\lambda}_{C^k(\cU)}^2\, ,
\end{align*}
where $C, c'>0$, $\norm{\iota_s g_{\#,i}^\lambda}_{C^k(\cU)}$ is any $C^k$-norm on $\Gamma_\cU(\cB(s,-s))$ as defined in Definition~\ref{def:function spaces and norms} and 
\begin{equation}
    \norm{f}_{C^k(\cU)}=\sup\limits_{x\in \cU}\sup\limits_{\alpha\leq k}\abs{Z_1^{\alpha_1}\dots Z_4^{\alpha_4}f(x)}
\end{equation} 
is any $C^k$-norm on $\coinf(\cU)$ determined by a family of smooth vector fields $(Z_i)_{i=1}^4$ as in Definition~\ref{def:function spaces and norms}.

We can then use the coordinate derivatives of our coordinate system $\fk_\#$ as the  vector fields in the scalar $C^k$-norm, allowing us to estimate 
\begin{align*}
\norm{g_{\#,i}^\lambda}_{C^k(\cU)}&\leq \lambda^{ kp}\norm{g_{\#,i}^1}_{C^k(\cU)}\leq \norm{g_{\#,i}^1}_{C^k(\cU)} (1+\lambda)^{kp}\, ,
\end{align*}
and consequently
\begin{align*}
\abs{\widetilde{W}^\pm_\#(\varsigma_{\#,\cv{p}}^*g_{\#,i}^\lambda, \varsigma_{\#,\cv{p}}^*g_{\#,i}^\lambda)}&\leq 
\norm{g_{\#,i}^1}_{C^k(\cU)}^2 (1+\lambda)^{2kp}\, 
\end{align*}
for any $\cv{p}\in \cv{C}$.
Combined with Remark~\ref{rem:SV2.2}, this implies that $(\iota_{s }\varsigma_{\#,\cv{p}}^*g^\lambda_{\#,i})_{\lambda>0, \cv{p}\in \cv{C}}$ satisfy the conditions of Lemma~\ref{lemma:SV2.1}.

We note also that by the choice of $\iota_{s}$, the results in Remark~\ref{rem:alg autom of killing flow}, and \eqref{eq:flow commutator A tho}, we have
\begin{align*}
    T_\#\phi_{\iota_{s}\upsilon_{\#,b}^*g^\lambda_{\#,i}}=T_\#\phi_{\Upsilon_{\#,(s,s),b}^*\iota_{s}g^\lambda_{\#,i}} =T_\#\fY^*_{\#,b}\phi_{\iota_{s}g^\lambda_{\#,i}}=\fY^*_{\#,b}T_\#\phi_{\iota_{s}g^\lambda_{\#,i}}\, .
\end{align*}

We can now choose $H_\#\in \coinf(\cU\times \cU)$ as 
\begin{align*}
H_\#(t^{\#}, \cv{y}^\#, \widetilde{t}^\#, \widetilde{\cv{y}}^\#)=\chi(t^\#, \widetilde{t}^\#)\hat h(t^\#, \widetilde{t}^\#)\eta(\cv{y}^\#,\widetilde{\cv{y}}^\#)\,,
\end{align*}
where $\hat h$ is as in Lemma~\ref{lemma:SV2.1}, and $\chi\in \coinf([-c,c]^2;\rr)$ and $\eta\in \coinf(\cv{C}\times\cv{C};\rr)$ are chosen such that $H_\#(0,0)=1$. 

By  the assumptions of the Lemma, $\xi_{t^\#}=\xi(\partial_{t^\#})<0$. Let $V^\pm_{\xi_{t^\#}}\subset \rr^2$ be an open neighbourhood of $(\pm\xi_{t^\#}, \mp\xi_{t^\#})$ as in Lemma~\ref{lemma:SV2.1}, and let $V^\pm_\#\subset \rr^4\times\rr^4\cong T^*(\cM\times\cM)$ (using the trivialization induced by the corresponding coordinate system) be an open neighbourhood of $(\pm \xi, \mp\xi)$ of the form $V^\pm_{\xi_{t^\#}} \times \widetilde{V}_\#^\pm$, where  $\widetilde{V}_\#^\pm$ is an open neighbourhood in $\rr^3\times\rr^3$.

One can then employ the estimates of Lemma~\ref{lemma:SV2.1}, combined with Remark~\ref{rem:SV2.2}, to show that
\begin{align}
    \sup\limits_{(k,k')\in V'^{\pm}_\#}\abs{\,\int\limits_{\rr^8}e^{\i\lambda(k,k')\cdot(y,y')}H_\#(y,y')\widetilde{W}_\#^\pm(\upsilon^*_{\widetilde{t}}\upsilon^*_{\cv{p}}g^\lambda_1, \upsilon^*_{\widetilde{t}}\upsilon^*_{\cv{p}}g^\lambda_2)\d^4y\d^4y'}=\cO(\lambda^{-\infty})\, ,
\end{align}
where $V'^\pm_\#\subset V^\pm_\#$ is a neighbourhood of $(\pm\xi, \mp \xi)$ whose closure is a compact subset of $V^\pm_\#$, see also the proof of \cite[Prop. V.2]{Klein}.

By  the characterisation of directions of rapid decrease in \cite[Prop. 2.1(c)]{V}, this entails that $(x,\pm\xi; x,\mp\xi)$ is a direction of rapid decrease for $\widetilde{W}^\pm_\#$, and thereby for $W^\pm_\#$. 
\qeds
\end{proof}

We can then conclude

\begin{corollary}
\label{cor:Had bw trapped}
    Let $(x,\xi)\in T^*(\cM)$ lie on a bicharacteristic strip that belongs to a past-trapped null geodesic. If $\xi$ is past-directed, then $(x,\pm\xi; x,\mp\xi)$ is a direction of rapid decrease for $W^\pm$. 
\end{corollary}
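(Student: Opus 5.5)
The plan is to combine Lemma~\ref{lemma:passivity Had proof} with propagation of singularities along the past-trapped bicharacteristic. Write $W^\pm=W^\pm_\sH+W^\pm_\sI$. A direction is of rapid decrease for $W^\pm$ as soon as it is of rapid decrease for both summands, so it suffices to show that for each summand some point on the bicharacteristic strip through $(x,\xi)$ is covered by Lemma~\ref{lemma:passivity Had proof}, and then to propagate.

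First, reduce to points in $\MI$ near the trapped region. Since $\gamma$ is past-trapped, there are $t_0$ and $r_+<r_0<R_0$ with $\gamma(\tau)\in\MI$, $r_0\le r(\gamma(\tau))\le R_0$ for all sufficiently past parameters. Both $W^\pm_\sH$ and $W^\pm_\sI$ are weak bisolutions of the Teukolsky equations by \eqref{eq:2pt fct weak bisol}. Propagation of singularities (as in \cite[Lemma 6.5.5]{DH}) therefore shows that $(x,\pm\xi;x,\mp\xi)\notin\WF(W^\pm_\#)$ if and only if the same holds at $(y,\pm\eta;y,\mp\eta)$ for any $(y,\eta)$ on the same strip. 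Here we move along the strip simultaneously in both entries, which is allowed because on the diagonal both components lie on the same strip (cf. Corollary~\ref{cor:Had reduction}). Hence we may take $(y,\eta)$ with $y\in\MI$, $r_0\le r(y)\le R_0$, and $\eta$ past-directed, since past-directedness is preserved along the strip.

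Second, check the sign conditions $\eta(\partial_t)<0$ and $\eta(v_\sH)<0$. For $\eta(\partial_t)$, use that along a past-trapped geodesic one approaches the trapped set $K$. On $K$, $\xi_t\ne0$ by \cite{Dyatlov}, and $\xi_t$ is conserved along the flow. The claim is that for a past-directed covector on such a geodesic $\xi_t<0$ (the energy of trapped null geodesics is positive, since trapping occurs away from the ergoregion-type behaviour that allows negative energy). To argue this, I would use that past-trapped geodesics accumulate at $K$ and invoke the characterisation in \cite{GHW} (or \cite{Dyatlov}) that on $K$ the covector $\xi$ with $\xi$ future-directed satisfies $\xi(\partial_t)>0$.

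For $v_\sH=\partial_t+\Omega_+\partial_\varphi$, the same source gives $\xi(v_\sH)$ of the same sign on $K$. Indeed, trapped null geodesics in subextreme Kerr satisfy $\xi_t$ dominating $\Omega_+\xi_\varphi$, since the trapped set is characterised by both $\partial_t$- and $v_\sH$-energies being positive for future-directed covectors, as used in \cite{GHW} for the Dirac Unruh state. By conservation of both quantities, the signs extend to the whole strip.

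Third, apply Lemma~\ref{lemma:passivity Had proof}(1) to $W^\pm_\sH$ and (2) to $W^\pm_\sI$ at $(y,\eta)$, which gives rapid decrease of both summands there. Propagating back to $(x,\xi)$ then concludes the proof.

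The main obstacle is the second step, namely showing that past-directed covectors on past-trapped strips satisfy both $\eta(\partial_t)<0$ and $\eta(v_\sH)<0$ simultaneously. I would first try to cite the corresponding statement in \cite{GHW}, Section 8, on the positivity of $\partial_t$ and $v_\sH$ energies on $K$ for $|a|<M$, together with the continuity and conservation of these quantities along the flow.
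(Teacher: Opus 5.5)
Your proposal follows the paper's proof: transport the point along the bicharacteristic strip into $\MI$, use that on a past-trapped strip a past-directed $\xi$ has both $\xi(\partial_t)<0$ and $\xi(v_\sH)<0$, and apply both parts of Lemma~\ref{lemma:passivity Had proof} to $W^\pm_\sH$ and $W^\pm_\sI$. The paper takes this sign fact from \cite[Prop.~6.6]{HaK}, and also the fact that a past-trapped geodesic actually enters $\MI$, which the literal definition of past-trapped does not force. Drop your heuristic for the sign fact, though: for $|a|>M/\sqrt{2}$ prograde trapped orbits lie inside the ergoregion, so the positivity of both energies does not come from trapping avoiding the ergoregion. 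It nevertheless holds for all $|a|<M$; it follows, e.g., from the double-root conditions of Carter's radial equation at the radius the geodesic asymptotes to, which also give strict inequalities on the whole past-trapped set and not only on $K$. So justify it by \cite[Prop.~6.6]{HaK} or by that computation.
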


\begin{proof}
  By the results of \cite[Prop. 6.6]{HaK}, if the geodesic specified by $(x,\xi)$ is past trapped, then it must eventually enter region $\MI$. Moreover, if it satisfies $\xi(v_\sH)\geq 0$ or $\xi(\partial_t)\geq 0$, then $\xi$ must be future-directed. The result then follows immediately from Lemma~\ref{lemma:passivity Had proof}.
  \qeds
\end{proof}
 
\subsection{Geodesics ending at the horizon}
To control the wavefront set for geodesics that approach the horizon, we first proceed with some general setup.

\begin{definition}
    Let $\iota_s\in \Gamma( \cB(s,-s))$ be a smooth section that extends smoothly to $\breve{\mathrm{M}}_\mathrm{I}$ and $\sH$, and that is chosen such that
\begin{align*}
    \iota_s(x)=\begin{cases}
        \left[(\mathfrak{l}, \mathfrak{n}, m)(x), 1\right]\, : & r\leq 3M\\
        \left[(l, n, m)(x), 1\right]\, : & r\geq 6M\, ,
    \end{cases}
\end{align*}
with a smooth transition in between. Then $\iota_s$ and $\overline{\iota_s}$ provide local bases of $\cB(s,-s)$ and $\cB(-s,-s)$, respectively. We then define the scalar  bi-distribution
\begin{align}
    \widetilde{W}^\pm_{ U}: \coinf(\cM)\times\coinf(\cM)\to \cc\, ,\quad
    (f,h)\mapsto W^\pm_{ U}(\overline{\iota_s} f, \iota_s h)=w^\pm_U(\phi_{\iota_s \overline{f}}, \phi_{\iota_s h})\, .
\end{align}
\end{definition}

By construction, the wavefront set of $\widetilde{W}^\pm_{ U}$ agrees with that of $W^\pm_{ U}$, and it is sufficient to consider the former. As for $W^\pm_{ U}$, we may further split $\widetilde{W}^\pm_{ U}$ into $\widetilde{W}^\pm_\sH$ and $\widetilde{W}^\pm_\sI$.

Next, we define the maps 
\begin{subequations}
\begin{align}
    \coinf(\cM)\ni f&\mapsto \fh_s f= \cU_sT_\sH {\tho}^{2s}\overline{E_{-s}}\iota_s f\in  \Gamma(\cB(s,0)\vert_{\sH})\, ,\\
    \coinf(\cM)\ni f&\mapsto \mfi_s f=\cW_sT_\sI A_s\overline{E_{-s}} \iota_s f\in \Gamma( \breve{\cB}(s,0)\vert_{\sI_-})\, ,\\
    \coinf(\cM)\ni f&\mapsto \mfi'_s f=\cW_s T_\sI \overline{E_{-s}}\iota_s f\in \Gamma( \breve{\cB}(s,0)\vert_{\sI_-})\, .
\end{align}
\end{subequations}
We can then write 
\begin{subequations}
\begin{align}
    \widetilde{W}^\pm_\sH(f,h)=&4(r_+-M)^{2s}\IP{\fh_s\overline{f},X_\pm(\i\Theta_U) \fh_s h}_\sH,\\
    \widetilde{W}^\pm_\sI(f,h)=&2 \left[ \IP{\mfi_s\overline{f}, X_\pm(\i\Theta_\ft) \mfi'_s h}_\sI+\IP{\mfi'_s\overline{f}, X_\pm(\i\Theta_\ft) \mfi_s h}_\sI\right]\, ,
\end{align}
\end{subequations}
where we use the notation of Remark~\ref{remark:func calc}  and where we employ part 5 of Proposition~\ref{prop:A_s properties} to eliminate the inverse of $A_s$.

Let $q=(y, \xi)\in \cN$ be such that the corresponding null geodesic  maximally extended on $\MK_K$ intersects $\sH$. In a neighbourhood $\cU\subset \cM$ of $y$ with compact closure, use the Kruskal coordinate system to identify $T^*\cU$ with $\cU\times \rr^4$.
 Then
\begin{lemma}
    There is an $\epsilon>0$ so that for any $q\in \dot{T}^*\cM$ satisfying $0<V(q)<\epsilon$, where $V$is the KBL coordinate, and whose corresponding null geodesic intersects $\sH$, one can find a neighbourhood $K\times V$ of $q$ so that all null geodesics corresponding to some $q'\in K\times V$ intersect $\sH$ inside a compact set $K_q\subset\sH$. 
\end{lemma}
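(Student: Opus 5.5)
The plan is to combine a compactness argument with the fact that transversal intersection with a null hypersurface is an open condition. Work in the Kruskal extension $\MK_K$ in KBL coordinates $(U,V,\theta,\varphi_+)$. The long horizon $\sH=\{V=0\}$ is a smooth null hypersurface, and $\cM\cap\{V>0\}$ lies to its future. Consider a null geodesic through a point $y$ with $V(y)$ small that, followed in the past direction, reaches $\sH$. Such a geodesic has $\dot V\neq 0$ there, so it crosses $\sH$ transversally. The only exception would be a null generator of $\sH$ itself, but such a generator lies inside $\{V=0\}$ and so cannot pass through a point with $V>0$.

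The first step is to define a map on the relevant subset of $\dot T^*\MK_K$. Given $q'=(y',\xi')$, let $\gamma_{q'}$ be the null geodesic with that initial data, extended backwards, and let $F(q')\in\sH$ be its intersection point with $\sH$. At $q$ the intersection is transversal, so $\d V(\dot\gamma)\neq 0$ at $F(q)$. By the implicit function theorem, together with smooth dependence of geodesics on their initial data in $T^*\MK_K$, the intersection parameter $\tau(q')$ exists and depends smoothly on $q'$ in a neighbourhood $K\times V$ of $q$. Consequently $F$ is continuous there. Shrinking $K\times V$ to have compact closure inside this neighbourhood, the image $F(\overline{K\times V})$ is compact, and we take it as $K_q$. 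Every nearby $q'$ that is null then has its geodesic meeting $\sH$, and it does so inside $K_q$.

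The second step is to produce the uniform $\epsilon$. Near $V=0$ the coordinate vector field $\partial_V$ is transversal to $\sH$. I would show that for $V(y)<\epsilon$, any null covector whose past-directed geodesic hits $\sH$ does so without first leaving a fixed Kruskal chart neighbourhood of $\sH$. One way to see this is to use that, in the Kruskal tetrad, the time orientation is given by $\d(U+V)$. A past-directed null geodesic starting at small $V$ then has $V$ monotonically decreasing until it reaches $0$, and this can be checked directly from the form \eqref{eq:g Kruskal} of the metric near $V=0$, where the cross terms are of order $V$. This monotonicity guarantees that the intersection occurs at positive parameter and is transversal, which is exactly the input needed in the first step.

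The main obstacle is geodesics that are nearly tangent to the horizon generators: those whose covector has a very small $\xi_U$-type component, so that $\dot V$ is small compared with the other components. For these the crossing parameter could become unbounded. I would handle this by noting that the bound is only needed locally around the fixed $q$, where transversality holds strictly. Openness then gives a neighbourhood on which $|\dot V|$ stays bounded below near the crossing, and the crossing time stays bounded, so compactness of $K_q$ follows from continuity of $F$. The role of $\epsilon$ is only to keep $y$ in the region where the Kruskal chart and the monotonicity of $V$ are valid.
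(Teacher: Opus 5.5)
Your first step is essentially the paper's proof. The paper cites \cite[Proposition 4.4.6]{ON} for the fact that a null geodesic meeting the horizon crosses it transversally; your observation that only generators can be tangent to $\sH$, and that generators stay inside $\{V=0\}$, is exactly this. The paper then concludes by continuous dependence of the geodesic flow on initial data, which your implicit-function-theorem and compactness argument spells out.

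The second step, however, is unnecessary, and its key claim is not justified.

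\textbf{The $\epsilon$ is not needed.} It plays no role in the argument. Step one works for every $q$ with $V(q)>0$ whose geodesic meets $\sH$, so any $\epsilon$ will do.

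\textbf{The monotonicity claim fails in general.} $V$ need not be monotone along null geodesics in $\{V>0\}$. From $V=e^{\kappa_+ t^*}$ and the inverse metric in $K^*$-coordinates one gets
$g^{-1}(\d V,\d V)=-\kappa_+^2V^2a^2\sin^2\theta/\varrho^2<0$ for $a\neq 0$ off the axis. So the level sets of $V$ are timelike, and every such point admits past-directed null directions with $\dot V>0$. Checking the metric near $V=0$ does not rescue this. The leading term in $\dot V$ is $g^{UV}\xi_U$, and the $\cO(V)$ cross terms can only be absorbed if you have a lower bound on $|\xi_U|$. That bound fails precisely for the near-generator directions you flag yourself.

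\textbf{What you actually need.} To make the crossing unique, so that your map $F$ is well defined, you only need the sign of $\dot V$ on $\sH$ itself. At $V=0$ one has $g_{UU}=g_{U\theta}=g_{U\varphi_+}=0$ and $g_{UV}=-\tilde g_2/2>0$. Hence $g^{-1}(\d V)=g_{UV}^{-1}\partial_U$, a future-directed null normal. Consequently every past-directed non-generator null geodesic has $\dot V<0$ at each point of $\sH$ it meets, so it crosses $\sH$ exactly once.

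Replace the monotonicity argument by this remark and the proof is complete.
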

\begin{proof}
By \cite[Proposition 4.4.6]{ON}, any null geodesic intersecting the horizon must cut through it transversally. The result then follows  from the continuous dependence of the geodesic flow on the initial data.
    \qeds
\end{proof}
Let now $K\subset \cU$ and $V\subset \rr^4\setminus\{0\}$ be chosen as in the lemma. Let us assume additionally that $K$ is compact and that we can find an open neighbourhood $\cU'$ of $K$ whose closure is contained in $\cU$. Moreover, let $K_q\subset \sH$ be as in the lemma. By its compactness, there is a $U_0\in \rr$ so that $U(x)\geq U_0$ for all $x\in K_q$.

Subsequently, we pick a cutoff function $\chi_q\in \coinf(\sH; \rr)$ so that  $\supp(1-\chi_q)\cap J^-(\cU)\subset\{ x\in \sH: U(x)\leq U_1:=\min(-3, U_0-3)\}$. Let $f$, $h\in \coinf(\cM)$ be a pair of test functions so that $\supp f\cup\supp h\subset K$.
We can then split $\widetilde{W}^\pm_\sH$ into the pieces
\begin{align}
\label{eq:splitting sH}
    2^{-2}(r_+-M)^{-2s}\widetilde{W}^\pm(f,h)=&\IP{\chi_q\fh_s\overline{f}, X_\pm(\i\Theta_U) \chi_q\fh_s h}_\sH\\\nonumber
    &+\IP{(1-\chi_q)\fh_s \overline{f}, X_\pm(\i\Theta_U)  \chi_q \fh_s h}_\sH\\\nonumber
    &+\IP{\chi_q \fh_s\overline{f},  X_\pm(\i\Theta_U) (1-\chi_q) \fh_s h}_\sH\\\nonumber
    &+ \IP{(1-\chi_q) \fh_s \overline{f},  X_\pm(\i\Theta_U) (1-\chi_q) \fh_s h}_\sH\, .
\end{align}
 We note that this is the same kind of splitting as was used in the proof of Lemma~\ref{lemma:well def wH}. 
Let us for the moment ignore the first term on the right-hand side of \eqref{eq:splitting sH}, and focus on the remaining terms. Then we wish to show that these pieces, as well as $w^\pm_\sI$, do not contain the point $q$ in their wavefront set.

To do so, we want to show that $q$ is a direction of rapid decrease for these pieces.
To this end, we introduce the notation

\begin{definition}
  Let $q'=(y',\xi')\in \dot{T}^*_\cU\cM$, and identify $T^*_\cU\cM$ with $\cU\times \rr^4$ in the Kruskal coordinates. If $f\in \coinf(\cU)$, then we set
 \begin{align*}
     f^\lambda_{\xi'}(x)= f(x)e^{ \i\lambda\xi'\cdot x} \in \coinf(\cM)\, ,
 \end{align*}
 where the product $\xi'\cdot x$ is the usual product in $\rr^4$, and $\xi'$, $x$ are identified with elements of $\rr^4$ using the Kruskal coordinates.
\end{definition}
We then show the estimates
\begin{lemma}
    Let $q$, $\cU$, $\cU'$, $K$, $V$, $\chi_q$ as above. Then we can find a function $f\in \coinf(K)$ with $f(y)=1$ and a $V'\subset V$ so that for all  $N\in \nn$, there are $C_N>0$ and $\lambda_N>0$ so that
     \begin{align*}
       \sup\limits_{k\in V'}\abs{(1-\chi_q)\fh_s f_k^\lambda}\leq \abs{\log \abs{U}}^{-d} C_N\lambda^{ -N}\, ,\\
       \sup\limits_{k\in V'}\abs{\mfi_s' f_k^\lambda}\leq \ft^{-\tilde{d}}C_N\lambda^{ -N}\\
       \sup\limits_{k\in V'}\abs{\mfi_s f_k^\lambda}\leq \ft^{-\tilde{d}}C_N\lambda^{ -N}\, 
   \end{align*}
   for all $\lambda>\lambda_N$ and for some $d$, $\tilde{d}>1$.
\end{lemma}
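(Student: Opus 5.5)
The plan is to regard each of the three quantities as a boundary trace of a solution of the backward Teukolsky problem whose source is the oscillating test function $f^\lambda_k$. We then show that this source produces no singularity, and hence only $\cO(\lambda^{-\infty})$ contributions, along the part of $\sH$ where $\chi_q\neq1$ and along all of $\sI_-$. This is the standard microlocal argument from \cite{DMP,GHW}: write $\iota_s f^\lambda_k = \iota_s f e^{\i\lambda k\cdot x}$ and expand $\overline{E_{-s}}\iota_s f^\lambda_k$.

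\textbf{Step 1: choose $V'$.} We fix $V'\subset V$ conic-compactly and take $f$ supported in $K$ so small that every null bicharacteristic through $K\times V'$, followed into the past, meets $\sH$ inside $K_q\subset\{U\geq U_0\}$. Such bicharacteristics then never reach $\{U\leq U_1\}\cap\sH$ or $\sI_-$.

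\textbf{Step 2: propagation of singularities.} By propagation of singularities for $\overline{T_{-s}}$ \cite[Thm.~6.5.3]{DH}, $\WF(\overline{E_{-s}}\iota_s f^\lambda_k)$ is uniform in $k\in V'$ and, as $\lambda\to\infty$, concentrates on the flowout of $K\times V'$. Away from that flowout, any finite-$C^m$ seminorm of $\overline{E_{-s}}\iota_s f^\lambda_k$ on a compact set is $\cO(\lambda^{-N})$. This is the usual non-stationary phase bound, obtained by writing $\overline{E_{-s}}$ as a sum of Fourier integral operators plus a smoothing part.

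For the horizon piece $(1-\chi_q)\fh_s f^\lambda_k$, we apply this on the compact region $\sH\cap\{U_1-1\leq U\leq U_1\}$. Since $\fh_s$ involves only finitely many derivatives tangential to $\sH$, the pointwise bound $C_N\lambda^{-N}$ follows there.

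\textbf{Step 3: uniform tails.} To extend the bound down to $U\to-\infty$ with the factor $\abs{\log\abs{U}}^{-d}$, and to all of $\sI_-$ with the factor $\ft^{-\tilde d}$, we cannot rely on compactness. The approach is to split the solution with a cutoff in time, as in Proposition~\ref{prop:Green hyp props}. Let $\chi$ be a time cutoff whose transition region is a compact band near $\Sigma_{\ft_0}$, for $\ft_0$ chosen in the past of $K$ but after the flowout of $K\times V'$ has hit $\sH$ (so the flowout has left the band). Then
\[
\overline{E_{-s}}\iota_s f^\lambda_k = \widetilde{E}^-\bigl([\overline{T_{-s}},\chi]\overline{E_{-s}}\iota_s f^\lambda_k\bigr)
\]
in the past region. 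By Step 2, the new source $F_\lambda=[\overline{T_{-s}},\chi]\overline{E_{-s}}\iota_s f^\lambda_k$ is compactly supported with $\norm{F_\lambda}_{C^m}=\cO(\lambda^{-N})$ for every $m$, uniformly in $k\in V'$.

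We then invoke Theorem~\ref{th:4.2} and Corollary~\ref{cor:4.1}, which control the traces by $C\norm{F}_{C^m}$ times $\langle U\rangle^{s-n}(\log\langle U\rangle)^{-d-n}$ and $\langle\ft\rangle^{-2-s-\abs{s}-n+}$, respectively, applied to the backward solution with source $F_\lambda$. The operators $\fh_s$, $\mfi_s$ and $\mfi_s'$ are built from $\cU_s$, $\cW_s$, $\tho^{2s}$ and $A_s$. These are tangential operators whose action on the decay classes is known from the proofs of Lemmas~\ref{lemma:well def wH} and~\ref{lemma:well def wI}, so the traces inherit the stated weights with $d,\tilde d>1$. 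Because $A_s$ commutes with $\overline{E_{-s}}$ up to the conjugation $\varrho^{-2}A_s\varrho^{2}$ (Proposition~\ref{prop:A_s properties}(2)), the $\mfi_s$ case reduces to the $\mfi'_s$ case with source $\varrho^{-2}A_s\varrho^2 F_\lambda$, which obeys the same bounds.

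\textbf{Main obstacle.} The hardest point is the uniformity in $k\in V'$ and $\lambda$ of the $C^m$ bound on $F_\lambda$ in Step 3. It requires a quantitative propagation-of-singularities statement for oscillatory sources, with the constants in the decay theorems depending only on a fixed $C^m$ norm and on a fixed support. The support condition is guaranteed by the band being compact, and the first approach to the quantitative statement will be the scaled-parametrix argument of \cite[App.]{GHW}.
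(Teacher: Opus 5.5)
\paragraph{Main gap: Step 3.} Your Steps 1 and 2 are fine. Step 3, however, rests on a claim that fails: a single time cutoff cannot produce a source $F_\lambda=[\overline{T_{-s}},\chi]\overline{E_{-s}}\iota_s f^\lambda_k$ that is both compactly supported and disjoint from the flowout of $K\times V'$.

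Suppose the transition band of $\chi$ lies between two Cauchy surfaces of $\cM$ or of $\MK_K$, as in the proof of Proposition~\ref{prop:Green hyp props} that you invoke. Every backward null bicharacteristic through $K\times V'$ is past-inextendible: in $\cM$ it leaves through $\sH$, and in $\MK_K$ it continues. It therefore crosses every Cauchy surface, and hence the band. So $F_\lambda$ carries the full singularity of the solution there and is not $\cO(\lambda^{-\infty})$ in any $C^m$ norm.

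Suppose instead you mean a band around the hyperboloidal slice $\Sigma_{\ft_0}$ of Section~\ref{sec:sympl form}. Since that slice is not a Cauchy surface, the flowout can indeed leave through $\sH_-$ before reaching it. But the band then stretches from $\sH_-$ to $\sI_-$, so $F_\lambda$ is not compactly supported in $\cM$. Controlling it would require three things: the solution extended across $\sH$, $\cO(\lambda^{-\infty})$ bounds uniform up to $\sI_-$, and decay results for sources that reach the boundary. Corollary~\ref{cor:4.1} and Theorem~\ref{th:4.2} are stated for compactly supported sources and provide none of these. So the obstacle you single out, uniformity of the $C^m$ bound, is not the real one.

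\paragraph{How the paper avoids this.} The paper does not try to make the whole band source small. It works on $\MK_K$, with the band around a Cauchy surface $\Sigma$ of $\MK_K$ that coincides with $\sH$ near $\supp\chi_q$. It then adds spatial cutoffs:
\begin{itemize}
\item $\eta=1$ on a neighbourhood of the flowout $[K,V]$;
\item $h_q+h_q'=1$ on the band inside $J^-(\cU')$, with $\supp h_q'\cap\supp\eta=\emptyset$ and $h_q=0$ on $\sH\cap\supp(1-\chi_q)$.
\end{itemize}
With $u=\overline{E_{-s}}\iota_s f$ this gives
\[
(1-h_q)u=\overline{E_{-s}}\bigl(h_q'\overline{T_{-s}}\chi u\bigr)-\overline{E_{-s}}^-\bigl([\overline{T_{-s}},h_q]\chi u\bigr)-\overline{E_{-s}}^+\bigl([\overline{T_{-s}},h_q](1-\chi)u\bigr).
\]
The $\eta$-parts of the last two sources carry the singularity. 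They are supported in $\{V<0\}$ and in $\cM$ respectively, so by the support properties of the advanced and retarded propagators they never reach $\sH$ or $\sI_-$. The remaining sources are compactly supported and microlocally away from the flowout. They are therefore $\cO(\lambda^{-\infty})$ in $C^m$ uniformly in $k\in V'$, and only these are fed into Corollary~\ref{cor:4.1} and Theorem~\ref{th:4.2}.

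The point is that the singular part of the source is not removed but placed where the propagators' supports make it harmless, which is why the Kruskal extension is essential. With this decomposition, the rest of your argument (tangential operators, handling of $A_s$) goes through.
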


\begin{remark}
\label{remark:cutoff large U}
    By finite speed of propagation and the resulting support properties of $\fh_s f$,  we may always multiply $\fh_s f$ by a function $\chi''\in \cinf(\sH)$ that is equal to $1$ on $\sH\cap J^-(\cU')$ but vanishes for large $U$.
\end{remark}

\begin{proof}
    The proof follows largely that of \cite[Lemma V.6]{Klein}. It makes critical use of the fact that the classical fields can be considered on the larger spacetime $\MK_K$.  Employing the classical phase space of this larger spacetime, it is then possible to estimate the traces of space-compact solutions of the Teukolsky equations. The estimates are obtained by splitting the solutions into different pieces. Propagation of Singularities and the support properties of the advanced and retarded Green operators are then used to obtain rapid-decay estimates for the various pieces.

    Let $\Sigma\subset \MK_K$ be a Cauchy surface of the Kruskal extension $\MK_K$ of $\cM$ so that $\Sigma$ agrees with $\sH$ on a neighbourhood of $\supp\chi_q$ and $\cU\subset I^+(\Sigma)$.
    Let $\Sigma_\pm\subset \MK_K$ be two other Cauchy surfaces satisfying $\Sigma^\pm\subset I^\pm (\Sigma)$ and $\cU\subset I^+(\Sigma_+)$.    
   Choose a cutoff function $ \chi\in \cinf(\MK_K;[0,1])$ so that $\chi=1$ on $J^-(\Sigma_-)$ and $\chi=0$ on $J^+(\Sigma_+)$, as in the proof of Proposition~\ref{prop:Green hyp props}.

    Let $ [K,V]\subset \MK_K$ denote the set of all points $x\in\MK_K$ that lie on a null geodesic specified by a covector $(y',\xi')\in K\times V$ (using the identification of $T^*_{y'}\MK_K$ and $\rr^4$ in Kruskal coordinates).
    Let $ \eta\in \cinf(\MK_K;[0,1])$ be a smooth cutoff function such that $\eta=1$ on an open neighbourhood $\cU'_q$ of $[K,V]$ that satisfies 
    \begin{align*}
        K_q\subset (\cU'_q\cap \sH) \subset \sH\cap J^-(\cU')\cap \{x\in \sH: U(x)> U_1+2\}\, .
    \end{align*}
    Further, we assume that 
    \begin{align*}
        \supp \eta\cap \sH \subset \sH\cap J^-(\cU')\cap \{x\in \sH: U(x)> U_1+1\}\, .
    \end{align*}

    Let $h_q$, and $h_q'\in \coinf(\MK_K)$ be chosen such that 
    \begin{enumerate}
        \item $\supp h_q\cup \supp h_q'\subset J^-(\cU)$ 
        \item $h_q+h_q'=1$ on $J^-(\cU')\cap J^-(\Sigma_+)\cap J^+(\Sigma_-)$
        \item $\supp h_q'\cap \supp \eta=\emptyset$
        \item $\supp h_q\cap \sH \cap \supp(1-\chi_q)=\emptyset$
    \end{enumerate}  

\begin{figure}
    \includegraphics[width=0.7\linewidth]{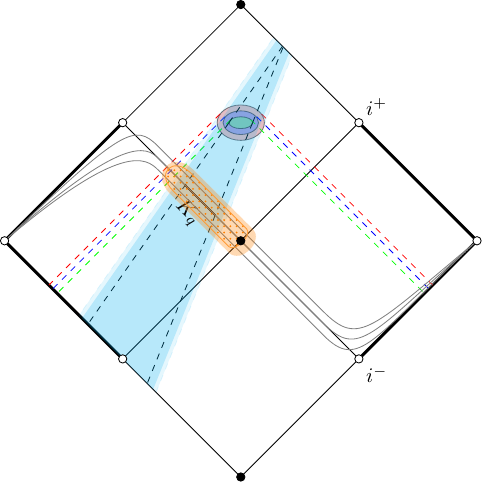}
    \caption{The Kruskal extension $\MK_K$ of $\cM$. The green, blue and red ellipsoids represent the sets $K$, $\cU'$, and $\cU$, respectively. The dashed lines in the corresponding colour delineate their causal pasts. The black dashed lines mark the region $ [K,V]$. The three gray lines are the Cauchy surfaces $\Sigma$ and $\Sigma_\pm$. The cyan region marks the support of $\eta$, the darker shade is the region in which $\eta=1$. The orange region marks the support of $h_q$, the dotted region is where $h_q=1$ .}
    \label{fig:cutoffs}
\end{figure}
\begin{figure}
    \includegraphics[width=\linewidth]{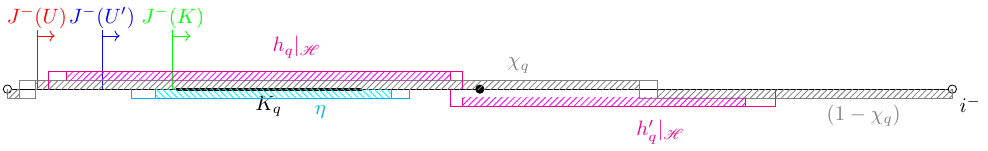}
    \caption{The horizon $\sH$ and the various supports thereon. The thick part of the horizon marks $K_q$, and the intersections of the causal pasts of $\cU$, $\cU'$, and $K$ are to the right of the corresponding marker. The supports of the various functions are marked by the coloured boxes, while the hatched part of the box marks where the function is equal to one.}
    \label{fig:cutoffs2}
\end{figure}

Then the choice of cutoff functions and the fact that $A_s$ commutes with multiplication by $r$ and taking the limit onto $\sI_-$ allows us to write
\begin{align*}
    (1-\chi_q)\fh_s f&=\chi''(1-\chi_q)\cU_s T_\sH {\tho}^{2s}(1-h_q) \overline{E_{-s}}(\iota_s f)\, ,\\
    \mfi_s f&= \cW_s T_\sI A_s(1-h_q) \overline{E_{-s}}(\iota_s f)\, ,\\
    \mfi_s' f&= \cW_s  T_\sI (1-h_q)  \overline{E_{-s}}(\iota_s f)\, .
\end{align*}
 where $\chi''$ is as in Remark~\ref{remark:cutoff large U}. 

To find bounds for these terms, we note that for any $f\in \coinf(\cM)$ supported in $K$, the section $\overline{T_{-s}}\chi\overline{E_{-s}}(\iota_s f)\in \Gamma_c(\cB(s,-s))$ is supported in the set $J^-(\cU')\cap J^-(\Sigma_+)\cap J^+(\Sigma_-)$ where $h_q+h_q'=1$. Moreover, as discussed in the proof of Proposition~\ref{prop:Green hyp props}, one has $\overline{T_{-s}}\chi\overline{E_{-s}}(\iota_s f)=-\overline{T_{-s}}(1-\chi)\overline{E_{-s}}(\iota_s f)$ and $\overline{E_{-s}}(\overline{T_{-s}}\chi\overline{E_{-s}}(\iota_s f))=\overline{E_{-s}}(\iota_s f)$.
Together with the linearity of the Green-hyperbolic operators, this allows us to decompose
    \begin{align}
        \overline{E_{-s}}(\iota_s f)=&h_q \overline{E_{-s}}(\iota_s f)+\overline{E_{-s}}(h_q'\overline{T_{-s}} \chi \overline{E_{-s}}\iota_sf)-\overline{E_{-s}}^-(\eta[\overline{T_{-s}}, h_q]\chi \overline{E_{-s}} \iota_sf)\\\nonumber
        &- \overline{E_{-s}}^-((1-\eta)[\overline{T_{-s}},h_q]\chi \overline{E_{-s}}\iota_sf)- \overline{E_{-s}}^+(\eta [\overline{T_{-s}},h_q](1-\chi) \overline{E_{-s}}\iota_s f)\\\nonumber
        &-\overline{E_{-s}}^+((1-\eta)[\overline{T_{-s}},h_q](1-\chi)\overline{E_{-s}}\iota_s f)\, 
    \end{align}
    for any $f\in \coinf(K)$.
 Note that by the choice of cutoff functions, we have
 \begin{align*}
     & \supp (\eta \chi(\d h_q))\subset \MK_K\setminus(\cM\cup\sH)\, , \quad
     \supp (\eta (1-\chi) (\d h_q))\subset \cM\subset \MK_K \, .    
 \end{align*}
 Consequently,
\begin{align*}
    T_\sH {\tho}^{2s}(1-h_q)\overline{E_{-s}}(\iota_s f)=& T_\sH {\tho}^{2s}\overline{E_{-s}}(h_q' \overline{T_{-s}}\chi \overline{E_{-s}}(\iota_sf))\\
    &-T_\sH{\tho}^{2s}\overline{E_{-s}}^-((1-\eta)[\overline{T_{-s}}, h_q]\chi \overline{E_{-s}}(\iota_s f))\\
    &-T_\sH {\tho}^{2s}\overline{E_{-s}}^+((1-\eta)[\overline{T_{-s}},h_q](1-\chi)\overline{E_{-s}}(\iota_s f))\, ,\\
    T_\sI  (1-h_q) \overline{E_{-s}}(\iota_s f)=& T_\sI \overline{E_{-s}}(h_q' \overline{T_{-s}}\chi \overline{E_{-s}}(\iota_sf))\\
    &-T_\sI  \overline{E_{-s}}^-((1-\eta)[\overline{T_{-s}}, h_q]\chi \overline{E_{-s}}(\iota_s f))\\
    =& T_\sI \overline{E_{-s}}(h_q' \overline{T_{-s}}\chi \overline{E_{-s}}(\iota_sf))\\
    &-T_\sI \overline{E_{-s}}((1-\eta)[\overline{T_{-s}}, h_q]\chi \overline{E_{-s}}(\iota_s f))
    \end{align*}
As argued in \cite{Klein}, the wavefront set of $\overline{E_{-s}}^\pm$ together with the assumption on the supports of the various cutoff functions implies that we can find a neighbourhood $V'\subset V$ and a function $f\in \coinf(K)$ with $f(y)=1$, so that 
\begin{align*}
    \sup\limits_{k\in V'}\norm{ h_q'\overline{T_{-s}}\chi \overline{E_{-s}}(\iota_s f_k^\lambda)}_{C^m(\supp h_q')}=\cO(\lambda^{ -\infty})\, , \\
    \sup\limits_{k\in V'}\norm{(1-\eta)[\overline{T_{-s}}, h_q]\chi \overline{E_{-s}}(\iota_s f_k^\lambda)}_{C^m(\supp h_q)}=\cO(\lambda^{-\infty})\, ,\\
    \sup\limits_{k\in V'}\norm{(1-\eta)[\overline{T_{-s}},h_q](1-\chi)\overline{E_{-s}}(\iota_s f_k^\lambda)}_{C^m(\supp h_q)}=\cO(\lambda^{ -\infty})\, .
\end{align*}
   Plugging this into the estimates  in Corollary~\ref{cor:4.1} or Theorem~\ref{th:4.2}, we find that for any $N\in \nn$, there are $d>1$, $\lambda_N>0$ and $C_N>0$, so that
   \begin{align*}
       \sup\limits_{k\in V'}\abs{(1-\chi_q)\fh_s f_k^\lambda}\leq \abs{\log \abs{U}}^{-d} C_N\lambda^{ -N}\, ,\\
       \sup\limits_{k\in V'}\abs{\mfi_s' f_k^\lambda}\leq \ft^{-d}C_N\lambda^{ -N}\\
       \sup\limits_{k\in V'}\abs{\mfi_s f_k^\lambda}\leq \ft^{-d}C_N\lambda^{ -N}\, ,
   \end{align*}
   for all $\lambda>\lambda_N$, completing the proof.
 \qeds   
\end{proof}
We can now replace $f$ and $h$ in \eqref{eq:splitting sH} by $f^\lambda_k$ for any $k\in V'$ and follow the same estimates as the ones used in the proofs of Lemma~\ref{lemma:well def wH} and Lemma~\ref{lemma:well def wI}. Whenever the term of \eqref{eq:splitting sH} in question contains $\chi_q\fh_sf^\lambda_{k}$, this can produce at most polynomial growth in $\lambda$ due to the results in Remark~\ref{rem:2pt distr estimates}. By potentially restricting $V'$ further, one can ensure that this polynomial growth in $\lambda$ is always dominated by the decay  in $\lambda$ of the other term which follows from the above lemma. Therefore, $q$ is a direction of rapid decrease for the second, third, and fourth term on the right-hand side of \eqref{eq:splitting sH}. 

It remains to treat the first term on the right-hand side of \eqref{eq:splitting sH}. We will do so by an explicit derivation of the wavefront set.

As a first step, we note that due to the structure of the Kerr metric in Kruskal coordinates  given in \eqref{eq:g Kruskal} restricted to $ \sH=\{V=0\}$, the proof of a geometric result on null geodesics intersecting $\sH$ given in \cite[Lemma II.1]{Klein} carries over to the present case without changes. Moreover, the structure of the conformally rescaled metric \eqref{eq:g conf} at $\sI_-$ allows to use the same proof for an analogous result at $\sI_-$, see also \cite[Lemma 5.1]{GHbdy} for a version of this result for null cones. We thus have

\begin{lemma}
\label{lemma:time ori horizon}
    Let $(x,k)\in \dot{T}^*\sH$. Then there is a unique $\eta\in N^*_x\sH$ so that $k+\eta\in T^*_x\MK_K$ is null and not contained in $N^*_x\sH$ if and only if $k(\partial_U)\neq 0$. Moreover, if $k(\partial_U)\neq 0$, then $k+\eta$ is future-directed if and only if $k(\partial_U)>0$. 
    
    Similarly, let $(x,k)\in T^*\sI_-$. Then there is a unique $\eta\in N^*_x\sI_-$ so that $k+\eta\in T^*_x\breve{\mathrm{M}}_\mathrm{I}$ is null and not contained in $N^*\sI_-$ if and only if $k(\partial_{{}^*t})\neq 0$; in this case, $k+\eta$ is future-directed if and only if $k(\partial_{{}^*t})>0$.
\end{lemma}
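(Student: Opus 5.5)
The plan is to reduce the statement to linear algebra in a single cotangent fibre of $\MK_K$, using the explicit form of the Kruskal metric \eqref{eq:g Kruskal} on $\sH=\{V=0\}$, and then to repeat the argument at $\sI_-$ with the conformal metric \eqref{eq:g conf} at $x=0$.

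\emph{Step 1 (horizon).} At $V=0$ the coefficients $\widetilde g_1,\widetilde g_3$ carry a factor $V^2$ (through $U^2dV^2+V^2dU^2$ and $(UdV-VdU)^2$), and the only surviving cross terms involve $dV$. Hence, at a point of $\sH$,
\begin{align*}
g=-\widetilde g_2\,dU\,dV-\widetilde g_1U^2dV^2-\widetilde g_3U^2dV^2-\widetilde g_4U\,dV\,d\varphi_+-\varrho^2d\theta^2+g_{\varphi\varphi}d\varphi_+^2 ,
\end{align*}
so $g(\partial_U,\partial_U)=0$, $g(\partial_U,\partial_\theta)=g(\partial_U,\partial_{\varphi_+})=0$ and $g(\partial_U,\partial_V)=-\widetilde g_2/2\neq 0$, since $G\neq 0$ and $r_+-r_->0$. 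The conormal $N^*_x\sH$ is spanned by $dV$, and $g^{-1}(dV,\cdot)$ is proportional to $\partial_U$.

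Write a covector as $\zeta=k+\eta\,dV$, where $k$ is any extension of the given tangential covector. Because $g^{-1}(dV,dV)=0$, the null condition reads
\begin{align*}
g^{-1}(\zeta,\zeta)=g^{-1}(k,k)+2\eta\,g^{-1}(k,dV)=0 .
\end{align*}
Here $g^{-1}(k,dV)=c\,k(\partial_U)$ with $c\neq 0$. If $k(\partial_U)\neq 0$, this linear equation has exactly one solution $\eta$, and then $\zeta\notin N^*\sH$. If $k(\partial_U)=0$, the equation forces $g^{-1}(k,k)=0$. Since the induced degenerate metric on $\sH$ is negative definite on the $(\theta,\varphi_+)$ directions, this means $k$ vanishes on $T\sH$, so $\zeta\in N^*\sH$; this gives the converse.

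\emph{Step 2 (time orientation).} The null covector $\zeta$ is future directed iff $\zeta(X)>0$ for a fixed future timelike vector $X$ at $x$. The dual vector $g^{-1}(\zeta)$ is null and transverse to $\sH$. Among future null vectors, the sign of $g(g^{-1}\zeta,\partial_U)=\zeta(\partial_U)=k(\partial_U)$ decides the orientation, because $\partial_U$ is a future null vector: $d(U+V)(\partial_U)=1>0$ with the chosen time orientation. Two future causal vectors have $g\ge 0$, with equality only if they are collinear null. So $k(\partial_U)>0$ iff $\zeta$ is future directed. I expect this sign bookkeeping (signature $(+---)$, covector versus vector, $\eta$ not affecting $\zeta(\partial_U)$ since $dV(\partial_U)=0$) to be the main point needing care. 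It is nonetheless routine and follows \cite[Lemma II.1]{Klein} verbatim.

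\emph{Step 3 ($\sI_-$).} Repeat the argument with $\breve g$ from \eqref{eq:g conf} at $x=0$:
\begin{align*}
\breve g=2\,dt^*dx-2a\sin^2\theta\,d\varphi^*dx-\sin^2\theta\,d\varphi^{*2}-d\theta^2 .
\end{align*}
The conormal of $\sI_-$ is $dx$, and $\breve g^{-1}(dx,dx)=-x^2\Delta_x/\varrho_x^2=0$ there. Moreover $\breve g^{-1}(dx,\cdot)$ is proportional to $\partial_{t^*}$ by \eqref{eq:inv conf metric}, since all other $dx$-components vanish at $x=0$. The same linear-equation argument then gives uniqueness iff $k$ is nonzero on the null generator $\partial_{t^*}$, and the orientation is decided by the sign of $k$ on that future-directed generator, consistent with \cite[Lemma 5.1]{GHbdy}. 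Here one has to identify the null generator: it is $\partial_{t^*}$ of the $K^*$ chart, which the statement denotes $\partial_{{}^*t}$.
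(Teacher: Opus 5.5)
Your proposal is correct and is essentially the paper's argument: the paper only invokes \cite[Lemma II.1]{Klein} and \cite[Lemma 5.1]{GHbdy}, and these amount to the same fibrewise linear algebra you carry out ($g^{-1}(\d V)\propto\partial_U$, resp.\ $\breve g^{-1}(\d x)=\partial_{t^*}$, where $\partial_{t^*}=\partial_{{}^*t}=\partial_t$ as vector fields so your identification is right, together with the sign test against the future-directed null generator). The one step to tighten is your justification that $\partial_U$ is future directed: for $a\neq0$ the form $\d(U+V)$ is not timelike on all of $\sH$, since on $\sH_-$ one has $g^{-1}(\d U,\d U)=-\kappa_+^2U^2a^2\sin^2\theta/\varrho_+^2$ while $g^{-1}(\d U,\d V)$ does not depend on $U$, so the pointwise test $\d(U+V)(\partial_U)=1$ is inconclusive for large $|U|$; argue instead by continuity of the nowhere-vanishing null field $\partial_U$ along each generator, starting from the bifurcation sphere, where $\d(U+V)$ is timelike.
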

Here, we have denoted the conormal spaces of $\sH$ (as a submanifold of $\MK_K$) and $\sI_-$ (as a submanifold of $\breve{\mathrm{M}}_\mathrm{I}$) by $N^*\sH$ and $N^*\sI_-$, and used the decomposition $T^*_\sH\MK=T^*\sH\oplus N^*\sH$ and $T^*_{\sI_-}\breve{\mathrm{M}}_\mathrm{I}=N^*\sI_-\oplus T^*\sI_-$.

Next, we note that by combining Propagation of Singluarities \cite{DH} with the calculus of wavefront sets, in particular \cite[Thm. 8.2.4]{H:vol1} and \cite[Thm. 8.2.14]{H:vol1}, we can conclude that the kernel of $\fh_s$, which we will also denote by $\fh_s$, has the wavefront set (see for example \cite{Klein} and also \cite{KW, H:PhD, DMP, GHW, H:vol1})
\begin{align}
    \WF'(\fh_s)=\left\{(x,k; y,\xi)\in T^*(\sH\times \cM)\vert \exists \eta\in N^*_x\sH: (x, k+\eta)\sim (y,\xi)\right\}\, .
\end{align}
Here, $(y,\xi)\sim (y',\xi')$ means that $(y,\xi)$ and $(y',\xi')$ lie on the same bicharacteristic strip (of $T_s$), and we have implicitly used the embedding of $\sH$ and $\cM$ into $\MK_K$, as well as the corresponding embedding of the cotangent spaces.

Finally, consider the bi-distributions
\begin{align*}
    A_\pm: \Gamma_{\sH}( \cB(-s,0))\times \Gamma_\sH(\cB(s,0))\ni (f,h)\mapsto \int\limits \chi_q f X_\pm(\i\Theta_U)\chi_q h(r_+^2+a^2)\d U\d^2\omega_+\, .
\end{align*}
The distributional kernel of $A_\pm$ in the Kruskal tetrad is given by (see, e.g., \cite{KW, H:PhD, DMP, GHW, H:vol1, Klein})
\begin{align*}
  -\lim\limits_{\epsilon\to 0^+}\frac{1}{2\pi}\frac{\delta_{\ss^2}(\omega_+, \omega_+')\chi_q(U,\omega_+)\chi_q(U',\omega_+')}{(U-U'-\i\epsilon)^2}\, ,  
\end{align*}
so that the wavefront set of the above is given by 
\begin{align*}
    \WF'(A_\pm)=&\left\{ (x,k; x',k')\in T^*(\sH\times\sH): x,\, x'\in \supp(\chi_q),\right.\\  &(\omega_+, k(\partial_{\omega_+}))=(\omega_+', k'(\partial_{\omega_+}))\text{ and } k(\partial_U)=k'(\partial_U)>0 \text{ if } U=U',\\
    &\left.\text{ else } k(\partial_U)=k'(\partial_U)=0\right\}\, .
\end{align*}

Combining these results with the help of \cite[Thm. 8.2.14]{H:vol1}, and taking into account Lemma~\ref{lemma:time ori horizon}, we find that the wavefront set of
\begin{align*}
    (f,h)\in \coinf(\cM)\times\coinf(\cM)\mapsto \IP{\chi_q\fh_s\overline{f}, X_\pm(\i\Theta_U)\chi_q\fh_s h}_\sH
\end{align*}
is contained in $\cN^\pm \times \cN^\mp$, as required.

Altogether, we have shown in this section
\begin{lemma}
    \label{lemma:Had geodesics sH}
    Let $(y,\xi)\in T^*\cM$ lie on a bicharacteristic strip so that the corresponding null geodesic  maximally extended to $\MK_K$ intersects $\sH$. Then $(y,\xi; y,-\xi)\in\WF(W^\pm_{ U})$ only if $(y,\xi)\in \cN^\pm$. 
\end{lemma}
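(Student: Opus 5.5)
The plan is to assemble the pieces already established in this section rather than prove anything new. We work with the scalar bi-distribution $\widetilde{W}^\pm_U=\widetilde{W}^\pm_\sH+\widetilde{W}^\pm_\sI$, whose wavefront set agrees with that of $W^\pm_U$ because $\iota_s$, $\overline{\iota_s}$ are smooth local bases. Fix $(y,\xi)$ whose bicharacteristic, maximally extended in $\MK_K$, meets $\sH$. By Propagation of Singularities \cite[Lemma 6.5.5]{DH} applied to the weak bisolution property \eqref{eq:2pt fct weak bisol}, membership of $(y,\xi;y,-\xi)$ in $\WF(\widetilde W^\pm_U)$ is invariant under moving $(y,\xi)$ simultaneously in both slots along its bicharacteristic strip. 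We therefore flow $(y,\xi)$ to a point $q$ with $0<V(q)<\epsilon$, so that the earlier lemma supplies $K\times V$, the compact set $K_q\subset\sH$, and the cutoff $\chi_q$.

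At $q$ we split $\widetilde W^\pm_\sH$ as in \eqref{eq:splitting sH} and treat the pieces separately. For the second, third and fourth pieces, and for $\widetilde W^\pm_\sI$, we apply the rapid-decay lemma for $(1-\chi_q)\fh_s f^\lambda_k$, $\mfi_s f^\lambda_k$ and $\mfi'_s f^\lambda_k$. We combine it with the $L^2$ and Cauchy--Schwarz estimates from the proofs of Lemma~\ref{lemma:well def wH} and Lemma~\ref{lemma:well def wI}, together with the polynomial bounds of Remark~\ref{rem:2pt distr estimates} for the factors $\chi_q\fh_s f^\lambda_k$. This shows that $(q,-q)$ is a direction of rapid decrease for these terms after shrinking $V'$.

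For the first piece we compose wavefront sets. We take $\WF'(\fh_s)$ as computed via \cite[Thm. 8.2.4, 8.2.14]{H:vol1} and $\WF'(A_\pm)$ from the explicit kernel $(U-U'\mp\i0)^{-2}$. Hörmander's composition theorem then gives a wavefront set in which each covector at $y$ is bicharacteristically related to $k+\eta$, with $k(\partial_U)$ of a definite sign.

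The main obstacle is the sign bookkeeping: checking that the $\pm$ in $X_\pm$ matches $k(\partial_U)\gtrless0$, and then applying Lemma~\ref{lemma:time ori horizon}. That lemma converts $k(\partial_U)>0$ into future-directedness of $k+\eta$, and hence membership in $\cN^\pm$ after flowing back along the strip. The sign convention for the dual variable in the primed wavefront set will have to be tracked carefully there. We will also check that the hypotheses of the composition theorem hold, namely that $\fh_s$ has no zero-section components at the relevant points; this uses transversality of the geodesic to $\sH$. Finally, the conclusion at $q$ transfers back to $(y,\xi)$ by the invariance noted at the start.
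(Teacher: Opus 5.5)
Your proposal is correct and follows the paper's argument step for step. The common ingredients are the scalarization via $\iota_s$, the same splitting \eqref{eq:splitting sH}, the same rapid-decay lemma for the cutoff pieces and the $\sI_-$ terms, and the composition of $\WF'(\fh_s)$ with $\WF'(A_\pm)$ followed by Lemma~\ref{lemma:time ori horizon}. Your explicit propagation-of-singularities step, which flows $(y,\xi)$ to a point with $0<V(q)<\epsilon$ and transfers the conclusion back, is left implicit in the paper, but it is exactly what the paper's setup presupposes.
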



\subsection{Geodesics ending at past null infinity}

To conclude the proof of the Hadamard property, let us now consider a point $q=(y_0,\xi_0)\in \dot{T}^*\cM$ so that the corresponding null geodesic approaches $\sI_-$.
Since we may choose any representative of the bicharacteristic strip, we may assume that $y_0\in \MI$ and that $r(y_0)$ is sufficiently large, so that $\partial_t$ is timelike at $y_0$. 
It then follows immediately form Lemma~\ref{lemma:passivity Had proof} that $(y_0,\xi_0; y_0, -\xi_0)$ can only be in $\WF(W^\pm_\sI)$ if $\pm\xi_0$ is future-directed. It thus remains to examine the wavefront set of $W^\pm_\sH$.

There are two cases that we have to consider. If the geodesic is future-trapped
or if the geodesic crosses the event horizon of the black hole,
then the desired result follows directly from Lemma~\ref{lemma:passivity Had proof},  either because $v_\sH$ becomes timelike at some point along the geodesic, or because of the results in \cite[Prop. 6.6]{HaK}.

However, if the geodesic connects $\sI_-$ and $\sI_+$, it may occur that $v_\sH$ remains spacelike along the geodesic, so that $\pm\xi_0(v_{\sH})>0$ does not imply that $\xi_0$ is future-directed, and the results of Lemma~\ref{lemma:passivity Had proof} are not sufficient.  

To treat this case, let $(y_0,\xi_0)=(\ft_0,x_0,\tau_0,k_0)\in \cN\subset\dot{T}^*\MI$ lie on such a bicharacteristic. Let us note that, as mentioned earlier, we can assume that $y_0$ lies in a region where $\partial_t$ is timelike. This implies that $\tau_0\neq 0$, and by rescaling we may suppose $\tau_0=\pm 1$. Comparing to the semiclassical flow discussed in Section~\ref{subsec:semiflow}, we therefore note that $(x_0,k_0)$ must lie on a semiclassical bicharacteristic connecting $\sR_{in}$ and $\sR_{out}$, since there are no bicharacteristics which leave $\sR_{in}$ or $\sR_{out}$ and then return back to the same set.

Hence, consider the semiclassical flow with initial data $(x_0, k_0)$ and parameter $\tau_0$ connecting ${\mathcal R}_{in}$ and ${\mathcal R}_{out}$. Let $ f^{\lambda}\in \Gamma_c(\cB(s,s))$, $ \lambda>0$, be an oscillatory test section of the form 
\begin{align}
\label{WH1}
f^{\lambda}(t,x)=e^{-\i\lambda(\tau_0,k_0)(\ft,x)}\varphi(\ft,x),
\end{align} 
where $\varphi\in  \Gamma_c(\cB(s,s))$ is supported in  $B_{\epsilon}((\ft_0,x_0))$.  Let $u^{\lambda}$ be a backward solution of 
\begin{align}
\label{WH2}
T_su^{\lambda}=f^{\lambda}. 
\end{align}
We note that by finite speed of propagation, this implies $\supp u^{\lambda}\cap\{U>-\epsilon\}=\emptyset$ for some $\epsilon>0$, where $U$ is the KBL-coordinate defined in Section~\ref{subsec:Kruskal ext}.
\begin{lemma}
\label{lemma10.19}
Let $(y_0,\xi_0)\in T^*\cM$ be such that the bicharacteristic flow with parameter $\tau_0$ passing through $ (x_0,k_0)$ connects $\sR_{in}$ and $\sR_{out}$. Then there exists a small neighbourhood $\cV$ of $(x_0,k_0)$ such that for all $(x_p,\xi_p)\in \cV$ the bicharacteristic flow passing through $(x_p,\xi_p)$ connects $\sR_{in}$ and $\sR_{out}$.  
\end{lemma}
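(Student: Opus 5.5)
The plan is an openness argument resting on the structure of the renormalized semiclassical flow described in Proposition~\ref{pp0}, together with the source/sink nature of $\sR_{in}$ and $\sR_{out}$. Note that the parameter $\tau_0=\pm1$ plays the role of $z_0$ (or $-z_0$, after the time reversal discussed in Section~\ref{sec:trev}). By the classification in Proposition~\ref{pp0}, a bicharacteristic of type $(\sR_{in},\sR_{out})$ lies in the component $\Sigma_\pm$ that contains both radial sets. The only other possible limit sets in that component are $L_\pm$, $K_{z_0}$ and $B_\epsilon$. It therefore suffices to show two things: nearby bicharacteristics still tend to $\sR_{in}$ at one end, and they still tend to $\sR_{out}$ at the other.

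First step. We use that $\sR_{in}$ is a source (respectively sink) and $\sR_{out}$ a sink (respectively source) for the rescaled flow $\widetilde H_{p_h}=r^{-1}H_{p_h}$ within the characteristic set. This is the radial-point structure underlying Proposition~\ref{pp2} and the corresponding estimate at $\sR_{out}$ in \cite{Millet2}; concretely, $x$ and the scattering momenta $\eta_{sc},\zeta_{sc}$ (together with $\xi_{sc}$ or $\xi_{sc}+2z_0$) decay exponentially along the flow near these sets. Consequently, each radial set has an open basin of attraction $\cO_{out}$ (forward) and $\cO_{in}$ (backward) in $\Sigma_h\cap{}^{sc}\bar T^*X$. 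Concretely, one can take $\cO_{out}$ to be the set of points whose forward flow enters a small neighbourhood $W_{out}$ of $\sR_{out}$. On $W_{out}$ a Lyapunov function, for instance $x^2+\eta_{sc}^2+\zeta_{sc}^2/\sin^2\theta+\xi_{sc}^2$, is strictly decreasing, so the flow stays in $W_{out}$ and converges to $\sR_{out}$.

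Second step. Since the original bicharacteristic tends to $\sR_{out}$, there is a finite time $T$ with $e^{T\widetilde H}(x_0,k_0)\in W_{out}$. The same holds backward: there is a time $-T'$ with $e^{-T'\widetilde H}(x_0,k_0)\in W_{in}$. By continuous dependence of the flow on initial data over the compact time interval $[-T',T]$, there is a neighbourhood $\cV$ of $(x_0,k_0)$ whose points reach $W_{out}$ at time $T$ and $W_{in}$ at time $-T'$. This conclusion requires that $\cV$ lie in the characteristic set at the same $\tau_0$, which holds because $(x_p,\xi_p)$ varies within null covectors with the same frequency normalization. By the first step, these points then converge to $\sR_{out}$ and $\sR_{in}$ respectively, which gives the type $(\sR_{in},\sR_{out})$.

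The main obstacle is the first step: establishing genuine stability of the radial sets for the flow restricted to the characteristic set, rather than only the possibly degenerate hyperbolic structure. In particular, the basins must not allow orbits to be diverted toward $K_{z_0}$. If the Lyapunov construction is delicate, I would instead quote the dynamics used in \cite[Section 5]{Millet2} (and \cite{Va21}), where $\sR_{in}$ and $\sR_{out}$ are shown to be source/sink manifolds for the scattering flow. The trapped set is closed and disjoint from $W_{out}$ and $W_{in}$, so it causes no difficulty once the basins are open.
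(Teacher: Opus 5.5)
Your argument is correct and has the same skeleton as the paper's proof. You locate open sets that the given bicharacteristic crosses and from which the flow is forced into $\sR_{out}$ forward and $\sR_{in}$ backward, then invoke continuous dependence on initial data over a compact time interval. The difference lies in which open sets are used and which lemma forces the convergence.

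The paper picks times $s_-<s_+$ along the given bicharacteristic at which $r>r_{max}$ with $\dot r<0$, respectively $\dot r>0$. The existence of such times comes from \cite[Lemmas 5.32 and 5.33]{Millet2}, and these conditions are manifestly open. Then \cite[Lemmas 5.30 and 5.31]{Millet2} say that any bicharacteristic satisfying them escapes to $\sR_{out}$ in forward time, respectively comes from $\sR_{in}$ in backward time. The basins there are the explicit regions $\{r>r_{max},\ \pm\dot r>0\}$. Convergence rests on the large-$r$ non-trapping (monotonicity of $r$ along the flow beyond $r_{max}$), with no local analysis at $x=0$.

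You instead take small neighbourhoods $W_{in}$, $W_{out}$ of the radial sets themselves and use their local source/sink structure inside $\Sigma_h$. This is more intrinsic and would work for any flow whose endpoints are radial sources or sinks. Its cost is that you need a quadratic defining function $\rho$ of each radial set within $\Sigma_h$ with $\mp\widetilde H\rho\ge c\rho$ nearby. You would have to import this from the analysis of the radial sets in \cite[Section 5]{Millet2} or \cite{Va21} rather than from Proposition~\ref{pp0}. Proposition~\ref{pp0} only classifies limits and does not by itself exclude orbits that pass close to $\sR_{out}$ and then leave.

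Your explicit candidate is slightly off. On $\Sigma_h$ near $\sR_{out}$, $\xi_{sc}$ is determined by $x$ and the angular momenta, so the $\xi_{sc}^2$ term is redundant. What matters is only the decay of $x$, $\eta_{sc}$ and $\zeta_{sc}$.

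The paper's route buys explicit neighbourhoods and relies on dynamical statements already proved in exactly the needed form. Yours is cleaner conceptually but leans on the radial-set stability as a black box.
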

\begin{proof}
Let $r_{max}$ be as defined in \cite[Lemma 5.28]{Millet2}. We write 
\begin{align*}
(y_0,\xi_0)=(t_0,r_0,\theta_0,\varphi_0,\tau_0,\xi_{r0},\xi_{\theta0},\xi_{\varphi0}).
\end{align*}
We can suppose 
\begin{subequations}
\begin{align}
\label{fpm1} 
\exists s_+>0:\, r_0(s_+)>0,\, \dot{r}_0(s_+)>0,\\
\label{fpm2}
\exists s_-<s_+:\, r_0(s_-)>0,\, \dot{r}_0(s_-)<0.
\end{align}
\end{subequations}
Indeed, if this is not the case, Lemmas \cite[Lemma 5.32]{Millet2} and \cite[Lemma 5.33]{Millet2} lead to a contradiction. Now, conditions \eqref{fpm1} and \eqref{fpm2} persist under small perturbations of the initial data. It is then sufficient to apply Lemmas \cite[Lemma 5.30]{Millet2} and \cite[Lemma 5.31]{Millet2} to conclude.
\qeds
\end{proof}
\begin{proposition}
\label{RegHor}
Let $r$ and $\ell\in \rr$ be such that $r+\ell+\frac{1}{2}-2s>0$ and $r>\frac{1}{2}-s$.  Let $B_1\in \Psi^{0,0}_{sc,h}$ with $\WF_h(B_1)$ supported in a small enough neighbourhood of a point $ (x_1,k_1)=(r_+,\omega^*,k_1)\in T^*X$. Suppose that there exists $\eta >0$ such that $-\eta h\le {\rm Im}z\le 0$.  Then there exists  $G\in \Psi^{0,0}_{sc,h}$ with the following property : for all $\chi_1,\, \chi_2$ smooth and compactly supported on some open set of trivializations for $[0,\frac{1}{r_+-\epsilon})\times \cB_s^{\ss^2}$ containing $x_0$, we have  $\chi_1G\chi_2={\rm Op}_h(g)$ with $g$ satisfying $g\equiv 0$ in a neighbourhood of $(x_0,k_0)$. Furthermore, for any $N\in\nn$, there is a constant $C_N>0$ such that we have for all $u\in \bar{H}^{r,\ell}_{(b),h}$
\begin{align}
\label{estimneighhorizon}
\Vert B_1 u\Vert_{\bar{H}^{r,\ell}_{(b),h}}\le C_N(h^{-2}\Vert G\hat{T}_{s,h}(z)u\Vert_{\bar{H}^{r,\ell-1}_{(b),h}}+h^N\Vert u\Vert_{\bar{H}^{-N,\ell}_{(b),h}}). 
\end{align}
\end{proposition}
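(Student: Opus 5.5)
We establish \eqref{estimneighhorizon} by chaining the propagation and radial-point estimates of Section~\ref{sec:millet} along the semiclassical flow for the reversed parameter $\widetilde{z}_0=-z_0$. The point $(x_1,k_1)$ lies over the horizon $r=r_+$. The characteristic set there splits into two regimes. Where $\hat T_{s,h}(z)$ is elliptic at $(x_1,k_1)$, the elliptic estimate controls $B_1u$ by $G\hat T_{s,h}(z)u$ plus an $h^N$ error, for any $G$ elliptic on $\WF_h(B_1)$. So it suffices to treat $(x_1,k_1)\in\Sigma_h$, and then we follow the bicharacteristic through $(x_1,k_1)$ using Proposition~\ref{pp0}.

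At the horizon, the null bicharacteristics are either contained in $L_\pm$ or tend to $L_\pm$ (for the time-reversed flow; compare Figure~\ref{fig:semflow}). Transversal ones cross $r=r_+$ and tend to $B_\epsilon$ in one direction and to $L_\mp$, $K$ or $\sR_{in}$ in the other. We therefore treat three cases.

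\textbf{Case 1.} If $(x_1,k_1)\in L_\pm$, or if its bicharacteristic tends to $L_\pm$ in the appropriate time direction, we apply the radial point estimate, Proposition~\ref{pp1}. The threshold $r>\tfrac12-s$ (with $\Im z\le 0$, so the $\Im(h^{-1}z)$ term only helps) is exactly the hypothesis there. This yields the bound with $h^{-1}\le h^{-2}$ and $G$ elliptic near $L_\pm$ and along the connecting segment.

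\textbf{Case 2.} If the curve instead tends backwards to $\sR_{in}$, we combine propagation of singularities, Proposition~\ref{pp3}, with the $\sR_{in}$ estimate, Proposition~\ref{pp2}. Its hypotheses $r+\ell+\tfrac12-2s>0$ and $-\eta h\le\Im z\le0$ are assumed.

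\textbf{Case 3.} If the curve tends to the trapped set $K_{-z_0}$, we first use Proposition~\ref{pp3} to reach a neighbourhood of $K$. We then use Proposition~\ref{pp4}, which costs the extra $h^{-1}$ and explains the $h^{-2}$ in \eqref{estimneighhorizon}. The term $\|B_0u\|$ there lives on $\Gamma_\mp$-free directions and is in turn controlled by propagating from $L_\pm$ or $\sR_{in}$ as in the previous cases.

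At each step $G$ is chosen elliptic on the closure of the relevant flow segments and radial sets, and the error terms are iterated to $h^N\|u\|_{\bar H^{-N,\ell}}$. The key requirement is that $G$ vanishes microlocally near $(x_0,k_0)$, the point on the $\sR_{in}$–$\sR_{out}$ curve from Lemma~\ref{lemma10.19}. This holds provided the segments used avoid $(x_0,k_0)$. Since $(x_0,k_0)$ lies over large $r$ on a non-trapped curve that never reaches $r=r_+$, its bicharacteristic is disjoint from the horizon curves above. By Lemma~\ref{lemma10.19} we may also shrink a neighbourhood $\cV$ of it that is disjoint from all the compact segments. Choosing $\WF_h(B_1)$ small then keeps every relevant curve away from $\cV$, by continuity of the flow.

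The main obstacle is this geometric separation. If the curve through $(x_1,k_1)$ flows from $\sR_{in}$, it might pass near $(x_0,k_0)$, which also emanates from $\sR_{in}$. We would handle this by noting that curves from $\sR_{in}$ reaching $r=r_+$ end at $B_\epsilon$, not at $\sR_{out}$, so they are distinct from the $(x_0,k_0)$-curve. We would then propagate only along a finite segment near $\sR_{in}$ that excludes the compact piece through $(x_0,k_0)$, shrinking $\cV$ accordingly. If this fails, we would instead propagate from $L_\pm$ in the opposite direction whenever available.
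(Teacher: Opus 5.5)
This is essentially the paper's proof. It uses the radial point estimates at $L_\pm$ (Proposition~\ref{pp1}) and at $\sR_{in}$ (Proposition~\ref{pp2}), and the trapped-set estimate of Proposition~\ref{pp4} with $\Vert B_0u\Vert$ controlled by propagation from $L_+$ and $\sR_{in}$, which is where the $h^{-2}$ comes from. Proposition~\ref{pp3} then carries these to $\WF_h(B_1)$ with $G$ vanishing near $(x_0,k_0)$.

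The separation worry in your last paragraph is settled directly by Lemma~\ref{lemma10.19}, as in the paper. Every point of $\cV$ lies on an $\sR_{in}$--$\sR_{out}$ bicharacteristic, whereas the bicharacteristics through $\WF_h(B_1)$ reach $B_{\epsilon}$ or $L_\pm$ and so are not of that type. Hence they never meet $\cV$, and the fallback you propose is not needed.
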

\begin{proof}
We will make extensive use of Proposition \ref{pp0}. We first consider the case $z_0=-1$ and put $K=K_{1}$. Let $B_{L_{\pm}}\in \Psi^{0,0}_{sc,h}$ be elliptic in a sufficiently small neighbourhood of $L_{\pm}$. By Proposition \ref{pp1}, we have the estimate 
\begin{align}
\label{L+est}
\Vert B_{L_{\pm}}u\Vert_{\bar{H}_{(b),h}^{r,\ell}}\le C_N\left(h^{-1}\Vert G\hat{T}_{s,h}(z)u\Vert_{\bar{H}^{r,\ell-1}_{(b),h}}+h^N\Vert u\Vert_{\bar{H}^{-N,\ell}_{(b),h}}\right).
\end{align}
Using Lemma \ref{lemma10.19}, we see that we can choose $G$ as in Proposition \ref{RegHor}. Let now $B_{\sR_{in}}\in \Psi^{0,0}_{sc,h}$ be elliptic in a sufficiently small neighbourhood of $\sR_{in}$. By Proposition \ref{pp2}, we have the estimate 
\begin{align}
\label{Routest}
\norm{ B_{\sR_{in}}u}_{\bar{H}_{(b),h}^{r,\ell}}\le C_N\left(h^{-1}\norm{ G\hat{T}_{s,h}(z)u}_{\bar{H}^{r,\ell-1}_{(b),h}}+h^N\Vert u\Vert_{\bar{H}^{-N,\ell}_{(b),h}}\right),
\end{align}
where, again, $G$ can be chosen as in Proposition \ref{RegHor}. Let now $B_K,\, B_0$ be as in Proposition \ref{pp4}. By Propositions \ref{pp0} and \ref{pp3}, we see that we have 
\begin{align}
\label{B0est}
\Vert B_0u\Vert_{\bar{H}_{(b),h}^{r,\ell}}\le& C_N\left(h^{-1}\Vert G\hat{T}_{s,h}(z)u\Vert_{\bar{H}_{(b),h}^{r-1,\ell}}+\Vert B_{L_+}u\Vert_{\bar{H}_{(b ),h}^{r,\ell}}\right.\\\nonumber
&\left.+\Vert B_{\sR_{in}}u\Vert_{\bar{H}_{(b ),h}^{r,\ell}}+h^N\Vert u\Vert_{\bar{H}^{-N,\ell}_{(b),h}}\right), 
\end{align}
where $G$ can be chosen as in the proposition. Combining Proposition \ref{pp4} with the estimates \eqref{L+est}, \eqref{Routest}, and \eqref{B0est}, we find 
\begin{align}
\label{Kest}
\Vert B_{K}u\Vert_{\bar{H}_{(b),h}^{r,\ell}}\le C_N\left(h^{-2}\Vert G\hat{T}_{s,h}(z)u\Vert_{\bar{H}^{r,\ell-1}_{(b),h}}+h^N\Vert u\Vert_{\bar{H}^{-N,\ell}_{(b),h}}\right),
\end{align}
where again $G$ can be chosen as in the proposition. Let now $\cV$ as in Lemma \ref{lemma10.19}. Now, every backward (or forward) bicharacteristic staring at a point in $WF(B_1)$ will meet the elliptic sets of $B_{L_{\pm}},\, B_K$ or $B_{\cR_{in}}$ without meeting $\cV$. We then use Propagation of Singularities, Proposition \ref{pp3}, and \eqref{Routest}, \eqref{B0est} and \eqref{Kest} to conclude. 
If $z_0=1$ we replace $\hat{T}_{s,h}(z)$ by $-\hat{T}_{s,h}(-z)$. Then the same argument works modulo inverting the roles of $L_+$ and $L_-$. 
\qeds
\end{proof}

\begin{proposition}
\label{propRegHor}
We have for all $M,\, N>0,$
 \begin{align}
\label{WH3}
\Vert \tho{}^{2s}\hat{u}^{\lambda}(\sigma,(r=r_+,\omega_+))\Vert_{L^2(\rr_{\sigma}\times {\ss}^2_{\omega_+}; \langle \sigma\rangle^M \d\sigma \d\omega_+)}\le C_{M,N}\langle \lambda \rangle^{-N} 
\end{align}
uniformly in $\lambda\ge 1$. 
\end{proposition}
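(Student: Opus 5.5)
We reduce \eqref{WH3} to the semiclassical estimate \eqref{estimneighhorizon} of Proposition~\ref{RegHor}, applied frequency by frequency after Fourier transforming in $\ft$. Taking the Fourier transform of \eqref{WH2} in $\ft$ gives $\hat T_s(\sigma)\hat u^\lambda(\sigma)=\hat f^\lambda(\sigma)$. Since $u^\lambda$ is a backward solution, $\hat u^\lambda(\sigma)$ extends holomorphically to a half plane. We evaluate it on (or slightly shifted into) the strip $-\eta h\le \Im z\le 0$ required in Proposition~\ref{RegHor}, using the decay of Theorem~\ref{theorem:7.1} and Theorem~\ref{th6.7} to justify this. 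Writing $h=\abs{\sigma}^{-1}$ and $z=\sigma/\abs\sigma$, we have $\hat T_{s,h}(z)\hat u^\lambda=h^2\hat f^\lambda$. The horizon value $\hat u^\lambda(\sigma,r_+,\cdot)$ together with $2s$ derivatives ${\tho}^{2s}$ (which involve at most $2s$ transversal derivatives) is controlled by $\norm{B_1\hat u^\lambda}_{\bar H^{r,\ell}_{(b),h}}$ with $r$ large, via Sobolev trace. Here $B_1$ is microlocalised near the horizon and $r$ is chosen large enough that $r+\ell+\frac12-2s>0$ and $r>\frac12-s$ hold.

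\textbf{Splitting the frequency range.} We treat $\abs\sigma\le C$ and $\abs\sigma\ge C$ separately.

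For bounded $\sigma$ we use the non-semiclassical Fredholm estimates from \cite{Millet2}, transported by the flip of Section~\ref{sec:trev}. These bound $\hat u^\lambda(\sigma)$ by $\hat f^\lambda(\sigma)$ in high Sobolev norms. Because $\varphi$ is smooth and compactly supported, $\hat f^\lambda(\sigma)$ is $\cO(\langle\lambda\rangle^{-\infty}\langle\sigma\rangle^{-\infty})$ away from the ray $\sigma\approx\lambda\tau_0$. Near $\sigma=0$ we need low-frequency control; if a direct bound is unavailable there, we would instead use the time-domain decay and the resulting polynomial weights from Theorem~\ref{theorem:7.1}.

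For large $\abs\sigma$ we apply Proposition~\ref{RegHor}:
\[\norm{B_1\hat u^\lambda}\le C_N\bigl(h^{-2}\norm{G h^2\hat f^\lambda}+h^N\norm{\hat u^\lambda}_{\bar H^{-N,\ell}_{(b),h}}\bigr).\]
The key point is that $G$ vanishes microlocally near $(x_0,k_0)$, which is exactly where the semiclassical wavefront set of $\hat f^\lambda(\sigma)$ concentrates when $\sigma\sim\lambda\tau_0$. A non-stationary phase computation in \eqref{WH1} then gives $\norm{G\hat f^\lambda}=\cO(h^\infty\langle\lambda\rangle^{-\infty})$. Away from $\sigma\sim\lambda\tau_0$, the factor $\hat f^\lambda$ is already $\cO(\langle\lambda\rangle^{-\infty}\langle\sigma\rangle^{-\infty})$. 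The error term $h^N\norm{\hat u^\lambda}$ is handled by a crude polynomial a priori bound, $\norm{\hat u^\lambda}\lesssim\langle\sigma\rangle^{K}\langle\lambda\rangle^{K}$, coming from the Fredholm estimates or energy estimates. This term is then absorbed by taking $N$ large relative to both $M$ and the target power of $\langle\lambda\rangle^{-1}$.

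\textbf{Main obstacle.} The hard part is making the link between $\sigma$ and $\lambda$ uniform. We must show that the region where $G\hat f^\lambda(\sigma)$ is not negligible is empty, uniformly in $\lambda$ and in $\sigma$ along the ray. This needs:
\begin{enumerate}
\item the neighbourhood $\cV$ from Lemma~\ref{lemma10.19}, taken stable under the rescaling $(\tau_0,k_0)\mapsto(\sigma/\lambda)(\ldots)$ with $\sigma/\lambda$ near $\tau_0$;
\item choosing $\epsilon$ in the support of $\varphi$ small enough that the semiclassical wavefront set stays inside $\cV$.
\end{enumerate}
We then combine all pieces. Integrating $\langle\sigma\rangle^M\abs{\cdot}^2$ over $\sigma$ and $\omega_+$, using the decay in $\sigma$ obtained above, yields \eqref{WH3}.
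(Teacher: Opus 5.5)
Most of your plan follows the paper's proof. The shared steps are:
\begin{itemize}
\item the trace estimate, which reduces \eqref{WH3} to an $\bar{H}^{\frac12+2s+\tilde\epsilon,\ell}_{(b)}$ bound near $r=r_+$;
\item a uniform bound for bounded $\sigma$ from \cite{Millet2} (the paper uses \cite[Proposition 7.6]{Millet2});
\item Proposition~\ref{RegHor} for large $\abs{\sigma}$, with $G$ vanishing near $(x_0,k_0)$ and a non-stationary phase computation of ${\rm Op}_h(g)\hat{f}^\lambda$ in the window $(1-\delta)\lambda\le\abs{\sigma}\le(1+\delta)\lambda$.
\end{itemize}
Your fallback for low frequencies via Theorem~\ref{theorem:7.1} would not help, because its constants depend on $f^\lambda$ and grow with $\lambda$. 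The direct bound exists, though, so that hedge is unnecessary. The step that actually fails is your treatment of the remainder $h^N\norm{\hat{u}^\lambda}_{\bar{H}^{-N,\ell}_{(b),h}}$ in \eqref{estimneighhorizon}.

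Since $h=\abs{\sigma}^{-1}$, the factor $h^N$ gives decay in $\abs{\sigma}$, not in $\lambda$. With your a priori bound $\norm{\hat{u}^\lambda(\sigma)}\lesssim\IP{\sigma}^K\IP{\lambda}^K$, the remainder for $\abs{\sigma}$ in any fixed bounded range above $C$ is of size $\IP{\lambda}^K$. No choice of $N$ changes this, so the claimed absorption is false. That bound throws away exactly the information that matters: $\hat{f}^\lambda(\sigma)=\cF_{\ft}(\varphi)(\sigma-\lambda\tau_0,\cdot)\,e^{-\i\lambda k_0 x}$ is concentrated at $\sigma\approx\lambda\tau_0$.

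What you need instead is a bound on $\hat u^\lambda(\sigma)$ by $\hat f^\lambda(\sigma)$ at the same frequency, uniform for $\abs{\sigma}\ge 1$ with at most polynomial loss in $\sigma$. This is the high-frequency estimate \cite[Proposition 6.32]{Millet2}, which rests on the trapping estimate. The paper uses it in the form $\norm{\hat{u}^\lambda(\sigma)}_{\bar{H}^{r,\ell}_{(b),h}}\lesssim\norm{\hat{f}^\lambda(\sigma)}_{\bar{H}^{r,\ell-1}_{(b),h}}$, up to a fixed power of $h$. You also have $\norm{\hat{f}^\lambda(\sigma)}_{\bar{H}^{r,\ell-1}_{(b),h}}\lesssim\IP{\lambda}^{r}\IP{\sigma-\lambda\tau_0}^{-N}$. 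Combining the two, the remainder is $\lesssim\abs{\sigma}^{-N}\IP{\sigma-\lambda\tau_0}^{-N}\IP{\lambda}^{r}$:
\begin{itemize}
\item for $\abs{\sigma}\le\lambda/2$, the factor $\IP{\sigma-\lambda\tau_0}^{-N}$ supplies the $\lambda$-decay;
\item for $\abs{\sigma}\ge\lambda/2$, the factor $\abs{\sigma}^{-N}$ does.
\end{itemize}
Integrating against $\IP{\sigma}^M$ then gives $\cO(\IP{\lambda}^{-\infty})$, as in \eqref{estim1} and \eqref{eqhMulambda}.
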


\begin{proof}
Let $\chi\in C_0^{\infty}((r_+-\frac{\epsilon}{3},r_++\epsilon)),\, \chi=1$ in a neighbourhood of $r_+$. Note that  $\{r=r_+\}$ is a smooth hypersurface of $(r_+-\epsilon, \infty)_r\times\ss^2$. We therefore have for all $\ell\in \rr$ 
\begin{align}
&\norm{ \tho{}^{2s}\hat{u}^{\lambda}(\sigma)\vert_{r=r_+}} _{L^2\left(\rr_{\sigma}\times {\ss^2_{\omega_+}},\langle \sigma\rangle^M\d\sigma \d\omega_+\right)}\le C \norm{ \chi \tho{}^{2s}\hat{u}^{\lambda}(\sigma)} _{L^2\left(\rr_{\sigma},\langle \sigma\rangle^M\d\sigma;\bar{H}^{\frac{1}{2}+\tilde{\epsilon},\ell}_{(b)}\right)}\nonumber\\
&\le C\norm{ \widetilde{\chi}\hat{u}^{\lambda}(\sigma)}_{L^2\left(\rr_{\sigma},\langle \sigma\rangle^M\d\sigma;\bar{H}^{\frac{1}{2}+2\vert s\vert+\tilde{\epsilon},\ell}_{(b)}\right)}
\end{align}
for $\tilde{\epsilon}>0$ and some smooth function $\widetilde{\chi}$ with the same support properties as $\chi$. In the last step, we have used that $l^a=(\partial_r)^a$ in the ${}^*K$ coordinate system, and that the Sobolev norm is defined with respect to the connection $\Theta$. 
We now consider the Fourier transformed problem of \eqref{WH2},
\begin{align}
\label{WH4}
\hat{T_s}(\sigma)\hat{u}^{\lambda}(\sigma)=\hat{f}^{\lambda}(\sigma). 
\end{align} 
We start by computing the Fourier transform of $f^{\lambda}$. 
\begin{align}
\label{10.43a}
\hat{f}^{\lambda}(\sigma)&=\frac{1}{(2\pi)^{1/2}}\int\limits_{\rr}e^{\i\sigma \ft}e^{-\i\lambda \tau_0\ft}e^{-\i\lambda k_0x}\varphi(\ft,x)\d\ft\nonumber\\
 &=\cF_{\ft}(\varphi)(\sigma-\lambda \tau_0,x)e^{-\i\lambda k_0x}.
\end{align}
Here, $\cF_{\ft}$ denotes the partial Fourier transform with respect to $\ft$. Note that we only have to show
\begin{align}
\label{WH3m}
\norm{ \hat{u}^{\lambda}(\sigma,r,\omega)}_{L^2\left(\{\vert \sigma\vert\ge 1\}, \langle \sigma\rangle^M \d\sigma;\bar{H}^{r,\ell}_{(b)}\right)}\le C_{M,N}\langle \lambda \rangle^{-N} 
\end{align}
for suitable $r>\frac{1}{2}+2\vert s\vert,\ell\in \rr$. Indeed, let 
\begin{align}
\label{condrl}
r>2\vert s\vert+2\, , \quad  -\frac{3}{2}<\ell<-\frac{1}{2}.
\end{align}
Note that $\hat{u}^{\lambda}$ has the required regularity to apply 
the results of \cite{Millet2}. We have, by
\cite[Proposition 7.6]{Millet2}, for $\widetilde{\epsilon}>0$ and uniformly in $\sigma\in \rr$:
\begin{align*}
\norm{ 1_{\{\vert\sigma\vert\le 1\}}\hat{u}^{\lambda}}_{\bar{H}^{r-1,\ell}_{(b)}}\le C\norm{ 1_{\{\vert\sigma\vert\le 1\}}\hat{T}^{\lambda}_s(\sigma)\hat{u}^{\lambda}}_{\bar{H}^{r,\ell+\widetilde{\epsilon}}_{(b)}}=C\norm{ 1_{\{\vert\sigma\vert\le 1\}}\hat{f}^{\lambda}}_{\bar{H}^{r,\ell+\widetilde{\epsilon}}_{(b)}}\lesssim \langle \lambda\rangle^{-N} 
\end{align*}
for all $N>0$. The last estimate follows from \eqref{10.43a}, because $\varphi$ is a Schwartz function. For $\abs{\sigma}>1$, we will use Proposition \ref{RegHor}, and we have to estimate the right-hand side of \eqref{estimneighhorizon}. Let us first compute $G\hat{f}^{\lambda}$. Recall that we have a finite complete set of trivializations, that $\hat{f}^{\lambda}$ has compact support, and that $G$ is properly supported. It is therefore sufficient to consider $\chi_1G\chi_2$ for a finite number of cutoff functions $\chi_1,\, \chi_2$. We can therefore suppose that $G={\rm Op}_h(g)$. By choosing $\epsilon>0$ sufficiently small, we can suppose that $g\equiv 0$ on $B_{2\epsilon}((x_0,k_0))$. Note, however, that we also need to estimate a possible residual error $h^{N}\Vert E\hat{f}^{\lambda}\Vert_{\bar{H}_{(b),h}^{r,l-1}}$ with $E\in \Psi^{-\infty,-\infty}_{sc,h}$. We start with estimating $h^{-2}\Vert {\rm Op}_h(g)\hat{f}^{\lambda}\Vert_{\bar{H}^{r,l-1}_{(b),h}}$.
For $\delta>0$ small, we first consider the regime $\abs{\sigma}\le (1-\delta)\lambda$ or $ \abs{\sigma}\ge (1+\delta)\lambda$. 

In this regime, we have 
\begin{align*}
{\rm Op}_h(g)\hat{f}^{\lambda}&=h^{-3}\frac{1}{(2\pi)^{3}}\int e^{\i\frac{k}{h}(y-x)}e^{-\i\lambda k_0y}g(x,k)\cF_{\ft} (\varphi)(\sigma-\lambda \tau_0,y)\d^3 y\d^3 k\\
&=\frac{1}{(2\pi)^3}\int e^{ik(y-x)}e^{-i\lambda k_0x}g(x,h(k+\lambda k_0))\cF_{\ft}(\varphi)(\sigma-\lambda \tau_0,y)\d^3y\d^3k\\
&=\frac{1}{(2\pi)^3}\int e^{-i(k+\lambda k_0)x}g(x,h(k+\lambda k_0))\cF(\varphi)(\sigma-\lambda \tau_0,k)\d^3k,
\end{align*}
where $\cF$ stands for the Fourier transform in $(\ft,x)$. As $\varphi$ is Schwartz, $\cF(\varphi)(\sigma-\lambda \tau_0,k)$ is Schwartz as well. This gives the estimate 
\begin{align*}
\abs{\cF(\varphi)(\sigma-\lambda \tau_0,k)}&\le C_{N,\delta}\langle \sigma-\lambda \tau_0\rangle^{-N}\langle k\rangle^{-N}.
\end{align*}
for all $N$. Together with the properties of $g$, this gives 
\begin{align*}
\Vert Op_h(g)\hat{f}^{\lambda}\Vert_{\bar{H}^{r,\ell-1}_{(b),h}}\le C_{N,\delta}\langle \sigma-\lambda \tau_0\rangle^{-N} 
\end{align*}
for all $N>0$. Now we have,
\begin{align*}
\int\limits_{\mathclap{\vert \sigma\vert\le (1-\delta)\lambda}}\langle \sigma-\lambda \tau_0\rangle^{-N}\lesssim \langle \lambda\rangle^{-N+1}\, ,\qquad\quad\int\limits_{\mathclap{\vert \sigma\vert\ge (1+\delta)\lambda}}\langle \sigma-\lambda \tau_0\rangle^{-N}\lesssim \langle \lambda\rangle^{-N+1}.
\end{align*}
This gives 
\begin{align}
\label{10.58a}
\Vert {\rm Op}_h(g)\hat{f}^{\lambda}\Vert_{L^2\left(\{\vert\sigma\vert\ge (1+\delta)\lambda\}\cup \{\vert\sigma\vert\le (1-\delta)\lambda\},\langle \sigma\rangle^M\d\sigma \d\omega_+;\bar{H}^{r,\ell-1}_{(b),h}\right)}\le C_N\langle\lambda\rangle^{-N}
\end{align} 
for all $N>0$. 
We now consider the case 
\begin{align}
\label{insigma}
(1-\delta)\lambda\le \abs{\sigma}\le (1+\delta)\lambda.
\end{align}  
Let $\psi\in C_0^{\infty}((-1,1)),\, \psi(0)=1$. Remember that $\epsilon>0$ is chosen such that $\varphi(\ft,x)=0$ in $\cM\setminus B_{\epsilon}((\ft_0,x_0))$. We then have 
\begin{align*}
{\rm Op}_h(g)\hat{f}^{\lambda}&=h^{-3}\frac{1}{(2\pi)^{3}}\int e^{\i\frac{k}{h}(y-x)}e^{-\i\lambda k_0y}g(x,k)\cF_{\ft}(\sigma-\lambda \tau_0,y)\d^3 y\d^3 k\\
&=\frac{1}{(2\pi)^{3}}\int e^{\i k(y-x)}e^{-\i\lambda k_0x}g(x,h(k+\lambda k_0))\cF_{\ft}(\varphi)(\sigma-\lambda \tau_0,y)\d^3 y\d^3 k\\
&=\frac{1}{(2\pi)^{3}}\int \psi\left(\frac{\vert y-x\vert}{\epsilon}\right)e^{\i k(y-x)}e^{-\i\lambda k_0x}g(x,h(k+\lambda k_0))\cF_{\ft}(\varphi)(\sigma-\lambda \tau_0,y)\d^3 y\d^3 k\\
&+ \frac{1}{(2\pi)^{3}}\int \left(1-\psi\left(\frac{\vert y-x\vert}{\epsilon}\right)\right)e^{\i k(y-x)}e^{-\i\lambda k_0x}g(x,h(k+\lambda k_0))\cF_{\ft}(\varphi)(\sigma-\lambda \tau_0,y)\d^3 y\d^3 k\\
&:=I_1+I_2.
\end{align*}
We first consider $I_1$. If $\vert x-y\vert\le \epsilon$, then we have 
\begin{align*}
\vert x-x_0\vert\le \vert y-x\vert+\vert y-x_0\vert\le 2\epsilon,
\end{align*}
on the support of the integrand,
because $\varphi(\ft,y)=0$ if $\vert y-x_0\vert\ge \epsilon$. 
Note that \eqref{insigma} entails $\abs{h\lambda-1}\le \lambda \delta h$. As $g$ is zero on $B_{2\epsilon}((x_0,k_0))$, this entails that we have on the support of the integrand 
\begin{align*}
&h\abs{k}+\delta\lambda h\abs{k_0}\ge \abs{hk+(h\lambda-1)k_0}\ge 2\epsilon\\
&\Rightarrow h\abs{k}\ge 2\epsilon-\delta\abs{k_0}\lambda h\ge 2\epsilon-\frac{\delta}{1-\delta}\abs{k_0}\, ,
\end{align*}
and thus 
\begin{align}
\label{eqk}
\abs{k}&\ge (1-\delta)\left(2\epsilon-\frac{\delta}{1-\delta}\abs{k_0}\right)\lambda=((1-\delta)2\epsilon-\delta\abs{k_0})\lambda\ge \epsilon \lambda
\end{align}
for $\delta$ sufficiently small. In the following, we will choose $\delta>0$ small enough so that \eqref{eqk} holds. 
Performing the integration in $y$, we find 
\begin{align*}
I_1={\frac{1}{(2\pi)^{3/2}}}\int\limits_{\mathclap{\vert k\vert\ge \epsilon \lambda}}e^{-\i(\lambda k_0+k)x}g(x,h (k+\lambda k_0))\cF_y\left(\cF_{\ft}(\varphi)(\sigma-\lambda \tau_0,y)\psi\left(\frac{\vert y-x\vert}{\epsilon}\right)\right)(k)\d^3 k.
\end{align*}
Thus, 
\begin{align*}
\vert I_1(\sigma,x)\vert\lesssim \langle \sigma-\lambda\tau_0\rangle^{-N}\langle\lambda\rangle^{-N}
\end{align*}
for all $N>0$, because  $k\mapsto \cF_y\left(\cF_{\ft}(\varphi)(\sigma-\lambda \tau_0,y)\psi\left(\frac{\vert y-x\vert}{\epsilon}\right)\right)(k)$ is a Schwartz function. The estimate is uniform in $x$. To see this, put 
\begin{align*}
\cN_p(\varphi)=\sum_{\vert \alpha\vert\le p,\vert \beta\vert\le p}\Vert x^{\alpha}\partial^{\beta}\varphi\Vert_{L^{\infty}},
\end{align*}
and remember that 
\begin{align*}
\cN_p(\cF(\psi))\lesssim \cN_{p+4}(\psi)
\end{align*}
for all $\psi\in \cS(\rr^4)$. Using the properties of $g$, we obtain the same estimates for $\partial^{\alpha}_xI_1(x,\sigma)$ for all $\alpha\in \nn^3$. 
This gives the desired estimate. Indeed, we have for $N$ large enough and $\lambda>1$
\begin{align*}
\int\limits_{\mathclap{(1-\delta)\lambda\le \vert \sigma\vert\le (1+\delta)\lambda}} \langle \sigma\rangle^{M}\langle \sigma-\lambda\tau_0\rangle^{-N}\langle \lambda\rangle^{-N}\d\sigma\lesssim \langle \lambda\rangle^{M+1-N}. 
\end{align*}

To estimate the second integral, we put $\widetilde{\psi}(x,y)=\left(1-\psi\left(\frac{\vert y-x\vert}{\epsilon}\right)\right)$ and use integration by parts,
\begin{align*}
I_2&=\frac{1}{(2\pi)^{3}}\int \frac{-\widetilde{\psi}(x,y)}{\vert y-x\vert^2}\Delta_k(e^{\i k(y-x)})e^{-\i\lambda k_0x}g(x,h(k+\lambda k_0))\cF_{\ft}(\varphi)(\sigma-\lambda \tau_0,y)\d^3 k\d^3 y\\
&=h^2\frac{1}{(2\pi)^{3}}\int \frac{-\widetilde{\psi}(x,y)}{\vert y-x\vert^2}e^{\i k(y-x)}e^{-\i\lambda k_0x}(\Delta_kg)(x,h(k+\lambda k_0))\cF_{\ft}(\varphi)(\sigma-\lambda \tau_0,y)\d^3 k\d^3 y.
\end{align*}
Note that the boundary terms vanish because they are of the form $\lim\limits_{k_1^0\rightarrow \infty}\cB_1(k_1^0,k_2,k_3)$ with
\begin{eqnarray*}
\cB_1(k_1^0,k_2,k_3)
&=&\frac{1}{(2\pi)^{3}}\int\limits_y\int\limits_{k_2,k_3}\widetilde{\psi}(x,y)\frac{(y_1-x_1)}{\i\vert y-x\vert^2}e^{\i k(y-x)}e^{-\i k_0\lambda x}g(x,h((k_1^0,k_2,k_3)+\lambda k_0))\\
&\times&\cF_{\ft}(\varphi)(\sigma-\lambda\tau_0,y)\d k_2\d k_3\d^3 y\\
&=&\frac{1}{(2\pi)^{3/2}}\int\limits_{k_2,k_3}e^{-\i(\lambda k_0+k)x}g(x,h((k_1^0,k_2,k_3)+\lambda k_0))\\
&\times&\cF_y\left(\cF_{\ft}(\sigma-\lambda \tau_0,y)(\varphi)\widetilde{\psi}(x,y)\frac{(y_1-x_1)}{\i\vert y-x\vert^2}\right)(k_1^0,k_2,k_3)\d k_2\d k_3 
\end{eqnarray*}
and this last term goes to zero for each $x$ as $\vert k_1^0\vert$ goes to infinity, because the function $y\mapsto \cF_{\ft}(\varphi)(\sigma-\lambda\tau_0,y)\widetilde{\psi}(x,y)\frac{(y_1-x_1)}{i\vert y-x\vert^2}$ is Schwartz and so is its Fourier transform. We can now write 
\begin{align*}
I_2&=h^2\frac{1}{(2\pi)^{3/2}}\int e^{\i(\lambda k_0-k)x}(\Delta_kg)(x,h(k+\lambda k_0))\cF_y\left(\cF_{\ft}(\varphi)(\sigma-\lambda \tau_0,y)\frac{\widetilde{\psi}(x,y)}{\vert y-x\vert^2}\right)(k)\d^3 k\, ,
\end{align*}
and therefore $\vert I_2(x,\sigma)\vert \lesssim \vert \sigma\vert^{-2}\langle \sigma-\lambda\tau_0\rangle^{-N}$
for all $N>0$. Repeating the argument, we find 
 \begin{align}
\vert I_2(x,\sigma)\vert \lesssim \vert \sigma\vert^{-N'}\langle \sigma-\lambda\tau_0\rangle^{-N}
\end{align}
for all $N',N>0$. Again, the estimate is uniform in $x$. Now we have 
\begin{align*}
\int\limits_{\mathclap{(1-\delta)\le \abs{\sigma}\le (1+\delta)\lambda}}\langle\sigma\rangle^{M-N'}\langle \sigma-\lambda \tau_0\rangle^{-N}\d\sigma\le \langle \lambda\rangle^{M-N'+1}.
\end{align*}

  Using the properties of $g$, we obtain the same estimates for $\partial_x^{\alpha}I_2(x,\sigma)$ for all $\alpha\in \nn^3$. We then obtain for $\ell<-1/2$
\begin{align*}
\Vert I_2((x,\omega),\sigma)\Vert_{L^2(\{(1-\delta)\lambda\le \vert\sigma\vert\le (1+\delta)\lambda\}, \langle \sigma\rangle^M\d\sigma;\bar{H}^{r,\ell-1}_{(b),h})}\lesssim\langle \lambda\rangle^{-N}
\end{align*}
for all $N>0$, and a similar estimate for $I_1$. 
We then find 
\begin{align*}
\Vert {\rm Op}_h(g)\hat{f}^{\lambda}\Vert_{L^2\left(\{(1-\delta)\lambda\le \vert\sigma\vert\le (1+\delta)\lambda\},\langle \sigma\rangle^M\d\sigma \d\omega_+;\bar{H}^{r,\ell-1}_{(b),h}\right)}\le C_N\langle\lambda\rangle^{-N}\, .
\end{align*} 
Adding \eqref{10.58a} we have 
\begin{align}
\label{eqGflambda}
\Vert {\rm Op}_h(g)\hat{f}^{\lambda}\Vert_{L^2\left(\{\vert \sigma\vert\ge 1\},\langle \sigma\rangle^M\d\sigma \d\omega_+;\bar{H}^{r,\ell-1}_{(b),h}\right)}\le C_N\langle\lambda\rangle^{-N} 
\end{align} 
for all $N>0$. We finish the estimate by noting that 
\begin{align*}
h^{-2}\hat{T}_{s,h}(z)\hat{u}^{\lambda}=\hat{T}_s(\sigma)\hat{u}^{\lambda}=\hat{f}^{\lambda}.
\end{align*}  
It remains to estimate the residual error $h^N\Vert E\hat{f}^{\lambda}\Vert_{\bar{H}^{r,l-1}_{(b),h}}\lesssim h^N \Vert \hat{f}^{\lambda}\Vert_{\bar{H}^{r,\ell-1}_{(b),h}}$. We therefore have to estimate $\Vert\hat{f}^{\lambda}\Vert_{\bar{H}^{r,\ell-1}_{(b),h}}$. Weights in $x$ are absorbed by the compact support of $f^{\lambda}$. It is therefore sufficient to estimate 
\begin{align*}
\int\langle h(k+\lambda k_0)\rangle^{2r}\vert \hat{\varphi}(\sigma-\lambda\tau_0,k)\vert^2\d^3 k\lesssim \langle \sigma-\lambda\tau_0\rangle^{-N}\langle\lambda\rangle^{2r} 
\end{align*}
for all $N>0$. We conclude that 
\begin{align*}
h^N\norm{\hat{f}^{\lambda}}_{\bar{H}^{r,\ell-1}_{(b),h}}\lesssim \vert\sigma\vert^{-N}\langle \sigma-\lambda\tau_0\rangle^{-N}\langle\lambda\rangle^{2r}
\end{align*}
for all $N>0$. 
Let now $M,\, N'>0$ be given. We have 
\begin{align}
\label{estim1}
&\int\langle\sigma\rangle^{M-N}\langle \sigma-\lambda \tau_0\rangle^{-N}\d\sigma\le \int\limits_{\mathclap{\frac{1}{2} \lambda\le \vert \sigma\vert\le 2 \lambda}} \langle\sigma\rangle^{M-N}\langle \sigma-\lambda \tau_0\rangle^{-N}\d\sigma\\\nonumber
&+\int\limits_{\mathclap{\vert \sigma\vert\le \frac{1}{2}\lambda}} \langle\sigma\rangle^{M-N}\langle \sigma-\lambda \tau_0\rangle^{-N}\d\sigma+\int\limits_{\mathclap{\vert \sigma\vert\ge 2 \lambda}} \langle\sigma\rangle^{M-N}\langle \sigma-\lambda \tau_0\rangle^{-N}\d\sigma\\\nonumber
&\lesssim\langle \lambda\rangle^{M-N+1}+\langle\lambda\rangle^{-N+1}+\langle \lambda\rangle^{M+1-2N}
\end{align}
for $N$ large enough and $\lambda>1$. Using \eqref{estim1}, we see that we have for $N>0$ sufficiently large
\begin{align}
\label{eqhMulambda}
h^N\norm{E\hat{f}^{\lambda}}_{L^2\left(\{\vert\sigma\vert\ge 1\},\langle \sigma\rangle^M\d\sigma;\bar{H}^{r,\ell-1}_{(b),h}\right)}&\lesssim h^N\norm{\hat{f}^{\lambda}}_{L^2\left(\{\vert\sigma\vert\ge 1\},\langle \sigma\rangle^M\d\sigma ;\bar{H}^{r,\ell-1}_{(b),h}\right)}\\\nonumber
&\le C_N\langle\lambda\rangle^{M+1-N}.
\end{align}  

In the next step, we estimate $h^N\Vert \hat{u}^{\lambda}\Vert_{\bar{H}^{-N,\ell}_{(b),h}}$. We have $\Vert \hat{u}^{\lambda}\Vert_{\bar{H}^{-N,\ell}_{(b),h}}\le \Vert \hat{u}^{\lambda}\Vert_{\bar{H}^{r,\ell}_{(b),h}}$.
We then use \cite[Proposition 6.32]{Millet2} to get 
\begin{align}
\Vert \hat{u}^{\lambda}\Vert_{\bar{H}^{r,\ell}_{(b),h}}\le \Vert \hat{T}^{\lambda}_s(\sigma)\hat{u}^{\lambda}\Vert_{\bar{H}^{r,\ell-1}_{(b),h}}=\Vert\hat{f}^{\lambda}\Vert_{\bar{H}^{r,\ell-1}_{(b),h}}.
\end{align}
By \eqref{eqhMulambda}, we obtain 
\begin{align*}
\norm{ h^{N}\hat{u}^{\lambda}}_{L^2\left(\{\vert\sigma\vert\ge 1\},\langle \sigma\rangle^M\d\omega;\bar{H}^{-N',\ell}_{(b),h}\right)}\le C_N\langle\lambda\rangle^{M+1-N}
\end{align*}   
for all $N$. This finishes the proof.
\qeds
\end{proof}

 With this result, we can prove
\begin{proposition}
\label{prop:Had geo --}
    Let $(y_0,\xi_0)\in \dot{T}^*\MI$ lie on a bicharacteristic connecting $\sI_-$ and $ \sI_+$. Then $(y_0,\xi_0;y_0,-\xi_0)$ is a direction of rapid decrease for $W^\pm_{\sH}$.
\end{proposition}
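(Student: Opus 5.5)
We would show that $(y_0,\xi_0;y_0,-\xi_0)$ is a direction of rapid decrease for $\widetilde{W}^\pm_\sH$ directly, through the representation
\[
\widetilde{W}^\pm_\sH(f,h)=4(r_+-M)^{2s}\IP{\fh_s\overline{f},X_\pm(\i\Theta_U)\fh_s h}_\sH .
\]
We test it against oscillatory sections $f^\lambda$ of the form \eqref{WH1}, localized near $y_0$ with frequency $\lambda\xi_0$ (after normalizing $\tau_0=\pm1$), and aim to show that the horizon data $\fh_s f^\lambda$ is $\cO(\lambda^{-\infty})$ in a norm for which $X_\pm(\i\Theta_U)$ is controlled. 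Cauchy--Schwarz then combines this with the polynomial bound from Remark~\ref{rem:2pt distr estimates} for the other slot. Uniformity in a conic neighbourhood of $\xi_0$ follows from Lemma~\ref{lemma10.19}, since nearby covectors still lie on bicharacteristics connecting $\sR_{in}$ and $\sR_{out}$, together with the characterization \cite[Prop.~2.1(c)]{V}.

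\textbf{Key steps.} By finite speed of propagation, $\supp u^\lambda\cap\{U>-\epsilon\}=\emptyset$. Hence only $\sH_-=\sH\cap\{U<0\}$ matters, and there we can change variables to ${}^*t$ as in Remark~\ref{rem:wh restriction to H-}.

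In those variables $X_\pm(D_U)$ becomes $\frac{e^{\kappa_+{}^*t}}{\kappa_+}\chi_\pm(D_{{}^*t})$, and the prefactors $|U|^s$ relate the Kruskal and Kinnersley trivializations. It therefore suffices to bound the ${}^*t$-Fourier transform of $\tho^{2s}\hat u^\lambda(\sigma)|_{r=r_+}$ (followed by the operator $\cU_s$, which consists of tangential $\tho'$-derivatives, i.e.\ multiplication by powers of $\sigma$) in $L^2(\langle\sigma\rangle^M\d\sigma\,\d\omega_+)$. This is exactly what Proposition~\ref{propRegHor} supplies, uniformly in $\lambda\ge1$ with $\cO(\langle\lambda\rangle^{-N})$ decay for all $M,N$. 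Here $u^\lambda=E_s^-f^\lambda$ (or, for the physical solution, $\tho^{2s}\overline{E_{-s}}$ applied to the appropriate section; we commute this through via the flip relation and the intertwining used for the trace).

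Next, Plancherel turns the weighted $L^2$ bound into bounds on $\norm{\langle D_{{}^*t}\rangle^{M/2}\,\cU_s T_\sH\tho^{2s}u^\lambda}_{L^2(\sH_-)}$. Since $\chi_\pm$ has linear growth, one power of $\langle\sigma\rangle$ absorbs $\chi_\pm(D_{{}^*t})$. The weight $e^{\kappa_+{}^*t}$ requires decay toward ${}^*t\to-\infty$, i.e.\ $U\to-\infty$. We would handle this as in Lemma~\ref{lemma:well def wH}: split with $\chi_\pm$ cutoffs, and on the tail use Theorem~\ref{th:4.2} together with \cite[Lemma D.1]{GHW}, whose $C^m$-dependence on $f^\lambda$ grows only polynomially in $\lambda$. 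The compact part is handled by the Fourier estimate.

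Finally, the other argument $h$ (also oscillatory, with covector $-\xi_0$) contributes at most polynomial growth in $\lambda$, so the product is rapidly decreasing.

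\textbf{Main obstacle.} The delicate point is matching the Fourier-side estimate of Proposition~\ref{propRegHor}, which is stated in the Millet-type tetrad and stationary frequency $\sigma$, with the horizon pairing in the Kruskal tetrad and in $U$. The factors $|U|^{\pm s}$ and $e^{\kappa_+{}^*t}$ must not destroy square integrability as ${}^*t\to-\infty$. We would first try to absorb these weights by taking $M$ large and using the boundedness of $\chi_\pm(D_{{}^*t})\langle D_{{}^*t}\rangle^{-1}$. If that fails at low frequencies, we would treat the tail $\{U<U_1\}$ separately using the logarithmic decay of Theorem~\ref{th:4.2}, combined with the rapid decay in $\lambda$ of $\hat u^\lambda$ for $|\sigma|\le1$ obtained in the proof of Proposition~\ref{propRegHor}.
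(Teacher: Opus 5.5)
Your core argument is the paper's: oscillatory test sections on the null geodesic through $(y_0,\xi_0)$, restriction to $\sH_-$ by finite speed of propagation, passage to ${}^*t$, writing $\partial_U^s|U|^s$ as a polynomial of degree $s$ in $U\partial_U\propto\partial_{{}^*t}$, Plancherel, and Proposition~\ref{propRegHor}. (The paper applies Proposition~\ref{propRegHor} to both slots, since $h^\lambda$ is microlocalized on the same geodesic, so both factors are $\cO(\lambda^{-\infty})$. Your ``rapid times polynomial'' version works equally well.)

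However, the ``main obstacle'' you describe does not exist, and the remedy you propose for it would break the proof if carried out.

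The prefactor $e^{\kappa_+{}^*t}/\kappa_+$ in $X_\pm(D_U)$ cancels exactly against the Jacobian $\d U=\kappa_+e^{-\kappa_+{}^*t}\,\d{}^*t$. This is precisely how \eqref{eq:w_hplus on H-} arises, so the horizon pairing is $c_s\int\overline{\partial_U^s\phi_s}\,\chi_\pm(D_{{}^*t})\,\partial_U^s\phi'_s\,\d{}^*t\,\d^2\omega_+$ with no weight at all. No $|U|^{\pm s}$ weights survive either. Even if that factor were present, it tends to $0$ as ${}^*t\to-\infty$; it could only matter as ${}^*t\to+\infty$, which the support property excludes.

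Proposition~\ref{propRegHor} bounds the full ${}^*t$-Fourier transform in $L^2(\langle\sigma\rangle^M\d\sigma\,\d\omega_+)$. Plancherel therefore gives control of all ${}^*t$-derivatives on the whole of $\rr_{{}^*t}$, tail included, so nothing needs to be split.

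Your tail treatment, via Theorem~\ref{th:4.2} and the splitting of Lemma~\ref{lemma:well def wH}, only produces bounds of size $\|f^\lambda\|_{C^m}$. That is polynomial growth in $\lambda$, in exactly the factor that must be $\cO(\lambda^{-\infty})$, so it would destroy rapid decrease. Likewise, ``absorbing the weight by taking $M$ large'' cannot work, since $\langle\sigma\rangle^M$ measures ${}^*t$-regularity, not ${}^*t$-decay. Drop the splitting and work with \eqref{eq:w_hplus on H-} directly.
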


\begin{proof}
    Since $\cB(s,w)$ are complex line bundles, any non-vanishing smooth section constitutes a local basis. Consider thus some $\varphi(\ft, x)\in \Gamma_c(\cB(s,s))$, and $\psi(\ft,x)\in \Gamma_c(\cB(s-,s))$ and use them to construct $f^\lambda(\ft,x)\in \Gamma_c(\cB(s,s))$ and $h^\lambda(\ft,x)\in\Gamma_c(\cB(s,-s))$ as in \eqref{WH1}. We need to show that
    \begin{equation}
        \abs{W^\pm_\sH(\phi_{\overline{f^\lambda}}, \phi_{h^\lambda})}=\cO(\lambda^{-\infty})\, ,
    \end{equation}
    where we have used the notation \eqref{eq:notation soln of test fct}. If we assume that $u^\lambda$ and $v^\lambda$ are backwards solutions of $\overline{T_{-s}}u^\lambda=\overline{f^\lambda}$ and $\overline{T_{-s}}v^\lambda= h^\lambda$, then one has
    \begin{equation}
        T_\sH\phi_{\overline{f^\lambda}}=({\tho}^{2s}u^\lambda\vert_{r=r_+}, A_su^\lambda\vert_{r=r_+})\, , \quad T_\sH\phi_{h^\lambda}=(({\tho}^{2s}v^\lambda\vert_{r=r_+}, A_sv^\lambda\vert_{r=r_+})\, .
    \end{equation}
    We recall that, due to the support properties of the backwards solutions derived from $f^\lambda$ and $h^\lambda$, we thus have
    \begin{equation}
        \abs{W^\pm_\sH(\phi_{\overline{f^\lambda}}, \phi_{h^\lambda})}\leq \norm{{\tho}'^{s}({\tho}^{2s}u^\lambda)\vert_{r=r_+}}_{L^2(\sH_-)}\norm{X_\pm(\i\Theta_U){\tho}'^s({\tho}^{2s}v^\lambda)\vert_{r=r_+}}_{L^2(\sH_-)}\, .
    \end{equation}
By fixing the normalization of the null vectors $l$ and $n$ of the tetrad to those of the Kruskal tetrad, we can estimate
\begin{equation}
    \abs{W^\pm_\sH(\phi_{\overline{f^\lambda}}, \phi_{h^\lambda})}\leq \norm{\partial_U^{s}\abs{U}^s({\tho}^{2s}u^\lambda)^{K}\vert_{r=r_+}}_{L^2(\sH_-)}\norm{\partial_U^{s+1}\abs{U}^s({\tho}^{2s}v^\lambda)^{K}\vert_{r=r_+}}_{L^2(\sH_-)}\, ,
\end{equation}
where the $K$-superscript denotes quantities computed in the Kinnersley tetrad. Next, we recall that, on $\sH$, one has $U\partial_U\propto \partial_{{}^*t}=\partial_\ft$, and that, consequently, $\partial_U^s\abs{U}^s$ can be written as a polynomial of order $s$ in $\partial_{{}^*t}$, see the proof of Lemma~\ref{lemma:well def wH}.
Then, making use of Parseval's theorem and the identification of $\cB(s,0)\vert_{\sH_-}$ with $\rr_\ft\times\cB_s^{\ss^2}$ using the Kruskal tetrad, we find 
  \begin{align}
      \abs{W^\pm_\sH(\phi_{\overline{f^\lambda}}, \phi_{h^\lambda})}\leq& C \norm{\widehat{({\tho}^{2s}u^\lambda)^{K}}(\sigma, r=r_+, \omega_+)}_{L^2(\rr_{\sigma}\times \ss^2_{\omega_+}; \langle \sigma\rangle^2 \d\sigma \d\omega_+)}\\\nonumber
      &\times \norm{\widehat{({\tho}^{2s}v^\lambda)^{K}}(\sigma, r=r_+, \omega_+)}_{L^2(\rr_{\sigma}\times \ss^2_{\omega_+}; \langle \sigma\rangle^3 \d\sigma \d\omega_+)}
  \end{align}
  The desired result then follows from Proposition~\ref{propRegHor}. 
\qeds
\end{proof}

\paragraph{\bf Proof of Theorem~\ref{thm:Had}}

The Proof of Theorem~\ref{thm:Had} now follows immediately by combining the reduction to the diagonal in Corollary~\ref{cor:Had reduction} with the results of Lemma~\ref{lemma:passivity Had proof}, Lemma~\ref{lemma:Had geodesics sH}, and Proposition~\ref{prop:Had geo --}.

\qeds
    
\appendix

\section{List of Notations}
In this appendix, we will compile a list of notations, ordered by topic, for the convenience of the reader.

Background and bundles of spin-weighted quantities:
\begin{itemize}
    \item $(M,g,\fO, \fT)$ spacetime (with orientation $\fO$ and time orientation $\fT$)
    \item $T_\cc M$ complexified tangent bundle
    \item $V^\pm$ principal null direction (of multiplicity 2)
    \item $\fN$ bundle of oriented future-directed principal null frames
    \item $\fM=(M,g,\fO,\fT, \fV)$ background comprising a spacetime and a choice $\fV$ of outgoing principal null direction
    \item $\fN_0$ connected component of $\fN$ on which $\fV(l)=0$
    \item $\fN_{0,r}$ quotient of $\fN_0$ with respect to the action of $\rr^*$ obtained by fixing the null vectors $l$ and $n$ of the tetrad
    \item $\cB(s,w)$ bundle of spin-weighted scalars with spin weight $s$ and boost weight $w$
    \item $\cB(s)$ bundle of spin-weighted scalars with spin weight $s$ as an associate $U(1)$-bundle to $\fN_{0,r}$
    \item $\cV_s=\cB(s,s)\oplus \cB(s,-s)$ 
    \item $\fV_{\ss^2}$ bundle of oriented orthonormal frames of the sphere $\ss^2$
    \item $\cB_s^{\ss^2}$ bundle of spin-weighted scalar over the sphere (associated to $\fV_{\ss^2}$)
\end{itemize}

GHP-related quantites:
\begin{itemize}
    \item $\Theta_a$ connection on $\cB(s,w)$
    \item $w_a$ connection form of $\Theta_a$
    \item $\tho$, ${\tho}'$, $\edt$, ${\edt}'$ differential operators of GHP formalism
    \item $ \rho$, $\tau$, $\rho'$, $\tau'$ GHP spin coefficients
    \item $T_s$ Teukolsky operator of spin $s$ acting on sections of $\cB(s,s)$
    \item $\cP_s=T_s\oplus \overline{T_{-s}}$ extended Teukolsky operator acting on sections of $\cV_s$
    \item $\Psi_2$ non-vanishing contraction of the Weyl tensor
    \item $\xi$, $\eta$ Killing vector fields (constructed from the Killing spinor $\kappa_{AB}$)
    \item $\zeta$ non-zero component of the Killing spinor, $\sim \Psi_2^{-1/3}$
    \item $\cL_X$ Lie derivative wrt $X$ acting on sections of $\cB(s,w)$
    \item $\cR_s$ and $\cS_s$ symmetry operators of $\varrho^2T_s$
    \item $\Gamma_a=B_a-w_a$ Teukolsky "potential"
    \item $\Phi_{-s,\IRG}$ Hertz-potential in ingoing radiation gauge
    \item $A_s$ ($\widetilde{A}_s$) radial Teukolsky-Starobinsky operators acting on $\cB(s,-s)$ ($\cB(s,s)$)
    \item $\sL_n^\pm$ angular differential operators following the notation of \cite{CTdC}
    \item $B_s$ radial operator connecting solutions of the spin-$s$ Teukolsky equation to solutions of the spin-$-s$ equation
\end{itemize}

Solution spaces on the bulk:
\begin{itemize}
    \item $E^\pm_s$ retarded/advanced Green operators for $T_s$
    \item $\Delta_s^\pm$ retarded/advanced Green operators for $\cP_s$
    \item $TS_s(\cM)=\Gamma_c(\cV_s)/\cP_s\Gamma_c(\cV_s)$ with charged symplectic form $(\cdot,\cdot)_{\Delta_s}$
    \item $\Sol_{s}(\cM)\subset \Gamma_{sc}(\cV_s)$ space of space-compact solutions to $\cP_s \phi=0$ with charged symplectic form $\sigma_s$
    \item $\Sol_{s,p}(\cM)\subset \Sol_{s}(\cM)$ subspace of physical solutions in which the spin-$s$ and spin-$-s$ components derive from common Herz potential
    \item $\cT_{\pm s,sc}(\cM)\subset \Gamma_{sc}(\cB(s,\pm s))$ space of space-compact solutions to $T_s\phi=0$ ($\overline{T_{-s}}\phi=0$)
    \item  $\cS_s(\cM)$ space of functions that show same decay behaviour as those in $\Sol_s(\cM)$ 
\end{itemize}

Sobolev spaces:
\begin{itemize}
    \item $H^m_{[s]}(\ss^2)$ Sobolev space of spin-weighted functions on the sphere
    \item $H^r_{b/sc}(E)$ Sobolev space of sections of $E\to M$ defined using a boundary/scattering volume form and vector fields from the boundary/scattering tangent space.
    \item $\bar{H}^{m,\mu}_{b/sc}(E)=x^\mu \bar{H}^m_{b/sc}(E)$ weighted Sobolev space for sections of $E\to M$, when $M\sim [0,1)_x\times X$ locally; the bar indicates extendibility at $x\to 1$
    \item $H^r_{b/sc,h}$ semiclassical Sobolev spaces, where each derivative in the norm is also scaled by $h$
    \item $H^r_{(b)}$ b-Sobolev space, but using scattering volume form
    \item $\bar{H}^{r, \mu, \nu}_{b}([0,1)_v\times[0,1)_\tau\times\cB_s^{\ss^2})=v^\mu\tau^\nu \bar{H}^{r, }_{b}([0,1)_v\times[0,1)_\tau\times\cB_s^{\ss^2})$ double-weighted Sobolev space of spin-weighted functions
\end{itemize}

Function spaces on the boundary:
\begin{itemize}
    \item $\cS_s(\sI_-)\subset \Gamma_{\sI_-}(\cV_s)$ space of smooth $\cV_s$ sections on $\sI_-$ obeying polynomial decay towards $i^-$ with charged symplectic form $\sigma_{\sI,s}$
    \item $\cS_s(\sH)\subset \Gamma_{\sH}(\cV_s)$ space of smooth $\cV_s$ sections on $\sH$ obeying polynomial decay towards $i^-$ with charged symplectic form $\sigma_{\sH,s}$
    \item $\cS_{s,p}(\sI_-)\subset \cS_s(\sI_-)$ physical boundary space on $\sI_-$ in which spin-$s$ and spin-$-s$ are related and into which the space of physical solutions is mapped by taking a trace
    \item $\cS_{s,p}(\sH)\subset \cS_s(\sH)$ physical boundary space on $\sH$ in which spin-$s$ and spin-$-s$ are related and into which the space of physical solutions is mapped by taking a trace
    \item $\pi_\pm$ projection to the spin-$\pm s$ component of functions in $\cS_{s,p}(\sI_-)$ ($\cS_{s,p}(\sH)$)
\end{itemize}

\bibliographystyle{habbrv}
\bibliography{teukolsky}
\end{document}